\numberwithin{equation}{section}
\newtheorem{theorem}[equation]{Theorem}
\newtheorem*{theorem*}{Theorem}
\newtheorem{lemma}[equation]{Lemma}
\newtheorem*{conjecture*}{Mamma Conjecture}
\newtheorem*{conjecture1*}{Mamma Conjecture (revisited)}
\newtheorem{proposition}[equation]{Proposition}
\newtheorem{corollary}[equation]{Corollary}
\newtheorem*{corollary*}{Corollary}
\theoremstyle{remark}
\newtheorem{definition}[equation]{Definition}
\newtheorem{example}[equation]{Example}
\newtheorem{notation}[equation]{Notation}
\theoremstyle{remark}
\newtheorem{remark}[equation]{Remark}
\newcommand{\cA}{{\mathcal A}}
\newcommand{\cB}{{\mathcal B}}
\newcommand{\cC}{{\mathcal C}}
\newcommand{\cH}{{\mathcal H}}
\newcommand{\cO}{{\mathcal O}}
\newcommand{\cP}{{\mathcal P}}
\newcommand{\cQ}{{\mathcal Q}}
\newcommand{\cS}{{\mathcal S}}
\newcommand{\cV}{{\mathcal V}}
\newcommand{\cW}{{\mathcal W}}
\newcommand{\bbC}{\mathbb{C}}
\newcommand{\bbF}{\mathbb{F}}
\newcommand{\bbG}{\mathbb{G}}
\newcommand{\bbN}{\mathbb{N}}
\newcommand{\bbR}{\mathbb{R}}
\newcommand{\bbV}{\mathbb{V}}
\newcommand{\bbQ}{\mathbb{Q}}
\newcommand{\bbZ}{\mathbb{Z}}
\DeclareMathOperator{\Id}{Id}
\DeclareMathOperator{\id}{id}
\newcommand{\Hom}{\mathrm{Hom}}
\newcommand{\too}{\longrightarrow}
\newcommand{\ie}{\textsl{i.e.}\ }
\newcommand{\eg}{\textsl{e.g.}}
\newcommand{\Z}{{\mathbb Z}}
\newcommand{\Q}{{\mathbb Q}}
\newcommand{\N}{{\mathbb N}}
\newcommand{\C}{{\mathbb C}}
\newcommand{\R}{{\mathbb R}}
\title[BC-systems, categorification, QSM-systems, and Weil numbers]{Bost--Connes systems, categorification, quantum statistical mechanics, and Weil numbers}
\author{Matilde Marcolli and Gon{\c c}alo~Tabuada}
\address{Matilde Marcolli, Mathematics Department, Mail Code 253-37, Caltech, 1200 E.~California Blvd. Pasadena, CA 91125, USA}
\email{matilde@caltech.edu} 
\urladdr{http://www.its.caltech.edu/~matilde}
\address{Gon{\c c}alo Tabuada, Department of Mathematics, MIT, Cambridge, MA 02139, USA}
\email{tabuada@math.mit.edu}
\urladdr{http://math.mit.edu/~tabuada}
\thanks{M.~Marcolli was partially supported by the NSF grants  
DMS-1007207, DMS-1201512, and PHY-1205440. G.~Tabuada was partially 
supported by a NSF CAREER Award.} 
\subjclass[2000]{11M06, 14C15, 82B10}
\date{\today}
\keywords{Quantum statistical mechanical systems, Gibbs states, Zeta function, Polylogarithms, Tannakian categories, Weil numbers, Motives, Weil restriction.}
\begin{document}

\begin{abstract}
In this article we develop a broad generalization of the classical Bost-Connes system, where roots of unit are replaced by an algebraic datum consisting of an abelian group and a semi-group of endomorphisms. Examples include roots of unit, Weil restriction, algebraic numbers, Weil numbers, CM fields, germs, completion of Weil numbers, etc. Making use of the Tannakian formalism, we categorify these algebraic data. For example, the categorification of roots of unit is given by a limit of orbit categories of Tate motives while the categorification of Weil numbers is given by Grothendieck's category of numerical motives over a finite field. To some of these algebraic data (\eg\ roots of unity, algebraic numbers, Weil numbers, etc), we associate also a quantum statistical mechanical system with several remarkable properties, which  generalize those of the classical Bost--Connes system. The associated partition function, low temperature Gibbs states, and Galois action on zero-temperature states are then studied in detail. For example, we show that in the particular case of the Weil numbers the partition function and the low temperature Gibbs states can be described as series of polylogarithms.
\end{abstract}
\maketitle
\vskip-\baselineskip
\vskip-\baselineskip
\tableofcontents
\section{Introduction}
%----------------------------------------------------------------
\subsection*{Bost-Connes systems}
%----------------------------------------------------------------
Let $\bbQ/\bbZ$ be the abelian group of roots of unit. In \cite{BC}, Bost and Connes introduced the group algebra $\bbQ[\bbQ/\bbZ]$, the algebra endomorphisms
\begin{eqnarray}\label{eq:morph-1}
\bbQ[\bbQ/\bbZ] \too \bbQ[\bbQ/\bbZ] & s\mapsto ns & n \in \bbN\,,
\end{eqnarray} 
and also the $\bbQ$-linear additive maps 
\begin{eqnarray}\label{eq:map-1}
\bbQ[\bbQ/\bbZ] \too \bbQ[\bbQ/\bbZ] & s \mapsto \sum_{s'|n s'=s}s' & n \in \bbN\,.
\end{eqnarray}
The (continuous) action of the absolute Galois group $\mathrm{Gal}(\overline{\bbQ}/\bbQ)$ on $\bbQ/\bbZ$ extends to $\bbQ[\bbQ/\bbZ]$ making \eqref{eq:morph-1}-\eqref{eq:map-1} $\mathrm{Gal}(\overline{\bbQ}/\bbQ)$-equivariant. The above piece of structure is very important since it encodes all the arithmetic information about roots of unit. Making use of it, Bost and Connes constructed in {\em loc. cit.} a quantum statistical mechanical (=QSM) system $(\cA_{\bbQ/\bbZ}, \sigma_t)$ with the following remarkable properties:
\begin{itemize}
\item[(i)] The partition function $Z(\beta)$ agrees with the Riemman zeta function.
\item[(ii)] The low temperature Gibbs states are given by polylogarithms evaluated at roots of unit.
\item[(iii)] The group $\mathrm{Gal}(\overline{\bbQ}/\bbQ)^{\mathrm{ab}} \simeq \widehat\Z^\times$ acts by symmetries on the QSM-system. This action induces an action on the set of low temperature Gibbs states and on their zero temperature limits (=ground states), where it recovers the Galois action on $\bbQ^{\mathrm{ab}}$.   
\end{itemize}
The foundational article \cite{BC} was the starting point in the study of the Riemann zeta function via noncommutative geometry tools, see \cite{CoRH,CCM,CCM3,CoMa}. The main goal of this article is to extend the above technology from $\bbQ/\bbZ$ to other interesting arithmetic settings such as the Weil numbers, for example.

%----------------------------------------------------------------
\subsection*{Statement of results}
%----------------------------------------------------------------
Let $k$ be a field of characteristic zero, $G:=\mathrm{Gal}(\overline{k}/k)$, $\Sigma$ an abelian group equipped with a continuous $G$-action, and $\sigma_n: \Sigma \to \Sigma, n \in \bbN$, $G$-equivariant homomorphisms such that $\sigma_{nm}=\sigma_n \circ \sigma_m$. We denote by $\alpha(n)$ the cardinality of the kernel of $\sigma_n$. Out of these data, we can construct the $k$-algebra $k[\Sigma]$, the $k$-algebra endomorphisms
\begin{eqnarray}\label{eq:morph-2}
\underline{\sigma_n}:k[\Sigma] \too k[\Sigma] & s \mapsto \sigma_n(s) & n \in \bbN\,,
\end{eqnarray}
and also the $k$-linear additive maps (defined only when $\alpha(n)$ is finite)
\begin{eqnarray}\label{eq:map-2}
\underline{\rho_n}:k[\Sigma_n] \too k[\Sigma] & s \mapsto \sum_{s'|\sigma_n(s')=s}s' & n \in \bbN\,,
\end{eqnarray}
here $\Sigma_n$ stands for the image of $\sigma_n$. The $G$-action on $\Sigma$ extends to $k[\Sigma]$ making \eqref{eq:morph-2}-\eqref{eq:map-2} $G$-equivariant. In the particular case where $k=\bbQ$, $\Sigma=\bbQ/\bbZ$, and $\sigma_n$ is multiplication of $n$, we recover the original construction of Bost-Connes. Other examples include Weil restriction, algebraic numbers, Weil numbers, CM fields, germs, completion of Weil numbers, etc; consult \S\ref{sec:BC-systems}. Making use of the $G$-action, we can also consider the $k$-algebras $\overline{k}[\Sigma]^G$ and $k[\Sigma]^G$. They carry canonical Hopf structures and consequently give rise to affine group $k$-schemes $\mathrm{Spec}(\overline{k}[\Sigma]^G)$ and $\mathrm{Spec}(k[\Sigma]^G)$. Our first main result, which summarizes the content of \S\ref{sec:categorification}, is the~following:
\begin{theorem}[Categorification]\label{thm:main1}
\begin{itemize}
\item[(i)] The affine group $k$-scheme $\mathrm{Spec}(\overline{k}[\Sigma]^G)$ agrees with the Galois group of the neutral Tannakian category $\mathrm{Vect}_\Sigma^{\overline{k}}(k)$ of pairs $(V, \bigoplus_{s \in \Sigma}\overline{V}^s)$, where $V$ is a finite dimensional $k$-vector space and $\bigoplus_{s \in \Sigma}\overline{V}^s$ an appropriate $\Sigma$-grading of $\overline{V}:=V \otimes \overline{k}$; see Definition \ref{def:categorification-1}. 
\item[(ii)] The $k$-algebra endomorphisms $\underline{\sigma_n}: \overline{k}[\Sigma]^G \to \overline{k}[\Sigma]^G$ are induced from $k$-linear additive symmetric monoidal functors $\underline{\bf \sigma_n}: \mathrm{Vect}_\Sigma^{\overline{k}}(k) \to \mathrm{Vect}_\Sigma^{\overline{k}}(k)$. Similarly, the $k$-linear additive maps $\underline{\rho_n}: \overline{k}[\Sigma_n]^G \to \underline{k}[\Sigma]^G$ are induced from $k$-linear additive functors $\underline{\bf \rho_n}: \mathrm{Vect}_{\Sigma_n}^{\overline{k}}(k) \to \mathrm{Vect}_\Sigma^{\overline{k}}(k)$.
\item[(iii)] Items (i)-(ii) hold {\em mutatis mutandis} with $\overline{k}[\Sigma]^G$ replaced by $k[\Sigma]^G$ and $\mathrm{Vect}_\Sigma^{\overline{k}}(k)$ replaced by the subcategory $\mathrm{Vect}_\Sigma^k(k)$ of pairs $(V,\bigoplus_{s \in \Sigma} \overline{V^s})$, where $\bigoplus_{s \in \Sigma} V^s$ is an appropriate $\Sigma$-grading of $V$; see Definition \ref{def:smaller}.
\item[(iv)] When $\Sigma$ admits a $G$-equivariant embedding into $\overline{k}^\times$, $\mathrm{Vect}_\Sigma^{\overline{k}}(k)$ is equivalent to the neutral Tannakian category $\mathrm{Aut}_\Sigma^{\overline{k}}(k)$ of pairs $(V, \Phi)$, where $V$ is a finite dimensional $k$-vector space and $\Phi: \overline{V} \stackrel{\sim}{\to} \overline{V}$ a $G$-equivariant diagonalizable automorphism whose eigenvalues belong to $\Sigma$. Moreover, when $\sigma_n$ is given by multiplication by $n$, the functor $\underline{\bf \sigma_n}$, resp. $\underline{\bf \rho_n}$, reduces to the Frobenius, resp. Verschiebung, endofunctor of $\mathrm{Aut}_\Sigma^{\overline{k}}(k)$.
\end{itemize}
\end{theorem}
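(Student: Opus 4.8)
The strategy is to identify, in all three cases, the category appearing in the statement with the category of finite dimensional representations of the affine group $k$-scheme in question, and then to read off everything else from this identification. Recall first that $\overline k[\Sigma]$ is the Hopf algebra of the diagonalizable group $k$-scheme with character group $\Sigma$, so that a finite dimensional $\overline k[\Sigma]$-comodule is nothing but a finite dimensional $\overline k$-vector space equipped with a (necessarily finitely supported) $\Sigma$-grading. For item (i), write $A:=\overline k[\Sigma]^G$ for the invariants under the diagonal $G$-action on $\overline k$ and on $\Sigma$. The key point is Galois descent in the guise of the isomorphism of Hopf $\overline k$-algebras $A\otimes_k\overline k\isoto\overline k[\Sigma]$: it identifies a finite dimensional $A$-comodule with a finite dimensional $\overline k[\Sigma]$-comodule together with a semilinear descent datum, i.e. with a finite dimensional $k$-vector space $V$ together with a $\Sigma$-grading $\bigoplus_{s\in\Sigma}\overline V^s$ of $\overline V=V\otimes\overline k$ whose associated semilinear $G$-action carries $\overline V^s$ to $\overline V^{g(s)}$. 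This is precisely the compatibility built into Definition \ref{def:categorification-1}, so one obtains an equivalence $\mathrm{Vect}_\Sigma^{\overline k}(k)\simeq\mathrm{Rep}_k(\Spec(A))$ of $k$-linear abelian rigid symmetric monoidal categories carrying the forgetful fiber functor to the forgetful fiber functor. In particular $\mathrm{Vect}_\Sigma^{\overline k}(k)$ is neutral Tannakian and its Galois group is $\Spec(\overline k[\Sigma]^G)$, which is item (i).

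For item (ii), since $\sigma_n\colon\Sigma\to\Sigma$ is a $G$-equivariant group homomorphism, its $\overline k$-linear extension $\underline{\sigma_n}$ is a morphism of Hopf algebras that commutes with the diagonal $G$-action and hence preserves $A$; thus $\Spec(\underline{\sigma_n})$ is an endomorphism of $\Spec(A)$, and one defines $\underline{\bf\sigma_n}$ as restriction of scalars along it, which is automatically $k$-linear, additive, symmetric monoidal and compatible with fiber functors. Unwinding the equivalence of (i), $\underline{\bf\sigma_n}$ sends a grading $\bigoplus_s\overline V^s$ to its pushforward $\bigoplus_t\big(\bigoplus_{\sigma_n(s)=t}\overline V^s\big)$, which makes the monoidality transparent and shows that on coordinate Hopf algebras it induces $\underline{\sigma_n}$. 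For $\underline{\rho_n}$, which is only $k$-linear and not a coalgebra map, one instead defines $\underline{\bf\rho_n}$ directly as pullback of gradings along the surjection $\sigma_n\colon\Sigma\to\Sigma_n$, sending $W\in\mathrm{Vect}_{\Sigma_n}^{\overline k}(k)$ to the $\Sigma$-graded space with degree-$s$ part $\overline W^{\sigma_n(s)}$; the underlying dimension multiplies by $\alpha(n)$, which is finite by hypothesis, and the construction descends because $\sigma_n$ is $G$-equivariant. One then checks that $\underline{\bf\rho_n}$ is $k$-linear and additive — and in general only lax monoidal, with defect the multiplicity $\alpha(n)$ — and that its effect on the $\Sigma_n$-homogeneous lines recovers the defining formula $s\mapsto\sum_{\sigma_n(s')=s}s'$, so that it induces $\underline{\rho_n}$.

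Item (iii) is proved in the same way, with Galois descent replaced by the elementary observation that a $k[\Sigma]^G$-comodule is a $\Sigma$-graded $k$-vector space whose coaction lands in the $G$-invariant part, equivalently one whose grading is constant on $G$-orbits (and supported on the finite ones); this is exactly $\mathrm{Vect}_\Sigma^k(k)$ of Definition \ref{def:smaller}, and the functors $\underline{\bf\sigma_n},\underline{\bf\rho_n}$ manifestly preserve orbit-constancy. For item (iv), a $G$-equivariant embedding $\Sigma\hookrightarrow\overline k^\times$ lets one convert "degree $s$'' into "eigenvalue $s$'': sending $(V,\bigoplus_s\overline V^s)$ to $(V,\Phi)$, with $\Phi$ acting by the scalar $s$ on $\overline V^s$, is an equivalence $\mathrm{Vect}_\Sigma^{\overline k}(k)\simeq\mathrm{Aut}_\Sigma^{\overline k}(k)$ of neutral Tannakian categories over $k$, the inverse being the eigenspace decomposition and the $G$-equivariance of $\Phi$ matching the grading-compatibility of the semilinear action. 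Under this dictionary, with $\sigma_n$ multiplication by $n$, the pushforward description of $\underline{\bf\sigma_n}$ shows $(V,\Phi)\mapsto(V,\Phi^n)$, i.e. the Frobenius; dually $\underline{\bf\rho_n}$ replaces $\Phi$ by the diagonalizable automorphism acting, on one copy of each eigenspace for every $n$-th root in $\Sigma$ of the corresponding eigenvalue, by that root — i.e. the Verschiebung.

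The main obstacle is the heart of item (i): checking that the condition "appropriate $\Sigma$-grading'' is \emph{exactly} the Galois descent datum attached to the invariants $\overline k[\Sigma]^G$ — in both directions and compatibly with the full Hopf structure — while remembering that $\Sigma$ need not be finitely generated, so that $\Spec(\overline k[\Sigma])$ is a diagonalizable group $k$-scheme of possibly infinite rank; this last point is harmless, since each object involves only a finite $G$-stable portion of $\Sigma$, but it must be acknowledged. The secondary subtlety is to make precise the sense in which the non-monoidal functor $\underline{\bf\rho_n}$ "induces'' the linear map $\underline{\rho_n}$: this is not covered by Tannakian functoriality and must instead be verified by the explicit computation on homogeneous components indicated above.
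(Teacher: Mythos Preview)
Your proposal is correct and follows essentially the same route as the paper: identify $\mathrm{Vect}_\Sigma^{\overline k}(k)$ with representations of the multiplicative-type group $\Spec(\overline k[\Sigma]^G)$ via Galois descent, define $\underline{\bf\sigma_n}$ as pushforward and $\underline{\bf\rho_n}$ as pullback of gradings, pass to the orbit set $\Sigma/G$ for item (iii), and use the eigenvalue/eigenspace dictionary for item (iv). The paper organizes this as a sequence of separate results in \S\ref{sec:categorification} (Theorems \ref{thm:categorification-1}, \ref{thm:categorification-2}, \ref{thm:equivalence} and Proposition \ref{prop:smaller}), citing \cite{Deligne-Milne} and \cite{AGS} for the Tannakian identification where you argue descent by hand, but the substance is the same.

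One point worth flagging: for the Verschiebung half of (iv) the paper does \emph{not} argue via your diagonalized description. It defines the Verschiebung by the explicit companion-type matrix $\bbV_n(\Phi)$ of \eqref{eq:matrix-V}, and then proves (Lemma \ref{lem:key-last}) that this matrix diagonalizes to your description --- but only under the additional hypothesis that the $G$-action on $\Sigma$ is trivial (Theorem \ref{thm:equivalence}(iii)). Your sketch silently assumes the diagonalizing change of basis, which a priori lives over $\overline k$, descends to a $k$-linear isomorphism; this is immediate when $\Sigma\subset k^\times$ but needs an argument otherwise. So either add that hypothesis, as the paper does, or supply the $G$-equivariance check for the Vandermonde-type conjugation.
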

\begin{remark}
When the $G$-action on $\Sigma$ is trivial (\eg\ $k=\overline{k}$), we have $\overline{k}[\Sigma]^G=k[\Sigma]^G=k[\Sigma]$. In this case, $\mathrm{Vect}_\Sigma^{\overline{k}}(k)=\mathrm{Vect}_\Sigma^k(k)$ reduces to the category $\mathrm{Vect}_\Sigma(k)$ of $\Sigma$-graded $k$-vector spaces. Similarly, $\mathrm{Aut}_\Sigma^{\overline{k}}(k)$ reduces to the category $\mathrm{Aut}_\Sigma(k)$ of diagonalizable automorphisms $\Phi: V \stackrel{\sim}{\to} V$ whose eigenvalues belong to $\Sigma$.
\end{remark}
Intuitively speaking, Theorem \ref{thm:main1} provides two simple ``models'' of the Tannakian categorification of the $k$-algebras $\overline{k}[\Sigma]^G, k[\Sigma]^G, k[\Sigma]$ and of the maps \eqref{eq:morph-2}-\eqref{eq:map-2}. This categorification is often related to the theory of (pure) motives. For example, in the original case of Bost and Connes, the category $\mathrm{Vect}_{\bbQ/\bbZ}(\bbQ)$ can be described as the limit $\mathrm{lim}_{n \geq 1} \mathrm{Tate}(\bbQ)_\bbQ/_{\!-\otimes \bbQ(n)}$ of orbit categories of Tate motives; see Proposition \ref{prop:lim}. More interestingly, in the case where $\Sigma$ is the abelian group of Weil numbers $\cW(q)$, $\mathrm{Vect}_{\cW(q)}^{\overline{\bbQ}}(\overline{\bbQ})$ agrees with the category of numerical motives over $\bbF_q$ with $\overline{\bbQ}$-coefficients; see Theorem \ref{thm:Milne1}.

Our second main result, which summarizes the content of \S\ref{sec:QSM}, is the following:
\begin{theorem}[QSM-systems]\label{thm:main22}
Assume that $k \subseteq \bbC$. Given a pair $(\Sigma, \sigma_n)$ as above and an appropriate set of $G$-equivariant embeddings $\mathrm{Emb}_0(\Sigma, \overline{\bbQ}^\times)$ (see Definition \ref{goodBC}), we construct a QSM-system $(\cA_{(\Sigma,\sigma_n)}, \sigma_t)$ with the remarkable properties:
\begin{itemize}
\item[(i)] Given an embedding $\iota \in \mathrm{Emb}_0(\Sigma, \overline{\bbQ}^\times)$, let $N_\iota(\Sigma):= \{|\iota(s)|\,\mathrm{such}\,\mathrm{that} \in \Sigma\}$ be the associated countable multiplicative subgroup of $\bbR^\times_+$. When $N_\iota(\Sigma)$ is the union of finitely (resp. infinitely) many geometric progressions, the choice of a semi-group homomorphism $g: \bbN \to \bbR^\times_+$ allows us to describe the partition functor $Z(\beta)$ of the QSM-system as follows
\begin{eqnarray*}
\sum_{\eta\leq 1 \in N_\iota(\Sigma)}  \sum_{n\geq 1} \eta^{\alpha(n) \beta} g(n)^{-\beta} && 
(\mathrm{resp.}\,\,\sum_{n\geq1} g(n)^{-\beta} \zeta(\beta \alpha(n)))\,,
\end{eqnarray*}
where $\zeta$ stands for the Riemann zeta function. The left (resp.~right) hand side series converges for all $\beta>\beta_0$ (resp. $\beta > \max\{ \beta_0, 3/2 \}$), 
where $\beta_0$ denotes the exponent of convergence of $\sum g(n)^{-\beta}$.
\item[(ii)] The low temperature Gibbs states $\varphi_{\beta,\iota}(s)$, evaluated at $s\in \Sigma$ with $N_\iota(s)=1$, are given by the following expressions:
\begin{eqnarray*}
Z(\beta)^{-1} \sum_{\eta\leq1\in N_\iota(\Sigma)}\sum_{n\geq 1} \iota(s)^n \eta^{\alpha(n) \beta} g(n)^{-\beta} && (\mathrm{resp.}\,\, Z(\beta)^{-1} \sum_{n\geq 1} \iota(s)^n \zeta(n\beta) g(n)^{-\beta}) \,.
\end{eqnarray*}
\item[(iii)] The group $\tilde Z(G_\Sigma)$ (see Definition
\ref{def:GSigma}) acts on the set of low temperature Gibbs states and on their ground states, where it 
agrees with the Galois $\tilde Z(G_\Sigma)$-action on $\iota(\Sigma)$.
\end{itemize}
\end{theorem}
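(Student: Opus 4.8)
The plan is to run, at the present level of generality, the three–step strategy of \cite{BC}: first construct the algebra $\cA_{(\Sigma,\sigma_n)}$ together with its time evolution $\sigma_t$ out of the categorified data of Theorem \ref{thm:main1} and a choice of embedding in $\mathrm{Emb}_0(\Sigma,\overline{\bbQ}^\times)$; then realize it on a Hilbert space, diagonalize the Hamiltonian, and read off $Z(\beta)$; and finally compute the low temperature Gibbs states by the Gibbs trace formula and analyze the residual symmetry group $\tilde Z(G_\Sigma)$.

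\textbf{Construction of $(\cA_{(\Sigma,\sigma_n)},\sigma_t)$.} I would replace $k[\Sigma]$ by the group $C^\ast$-algebra of $\Sigma$ and define $\cA_{(\Sigma,\sigma_n)}$ as the semi-group crossed product generated by unitaries $e_s$, $s\in\Sigma$, and isometries $\mu_n$, $n\in\bbN$, subject to $\mu_n\mu_m=\mu_{nm}$, $\mu_n^\ast e_s\mu_n=\underline{\sigma_n}(e_s)$ and $\mu_n e_s\mu_n^\ast=\alpha(n)^{-1}\underline{\rho_n}(e_s)$, using the maps \eqref{eq:morph-2}--\eqref{eq:map-2}; this specializes to the Bost--Connes algebra when $(\Sigma,\sigma_n)$ is $(\bbQ/\bbZ,\,\text{multiplication by }n)$. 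The time evolution would be $\sigma_t(e_s)=|\iota(s)|^{it}e_s$ and $\sigma_t(\mu_n)=g(n)^{it}\mu_n$ for a fixed $\iota\in\mathrm{Emb}_0(\Sigma,\overline{\bbQ}^\times)$ and semi-group homomorphism $g\colon\bbN\to\bbR^\times_+$. The first point to verify is that $\sigma_t$ is well defined; this reduces to the multiplicativity $\alpha(nm)=\alpha(n)\alpha(m)$ and the homogeneity it forces on $N_\iota$ under $\underline{\sigma_n}$ and $\underline{\rho_n}$, which is exactly what the conditions packaged into Definition \ref{goodBC} are designed to guarantee.

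\textbf{Representation and partition function.} Next I would construct a representation on $\ell^2(\bbN)\otimes\cH$, where $\cH$ carries the action of $\Sigma$ through the characters $s\mapsto\iota(s)$ coming from the embedding and $\mu_n$ acts by the isometry $\epsilon_m\mapsto\epsilon_{nm}$; the generator $H$ of $\sigma_t$ then has $e^{-\beta H}$ diagonalized by the pair (progression index in the decomposition of $N_\iota(\Sigma)$ into geometric progressions, integer $n$), with the eigenvalue $\eta^{\alpha(n)\beta}g(n)^{-\beta}$. Summing the trace, a finite union of geometric progressions gives directly
\[
Z(\beta)=\mathrm{Tr}(e^{-\beta H})=\sum_{\eta\leq1\in N_\iota(\Sigma)}\ \sum_{n\geq1}\eta^{\alpha(n)\beta}g(n)^{-\beta},
\]
while an infinite union lets one resum the inner geometric series into a value of $\zeta$, producing $\sum_{n\geq1}g(n)^{-\beta}\zeta(\beta\alpha(n))$. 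In the finite case convergence for $\beta>\beta_0$ is a comparison with $\sum g(n)^{-\beta}$ since $\eta\leq1$; in the infinite case one needs in addition a uniform control of $\zeta(\beta\alpha(n))$, and the threshold $\max\{\beta_0,3/2\}$ is precisely what keeps $\zeta(\beta\alpha(n))=1+O(2^{-\beta\alpha(n)})$ summable against $g(n)^{-\beta}$ even when $\alpha(n)$ is as small as $1$.

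\textbf{Gibbs states, ground states, and symmetries.} For $\beta$ in the convergence range I would then show that the extremal low temperature $\mathrm{KMS}_\beta$ states are the Gibbs states $\varphi_{\beta,\iota}(a)=Z(\beta)^{-1}\mathrm{Tr}(a\,e^{-\beta H})$ attached to the representations above; evaluating on $e_s$ with $N_\iota(s)=1$ and bookkeeping the multiplicities and the characters $\iota$ yields the polylogarithm-type expressions of (ii). For (iii), the group $\tilde Z(G_\Sigma)$ of Definition \ref{def:GSigma} acts on $\cA_{(\Sigma,\sigma_n)}$ through its action on $\iota(\Sigma)\subseteq\overline{\bbQ}^\times$ while fixing the $\mu_n$, hence commutes with $\sigma_t$ and permutes the $\mathrm{KMS}_\beta$ states; passing to the $\beta\to\infty$ ground states, the evaluations $\varphi_{\beta,\iota}(e_s)$ limit to $\iota(s)$, so the induced action is the Galois action on $\iota(\Sigma)$. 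I expect the main obstacle to be, as in the classical case, the last ``fabulous states'' step (showing the Gibbs states exhaust the extremal low temperature $\mathrm{KMS}_\beta$ states and that the ground-state limits are taken correctly) together with the uniformity in the previous step: one must simultaneously control the arithmetic of the kernels $\alpha(n)$, the analytic behaviour of $\zeta$ at the shifted arguments $\beta\alpha(n)$, and the geometric-progression structure of $N_\iota(\Sigma)$, which is exactly why these hypotheses are built into $\mathrm{Emb}_0(\Sigma,\overline{\bbQ}^\times)$.
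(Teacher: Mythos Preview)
Your outline captures the broad strategy of \S\ref{sec:QSM}, but there is a genuine gap in the construction of the time evolution. Your formula $\sigma_t(e_s)=|\iota(s)|^{it}e_s$ is \emph{not} compatible with the crossed-product relations as soon as $N_\iota(\Sigma)\neq\{1\}$. Indeed, applying your $\sigma_t$ to $\mu_n^\ast e_s\mu_n=e_{\sigma_n(s)}$ gives $|\iota(s)|^{it}e_{\sigma_n(s)}$ on the left and $|\iota(\sigma_n(s))|^{it}e_{\sigma_n(s)}=|\iota(s)|^{\alpha(n)it}e_{\sigma_n(s)}$ on the right; these agree only when $|\iota(s)|=1$. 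The same obstruction appears for $\mu_n e_s\mu_n^\ast$. This is precisely why the paper enlarges the algebra by \emph{weight operators} $W(\lambda)$ (Definition~\ref{tildealg}) satisfying $W(\lambda)\mu_n=\mu_n W(\lambda)^{\alpha(n)}$, and sets $\sigma_t(s)=W(N_\iota(s))^{-it}s$ rather than a scalar phase on $s$; the commutation rule then absorbs the exponent $\alpha(n)$ exactly (see the proof of Proposition~\ref{tevlem}). Without this device your $\sigma_t$ simply does not exist on the algebra you wrote down, and the rest of the argument (Hamiltonian, partition function, Gibbs states) has no starting point in the general case.

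A second, smaller issue concerns the infinite-progression case. The zeta value $\zeta(\beta\alpha(n))$ does not arise by ``resumming the inner geometric series'' over $\eta$ in $N_\iota(\Sigma)$: that sum is $\sum_r (1-\lambda_r^{-\alpha(n)\beta})^{-1}$, which has no reason to be $\zeta(\beta\alpha(n))$. The paper instead passes to a larger Hilbert space $\widetilde\cH_{\alpha,\iota}^\leq$ (Notation~\ref{tildeHalphaiota}), introduces an auxiliary homomorphism $h\colon N_\iota(\Sigma)\to\bbR^\times_+$ sending the generators $\lambda_r$ to the primes $p_r$ (Example~\ref{ex:primes}), and modifies the time evolution accordingly (Proposition~\ref{tevolprop2}); the partition function then becomes an Euler product $\prod_p(1-p^{-\alpha(n)\beta})^{-1}=\zeta(\alpha(n)\beta)$ (Theorem~\ref{thm:Zgeneral}). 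Finally, note that the theorem as stated does not claim that the Gibbs states exhaust the extremal $\mathrm{KMS}_\beta$ states, so that step in your plan is not needed here.
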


In the particular case where $k=\bbQ$, $\Sigma=\bbQ/\bbZ$, $\sigma_n$ is multiplication by $n$, and $g(n)=n$, we have $\tilde Z(G_\Sigma)={\rm Gal}(\bar\Q/\Q)$ and the remarkable properties (i)-(iii) of Theorem \ref{thm:main22} reduce to those of the classical Bost-Connes system. Therefore, the QSM-system of Theorem \ref{thm:main22} greatly generalizes the original one of Bost-Connes. Here are some examples (which summarize the content of \S\ref{sec:examples}):

\begin{example}[Algebraic numbers]
When $k=\bbQ$, $\Sigma=\overline{\bbQ}^\times$, and $\sigma_n$ is 
raising to the $n^\mathrm{th}$ power, the choice of an appropriate set of $G$-equivariant embeddings 
$\mathrm{Emb}_0(\overline{\bbQ}^\times, \overline{\bbQ}^\times)$ gives rise to a QSM-system $(\cA_{(\overline{\bbQ}^\times, \sigma_n)}, \sigma_t)$ with the following properties: (i) the partition function $Z(\beta)$ is given by $\sum_{n\geq1} g(n)^{-\beta} \zeta(\beta \alpha(n))$; (ii) the low temperature Gibbs states, evaluated at $s\in \overline{\Q}^\times$ with $|s| = 1$, are given by 
$$ \varphi_{\beta,\iota}(s) = \frac{\sum_{n\geq 1} \iota(s)^n \zeta(\beta n) 
g(n)^{-\beta}}{\sum_{n\geq 1} \zeta(\beta n) g(n)^{-\beta}}\,;$$
(iii) the group $\tilde Z(G_\Sigma)$ is trivial.
\end{example}
\begin{example}[Weil numbers of weight zero]
Let $q=p^r$ be a prime power. When $k=\bbQ$, $\Sigma$ is the group of Weil numbers of weight zero $\cW_0(q)$, 
and $\sigma_n$ is raising to the $n^\mathrm{th}$ power, the choice of an appropriate set of $G$-equivariant embeddings $\mathrm{Emb}_0(\cW_0(q), \overline{\bbQ}^\times)$ gives rise to a QSM-system $(\cA_{(\cW_0(q), \sigma_n)}, \sigma_t)$ with the following properties (which are very similar to those of the classical Bost-Connes system): (i) the partition function $Z(\beta)$ is given by $\sum_{n\geq 1} g(n)^{-\beta}$. When $g(n)=n$, it reduces to the Riemann zeta function; (ii) the low temperature Gibbs states, evaluated at $s \in \cW_0(q)$, are given by %polylogarithm functions normalized by the Riemann zeta function:
$$  \varphi_{\beta,\iota}(s) =\frac{{\rm Li}_\beta(\iota(s))}{\zeta(\beta)}\,;$$
(iii) the group $\tilde Z(G_{\cW_0(q)})$ acts on the set of low temperature Gibbs states and on their ground states, where it agrees with the Galois $\tilde Z(G_{\cW_0(q)})$-action on $\iota(\cW_0(q))$.
\end{example}
\begin{example}[Weil numbers]
When $k=\bbQ$, $\Sigma$ is the group of Weil numbers $\cW(q)$, and $\sigma_n$ is raising to
the $n^\mathrm{th}$ power, the choice of an appropriate set of $G$-equivariant embeddings $\mathrm{Emb}_0(\cW(q), \overline{\bbQ}^\times)$ gives rise to a QSM-system $(\cA_{(\cW(q), \sigma_n)}, \sigma_t)$ with the following properties: (i) the partition function $Z(\beta)$ is given by a series of polylogarithm functions $\sum_{k \geq 0}  {\rm Li}_\beta(q^{-k\beta/2})$; (ii) the low temperature Gibbs states, evaluated at $s\in \cW_0(q)$, are given by
$$ \varphi_{\beta,\iota}(s)= \frac{\sum_{k \geq 0} {\rm Li}_\beta(\iota(s) q^{-k\beta/2})}{\sum_{k \geq 0} 
{\rm Li}_\beta(q^{-k\beta/2})}\,; $$
(iii) the group $\tilde Z(G_{\cW(q)})$ acts on the set of low temperature Gibbs states and on their ground states, where it agrees with the Galois $\tilde Z(G_{\cW(q)})$-action on $\iota(\cW_0(q))$.
\end{example}

Finally, in \S\ref{sec:Weil+completion} we outline the construction of ``diagonal'' QSM-systems associated to Weil restriction and completion. The details will appear in a forthcoming article.

\smallskip

\begin{notation}
Throughout the article $k$ will be a field of characteristic zero. We will denote by $G$ its absolute Galois group $\mathrm{Gal}(\overline{k}/k)$. Unless stated differently, all tensor products will be taken over $k$.
\end{notation}

\section{Bost-Connes systems}\label{sec:BC-systems}

In this section we introduce the general notion of an abstract/concrete Bost-Connes system and describe several examples. Given a set $S$, we denote by $\cP(S)$ the set of subsets of $S$ and by $\cP(S)^n$ the set of subsets of $S$ with cardinality $n \in \bbN$.
\begin{definition}\label{def:BC-data}
An {\em abstract Bost-Connes datum $(\Sigma, \sigma_n)$} consists of:
\begin{itemize}
\item[(i)] An abelian group $\Sigma$ equipped with a $G$-action $G \to \mathrm{Aut}(\Sigma)$. We assume that the $G$-action is {\em continuous}, \ie that $\Sigma=\bigcup_l \Sigma^{\mathrm{Gal}(\overline{k}/l)}$ where $l$ runs through the finite Galois field extensions of $k$ contained in $\overline{k}$.
\item[(ii)] Group homomorphisms $\sigma_n:\Sigma \to \Sigma, n \in \bbN$. We assume that $\sigma_n$ is $G$-equivariant and that $\sigma_{nm}=\sigma_n \circ \sigma_m$ for every $n,m \in \bbN$.
\end{itemize}
\end{definition}
\begin{notation}
Let us write $\alpha(n)$ for the cardinality of the kernel of $\sigma_n$.
\end{notation}
\begin{definition}\label{def:BC-data2}
An abstract Bost-Connes datum $(\Sigma,\sigma_n)$ is called {\em concrete} if $\alpha(n)$ is finite for every $n \in \bbN$ and the assignment $\alpha: \N \to \N,\, n \mapsto \alpha(n)$, is a non-trivial multiplicative semi-group 
homomorphism, \ie $\alpha(nm) =\alpha(n)\cdot\alpha(m)$.
\end{definition}
By definition, every concrete Bost-Connes datum is also an abstract Bost-Connes datum. 
The converse is false; see Examples \ref{ex:6}-\ref{ex:7}, \ref{ex:0L}--\ref{ex:0L11}, and \ref{ex:last} below.

\begin{notation}
Let $\Sigma_n$ be the image of $\sigma_n$ and $\rho_n: \Sigma_n \to \cP(\Sigma)$ the map that sends an element of $\Sigma_n$ to its pre-image under $\sigma_n$. Note that $\Sigma_n$ is a subgroup of $\Sigma$ which is stable under the $G$-action, that $\rho_n$ is $G$-equivariant (the $G$-action on $\Sigma$ extends to a $G$-action on $\cP(\Sigma)$), and that we have the following composition:
\begin{eqnarray*}
\Sigma_n \stackrel{\rho_n}{\too} \cP(\Sigma)\stackrel{\cP(\sigma_n)}{\too} \cP(\Sigma) && s \mapsto \{s\}\,.
\end{eqnarray*}
\end{notation}

\begin{definition}\label{def:BC-sys}
Let $(\Sigma,\sigma_n)$ be an abstract Bost-Connes datum. The associated {\em abstract Bost-Connes system} consists of the following data:
\begin{itemize}
\item[(i)] The $k$-algebra $\overline{k}[\Sigma]^G$.
\item[(ii)] The $k$-algebra homomorphisms $
\underline{\sigma_n}: \overline{k}[\Sigma]^G \to \overline{k}[\Sigma]^G, s \mapsto \sigma_n(s)$.
\item[(iii)] The $k$-linear additive maps (defined only when $\alpha(n)$ is finite)
\begin{eqnarray*}
\underline{\rho_n}: \overline{k}[\Sigma_n]^G \to \overline{k}[\Sigma]^G && s \mapsto \sum_{s' \in \rho_n(s)}s'\,.
\end{eqnarray*}
\end{itemize}
\end{definition}

In \S\ref{sec:categorification} we will categorify the abstract Bost-Connes systems.
\begin{definition}\label{def:BC-sys2}
Let $(\Sigma, \sigma_n)$ be a concrete Bost-Connes system. The associated {\em concrete Bost-Connes system} consists of the following data:
\begin{itemize}
\item[(i)] The $k$-algebra $k[\Sigma]$ equipped with the induced $G$-action.
\item[(ii)] The $G$-equivariant $k$-algebra homomorphisms $\underline{\sigma_n}: k[\Sigma]\to
k[\Sigma]$, $s\mapsto \sigma_n(s)$.
\item[(iii)] The $G$-equivariant $k$-linear additive maps 
\begin{eqnarray*}
\underline{\rho_n}: k[\Sigma_n]\to k[\Sigma] && s\mapsto \sum_{s'\in \rho_n(s)} s'\,.
\end{eqnarray*}
\end{itemize}
\end{definition}
In \S\ref{sec:QSM} we will associate to every concrete Bost-Connes system a quantum statistical mechanical system.
\begin{remark}
Let $(\Sigma,\sigma_n)$ be a concrete Bost-Connes datum with trivial $G$-action. Since $\overline{k}[\Sigma]^G=k[\Sigma]$, the associated abstract and concrete Bost-Connes systems~agree.
\end{remark}

We now describe several examples of abstract/concrete Bost-Connes systems.
%---------------------------------------------
\subsection*{Example 1 - Original Bost-Connes system}
%---------------------------------------------
Let $k:=\bbQ$, $\Sigma:=\bbQ/\bbZ$ equipped with the $G$-action induced by the identification of $\bbQ/\bbZ$ with the roots of unit in $\overline{\bbQ}^\times$, and $\sigma_n$ the homomorphism $n\cdot -: \bbQ/\bbZ \to \bbQ/\bbZ$. This defines a concrete Bost-Connes datum. The associated concrete Bost-Connes system agrees with the one introduced originally by Bost and Connes in \cite{BC}; consult \cite{CM} for its reformulation. If we forget about the $G$-action, then the associated abstract=concrete Bost-Connes system is the arithmetic subalgebra of the Bost-Connes algebra; see \cite{CM}. More generally, $k=\bbQ$ can be replaced by any subfield of $\overline{\bbQ}$. 

%---------------------------------------------
\subsection*{Example 2 - Weil restriction}
%---------------------------------------------
Let $k:=\bbR$, $\Sigma:= \bbQ/\bbZ \times \bbQ/\bbZ$ equipped with the switch $\bbZ/2$-action, and $\sigma_n$ the homomorphism $(n\cdot -, n \cdot -): \bbQ/\bbZ \times \bbQ/\bbZ\to \bbQ/\bbZ \times \bbQ/\bbZ$. In this case, $\alpha(n)=n^2$. This defines a concrete Bost-Connes datum. The associated abstract Bost-Connes system is morally speaking the Weil restriction along $\bbC/\bbR$ of the arithmetic subalgebra of the Bost-Connes algebra; see \S\ref{sec:categorification}.
%---------------------------------------------
\subsection*{Example 3 - Algebraic numbers}
%---------------------------------------------
Let $k:=\bbQ$, $\Sigma:= \overline{\bbQ}^\times$ equipped with the canonical $G$-action, and $\sigma_n$ the homomorphism $(-)^n: \overline{\bbQ}^\times \to \overline{\bbQ}^\times$. This defines a concrete Bost-Connes datum. The associated abstract/concrete Bost-Connes system contains the one of Example 1. More generally, $k=\bbQ$ can be replaced by any subfield of $\overline{\bbQ}$.
%---------------------------------------------
\subsection*{Example 4 - Weil numbers}
%---------------------------------------------
\begin{definition}\label{def:Weil}
Let $q=p^r$ be a prime power. An algebraic number $\pi$ is called a {\em Weil $q$-number of weight $m \in \bbZ$} if the following holds:
\begin{itemize}
\item[(i)] For every embedding $\varrho:\bbQ[\pi] \hookrightarrow \bbC$ we have $|\varrho(\pi)| = q^{\frac{m}{2}}$.
\item[(ii)] There exists an integer $s$ such that $q^s \pi$ is an algebraic integer.
\end{itemize}
Let $\cW_m(q)$ be the set of Weil $q$-number of weight $m$ and $\cW(q):=\bigcup_{m \in \bbZ} \cW_m(q)$.
\end{definition}
Note that $\cW_0(q)$ and $\cW(q)$ are subgroups of $\overline{\bbQ}^\times$ and that $\cW_m(q)$ is stable under the canonical $\mathrm{Gal}(\overline{\bbQ}/\bbQ)$-action.

\begin{proposition}\label{lem:aux1}
The following holds:
\begin{itemize}
\item[(i)] The group homomorphism $(-)^n: \cW_0(q) \to \cW_0(q)$ is surjective.
\item[(ii)] We have the following group isomorphism 
\begin{eqnarray}\label{eq:isomorphism-decomp}
\cW(q) \stackrel{\sim}{\too} \cW_0(q) \times \bbZ && \pi \mapsto (\frac{\pi}{|\varrho(\pi)|},w(\pi))\,,
\end{eqnarray}
where $w(\pi)$ stands for the weight of $\pi$.
\item[(iii)] Under \eqref{eq:isomorphism-decomp}, the $\mathrm{Gal}(\overline{\bbQ}/\bbQ)$-action on $\cW(q)$ identifies with
\begin{equation*}
\gamma((\pi,m)) = \left\{ 
  \begin{array}{ll}
    (\gamma(\pi),m) & \text{if} \quad \gamma \in \mathrm{Ker}(\mathrm{Gal}(\overline{\bbQ}/\bbQ) \twoheadrightarrow \mathrm{Gal}(\bbQ[\sqrt{q}]/\bbQ))\\
   ((-1)^m \gamma(\pi),m) & \text{otherwise}\,.
  \end{array} \right.
\end{equation*}
\item[(iv)] Under \eqref{eq:isomorphism-decomp}, the homomorphism $(-)^n:\cW(q) \to \cW(q)$ identifies with 
\begin{eqnarray*}
\cW_0(q) \times \bbZ \too \cW_0(q) \times \bbZ && (\pi, m) \mapsto (\pi^n, nm)\,.
\end{eqnarray*}
\end{itemize}
\end{proposition}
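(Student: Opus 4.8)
The plan is to treat the four items essentially independently; once the basic group structure of $\cW(q)$ is recorded, (iii) and (iv) become pure bookkeeping. Two elementary facts will be used repeatedly. First: for an algebraic number $\pi$, condition (ii) of Definition \ref{def:Weil} holds if and only if $\pi$ is integral over $\bbZ[1/p]$ (clear or introduce the denominators $q^s$). Second: if $\delta^n$ is integral over $\bbZ[1/p]$ then so is $\delta$, since $\delta$ is a root of $X^n-\delta^n$ and integrality is transitive. I will also use that $G=\mathrm{Gal}(\overline{\bbQ}/\bbQ)$ permutes the archimedean places of $\bbQ[\pi]$, whence $|\varrho(\gamma\pi)|=|\varrho(\pi)|$ and in particular $w(\gamma\pi)=w(\pi)$ for all $\gamma\in G$. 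For (i), given $\pi\in\cW_0(q)$ I would take any $n$-th root $\pi'\in\overline{\bbQ}$: for each embedding $\varrho$ of $\bbQ[\pi']$ into $\bbC$ one has $|\varrho(\pi')|^n=|\varrho((\pi')^n)|=|\varrho(\pi)|=1$, hence $|\varrho(\pi')|=1$, while the second fact (with $\delta=\pi'$, using $(\pi')^n=\pi\in\cW_0(q)$ and the first fact) shows $\pi'$ is integral over $\bbZ[1/p]$; therefore $\pi'\in\cW_0(q)$ and $(-)^n$ is surjective.

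For (ii), I would show that assigning to a Weil number its weight gives a group homomorphism $w\colon\cW(q)\to\bbZ$. It is well defined because $|\varrho(\pi)|=q^{w(\pi)/2}$ determines $w(\pi)$, so the subsets $\cW_m(q)$ are pairwise disjoint; it is a homomorphism because $\cW_m(q)\cdot\cW_{m'}(q)\subseteq\cW_{m+m'}(q)$, the absolute-value condition following from multiplicativity of $|\cdot|$ and condition (ii) from the first fact (integral elements over $\bbZ[1/p]$ form a ring, after passing to $\bbQ[\pi_1,\pi_2]$). Fixing a square root $\sqrt q\in\overline{\bbQ}$ one has $\sqrt q\in\cW_1(q)$ — indeed $|\varrho(\sqrt q)|=q^{1/2}$ and $\sqrt q$ is a root of $X^2-q$ — so $w$ is onto, and $\mathrm{Ker}(w)=\cW_0(q)$ by definition. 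Hence $1\to\cW_0(q)\to\cW(q)\to\bbZ\to 1$ is a short exact sequence of abelian groups, which splits because $\bbZ$ is free; the section $1\mapsto\sqrt q$ yields the isomorphism $\pi\mapsto\bigl(\pi\cdot(\sqrt q)^{-w(\pi)},\,w(\pi)\bigr)$. The first coordinate has all archimedean absolute values equal to $1$, hence lies in $\cW_0(q)$, and since $(\sqrt q)^{w(\pi)}$ has absolute value $q^{w(\pi)/2}=|\varrho(\pi)|$ this is exactly the map \eqref{eq:isomorphism-decomp}.

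For (iii) and (iv), writing $\pi\leftrightarrow(\pi_0,m)$ with $m=w(\pi)$ and $\pi_0=\pi\cdot(\sqrt q)^{-m}\in\cW_0(q)$, I would compute directly. Item (iv) is immediate: $w(\pi^n)=nm$ and $\pi^n(\sqrt q)^{-nm}=\pi_0^n$, so $(-)^n$ becomes $(\pi_0,m)\mapsto(\pi_0^n,nm)$. For (iii): since $w(\gamma\pi)=m$, one has $\gamma\pi\leftrightarrow\bigl(\gamma(\pi)(\sqrt q)^{-m},m\bigr)$ and $\gamma(\pi)(\sqrt q)^{-m}=\gamma(\pi_0)\cdot\bigl(\gamma(\sqrt q)/\sqrt q\bigr)^{m}$; as $\gamma(\sqrt q)/\sqrt q=\pm1$, equal to $1$ precisely when $\gamma$ lies in the kernel of $G\twoheadrightarrow\mathrm{Gal}(\bbQ[\sqrt q]/\bbQ)$, this yields the two displayed cases.

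The only step requiring genuine care is the manipulation of condition (ii) of Definition \ref{def:Weil} — the two elementary facts above and their use in the product inclusion $\cW_m(q)\cdot\cW_{m'}(q)\subseteq\cW_{m+m'}(q)$ and in showing that $n$-th roots of weight-zero Weil numbers remain Weil numbers; everything else is formal. I should also flag that the isomorphism \eqref{eq:isomorphism-decomp} depends on the chosen $\sqrt q$, equivalently on the splitting of $w$, and that the sign $(-1)^m$ in (iii) is an artefact of that choice — a different choice produces canonically isomorphic data, so this dependence is harmless.
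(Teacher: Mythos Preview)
Your proof is correct and follows essentially the same approach as the paper's own proof: take an $n$-th root and verify both conditions directly for (i), exhibit the inverse $(\pi,m)\mapsto \pi q^{m/2}$ for (ii), and compute directly for (iii)--(iv). Your reformulation of condition (ii) of Definition~\ref{def:Weil} as integrality over $\bbZ[1/p]$ is a clean way to package the same argument the paper makes, and your explicit framing of (ii) as a split short exact sequence is slightly more verbose but equivalent to simply writing down the inverse as the paper does.
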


\begin{remark}
Note that when $q$ is a even power of $p$, we have $\gamma((\pi,m))=(\gamma(\pi),m)$.
\end{remark}

\begin{proof}
(i) Given $\pi \in \cW_0(q)$, we need to show that $\sqrt[n]{\pi}$ also belongs to $\cW_0(q)$. Condition (i) of Definition \ref{def:Weil} is clear. Condition (ii) follows from the equality $q^s \sqrt[n]{\pi} = q^n \cdot q^{\frac{s}{n}}\sqrt[n]{\pi}$ and from the fact that $q^n$ and $q^{\frac{s}{n}}\sqrt[n]{\pi}$ are algebraic integers.

(ii) The inverse of \eqref{eq:isomorphism-decomp} is given by the group homomorphism
\begin{eqnarray*}
\cW_0(q) \times \bbZ \too \cW(q) && (\pi,m) \mapsto \pi q^{\frac{m}{2}}\,.
\end{eqnarray*}

(iii) Let $(\pi,m) \in \cW_0(q) \times \bbZ$. The action of $\gamma \in \mathrm{Gal}(\overline{\bbQ}/\bbQ)$ on $\pi q^{\frac{m}{2}}$ is given by $\gamma(\pi q^{\frac{m}{2}})=\gamma(\pi)\gamma(q^{\frac{m}{2}}) \in \cW(q)$. When $\gamma$ belongs to the kernel of the homomorphism $\mathrm{Gal}(\overline{\bbQ}/\bbQ) \twoheadrightarrow \mathrm{Gal}(\bbQ[\sqrt{q}]/\bbQ)$, we have $\gamma(q^{\frac{m}{2}})=q^{\frac{m}{2}}$. Otherwise, we have $\gamma(q^{\frac{m}{2}})=(-1)^m q^{\frac{m}{2}}$. The proof follows now from the above isomorphism \eqref{eq:isomorphism-decomp}.

(iv) The proof is by now clear.
\end{proof}
\begin{example}\label{ex:4}
Let $k:=\bbQ$, $\Sigma:= \cW_0(q)$ equipped with the canonical $G$-action, and $\sigma_n$ the homomorphism $(-)^n:\cW_0(q) \to \cW_0(q)$. Thanks to Lemma \ref{lem:aux1}(i), we have $\Sigma_n =\cW_0(q)$. This defines a concrete Bost-Connes datum. The associated abstract/concrete Bost-Connes system is contained in the one of Example 3 
and contains the one of Example 1. More generally, $k=\bbQ$ can be replaced by any subfield of $\overline{\bbQ}$.
\end{example}
\begin{example}\label{ex:5}
Let $k:=\bbQ$, $\Sigma:= \cW(q)$ equipped with the canonical $G$-action, and $\sigma_n$ the homomorphism $(-)^n: \cW(q) \to \cW(q)$. Thanks to Lemma \ref{lem:aux1}(iv), we have $\Sigma_n= \bigcup_{m \in \bbZ} \cW_{nm}(q)$. This defines a concrete Bost-Connes datum. The associated abstract/concrete Bost-Connes system is contained in the one of Example $3$ and contains the one of Example \ref{ex:4}. More generally, $k=\bbQ$ can be replaced by any subfield of $\overline{\bbQ}$.
\end{example}
%---------------------------------------------
\subsection*{Example 5 - CM fields}
%---------------------------------------------
Let $L \subset \overline{\bbQ}$ be a CM field which is Galois over $\bbQ$. We denote by $\mathfrak{P}$ its set of places and by $\|\text{-}\|_{\mathfrak{p}}, \mathfrak{p} \in \mathfrak{P}$, the normalized valuations.
\begin{definition}\label{def:CM}
Let $\cW_m^L(q)$ be the subset of those Weil $q$-numbers $\pi$ of weight $m$ such that $\pi \in L$ and $ \| \pi\|_{\mathfrak{p}} \in q^\bbZ$ for every $\mathfrak{p} \in \mathfrak{P}$.
\end{definition}
Note that $\cW_0^L(q)$ and $\cW^L(q):= \bigcup_{m \in \bbZ} \cW_m^L(q)$ are subgroups of $\cW_0(q)$ and $\cW(q)$, respectively, and that $\cW_m^L(q)$ is stable under the canonical $\mathrm{Gal}(\overline{\bbQ}/\bbQ)$-action.
\begin{remark}\label{rk:square}
When $\sqrt{q} \in L$, items (ii)-(iv) of Proposition \ref{lem:aux1} hold {\em mutatis mutandis} with $\cW(q)$ and $\cW_0(q)$ replaced by $\cW^L(q)$ and $\cW_0^L(q)$, respectively.
\end{remark}
\begin{example}\label{ex:6}
Let $k:=\bbQ$, $\Sigma:= \cW^L_0(q)$ equipped with the canonical $G$-action, and $\sigma_n$ the homomorphism $(-)^n:\cW^L_0(q) \to \cW^L_0(q)$. This defines an abstract Bost-Connes datum which is not concrete! Since the field extension $L/\bbQ$ is finite, it contains solely a finite number of roots of unit. Therefore, there exists a natural number $N \gg 0$ such that $\alpha(N)=\alpha(N^2)$, which implies that the assignment $\alpha: \bbN \to \bbN$ is not a semi-group homomorphism. The associated abstract Bost-Connes system is contained in the one of 
Example \ref{ex:4}. More generally, $k$ can be replaced by any subfield of $\overline{\bbQ}$.
\end{example}
\begin{example}\label{ex:7}
Let $k:=\bbQ$, $\Sigma:= \cW^L(q)$ equipped with the canonical $G$-action, and $\sigma_n$ the homomorphism $(-)^n: \cW^L(q) \to \cW^L(q)$. Similarly to Example \ref{ex:6}, this defines an abstract Bost-Connes datum which is not concrete. The associated abstract Bost-Connes system is contained in the one of Example \ref{ex:5} and contains the one of Example \ref{ex:6}. More generally, $k$ can be replaced by any subfield of $\overline{\bbQ}$.
\end{example}

%---------------------------------------------
\subsection*{Example 6 - Germs}
%---------------------------------------------
When $r |r'$ we have the group homomorphism
\begin{eqnarray}\label{eq:transaction}
\cW(p^r) \too \cW(p^{r'}) && \pi \mapsto \pi^{\frac{r'}{r}}\,. 
\end{eqnarray}
\begin{definition}
Let $\cW(p^\infty)$ be the colimit $\mathrm{colim}_r \cW(p^r)$. Note that $\cW(p^\infty)$ comes equipped with an induced $\mathrm{Gal}(\overline{\bbQ}/\bbQ)$-action.
\end{definition}
\begin{proposition}\label{lem:germs}
The following holds:
\begin{itemize}
\item[(i)] The above homomorphism \eqref{eq:transaction} sends $\cW_m(p^r)$ to $\cW_m(p^{r'})$. Consequently, we obtain the abelian group $\cW_0(p^\infty):= \mathrm{colim}_r \cW_0(p^r)$.
\item[(ii)] Given a CM field $L \subset \overline{\bbQ}$ which is Galois over $\bbQ$, the above homomorphism \eqref{eq:transaction} sends $\cW_m^L(p^r)$ to $\cW_m^L(p^{r'})$. Consequently, the obtain the abelian groups $\cW_0^L(p^\infty):= \mathrm{colim}_r \cW_0^L(p^r)$ and $\cW^L(p^\infty):= \mathrm{colim}_r \cW^L(p^r)$.
\item[(iii)] The group homomorphism $(-)^n: \cW(p^\infty) \to \cW(p^\infty)$ is surjective.
\item[(iv)] The group homomorphism $(-)^n$ is also injective.
\end{itemize}
\end{proposition}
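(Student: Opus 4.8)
The plan is to dispatch (i) and (ii) by an elementary check against the defining conditions, and then to derive (iii) and (iv) from the structural description of $\cW(p^\infty)$ supplied by Proposition \ref{lem:aux1}.

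For (i), fix $\pi\in\cW_m(p^r)$ and set $e:=r'/r\in\bbN$, so that the image of $\pi$ under \eqref{eq:transaction} is $\pi^e\in\bbQ[\pi]$. Any embedding $\varrho\colon\bbQ[\pi^e]\hookrightarrow\bbC$ extends to some $\tilde\varrho\colon\bbQ[\pi]\hookrightarrow\bbC$, and then $|\varrho(\pi^e)|=|\tilde\varrho(\pi)|^e=((p^r)^{m/2})^e=(p^{r'})^{m/2}$, which is condition (i) of Definition \ref{def:Weil} for $\pi^e$, with the \emph{same} weight $m$; and if $q^s\pi$ is an algebraic integer (here $q=p^r$), then so is $(q^s\pi)^e=(p^{r'})^s\pi^e$, which is condition (ii). Hence \eqref{eq:transaction} restricts to $\cW_m(p^r)\to\cW_m(p^{r'})$; taking $m=0$, the subgroups $\cW_0(p^r)$ form a directed subsystem and $\cW_0(p^\infty):=\mathrm{colim}_r\,\cW_0(p^r)$ is well defined. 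For (ii), the same computation shows $\pi^e$ is a Weil $p^{r'}$-number of weight $m$; it lies in $L$ since $\pi$ does, and for every place $\mathfrak{p}$ of $L$ one has $\|\pi^e\|_{\mathfrak{p}}=\|\pi\|_{\mathfrak{p}}^e\in((p^r)^{\bbZ})^e=(p^{r'})^{\bbZ}$. Thus \eqref{eq:transaction} restricts to $\cW^L_m(p^r)\to\cW^L_m(p^{r'})$ and, letting $m$ vary, to $\cW^L(p^r)\to\cW^L(p^{r'})$, producing the colimits $\cW^L_0(p^\infty)$ and $\cW^L(p^\infty)$.

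For (iii) and (iv), I would first record that, under the isomorphism \eqref{eq:isomorphism-decomp} of Proposition \ref{lem:aux1}(ii) together with the weight computation above, the transition map \eqref{eq:transaction} becomes $(\pi_0,m)\mapsto(\pi_0^{r'/r},m)$ on $\cW_0(p^r)\times\bbZ$; taking the filtered colimit gives $\cW(p^\infty)\cong\cW_0(p^\infty)\times\bbZ$, with $(-)^n$ identified, by Proposition \ref{lem:aux1}(iv), with $(\pi_0,m)\mapsto(\pi_0^n,nm)$. So everything reduces to the behaviour of $(-)^n$ on the weight-zero germs $\cW_0(p^\infty)$. That it is surjective there follows from Proposition \ref{lem:aux1}(i): each $(-)^n\colon\cW_0(p^r)\to\cW_0(p^r)$ is onto, the squares with the transition maps commute (both composites raise to the relevant power), and a filtered colimit of surjections is surjective; this gives (iii). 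For (iv) I would show $\cW_0(p^\infty)$ is torsion-free: a torsion element of $\cW_0(p^r)\subseteq\overline\bbQ^\times$ is a root of unity, of order $d$ say, and the transition map $\cW_0(p^r)\to\cW_0(p^{rd})$ sends it to its $d$-th power, namely $1$, so its class in $\cW_0(p^\infty)$ is trivial; hence any $y\in\cW_0(p^\infty)$ with $y^n=1$ is trivial, so $(-)^n$ is injective on $\cW_0(p^\infty)$, and since multiplication by $n$ is injective on $\bbZ$ it is injective on $\cW(p^\infty)$. (Equivalently: $\cW_0(p^\infty)$ is torsion-free and divisible, hence a $\bbQ$-vector space, on which $(-)^n$ is an automorphism.)

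The only non-formal input is the torsion-freeness of $\cW_0(p^\infty)$ — the point that the finitely many roots of unity present at each finite stage are killed further up the directed system — together with the routine check that the decomposition \eqref{eq:isomorphism-decomp} is compatible with the transition maps. The remainder is the elementary verification in (i)--(ii) or soft facts about filtered colimits, so I expect those verifications to make up the bulk of the argument.
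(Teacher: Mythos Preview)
Your arguments for (i), (ii), and (iv) are correct and essentially parallel the paper's approach; in particular, your proof of (iv) via torsion-freeness of $\cW_0(p^\infty)$ is a cleaner version of the paper's direct argument with representatives.

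Your proof of (iii), however, has a genuine gap. You correctly establish the decomposition $\cW(p^\infty)\cong\cW_0(p^\infty)\times\bbZ$, with $(-)^n$ acting as $(\pi_0,m)\mapsto(\pi_0^n,nm)$, and then write ``So everything reduces to the behaviour of $(-)^n$ on the weight-zero germs $\cW_0(p^\infty)$.'' This is false for surjectivity: a product map is surjective only if both factors are, and multiplication by $n$ on $\bbZ$ is not surjective for $n>1$. Concretely, the class of $\sqrt{p}\in\cW_1(p)$ in $\cW(p^\infty)$ has weight $1$ and cannot lie in the image of $(-)^2$, since any preimage would have weight $1/2\notin\bbZ$.

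In fact your own decomposition shows that statement (iii), as written for $\cW(p^\infty)$, is false. The paper's proof is flawed at the same point: it asserts that $(\sqrt[n]{\pi},r)$ is a preimage of $(\pi,r)$, but if $\pi\in\cW_m(p^r)$ with $n\nmid m$ then $|\varrho(\sqrt[n]{\pi})|=(p^r)^{m/(2n)}$, so $\sqrt[n]{\pi}\notin\cW(p^r)$. What \emph{is} true, and what your filtered-colimit-of-surjections argument correctly establishes, is that $(-)^n$ is surjective (indeed bijective) on $\cW_0(p^\infty)$; this is the version used in Example~\ref{ex:0L}.
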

\begin{proof}
(i) Let $\pi$ be a Weil $p^r$-number of weight $m$. Making use of the equalities
$$ | \varrho(\pi^{\frac{r'}{r}})|= |\varrho(\pi)|^{\frac{r'}{r}} = (p^{\frac{rm}{2}})^{\frac{r'}{r}}= p^{\frac{r'm}{2}}\,,$$
we conclude that $\pi^{\frac{r'}{r}}$ satisfies condition (i) of Definition \ref{def:Weil}(i) (with $q=p^{r'}$). Condition (ii) follows from the equality $(p^{r'})^s \pi^{\frac{r'}{r}} = (\sqrt[r]{(p^r)^s\pi})^{r'}$
and from the fact that $(p^r)^s\pi$ is an algebraic integer.

(ii) The proof follows automatically from the equality $\| \pi^{\frac{r'}{r}}\|_{\mathfrak{p}}=\|\pi\|_{\mathfrak{p}}^{\frac{r'}{r}}$.

(iii) Every element of $\cW(p^\infty)$ can be represented by a pair $(\pi,r)$ with $\pi \in \cW(p^r)$ and $r \in \bbN$, and two pairs $(\pi,r), (\pi',r')$ represent the same element of $\cW(p^\infty)$ if and only if $\pi^{r' N} = \pi'^{rN}$ for some $N \in \bbN$. Therefore, $(\sqrt[n]{\pi},r)^n=(\pi,r)$. 

(iv) Note that $(\pi,r)^n=(1,1)$ if and only if $\pi^N=1$ for some $N \in \bbN$. Given another pair $(\pi',r')$ such that $\pi'^{N'}=1$ for some $N' \in \bbN$, the equality $\pi^{r'NN'}= \pi'^{r N N'} =1$ allows us to conclude that $(\pi,r)=(\pi',r')$. This achieves the proof. 
\end{proof}
\begin{example}\label{ex:0L}
Let $k:=\bbQ$, $\Sigma := \cW_0(p^\infty)$ equipped with the induced $G$-action, and $\sigma_n$ the homomorphism $(-)^n: \cW_0(p^\infty) \to \cW_0(p^\infty)$. Thanks to Lemma \ref{lem:germs}, we have $\Sigma_n= \cW_0(p^\infty)$ and $\alpha(n)=1$ for every $n \in \bbN$. Therefore, this defines an abstract Bost-Connes datum and consequently an abstract Bost-Connes system. It is not a concrete Bost--Connes system, because $\alpha(n)=1$ is the trivial semi-group homomorphism.
More generally, $k=\bbQ$ can be replaced by any subfield of $\overline{\bbQ}$.
\end{example}
\begin{example}\label{ex:0L2}
Let $k:=\bbQ$, $\Sigma := \cW(p^\infty)$ equipped with the induced $G$-action, and $\sigma_n$ the homomorphism $(-)^n: \cW(p^\infty) \to \cW(p^\infty)$. Thanks to Lemma \ref{lem:germs}, we have $\Sigma_n= \cW(p^\infty)$ and $\alpha(n)=1$ for every $n \in \bbN$. Therefore, this defines an abstract Bost-Connes datum that is not concrete. The associated abstract 
Bost--Connes system contains the one of Example \ref{ex:0L}. More generally, $k=\bbQ$ can be replaced by any subfield of $\overline{\bbQ}$.
\end{example}
\begin{example}\label{ex:0L1}
Let $k:=\bbQ$, $\Sigma:= \cW_0^L(p^\infty)$ equipped with the induced $G$-action, and $\sigma_n$ the homomorphism $(-)^n: \cW_0^L(p^\infty) \to \cW_0^L(p^\infty)$. Thanks to Lemma \ref{lem:germs}, we have $\alpha(n)=1$ for every $n \in \bbN$. Therefore, this defines an abstract Bost-Connes datum that is not concrete. The associated abstract 
Bost--Connes system is contained in the one of Example \ref{ex:0L}.
\end{example}
\begin{example}\label{ex:0L11}
Let $k:=\bbQ$, $\Sigma:= \cW^L(p^\infty)$ equipped with the induced $G$-action, and $\sigma_n$ the homomorphism $(-)^n: \cW^L(p^\infty) \to \cW^L(p^\infty)$. Similarly to Example \ref{ex:0L}, this defines an abstract Bost-Connes datum that is not concrete. The associated abstract Bost-Connes system is contained in the one of Example \ref{ex:0L2} 
and contains the one of Example \ref{ex:0L1}. 
\end{example}
%---------------------------------------------
\subsection*{Example 7 - Completion}
%---------------------------------------------
Let $\widehat{\cW}(q)$ be the limit of the following diagram:
$$\cdots \twoheadrightarrow \cW(q)/_{\!q^{n+1}=1} \twoheadrightarrow \cW(q)/_{\!q^{n}=1} \twoheadrightarrow \cdots \twoheadrightarrow \cW(q)/_{\!q^2=1} \twoheadrightarrow \cW(q)/_{\!q=1}\,.$$ 
Note that $\widehat{\cW}(q)$ comes equipped with an induced $\mathrm{Gal}(\overline{\bbQ}/\bbQ)$-action.
\begin{proposition}\label{lem:aux3}
\begin{itemize}
\item[(i)] We have a group isomorphism $\widehat{\cW}(q) \simeq \cW_0(q) \times \bbQ/2\bbZ$.
\item[(ii)] Under (i), the $\mathrm{Gal}(\overline{\bbQ}/\bbQ)$-action on $\widehat{\cW}(q)$ identifies with \begin{equation*}
\gamma ((\pi,m)) = \left\{  \begin{array}{ll}   (\gamma(\pi),m) & \text{if} \quad \gamma \in \mathrm{Ker}(\mathrm{Gal}(\overline{\bbQ}/\bbQ) \twoheadrightarrow \mathrm{Gal}(\bbQ[\sqrt{q}]/\bbQ))\\  ((-1)^{|m|} \gamma(\pi),m) & \text{otherwise}  \end{array} \right.\,,
\end{equation*}
where $|m|$ stands for the parity of $m \in \bbQ/2\bbZ$.
\item[(iii)] Under (i), the group homomorphism $(-)^n: \widehat{\cW}(q) \to \widehat{\cW}(q)$ identifies with 
\begin{equation*}
((-)^n, n \cdot -): \cW_0(q) \times \bbQ/2\bbZ \too \cW_0(q) \times \bbQ/2\bbZ \,.
\end{equation*}
\end{itemize}
\end{proposition}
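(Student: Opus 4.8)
The three parts of Proposition~\ref{lem:aux3} are the ``completion'' counterparts of parts (ii)--(iv) of Proposition~\ref{lem:aux1}, so the plan is to deduce them from that proposition by applying it to each stage of the defining diagram and then passing to the limit.

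For part (i), I would first rewrite each term $\cW(q)/_{\!q^n=1}$ via the isomorphism \eqref{eq:isomorphism-decomp}. Under that isomorphism $q$ corresponds to the pair $(1,2)\in\cW_0(q)\times\bbZ$ (because $q=1\cdot q^{2/2}$), so the relation ``$q^n=1$'' only affects the second coordinate, producing a canonical identification $\cW(q)/_{\!q^n=1}\cong\cW_0(q)\times C_n$, where $C_n$ is the appropriate quotient of $\bbZ$ (a cyclic group $\bbZ/2n\bbZ$, or its incarnation $\frac{2}{n}\bbZ/2\bbZ$, depending on the precise reading of the notation). Since $\cW_0(q)$ is independent of $n$ and the structure maps of the diagram respect this splitting — acting as the identity on $\cW_0(q)$ and as the natural maps on the $C_n$ — the limit splits as $\cW_0(q)$ times the limit of the $C_n$'s, and an elementary computation identifies this latter group with $\bbQ/2\bbZ$. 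This also produces the explicit isomorphism: a compatible family is sent to a pair $(\pi,m)$ with $\pi\in\cW_0(q)$ and $m\in\bbQ/2\bbZ$.

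For part (ii), I would transport the Galois action along the identification of part (i). The subgroup of $\cW(q)$ killed at each stage is $\Gal(\overline{\bbQ}/\bbQ)$-stable, so the action passes through the whole diagram; by Proposition~\ref{lem:aux1}(iii) it is given, on $\cW_0(q)\times\bbZ$, by $\gamma\cdot(\pi,m)=(\gamma(\pi),m)$ when $\gamma$ fixes $\sqrt q$ and by $\gamma\cdot(\pi,m)=((-1)^m\gamma(\pi),m)$ otherwise; the sign $(-1)^m$ is just $\gamma(q^{m/2})/q^{m/2}$, which depends only on the parity of $m$, so upon passage to the limit $m$ becomes an element of $\bbQ/2\bbZ$ and $(-1)^m$ becomes $(-1)^{|m|}$, which is the claimed formula. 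Part (iii) uses the same mechanism: $(-)^n$ is $G$-equivariant and compatible with the diagram, by Proposition~\ref{lem:aux1}(iv) it is $(\pi,m)\mapsto(\pi^n,nm)$ on $\cW_0(q)\times\bbZ$, hence $(\pi,\overline m)\mapsto(\pi^n,\overline{nm})$ on each stage, and in the limit this reads $((-)^n,\,n\cdot-)$ on $\cW_0(q)\times\bbQ/2\bbZ$.

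The only genuinely delicate step is part (i): one has to pin down exactly which subgroup-or-extension the notation ``$\cW(q)/_{\!q^n=1}$'' encodes and then verify that the induced limit of the cyclic pieces $C_n$ is the divisible group $\bbQ/2\bbZ$ rather than, say, a profinite completion — this hinges on the precise direction and nature of the transition maps in the diagram. Once (i) is in hand, parts (ii) and (iii) are a routine transport of structure along \eqref{eq:isomorphism-decomp}, using only that the weight survives into the limit modulo $2$.
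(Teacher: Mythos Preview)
Your proposal is correct and follows essentially the same approach as the paper: both arguments apply the isomorphism \eqref{eq:isomorphism-decomp} level by level to identify $\cW(q)/_{\!q^n=1}$ with $\cW_0(q)\times\bbZ/2n$, pass to the limit to obtain (i), and then deduce (ii) and (iii) by transporting the descriptions of Proposition~\ref{lem:aux1}(iii)--(iv) through the diagram. The paper's proof is terser---it simply asserts $\bbQ/2\bbZ\simeq\mathrm{lim}_{n\geq 1}\bbZ/2n$ without further comment---so your caution about pinning down the transition maps and the direction of the limit is well placed, but the underlying strategy is identical.
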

\begin{proof}
Note that \eqref{eq:isomorphism-decomp} induces to an isomorphism between $\cW(q)/_{\!q^n=1}$ and the product $\cW_0(q) \times \bbZ/2n$. The proof of item (i) follows from the fact that $\bbQ/2\bbZ \simeq \mathrm{lim}_{n\geq 1} \bbZ/2n$. In what concerns the proof of item (ii), resp. item (iii), it follows from the combination of item (i) with Lemma \ref{lem:aux1}(ii), resp. Lemma \ref{lem:aux1}(iv).
\end{proof}
\begin{example}\label{ex:9}
Let $k:=\bbQ$, $\Sigma:= \widehat{\cW}(q)$ equipped with the induced $G$-action, and $\sigma_n$ the homomorphism $(-)^n:\widehat{\cW}(q) \to \widehat{\cW}(q)$. In this case, $\alpha(n)=n^2$. Thanks to Lemma \ref{lem:aux3}(iii), we have $\Sigma_n= \cW_0(q) \times n \bbQ/2\bbZ$. This defines a concrete Bost-Connes datum and consequently an abstract/concrete Bost-Connes system. More generally, $k=\bbQ$ can be replaced by any subfield of $\overline{\bbQ}$.
\end{example}
Let $L \subset \overline{\bbQ}$ be a CM-field which is Galois over $\bbQ$. As above, we can define the abelian group $\widehat{\cW}^L(q):= \mathrm{lim}_{n \geq 1} \cW^L(q)/_{\!q^n=1}$ equipped with the $\mathrm{Gal}(\overline{\bbQ}/\bbQ)$-action. 
\begin{remark}
When $\sqrt{q} \in L$, Remark \ref{rk:square} allow us to conclude that Proposition \ref{lem:aux3} holds {\em mutatis mutandis} with $\cW_0(q)$ and $\widehat{\cW}(q)$ replaced by $\cW_0^L(q)$ and $\widehat{\cW}^L(q)$.
\end{remark}
\begin{example}\label{ex:last}
Let $k:=\bbQ$, $\Sigma:= \widehat{\cW}^L(q)$ equipped with the induced $G$-action, and $\sigma_n$ the homomorphism $(-)^n:\widehat{\cW}^L(q) \to \widehat{\cW}^L(q)$. Similarly to Example \ref{ex:0L11}, this defined an abstract Bost-Connes datum which is not concrete. More generally, $k=\bbQ$ can be replaced by any subfield of $\overline{\bbQ}$.
\end{example}

%-------------------------------
\section{Categorification}\label{sec:categorification}
%-------------------------------
In this section we categorify the abstract Bost-Connes systems. Given an abstract Bost-Connes datum $(\Sigma, \sigma_n)$, we start by categorifying the $k$-algebra $\overline{k}[\Sigma]^G$. Note that $\overline{k}[\Sigma]^G$ becomes an Hopf $k$-algebra when we set $\Delta(s):= s\otimes s$, $\epsilon(s):=1$, and $\mathrm{inv}(s):= s^{-1}$ for every $s \in \Sigma$. As explained in \cite[XIV Thm.~5.3]{AGS}, the assignment $\Sigma \mapsto \mathrm{Spec}(\overline{k}[\Sigma]^G)$ gives rise to a contravariant equivalence between the category of abelian groups equipped with a continuos $G$-action and the category of affine group $k$-schemes of multiplicative type.
\begin{definition}\label{def:categorification-1}
Let $\mathrm{Vect}_\Sigma^{\overline{k}}(k)$ be the category of pairs $(V, \bigoplus_{s \in \Sigma} \overline{V}^s)$, where $V$ is a finite dimensional $k$-vector space and $\bigoplus_{s \in \Sigma} \overline{V}^s$ is a $\Sigma$-grading on $\overline{V}:= V \otimes \overline{k}$. We assume that $\overline{V}^{\gamma(s)}={}^\gamma \overline{V}^s$ for every $s \in \Sigma$ and $\gamma \in G$, where ${}^\gamma \overline{V}^s$ stands for the $\overline{k}$-vector space obtained from $\overline{V}^s$ by restriction of scalars along the automorphism $\gamma^{-1}: \overline{k} \stackrel{\sim}{\to} \overline{k}$. The morphisms are the $k$-linear maps $f: V \to V'$ such that $\overline{f}:\overline{V} \to \overline{V'}$ preserves the $\Sigma$-grading. The tensor product of $k$-vector spaces and of $\Sigma$-graded $\overline{k}$-vector spaces gives rise to a symmetric monoidal structure on $\mathrm{Vect}_\Sigma^{\overline{k}}(k)$.
\end{definition}
\begin{theorem}\label{thm:categorification-1}
The following holds:
\begin{itemize}
\item[(i)] The $k$-linear category $\mathrm{Vect}_\Sigma^{\overline{k}}(k)$ is neutral Tannakian. A fiber functor is given by the forgetful functor $\omega: \mathrm{Vect}_\Sigma^{\overline{k}}(k) \to \mathrm{Vect}(k)$.
\item[(ii)] The category $\mathrm{Vect}_\Sigma^{\overline{k}}(k)$ is semi-simple. Moreover, the isomorphism classes of simple objects are in one-to-one correspondence with the $G$-orbits of $\Sigma$.
\item[(iii)] There is an isomorphism of affine group $k$-schemes $\mathrm{Aut}^\otimes(\omega) \simeq \mathrm{Spec}(\overline{k}[\Sigma]^G)$.
\end{itemize}
\end{theorem}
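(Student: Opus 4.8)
The plan is to identify $\mathrm{Vect}_\Sigma^{\overline{k}}(k)$ with the category $\mathrm{Rep}_k(D(\Sigma))$ of finite-dimensional $k$-linear representations of the affine group $k$-scheme $D(\Sigma):=\mathrm{Spec}(\overline{k}[\Sigma]^G)$; by \cite[XIV Thm.~5.3]{AGS} this $D(\Sigma)$ is of multiplicative type with character group $\Sigma$ (as a $G$-module), so once the identification is in place items (i)--(iii) reduce to standard facts about such group schemes and the Tannakian formalism. First I would check the symmetric monoidal $k$-linear structure by hand: tensor product, internal dual $(V^\vee,\bigoplus_s(\overline{V}^{s^{-1}})^\vee)$, kernels and cokernels are all computed on the underlying $k$-vector space, the $\Sigma$-grading being transported along the faithfully flat base change $-\otimes_k\overline{k}$, and the constraint $\overline{V}^{\gamma(s)}={}^\gamma\overline{V}^s$ is preserved throughout because $G$ acts on $\Sigma$ by group automorphisms. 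The unit is $\overline{k}$ placed in the degree of $1_\Sigma$, whence $\mathrm{End}(\mathbf{1})=k$; the category is essentially small; and the forgetful functor $\omega$ is $k$-linear, exact (exactness is tested on underlying $k$-spaces), faithful and symmetric monoidal. As rigidity is routine, this already yields (i) by the definition of a neutral Tannakian category.

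The core of the proof is an equivalence $F\colon \mathrm{Vect}_\Sigma^{\overline{k}}(k)\isoto\mathrm{Rep}_k(D(\Sigma))$ over the forgetful functors. Over $\overline{k}$ the group $D(\Sigma)_{\overline{k}}=\mathrm{Spec}(\overline{k}[\Sigma])$ is diagonalizable, so a representation of it on a $\overline{k}$-vector space is the same datum as a $\Sigma$-grading; given $(V,\bigoplus_s\overline{V}^s)$ the induced action of $D(\Sigma)_{\overline{k}}$ on $\overline{V}$ is compatible with the semilinear $G$-action $1\otimes\gamma$ \emph{precisely} when the constraint of Definition~\ref{def:categorification-1} holds, and Galois descent then upgrades it to a $k$-rational action of $D(\Sigma)$ on $V$; conversely, base-changing a $k$-rational representation to $\overline{k}$ yields a $\Sigma$-grading automatically satisfying the constraint. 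Full faithfulness is immediate ($\overline{f}$ is $D(\Sigma)_{\overline{k}}$-equivariant iff it preserves the $\Sigma$-grading) and $F$ is symmetric monoidal by inspection. I expect this step to be the main obstacle: one has to match the $G$-action on the character group of $D(\Sigma)_{\overline{k}}$ with the given $G$-action on $\Sigma$, keeping the variance of the semilinear action straight, and invoke Galois descent for vector spaces with semilinear $G$-action (equivalently the vanishing of $H^1(G,\mathrm{GL}_n)$) correctly.

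Granting $F$, item (iii) follows from Tannakian reconstruction: $\mathrm{Aut}^\otimes(\omega)\simeq\mathrm{Aut}^\otimes(\omega_{\mathrm{Rep}_k(D(\Sigma))})\simeq D(\Sigma)=\mathrm{Spec}(\overline{k}[\Sigma]^G)$. Alternatively one may compute the Hopf algebra of matrix coefficients $\mathrm{colim}_X\,\omega(X)^\vee\otimes\omega(X)$ directly: the diagonal summands of $\overline{V}^\vee\otimes\overline{V}$ contribute the basis vectors $[s]\in\overline{k}[\Sigma]$, passing to the colimit in the $k$-linear category cuts this down to $\overline{k}[\Sigma]^G$, and the comultiplication $[s]\mapsto[s]\otimes[s]$ is read off from the monoidal structure. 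For item (ii): in characteristic zero a group $k$-scheme of multiplicative type is linearly reductive, so $\mathrm{Rep}_k(D(\Sigma))$ is semisimple; its simple objects are indexed by the $G$-orbits of $\Sigma=X^\ast(D(\Sigma))$, the orbit $O\ni s$ with (open) stabilizer $\mathrm{Gal}(\overline{k}/k_s)$ corresponding to the Weil restriction along $k_s/k$ of the rank-one $k_s$-representation of $D(\Sigma)_{k_s}$ with character $s$, which has $k$-dimension $[k_s:k]=|O|$. If an argument internal to $\mathrm{Vect}_\Sigma^{\overline{k}}(k)$ is preferred: any graded, semilinearly $G$-stable $\overline{k}$-subspace of $\overline{V}$ descends to a subobject, so the support of a simple object is a single $G$-orbit; for an arbitrary object one peels off one orbit at a time, picks a graded $\overline{k}$-splitting and averages it over a finite quotient $\mathrm{Gal}(l/k)$ through which the relevant part of the $G$-action factors (using continuity), producing a $G$-equivariant, hence descending, complement. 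This yields semisimplicity together with the classification of simple objects, completing the proof.
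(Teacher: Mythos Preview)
Your proposal is correct and follows essentially the same route as the paper: both arguments identify $\mathrm{Vect}_\Sigma^{\overline{k}}(k)$ with $\mathrm{Rep}_k(D(\Sigma))$ for $D(\Sigma)=\mathrm{Spec}(\overline{k}[\Sigma]^G)$ of multiplicative type via \cite[XIV Thm.~5.3]{AGS}, and then read off (i)--(iii) from standard facts about such group schemes. The only difference is granularity: the paper dispatches (i) and (iii) with a one-line citation of \cite[\S2.32]{Deligne-Milne} and (ii) with \cite[Thm.~5.10]{AGS}, whereas you unpack the content of those references (the Galois-descent identification of representations with $G$-compatible $\Sigma$-gradings, linear reductivity, and the orbit classification of simples) and also supply an alternative internal semisimplicity argument by averaging splittings over finite Galois quotients.
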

\begin{proof}
Since $\mathrm{Spec}(k[\Sigma]^G)$ is an affine group $k$-scheme of multiplicative type, its group of characters identifies with $\Sigma$ equipped with the $G$-action; see \cite[Thm.~5.3]{AGS}. Consequently, items (i) and (iii) follow from \cite[\S2.32]{Deligne-Milne}. In what concerns item (ii), it follows from \cite[Thm.~5.10]{AGS}. 
\end{proof}
\begin{remark}\label{rk:simple}
When the $G$-action $G\to \mathrm{Aut}(\Sigma)$ is trivial, we have $\overline{k}[\Sigma]^G=k[\Sigma]$ and $\mathrm{Vect}_\Sigma^{\overline{k}}(k)$ reduces to the category $\mathrm{Vect}_\Sigma(k)$ of finite dimensional $\Sigma$-graded $k$-vector spaces. In this case, the simple objects $S_{s'}, s' \in \Sigma$, are given by $S_{s'}:=(k,\bigoplus_{s\in \Sigma} k^s)$ where $k^s=k$ when $s=s'$. 
\end{remark}
\begin{definition}\label{def:smaller}
Let $\mathrm{Vect}^k_\Sigma(k)$ be the full subcategory of $\mathrm{Vect}^{\overline{k}}_\Sigma(k)$ consisting of the pairs $(V,\bigoplus_{s \in \Sigma} \overline{V^s})$, where $\bigoplus_{s \in \Sigma} V^s$ is a $\Sigma$-grading of $V$ such that $V^{\gamma(s)}=V^s$ for every $s \in \Sigma$ and $\gamma \in G$. Note that $\mathrm{Vect}^k_\Sigma(k)$ is a neutral Tannakian subcategory of $\mathrm{Vect}_\Sigma^{\overline{k}}(k)$. A fiber functor is given by the forgetful functor $\omega: \mathrm{Vect}^k_\Sigma(k) \to \mathrm{Vect}(k)$.
\end{definition}

\begin{proposition}\label{prop:smaller}
We have an isomorphism $\mathrm{Aut}^\otimes(\omega) \simeq \mathrm{Spec}(k[\Sigma]^G)$.
\end{proposition}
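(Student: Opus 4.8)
The plan is to follow the template of the proof of Theorem~\ref{thm:categorification-1}(iii): present $\mathrm{Vect}^k_\Sigma(k)$, together with its forgetful fibre functor $\omega$, as the category of finite-dimensional comodules over the Hopf $k$-algebra $k[\Sigma]^G$ equipped with its own forgetful fibre functor, and then invoke the Tannakian reconstruction theorem \cite[\S2.32]{Deligne-Milne} to conclude $\mathrm{Aut}^\otimes(\omega)\simeq\mathrm{Spec}(k[\Sigma]^G)$. To an object $(V,\bigoplus_{s\in\Sigma}\overline{V^s})$ of $\mathrm{Vect}^k_\Sigma(k)$ one associates the $k[\Sigma]$-coaction on $V$ determined by the $\Sigma$-grading $V=\bigoplus_s V^s$, and the two things I would check are: (i) that the condition $V^{\gamma(s)}=V^s$ forces this coaction to factor through the $G$-invariant subcoalgebra $k[\Sigma]^G\subseteq k[\Sigma]$; and (ii) that the resulting assignment is an equivalence of symmetric monoidal $k$-linear categories over the forgetful functors. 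Since, by the continuity of the $G$-action (Definition~\ref{def:BC-data}(i)), $k[\Sigma]^G$ is the filtered union of its finite-dimensional subcoalgebras, its category of finite-dimensional comodules is the category of finite-dimensional representations of the affine group $k$-scheme $\mathrm{Spec}(k[\Sigma]^G)$, so the reconstruction theorem applies exactly as in the larger case.

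A convenient way to organise the verification is through the semisimple structure of Theorem~\ref{thm:categorification-1}(ii): both categories are semisimple and $k$-linear with simple objects indexed by the $G$-orbits of $\Sigma$, so it is enough to match simple objects, their endomorphism algebras, and the Hom- and $\otimes$-structure among them. Alternatively — and more in the spirit of the excerpt — one exploits the inclusion $\mathrm{Vect}^k_\Sigma(k)\hookrightarrow\mathrm{Vect}^{\overline{k}}_\Sigma(k)$, which is a full $\otimes$-subcategory stable under subobjects (given a subobject $W\subseteq V$, put $W^s:=W\cap V^s$; exactness of $-\otimes\overline{k}$ yields $W=\bigoplus_s W^s$ and the relation $W^{\gamma(s)}=W^s$ is inherited), hence, by semisimplicity, stable under subquotients. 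Since the inclusion commutes with the forgetful fibre functors, the standard dictionary for Tannakian subcategories \cite[\S2]{Deligne-Milne} yields a faithfully flat homomorphism $\mathrm{Spec}(\overline{k}[\Sigma]^G)\twoheadrightarrow\mathrm{Aut}^\otimes(\omega)$ (using the identification of Theorem~\ref{thm:categorification-1}(iii)), and dually exhibits the coordinate ring of $\mathrm{Aut}^\otimes(\omega)$ as the sub-Hopf-algebra of $\overline{k}[\Sigma]^G$ spanned by the matrix coefficients of the objects of $\mathrm{Vect}^k_\Sigma(k)$; one then identifies this sub-Hopf-algebra, orbit by orbit, with $k[\Sigma]^G$.

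The crux is this last identification: one must pin down precisely which elements of $\overline{k}[\Sigma]^G$ arise as matrix coefficients of $k$-rational graded objects — equivalently, characterise $\mathrm{Vect}^k_\Sigma(k)$ inside $\mathrm{Vect}^{\overline{k}}_\Sigma(k)$ as the full subcategory of objects on which the kernel of the above surjection acts trivially — and then verify that the coalgebra, counit and antipode structures identify the resulting Hopf $k$-algebra with $k[\Sigma]^G$ carrying its canonical Hopf structure. The remaining ingredients — exactness of base change, semisimplicity, and the monoidal bookkeeping — are routine and run entirely parallel to the treatment of $\mathrm{Vect}^{\overline{k}}_\Sigma(k)$ in Theorem~\ref{thm:categorification-1}.
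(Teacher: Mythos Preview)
Your proposal is correct, but it is considerably more elaborate than the paper's own argument, which dispatches the statement in three lines by a single reduction you mention only in passing.

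The paper simply observes that the defining condition $V^{\gamma(s)}=V^s$ means exactly that the $\Sigma$-grading of $V$ is constant on $G$-orbits, so that $\mathrm{Vect}^k_\Sigma(k)$ is nothing but the category $\mathrm{Vect}_{\Sigma/G}(k)$ of $(\Sigma/G)$-graded finite-dimensional $k$-vector spaces. This is the trivial-$G$-action case already handled by Theorem~\ref{thm:categorification-1} and Remark~\ref{rk:simple}, giving $\mathrm{Aut}^\otimes(\omega)\simeq\mathrm{Spec}(k[\Sigma/G])$; one then finishes with the tautological Hopf-algebra isomorphism $k[\Sigma]^G\stackrel{\sim}{\to}k[\Sigma/G]$ sending each orbit sum $\sum_{s\in O}s$ to the basis element $[O]$.

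Your two routes --- the direct comodule description, or the Tannakian-subcategory machinery with a matrix-coefficient computation --- both work, and the second one has the merit of making the surjection $\mathrm{Spec}(\overline{k}[\Sigma]^G)\twoheadrightarrow\mathrm{Spec}(k[\Sigma]^G)$ explicit as coming from the inclusion of Tannakian categories. But both require you to re-derive, orbit by orbit, what the paper obtains for free by recognising the category as an ordinary grading category over the orbit set. In short: you touched on the key point (``simple objects indexed by the $G$-orbits of $\Sigma$'') but treated it as a verification device rather than as the entire proof.
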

\begin{proof}
Note first that $\mathrm{Vect}^k_\Sigma(k)$ naturally identifies with the neutral Tannakian category $\mathrm{Vect}_{\Sigma/G}(k)$. Therefore, making use of Theorem \ref{thm:categorification-1} and of Remark \ref{rk:simple}, we conclude that $\mathrm{Aut}^\otimes(\omega) \simeq \mathrm{Spec}(k[\Sigma/G])$. The proof follows now from the canonical isomorphism of Hopf $k$-algebras $k[\Sigma]^G \stackrel{\sim}{\to} k[\Sigma/G]$.
\end{proof}
\begin{remark}
By combining Theorem \ref{thm:categorification-1} with Proposition \ref{prop:smaller}, we observe that the inclusion of Hopf $k$-algebras $k[\Sigma]^G \hookrightarrow \overline{k}[\Sigma]^G$, or equivalently the quotient of affine group $k$-schemes $\mathrm{Spec}(\overline{k}[\Sigma]^G) \twoheadrightarrow \mathrm{Spec}(k[\Sigma]^G)$, is induced by the inclusion of neutral Tannakian categories $\mathrm{Vect}^k_\Sigma(k) \subset \mathrm{Vect}^{\overline{k}}_\Sigma(k)$.
\end{remark}

Theorem \ref{thm:categorification-1} provides a Tannakian categorification of the $k$-algebra $\overline{k}[\Sigma]^G$, as well of its canonical Hopf structure. The Tannakian categorification of the Hopf $k$-algebra homomorphism $\underline{\sigma_n}: \overline{k}[\Sigma]^G \to \overline{k}[\Sigma]^G$ is provided by the following $k$-linear additive symmetric monoidal functor (note that the direct sum is finite):
\begin{eqnarray*}
\underline{\bf \sigma_n}: \mathrm{Vect}_\Sigma^{\overline{k}}(k) \to \mathrm{Vect}_\Sigma^{\overline{k}}(k) & {\underline{\bf \sigma_n}}(V):=V & \overline{\underline{\bf \sigma_n}(V)}^s:=\left\{ 
  \begin{array}{l l}
  \bigoplus_{s' \in \rho_n(s)} \overline{V}^{s'} & \text{if}\, s \in \Sigma_n \\
     0 & \text{otherwise}\,. \\
  \end{array} \right.
\end{eqnarray*}
\begin{theorem}\label{thm:categorification-2}
The following holds:
\begin{itemize}
\item[(i)] We have natural equalities $\underline{\bf \sigma_{nm}}= \underline{\bf \sigma_n} \circ \underline{\bf \sigma_m}$ and $\omega \circ \underline{\bf \sigma_n} = \omega$.
\item[(ii)] The morphism of Hopf $k$-algebras $\underline{\bf \sigma_n}: \overline{k}[\Sigma]^G \to \overline{k}[\Sigma]^G$, corresponding to $\underline{\bf \sigma_n}: \mathrm{Spec}(\overline{k}[\Sigma]^G) \to \mathrm{Spec}(\overline{k}[\Sigma]^G)$, agrees with $\underline{\sigma_n}$.  
\end{itemize}
\end{theorem}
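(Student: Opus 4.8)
My plan is to handle item (i) first (it packages exactly the compatibilities needed to speak of an induced morphism of group schemes) and then to feed it into item (ii).

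For item (i) I would simply unwind Definition~\ref{def:categorification-1} and the displayed formula defining $\underline{\bf \sigma_n}$. Before the stated equalities one should record why $\underline{\bf \sigma_n}$ is well defined and symmetric monoidal: the Galois-compatibility $\overline{\underline{\bf \sigma_n}(V)}^{\gamma(s)}={}^{\gamma}\overline{\underline{\bf \sigma_n}(V)}^{s}$ is inherited from the $G$-equivariance of $\sigma_n$ (hence of $\rho_n$ and of $\Sigma_n$), and the monoidal constraint $\underline{\bf \sigma_n}(V)\otimes\underline{\bf \sigma_n}(W)\cong\underline{\bf \sigma_n}(V\otimes W)$, together with $\underline{\bf \sigma_n}(\mathbf{1})=\mathbf{1}$, comes from $\sigma_n$ being a group homomorphism (so $\sigma_n(s'+t')=\sigma_n(s')+\sigma_n(t')$ for $s'\in\rho_n(s)$, $t'\in\rho_n(t)$). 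Then $\omega\circ\underline{\bf \sigma_n}=\omega$ is immediate, since $\underline{\bf \sigma_n}$ changes only the $\Sigma$-grading on $\overline{V}$ and leaves the underlying $k$-vector space and the $k$-linear morphisms untouched. For $\underline{\bf \sigma_{nm}}=\underline{\bf \sigma_n}\circ\underline{\bf \sigma_m}$ the key point is the set-theoretic identity $\rho_{nm}(s)=\coprod_{t\in\rho_n(s)\cap\Sigma_m}\rho_m(t)$, which holds because $\sigma_{nm}=\sigma_n\circ\sigma_m$ and $\sigma_m$ is a function; substituting it into the defining formula shows that $\overline{\underline{\bf \sigma_n}(\underline{\bf \sigma_m}(V))}^{s}$ and $\overline{\underline{\bf \sigma_{nm}}(V)}^{s}$ are literally the same graded subspace of $\overline{V}$, so the two functors coincide on the nose.

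For item (ii) I would argue as follows. By item (i), $\underline{\bf \sigma_n}$ is a tensor endofunctor of $\mathrm{Vect}_\Sigma^{\overline{k}}(k)$ with $\omega\circ\underline{\bf \sigma_n}=\omega$, so functoriality of $\mathrm{Aut}^{\otimes}(-)$ produces a morphism of affine group $k$-schemes $\phi\colon\mathrm{Aut}^{\otimes}(\omega)\to\mathrm{Aut}^{\otimes}(\omega)$, on $R$-points $\lambda\mapsto(X\mapsto\lambda_{\underline{\bf \sigma_n}(X)})$; under Theorem~\ref{thm:categorification-1}(iii) this is the morphism $\underline{\bf \sigma_n}$ of $\mathrm{Spec}(\overline{k}[\Sigma]^G)$ in the statement. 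To compute the dual Hopf-algebra map I would reduce to $\overline{k}$: one has $\overline{k}[\Sigma]^G\otimes_k\overline{k}\cong\overline{k}[\Sigma]$, the extension $\overline{k}/k$ is faithfully flat, and the group-algebra endomorphism $[s]\mapsto[\sigma_n(s)]$ of $\overline{k}[\Sigma]$ preserves the $k$-subalgebra $\overline{k}[\Sigma]^G$ (by $G$-equivariance of $\sigma_n$), where it restricts to $\underline{\sigma_n}$; so it suffices to show that the base change of $\phi$ is dual to $[s]\mapsto[\sigma_n(s)]$. Over $\overline{k}$ we have $\mathrm{Vect}_\Sigma^{\overline{k}}(k)\otimes_k\overline{k}\simeq\mathrm{Vect}_\Sigma(\overline{k})=\mathrm{Rep}_{\overline{k}}(\mathrm{Spec}(\overline{k}[\Sigma]))$, the base change of $\underline{\bf \sigma_n}$ is the evident endofunctor of $\mathrm{Vect}_\Sigma(\overline{k})$, and a one-line computation with the defining formula gives $\underline{\bf \sigma_n}(S_s)=S_{\sigma_n(s)}$ on the rank-one objects $S_s$ (concentrated in the single degree $s\in\Sigma$). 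Now the group-like elements of $\overline{k}[\Sigma]$ are exactly the $[s]$; they are the characters $\chi_s$ of $\mathrm{Spec}(\overline{k}[\Sigma])$ with $\chi_s(\lambda)=\lambda_{S_s}$, and the Hopf endomorphism dual to $\phi$ sends $\chi_s$ to $\chi_s\circ\phi$. Since $(\chi_s\circ\phi)(\lambda)=\lambda_{\underline{\bf \sigma_n}(S_s)}=\lambda_{S_{\sigma_n(s)}}=\chi_{\sigma_n(s)}(\lambda)$, this dual map sends $[s]\mapsto[\sigma_n(s)]$; as $\overline{k}[\Sigma]$ is generated by its group-likes, it is pinned down by this, and the argument is complete.

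The only genuinely non-formal ingredient, and the step I expect to need the most care, is the bookkeeping in this last paragraph: keeping the variance straight among tensor endofunctors commuting with $\omega$, endomorphisms of $\mathrm{Aut}^{\otimes}(\omega)$, $G$-equivariant endomorphisms of the character group $\Sigma$, and Hopf-algebra endomorphisms of $\overline{k}[\Sigma]^G$, and checking that all these identifications (and the passage $-\otimes_k\overline{k}$) are mutually compatible. Everything else — well-definedness and monoidality of $\underline{\bf \sigma_n}$, the two equalities of item (i), and $\underline{\bf \sigma_n}(S_s)=S_{\sigma_n(s)}$ — should be a direct unwinding of the definitions, using only that each $\sigma_n$ is a $G$-equivariant group homomorphism with $\sigma_{nm}=\sigma_n\circ\sigma_m$.
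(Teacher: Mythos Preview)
Your proposal is correct and follows essentially the same approach as the paper: both arguments reduce item (ii) to $\overline{k}$ by faithfully flat base change, identify the base-changed functor with the evident endofunctor of $\mathrm{Vect}_\Sigma(\overline{k})$, and then pin down the Hopf map via the computation $\underline{\bf \sigma_n}(S_s)=S_{\sigma_n(s)}$ on simple objects. The only cosmetic difference is that the paper phrases the final identification on $A$-points (both maps become pre-composition with $\sigma_n$ on $\Hom(\Sigma,A^\times)$) whereas you phrase it via group-like elements/characters; and the paper isolates your ``bookkeeping'' step---that base-changing the induced scheme morphism agrees with the morphism induced by the base-changed functor---as a separate lemma (Lemma~\ref{lem:key-difficult}) proved by a commutative-square argument with the functor $\Psi:(V,\bigoplus_s\overline{V}^s)\mapsto(\overline{V},\bigoplus_s\overline{V}^s)$.
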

\begin{proof}
(i) Since $\underline{\bf \sigma_n}(V):=V$, equality $\omega \circ \underline{\bf \sigma_n}=\omega$ is clear. In what concerns $\underline{\bf \sigma_{nm}}= \underline{\bf \sigma_n} \circ \underline{\bf \sigma_m}$, it follows from the assumption $\sigma_{nm}=\sigma_n \circ \sigma_m$; see Definition \ref{def:BC-data}.

(ii) By base-change along $\overline{k}/k$, it suffices to show that the following morphisms
\begin{eqnarray}\label{eq:morphisms-2}
\underline{\bf \sigma_n} \otimes \overline{k} : \overline{k}[\Sigma] \too \overline{k}[\Sigma] && \underline{\sigma_n}: \overline{k}[\Sigma] \stackrel{s \mapsto \sigma_n(s)}{\too} \overline{k}[\Sigma]
\end{eqnarray}
agree. Thanks to Lemma \ref{lem:key-difficult} below, $\underline{\bf \sigma_n}\otimes \overline{k}$ can be replaced by the morphism induced by the functor $\underline{\bf \sigma_n}: \mathrm{Vect}_\Sigma(\overline{k})\to \mathrm{Vect}_\Sigma(\overline{k})$. Given an arbitrary $\overline{k}$-algebra $A$, let us then describe the induced group homomorphisms
\begin{eqnarray*}
\underline{\bf \sigma_n}(R)^\ast: \mathrm{Aut}^\otimes(\omega)(A) = \Hom(\Sigma, A^\times) &\too & \Hom(\Sigma, A^\times)=\mathrm{Aut}^\otimes(\omega)(A)\\
\underline{\sigma_n}^\ast: \Hom(\overline{k}[\Sigma],A)= \Hom(\Sigma, A^\times) & \too & \Hom(\Sigma, A^\times) = \Hom(\overline{k}[\Sigma],A)\,.
\end{eqnarray*}
As explained in Remark \ref{rk:simple}, there is a one-to-one correspondence $s \mapsto S_s$ between elements of $\Sigma$ and simple objects of $\mathrm{Vect}_\Sigma(\overline{k})$. Making use of the equality $\underline{\bf \sigma_n}(S_s)=S_{\sigma_n(s)}$, we hence conclude that the homomorphism $\underline{\bf \sigma_n}(R)^\ast$ is given by pre-composition with $\sigma_n: \Sigma \to \Sigma$. The group homomorphism $\underline{\sigma_n}^\ast$ is also given by pre-composition with $\sigma_n$. Since the $k$-algebra $A$ is arbitrary, this implies that the above morphisms \eqref{eq:morphisms-2} agree, and so the proof is finished.
\end{proof}
\begin{lemma}\label{lem:key-difficult}
The above morphism $\underline{\bf \sigma_n}\otimes \overline{k}$ agrees with the one induced by the functor $\underline{\bf \sigma_n}: \mathrm{Vect}_\Sigma(\overline{k}) \to \mathrm{Vect}_\Sigma(\overline{k})$.
\end{lemma}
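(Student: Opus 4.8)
The point of this lemma is purely one of bookkeeping between two a priori different descriptions of the same scheme morphism, so the plan is to unwind both descriptions until they visibly coincide. Recall that $\mathrm{Vect}_\Sigma^{\overline{k}}(k)$ has fiber functor $\omega$ over $k$, and that base change along $\overline{k}/k$ sends this Tannakian category to $\mathrm{Vect}_\Sigma(\overline{k})$ with its forgetful fiber functor $\omega_{\overline{k}}$ over $\overline{k}$ (this is the statement that the $G$-equivariant $\Sigma$-grading data, after $-\otimes\overline{k}$, forgets the $G$-descent and becomes a plain $\Sigma$-graded $\overline{k}$-vector space; the relevant comparison of Tannaka groups is $\mathrm{Aut}^\otimes(\omega)\times_k\overline{k}\simeq \mathrm{Aut}^\otimes(\omega_{\overline{k}})\simeq\mathrm{Spec}(\overline{k}[\Sigma])$, using Theorem \ref{thm:categorification-1}(iii) and the trivial-$G$-action case of Remark \ref{rk:simple}). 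Under this base change the functor $\underline{\bf\sigma_n}$ on $\mathrm{Vect}_\Sigma^{\overline{k}}(k)$ becomes the evidently analogous functor $\underline{\bf\sigma_n}\colon\mathrm{Vect}_\Sigma(\overline{k})\to\mathrm{Vect}_\Sigma(\overline{k})$ sending $(W,\bigoplus_s W^s)$ to $(W,\bigoplus_s W'^s)$ with $W'^s=\bigoplus_{s'\in\rho_n(s)}W^{s'}$ for $s\in\Sigma_n$ and $0$ otherwise; call the induced Hopf-algebra endomorphism of $\overline{k}[\Sigma]$ the ``functorial $\underline{\bf\sigma_n}$.'' So the content of the lemma is that the functorial $\underline{\bf\sigma_n}$ on $\overline{k}[\Sigma]$ is literally $\underline{\bf\sigma_n}\otimes\overline{k}$.

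First I would fix notation for the functor-induced morphism: given a commutative $\overline{k}$-algebra $A$, an $A$-point of $\mathrm{Aut}^\otimes(\omega_{\overline{k}})$ is a tensor-natural automorphism of $\omega_{\overline{k}}(-)\otimes_{\overline{k}}A$, which by the correspondence $s\mapsto S_s$ of Remark \ref{rk:simple} is the same datum as a homomorphism $\chi\colon\Sigma\to A^\times$ (record $\chi$ on simples, extend by tensor-compatibility; the $G$-equivariance constraints have disappeared over $\overline{k}$). Precomposing such a point with the functor $\underline{\bf\sigma_n}$ amounts to evaluating $\chi$ on $\underline{\bf\sigma_n}(S_s)$. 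The key computation is then $\underline{\bf\sigma_n}(S_s)=S_{\sigma_n(s)}$: indeed $S_s$ is the one-dimensional object concentrated in grading-degree $s$, and its image has the single grading-degree $\sigma_n(s)$, since $s$ lies in the $\sigma_n^{-1}$-fiber $\rho_n(\sigma_n(s))$ and in no other $\rho_n(\,\cdot\,)$. Hence $(\underline{\bf\sigma_n})^*\chi=\chi\circ\sigma_n$. On the other hand, an $A$-point of $\mathrm{Spec}(\overline{k}[\Sigma])$ is directly a homomorphism $\overline{k}[\Sigma]\to A$, equivalently a homomorphism $\chi\colon\Sigma\to A^\times$, and precomposition with the algebra endomorphism $\underline{\sigma_n}\colon s\mapsto\sigma_n(s)$ of $\overline{k}[\Sigma]$ sends $\chi$ to $\chi\circ\sigma_n$. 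Since the two induced maps on $A$-points agree functorially in $A$, Yoneda gives that the two scheme morphisms, hence the two Hopf-algebra endomorphisms of $\overline{k}[\Sigma]$, coincide. (This is exactly the computation carried out in the proof of Theorem \ref{thm:categorification-2}(ii); the present lemma is the ingredient that licenses replacing $\underline{\bf\sigma_n}\otimes\overline{k}$ by the functorial one there, so I would phrase it so as not to reason circularly — the Yoneda comparison above uses only the functor $\underline{\bf\sigma_n}$ on $\mathrm{Vect}_\Sigma(\overline{k})$ and the tautological description of $\underline{\sigma_n}$ on $\overline{k}[\Sigma]$, not Theorem \ref{thm:categorification-2}.)

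The one step that needs genuine care — and which I expect to be the main obstacle — is the compatibility of the functor $\underline{\bf\sigma_n}$ with base change along $\overline{k}/k$, i.e.\ that ``take $\underline{\bf\sigma_n}$ on $\mathrm{Vect}_\Sigma^{\overline{k}}(k)$, then induce a morphism of group schemes, then base-change to $\overline{k}$'' equals ``base-change the Tannakian category and fiber functor to $\overline{k}$, then apply the manifestly analogous functor $\underline{\bf\sigma_n}$ on $\mathrm{Vect}_\Sigma(\overline{k})$, then induce a morphism.'' Concretely one must check that the forgetful-over-$k$ fiber functor $\omega$, after $-\otimes_k\overline{k}$, is naturally isomorphic \emph{as a tensor functor compatible with $\underline{\bf\sigma_n}$} to the forgetful-over-$\overline{k}$ fiber functor $\omega_{\overline{k}}$ on $\mathrm{Vect}_\Sigma(\overline{k})$, so that the induced map on $\mathrm{Aut}^\otimes$ is the expected one; here one uses Theorem \ref{thm:categorification-2}(i), namely $\omega\circ\underline{\bf\sigma_n}=\omega$ and $\omega_{\overline{k}}\circ\underline{\bf\sigma_n}=\omega_{\overline{k}}$, together with the fact that $-\otimes_k\overline{k}$ of the equivariant grading data is the underlying $\overline{k}$-grading. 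Once this naturality is pinned down, the remainder is the Yoneda argument above. I would write the proof in roughly three short paragraphs: (1) base change identifies $(\mathrm{Vect}_\Sigma^{\overline{k}}(k),\omega)\otimes\overline{k}$ with $(\mathrm{Vect}_\Sigma(\overline{k}),\omega_{\overline{k}})$ and carries $\underline{\bf\sigma_n}$ to the analogous functor; (2) compute $\underline{\bf\sigma_n}(S_s)=S_{\sigma_n(s)}$ and read off that the functor-induced endomorphism of $\overline{k}[\Sigma]$ is precomposition with $\sigma_n$; (3) conclude via Yoneda that this is $\underline{\sigma_n}$, hence equals $\underline{\bf\sigma_n}\otimes\overline{k}$.
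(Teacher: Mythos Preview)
Your core argument—the ``main obstacle'' paragraph and step (1)—is correct and is essentially the paper's proof. The paper makes it concrete by writing down the functor $\Psi:\mathrm{Vect}_\Sigma^{\overline{k}}(k)\to\mathrm{Vect}_\Sigma(\overline{k})$, $(V,\bigoplus_s\overline{V}^s)\mapsto(\overline{V},\bigoplus_s\overline{V}^s)$, drawing the commutative rectangle showing that $\Psi$ intertwines the two $\underline{\bf\sigma_n}$'s and the two fiber functors, and then invoking that $\Psi$ induces the isomorphism $\mathrm{Spec}(\overline{k}[\Sigma])\simeq\mathrm{Spec}(\overline{k}[\Sigma]^G)_{\overline{k}}$ (because the group scheme is of multiplicative type). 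Your abstract phrasing ``base-change the Tannakian category'' amounts to the same thing; the paper's explicit $\Psi$-diagram is simply the concrete realisation of that base-change compatibility.

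Where your write-up goes astray is in steps (2)--(3). These are the Yoneda computation that belongs to the proof of Theorem \ref{thm:categorification-2}(ii), not to this lemma; once step (1) is established the lemma is already proved, since the lemma only asserts that two morphisms agree, not what either of them actually is. Your final clause, ``conclude via Yoneda that this is $\underline{\sigma_n}$, hence equals $\underline{\bf\sigma_n}\otimes\overline{k}$,'' is circular as stated: it would require an independent identification of $\underline{\bf\sigma_n}\otimes\overline{k}$ with $\underline{\sigma_n}$, and that is exactly what Theorem \ref{thm:categorification-2}(ii) deduces \emph{from} the lemma. Drop steps (2)--(3) and stop after the commutative-square argument.
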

\begin{proof}
Consider the following commutative diagram
$$
\xymatrix{
\mathrm{Vect}_\Sigma(\overline{k}) \ar[r]^-{\underline{\bf \sigma_n}} & \mathrm{Vect}_\Sigma(\overline{k}) \ar[r]^-\omega & \mathrm{Vect}(\overline{k}) \\
\mathrm{Vect}_\Sigma^{\overline{k}}(k) \ar[r]_-{\underline{\bf \sigma_n}} \ar[u]^-\Psi & \mathrm{Vect}_\Sigma^{\overline{k}}(k) \ar[r]_-\omega \ar[u]^-\Psi & \mathrm{Vect}(k) \ar[u]_-{-\otimes \overline{k}}\,,
}
$$
$\Psi(V,\bigoplus_{s \in \Sigma} \overline{V}^s) := (\overline{V},\bigoplus_{s \in \Sigma} \overline{V}^s)$. On one hand, $\mathrm{Aut}^\otimes(\omega \otimes \overline{k}) \simeq \mathrm{Aut}^\otimes(\omega)_{\overline{k}} \simeq \mathrm{Spec}(\overline{k}[\Sigma]^G)_{\overline{k}}$. On the other hand, since the affine group $k$-scheme $\mathrm{Spec}(\overline{k}[\Sigma]^G)$ is of multiplicative type, $\Psi$ induces an isomorphism $\mathrm{Spec}(\overline{k}[\Sigma])\simeq \mathrm{Spec}(\overline{k}[\Sigma]^G)_{\overline{k}}$. Consequently, we obtain the following commutative square
$$
\xymatrix{
\mathrm{Spec}(\overline{k}[\Sigma])\ar[d]_-\sim \ar[rr]^-{\underline{\bf \sigma_n}} && \mathrm{Spec}(\overline{k}[\Sigma]) \ar[d]^-\sim \\
\mathrm{Spec}(\overline{k}[\Sigma]^G)_{\overline{k}} \ar[rr]_-{\underline{\bf \sigma_n}\otimes \overline{k}} && \mathrm{Spec}(\overline{k}[\Sigma]^G)_{\overline{k}}\,.
}
$$
This achieves the proof.
\end{proof}
\begin{remark}
Since the functor $\underline{\bf \sigma_n}$ restricts to the category $\mathrm{Vect}_\Sigma^k(k)$, the Tannakian categorification of the Hopf $k$-algebra homomorphism $\underline{\sigma_n}: k[\Sigma]^G \to k[\Sigma]^G$ is also provided by $\underline{\bf \sigma_n}$.
\end{remark}
The $k$-linear additive map $\underline{\rho_n}: \overline{k}[\Sigma_n]^G \to \overline{k}[\Sigma]^G$ does {\em not} preserve the algebra structure. Consequently, it does not admit a Tannakian categorification. Nevertheless, we have the following $k$-linear additive functor (defined when $\alpha(n)$ is finite):
\begin{eqnarray*}
\underline{\bf \rho_n}: \mathrm{Vect}_{\Sigma_n}^{\overline{k}}(k) \too \mathrm{Vect}_\Sigma^{\overline{k}}(k) &\underline{\bf \rho_n}(V):=\bigoplus_{i=1}^{\alpha(n)} V & \overline{\underline{\bf \rho_n}(V)}^s:= \overline{V}^{\sigma_n(s)}\,.
\end{eqnarray*}
Recall that $\underline{\sigma_n}\circ \underline{\rho_n} = \alpha(n)\cdot \Id$.  The following result categorifies this equality:
\begin{proposition}
We have a natural equality $\underline{\bf \sigma_n} \circ \underline{\bf \rho_n} = \Id^{\oplus \alpha(n)}$.
\end{proposition}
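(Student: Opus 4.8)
The plan is to prove the equality by directly unwinding the definitions of $\underline{\bf \sigma_n}$ and $\underline{\bf \rho_n}$; the only input with any content is that, for $s \in \Sigma_n$, the fibre $\rho_n(s) = \sigma_n^{-1}(s)$ is a coset of $\mathrm{Ker}(\sigma_n)$ and hence has cardinality exactly $\alpha(n)$, which is finite under the standing hypothesis that makes $\underline{\bf \rho_n}$ defined. Here $\Id$ on the right-hand side is understood as the inclusion $\mathrm{Vect}_{\Sigma_n}^{\overline{k}}(k) \hookrightarrow \mathrm{Vect}_\Sigma^{\overline{k}}(k)$, matching the reading of $\Id$ in the preceding displayed identity $\underline{\sigma_n} \circ \underline{\rho_n} = \alpha(n) \cdot \Id$.

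First I would record the effect on underlying $k$-vector spaces and on morphisms. For an object $V$ of $\mathrm{Vect}_{\Sigma_n}^{\overline{k}}(k)$ one has $\underline{\bf \rho_n}(V) = \bigoplus_{i=1}^{\alpha(n)} V$ by definition, and $\underline{\bf \sigma_n}$ leaves the underlying $k$-vector space unchanged; hence the composite sends $V$ to $\bigoplus_{i=1}^{\alpha(n)} V = \Id^{\oplus \alpha(n)}(V)$. On a morphism $f \colon V \to V'$ the functor $\underline{\bf \rho_n}$ acts by the diagonal sum $f^{\oplus \alpha(n)}$ and $\underline{\bf \sigma_n}$ acts by the identity on underlying linear maps, so the composite acts by $f^{\oplus \alpha(n)}$, exactly as $\Id^{\oplus \alpha(n)}$. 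Thus the two functors already agree after passing to $\mathrm{Vect}(k)$.

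Next I would check that the $\Sigma$-gradings agree. Writing $W := \underline{\bf \rho_n}(V)$, so that $\overline{W}^{s'} = \overline{V}^{\sigma_n(s')}$ for every $s' \in \Sigma$, and $U := \underline{\bf \sigma_n}(W)$, one has $\overline{U}^s = 0$ for $s \notin \Sigma_n$, while for $s \in \Sigma_n$
\[
\overline{U}^s \;=\; \bigoplus_{s' \in \rho_n(s)} \overline{W}^{s'} \;=\; \bigoplus_{s' \in \rho_n(s)} \overline{V}^{\sigma_n(s')} \;=\; \bigoplus_{s' \in \rho_n(s)} \overline{V}^{s} \;=\; \big(\overline{V}^{s}\big)^{\oplus \alpha(n)},
\]
using $\sigma_n(s') = s$ for all $s' \in \rho_n(s)$ together with $|\rho_n(s)| = \alpha(n)$. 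This is precisely the $\Sigma$-grading of $V^{\oplus \alpha(n)} = \Id^{\oplus \alpha(n)}(V)$, the grading of $V$ being supported on $\Sigma_n$ so that $(\overline{V}^s)^{\oplus \alpha(n)} = 0$ for $s \notin \Sigma_n$ as well. Since $\underline{\bf \sigma_n}$ and $\underline{\bf \rho_n}$ are already known to be well defined as functors, no separate verification of the $G$-equivariance condition on objects is needed.

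The one point requiring a little care — and the place where the word ``equality'' has to be interpreted as a canonical natural isomorphism unless one fixes conventions — is the identification of the indexing set $\rho_n(s)$ of the inner direct sum above with the set $\{1, \dots, \alpha(n)\}$ labelling the copies of $V$ in $\underline{\bf \rho_n}(V)$. Since each $\rho_n(s)$ (for $s \in \Sigma_n$) is a torsor under the fixed finite group $\mathrm{Ker}(\sigma_n)$, all of these identifications are governed uniformly by one choice of set-theoretic section of $\sigma_n|_\Sigma$ over $\Sigma_n$ together with one enumeration of $\mathrm{Ker}(\sigma_n)$; fixing these once and for all turns the isomorphisms above into equalities, and naturality in $V$ is then automatic because every morphism of $\mathrm{Vect}_{\Sigma_n}^{\overline{k}}(k)$ commutes with the chosen reindexing. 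This bookkeeping is the main (and essentially only) obstacle; everything else is a routine unwinding of definitions.
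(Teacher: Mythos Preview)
Your proposal is correct and takes essentially the same approach as the paper: the paper's proof is the single sentence ``The proof follows automatically from the definition of $\underline{\bf \sigma_n}$ and $\underline{\bf \rho_n}$,'' and you have carefully spelled out exactly that unwinding. Your discussion of the indexing subtlety (identifying $\rho_n(s)$ with $\{1,\dots,\alpha(n)\}$) is more scrupulous than the paper, which simply treats the equality as evident.
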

\begin{proof}
The proof follows automatically from the definition of $\underline{\bf \sigma_n}$ and $\underline{\bf \rho_n}$.
\end{proof}
\begin{remark}
Similarly to $\underline{\bf \sigma_n}$, the functor $\underline{\bf \rho_n}$ restricts to the category $\mathrm{Vect}_\Sigma^k(k)$.
\end{remark}
%---------------------------------------------
\subsection*{Automorphisms, Frobenius, and Verschiebung}
%---------------------------------------------
Let $(\Sigma, \sigma_n)$ be an abstract Bost-Connes datum. Assume that we have a $G$-equivariant embedding $\Sigma \hookrightarrow \overline{k}^\times$ and that $\sigma_n: \Sigma \to \Sigma$ is given by $s \mapsto s^n$. These conditions are verified by Examples 1 and 3-6 in \S\ref{sec:BC-systems}.
\begin{definition}
Let $\mathrm{Aut}_\Sigma^{\overline{k}}(k)$ be the category of pairs $(V, \Phi)$, where $V$ is a finite dimensional $k$-vector space and $\Phi: \overline{V} \stackrel{\sim}{\to} \overline{V}$ a $G$-equivariant diagonalizable automorphism whose eigenvalues belong to $\Sigma$. The morphisms $(V,\Phi) \to (V',\Phi')$ are the $k$-linear homomorphisms $f: V \to V'$ such that $\overline{f} \circ \Phi = \Phi' \circ \overline{f}$. The tensor product of vector spaces gives rise to a symmetric monoidal structure on $\mathrm{Aut}_\Sigma^{\overline{k}}(k)$. 
\end{definition}
\begin{remark}
When The $G$-action $G \to \mathrm{Aut}(\Sigma)$ is trivial, the embedding $\Sigma \hookrightarrow \overline{k}^\times$ factors through $k^\times \subset \overline{k}^\times$. Consequently, $\mathrm{Aut}_\Sigma^{\overline{k}}(k)$ reduces to the category $\mathrm{Aut}_\Sigma(k)$ of pairs $(V,\Phi)$, where $V$ is a finite dimensional $k$-vector space and $\Phi: V \stackrel{\sim}{\to} V$ a diagonalizable automorphism whose eigenvalues belong to $\Sigma$.
\end{remark}
Note that $\mathrm{Aut}_\Sigma^{\overline{k}}(k)$ comes equipped with the forgetful functor $(V,\Phi) \mapsto V$ to finite dimensional $k$-vector spaces. We have also a ``Frobenius'' functor
\begin{eqnarray}\label{eq:Frobenius}
\mathrm{Aut}_\Sigma^{\overline{k}}(k) \too \mathrm{Aut}_\Sigma^{\overline{k}}(k) && (V, \Phi) \mapsto (V, \Phi^n)
\end{eqnarray}
and a ``Verschiebung'' functor
\begin{eqnarray}\label{eq:Verschiebung}
\mathrm{Aut}_{\Sigma_n}^{\overline{k}}(k) \too \mathrm{Aut}_\Sigma^{\overline{k}}(k) && (V, \Phi) \mapsto (V^{\oplus n}, \bbV_n(\Phi))\,,
\end{eqnarray}
where 
\begin{equation}\label{eq:matrix-V}
\bbV_n(\Phi):= \left( \begin{array}{ccccc} 0 & \cdots & \cdots &  0 & \Phi \\
\Id & \ddots & \ddots & \ddots & 0 \\
0 & \ddots & \ddots & \ddots &  \vdots \\
\vdots & \ddots & \ddots & \ddots & \vdots \\
0 & \cdots & 0 & \Id & 0 \end{array} \right)\,.
\end{equation}
\begin{theorem}\label{thm:equivalence}
The following holds:
\begin{itemize}
\item[(i)] There is an equivalence of neutral Tannakian categories $\mathrm{Aut}_\Sigma^{\overline{k}}(k)$ and $\mathrm{Vect}_\Sigma^{\overline{k}}(k)$.
\item[(ii)] Under equivalence (i), the above functor \eqref{eq:Frobenius} corresponds to $\underline{\bf \sigma_n}$.
\item[(iii)] Assume that the $G$-action $G \to \mathrm{Aut}(\Sigma)$ is trivial (\eg\ $k=\overline{k}$). Under equivalence (i), the functor \eqref{eq:Verschiebung} corresponds to $\underline{\bf \rho_n}$.
\end{itemize}
\end{theorem}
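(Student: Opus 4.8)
The plan is to exhibit an explicit tensor equivalence $\Theta\colon \mathrm{Aut}_\Sigma^{\overline{k}}(k) \isoto \mathrm{Vect}_\Sigma^{\overline{k}}(k)$ and then check that it carries the Frobenius and Verschiebung functors to $\underline{\bf \sigma_n}$ and $\underline{\bf \rho_n}$. For item~(i): given $(V,\Phi)$, the automorphism $\Phi$ is diagonalizable with eigenvalues in $\Sigma \subseteq \overline{k}^\times$, so I set $\Theta(V,\Phi) := (V, \bigoplus_{s\in\Sigma}\overline{V}^s)$ with $\overline{V}^s := \ker(\Phi - s\cdot\Id)$; this is a $\Sigma$-grading of $\overline{V}:= V\otimes\overline{k}$ with finitely many nonzero pieces, and since $\Sigma\hookrightarrow\overline{k}^\times$ is injective we have $\overline{V}^s \neq 0$ exactly for the eigenvalues $s$. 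The requirement that $\Phi$ be $G$-equivariant, \ie that $\Phi$ commute with the canonical semilinear $G$-action on $\overline{V}$, unwinds via $\gamma(sv) = \gamma(s)\gamma(v)$ precisely to the condition $\overline{V}^{\gamma(s)} = {}^\gamma\overline{V}^s$ of Definition~\ref{def:categorification-1}. Conversely a $\Sigma$-graded object produces $\Phi := \bigoplus_s s\cdot\Id_{\overline{V}^s}$; the two assignments are mutually inverse on objects and induce bijections on morphism sets, since for a $k$-linear $f$ both $\overline{f}\circ\Phi = \Phi'\circ\overline{f}$ and preservation of the $\Sigma$-grading say that $\overline{f}$ maps $\overline{V}^s$ into $\overline{V'}^s$. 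The equivalence is the identity on underlying $k$-vector spaces, hence compatible with the forgetful fiber functors $\omega$, and $\Phi_{V\otimes V'} = \Phi_V\otimes\Phi_{V'}$ matches the convolution grading $(\overline{V\otimes V'})^s = \bigoplus_{s_1 s_2 = s}\overline{V}^{s_1}\otimes\overline{V'}^{s_2}$, so $\Theta$ is symmetric monoidal. As $\mathrm{Vect}_\Sigma^{\overline{k}}(k)$ is neutral Tannakian by Theorem~\ref{thm:categorification-1}(i), so is $\mathrm{Aut}_\Sigma^{\overline{k}}(k)$, and $\Theta$ is an equivalence of neutral Tannakian categories.

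For item~(ii): under $\Theta$ the object $(V,\Phi^n)$ has $t$-eigenspace $\ker(\Phi^n - t\cdot\Id) = \bigoplus_{s:\, s^n = t}\overline{V}^s$, which equals $\bigoplus_{s\in\rho_n(t)}\overline{V}^s$ when $t\in\Sigma_n$ and $0$ otherwise (any eigenvalue of $\Phi^n$ has the form $s^n = \sigma_n(s)$ for an eigenvalue $s$ of $\Phi$, hence lies in $\Sigma_n$); since $\sigma_n(s)=s^n$ this is exactly $\overline{\underline{\bf \sigma_n}(V)}^t$, so \eqref{eq:Frobenius} corresponds to $\underline{\bf \sigma_n}$. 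For item~(iii) the $G$-action is trivial, so $\Sigma\hookrightarrow k^\times$ and we work inside $\mathrm{Aut}_\Sigma(k)$ and $\mathrm{Vect}_\Sigma(k)$. First note that for \eqref{eq:Verschiebung} to land in $\mathrm{Aut}_\Sigma(k)$ one needs every $n$-th root in $\overline{k}$ of every element of $\Sigma_n$ to lie in $\Sigma$, \ie $\mu_n\subseteq\Sigma$, which forces $\alpha(n)=|\mu_n|=n$ and hence $V^{\oplus n}=V^{\oplus\alpha(n)}$. Decomposing $\Phi=\bigoplus_a a\cdot\Id_{V^a}$ over its eigenvalues $a\in\Sigma_n$, and noting that $\bbV_n(-)$ of \eqref{eq:matrix-V} respects this block decomposition, one is reduced to diagonalizing the companion matrix of $\lambda^n-a$ (tensored with $V^a$): a Vandermonde change of basis shows its eigenvalues are exactly the $n$-th roots $\rho_n(a)$ of $a$, each with eigenspace of $k$-dimension $\dim_k V^a$. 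Consequently, for $t\in\Sigma$ the $t$-eigenspace of $\bbV_n(\Phi)$ is nonzero exactly when $t^n=\sigma_n(t)$ is an eigenvalue of $\Phi$, in which case it is canonically isomorphic to $V^{\sigma_n(t)}=\overline{\underline{\bf \rho_n}(V)}^t$; summing over $t$ recovers $\bigoplus_{i=1}^{\alpha(n)}V$. This identifies \eqref{eq:Verschiebung} with $\underline{\bf \rho_n}$, consistently with $\bbV_n(\Phi)^n=\Phi^{\oplus n}$ and the equality $\underline{\bf \sigma_n}\circ\underline{\bf \rho_n}=\Id^{\oplus\alpha(n)}$ recorded just above.

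Items~(i) and~(ii) should be a routine unwinding of definitions. The main obstacle is item~(iii): carrying out the eigenspace analysis of the companion-type matrix $\bbV_n(\Phi)$, reconciling the apparent dimension mismatch $V^{\oplus n}$ versus $V^{\oplus\alpha(n)}$ through $\mu_n\subseteq\Sigma$, and --- more than a computation on objects --- verifying that $t\mapsto$ ($t$-eigenspace of $\bbV_n(\Phi)$) is natural in $(V,\Phi)$ and compatible with $\omega$, so that $\Theta$ genuinely intertwines the functors \eqref{eq:Verschiebung} and $\underline{\bf \rho_n}$.
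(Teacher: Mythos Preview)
Your proof is correct and follows essentially the same route as the paper: the equivalence $\Theta$ is the eigenspace decomposition (the paper's $\psi$) with inverse given by the diagonal automorphism $\Phi=\bigoplus_s s\cdot\Id$ (the paper's $\varphi$); item~(ii) is the same eigenspace computation for $\Phi^n$; and for item~(iii) both you and the paper reduce via the block decomposition $\Phi=\bigoplus_a a\cdot\Id_{V^a}$ to diagonalizing the companion matrix of $\lambda^n-a$ (the paper isolates this as a separate Lemma, while you do it inline with the Vandermonde change of basis).

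One point worth noting: your observation that the Verschiebung \eqref{eq:Verschiebung} only lands in $\mathrm{Aut}_\Sigma(k)$ when $\mu_n\subseteq\Sigma$, forcing $\alpha(n)=n$, is a genuine clarification --- the paper's Lemma~\ref{lem:key-last}(ii) tacitly requires this for the two sides ($V^{\oplus n}$ versus $V^{\oplus\alpha(n)}$) to even have the same dimension, but does not say so explicitly. Your final paragraph of self-assessment is not part of the proof proper; the naturality concern you raise there is legitimate but routine, since morphisms in $\mathrm{Aut}_\Sigma(k)$ preserve eigenspaces and the Vandermonde conjugation is functorial in $V^a$.
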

\begin{proof}
(i) Consider the following additive symmetric monoidal functor
\begin{eqnarray*}
\psi: \mathrm{Aut}^{\overline{k}}_\Sigma(k) \too \mathrm{Vect}^{\overline{k}}_\Sigma(k) && (V, \Phi) \mapsto (V, \bigoplus_{s \in \Sigma}\overline{V}^s)\,,
\end{eqnarray*}
where $\overline{V}^s$ stands for the eigenspace of $\Phi$ associated to the eigenvector $s \in \Sigma$. Note that since $\Phi$ is $G$-equivariant, we have $\overline{V}^{\gamma(s)}= {}^\gamma \overline{V}^s$ for every $s \in \Sigma$ and $\gamma \in G$. The (quasi-)inverse of $\psi$ is given by the following additive symmetric monoidal functor
\begin{eqnarray*}
\varphi: \mathrm{Vect}_\Sigma^{\overline{k}}(k) \too \mathrm{Aut}^{\overline{k}}_\Sigma(k)&& (V, \bigoplus_{s \in \Sigma} \overline{V}^s) \mapsto (V, \Phi)\,,
\end{eqnarray*}
where $\Phi$ stands for diagonal automorphism whose restriction to $\overline{V}^s$ is given by multiplication by $s$. The proof follows now from the fact that $\psi$ and $\varphi$ are compatible with the forgetful functors to finite dimensional $k$-vector spaces.

(ii) Given an object $(V,\Phi)$ of $\mathrm{Aut}_\Sigma^{\overline{k}}(k)$, we claim that $\underline{\bf \sigma_n}(\psi(V,\Phi)) \simeq \psi(V, \Phi^n)$. Note that the eigenspace of $\Phi^n$ associated to $s \in \Sigma$ can be expressed as a direct sum, indexed by the elements $s'$ such that $(s')^n=s$, of the different eigenspaces of $\Phi$ associated to the elements of $s'$. Therefore, since $\sigma_n(s')=(s')^n$, our claim follows from the definition of $\underline{\bf \sigma_n}$. This achieves the proof.

(iii) Given an object $(V, \bigoplus_{s \in \Sigma}V^s)$ of $\mathrm{Vect}_\Sigma(k)$, note that 
\begin{eqnarray}\label{eq:111}
&& \varphi(\underline{\bf \rho_n}(V, \bigoplus_{s \in \Sigma}V^s)) =\left(V^{\oplus n}, \bigoplus_{s \in \Sigma}\bigoplus_{s' \in \rho_n(s)} V^s \stackrel{\oplus_s \oplus_{s'} (s'\cdot -)}{\too} \bigoplus_{s \in \Sigma}\bigoplus_{s' \in \rho_n(s)} V^s\right)\,.
\end{eqnarray}
Note also that we have the following equality
\begin{eqnarray}\label{eq:222}
\varphi(V, \bigoplus_{s \in \Sigma}V^s) = \left(V, \bigoplus_{s \in \Sigma} V^s  \stackrel{\oplus_s(s\cdot -)}{\too} \bigoplus_{s \in \Sigma} V^s\right)\,.
\end{eqnarray}
Thanks to Lemma \ref{lem:key-last} below, we observe that the ``Verschiebung'' of the pair \eqref{eq:222} is isomorphic in $\mathrm{Aut}_\Sigma(k)$ to the pair \eqref{eq:111}. This achieves the proof.
\end{proof}
\begin{lemma}\label{lem:key-last}
The following holds:
\begin{itemize}
\item[(i)] Given $k$-linear automorphisms $\Phi$ and $\Phi'$, we have a canonical isomorphism between $\bbV_n(\Phi) \oplus \bbV_n(\Phi')$ and $\bbV_n(\Phi \oplus \Phi')$.
\item[(ii)] Given a $k$-linear automorphism $V \stackrel{s\cdot -}{\to} V, s \in k^\times$, we have an isomorphism between $\bbV_n(s \cdot -)$ and the automorphism $\bigoplus_{s' \in \rho_n(s)} V \stackrel{\oplus_{s'}(s' \cdot -)}{\to} \bigoplus_{s' \in \rho_n(s)} V$. 
\end{itemize}
\end{lemma}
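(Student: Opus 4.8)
The plan is to prove both parts by elementary linear algebra, with no appeal to the Tannakian machinery: part (i) is a bookkeeping statement about block matrices, and part (ii) is the diagonalization of a companion matrix.

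For \textbf{(i)}, note that $\bbV_n(\Phi) \oplus \bbV_n(\Phi')$ and $\bbV_n(\Phi \oplus \Phi')$ act on literally the same underlying $k$-vector space, presented either as $V^{\oplus n} \oplus (V')^{\oplus n}$ or as $(V \oplus V')^{\oplus n}$. First I would write down the canonical ``interleaving'' isomorphism $\tau$ between these two presentations, sending the $i$-th copy of $V$ (resp.\ of $V'$) to the $V$-summand (resp.\ $V'$-summand) of the $i$-th copy of $V \oplus V'$. Then I would check that $\tau$ intertwines the two operators: for any endomorphism $\Psi$ the matrix $\bbV_n(\Psi)$ carries the $i$-th block onto the $(i+1)$-st block via $\Id$ when $i < n$, and the $n$-th block onto the first via $\Psi$; since $\Phi \oplus \Phi'$ is by construction the direct sum of the $\Phi$- and $\Phi'$-actions, this block structure is transparent to $\tau$, so conjugation by $\tau$ takes $\bbV_n(\Phi) \oplus \bbV_n(\Phi')$ to $\bbV_n(\Phi \oplus \Phi')$. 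This gives the asserted canonical isomorphism; it is pure bookkeeping.

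For \textbf{(ii)}, I would first reduce to the case $V = k$ via the identity $\bbV_n(s \cdot -) = \bbV_n(s \cdot \Id_k) \otimes \Id_V$, so that it suffices to analyse the scalar $n \times n$ matrix $C_s := \bbV_n(s \cdot \Id_k)$, which is exactly the companion matrix of $t^n - s \in k[t]$. A short eigenvector computation shows that $(a_1, \dots, a_n)$ is a $\lambda$-eigenvector of $C_s$ precisely when $a_{j-1} = \lambda a_j$ for $2 \le j \le n$ and $s\,a_n = \lambda a_1$; back-substituting forces $\lambda^n = s$ and leaves a one-dimensional eigenspace for each such $\lambda$. As $s \neq 0$ and $\mathrm{char}(k) = 0$, the polynomial $t^n - s$ is separable with $n$ distinct roots, so $C_s$ is diagonalizable with simple spectrum the full set of $n$-th roots of $s$. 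In the present setting this set of roots coincides with $\rho_n(s)$, the $\sigma_n$-preimage of $s$ inside $\Sigma$, and since the $G$-action on $\Sigma$ is trivial we have $\Sigma \subseteq k^\times$, so the diagonalization takes place over $k$. Hence $C_s$ is conjugate over $k$ to $\mathrm{diag}(s')_{s' \in \rho_n(s)}$, and tensoring back with $\Id_V$ yields the desired isomorphism between $\bbV_n(s \cdot -)$ and the automorphism $\bigoplus_{s' \in \rho_n(s)} V \stackrel{\oplus_{s'}(s' \cdot -)}{\to} \bigoplus_{s' \in \rho_n(s)} V$.

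The only point requiring a word of care is the identification of the spectrum of $C_s$ with $\rho_n(s)$: this uses that $\rho_n(s)$ is the \emph{entire} set of $n$-th roots of $s$ — equivalently that $\ker(\sigma_n)$ has order $n$, i.e.\ that $\Sigma$ contains all $n$-th roots of unity — which holds for Examples 1 and 3--6, together with the triviality of the $G$-action to place these roots in $k$. Everything else reduces to routine matrix manipulation, and part (i) enters only to keep the reduction in (ii) tidy (one could also verify (ii) directly, block by block).
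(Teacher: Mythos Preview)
Your proof is correct and follows essentially the same route as the paper: for (i) the permutation isomorphism $V^{\oplus n}\oplus V'^{\oplus n}\simeq (V\oplus V')^{\oplus n}$, and for (ii) the diagonalization of the companion matrix of $\lambda^n-s$. Your discussion is in fact more detailed than the paper's, which simply observes that the characteristic polynomial of $\bbV_n(s\cdot-)$ is $\lambda^n-s$ and appeals to diagonalization; in particular your closing remark about why the spectrum coincides with $\rho_n(s)$ (namely $\alpha(n)=n$ and $\Sigma\subset k^\times$ under the trivial $G$-action) makes explicit a point the paper leaves implicit.
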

\begin{proof}
(i) Let $V$, resp. $V'$, be the source (= target) of $\Phi$, resp. of $\Phi'$. The searched isomorphism is given by the permutation $V^{\oplus n} \oplus V'^{\oplus n} \simeq (V \oplus V')^{\oplus n}$.

(ii) The characteristic polynomial $p(\lambda)$ of $\bbV(s\cdot -)$ is given by $\lambda^n-s=0$. Therefore, the proof follows from the diagonalization of this latter automorphism.
\end{proof}

%-------------------------------
\subsection*{Relation with Motives}
%-------------------------------
In this subsection we relate the categorification of some of our examples of abstract Bost-Connes systems with the theory of motives. 

Example 1 makes use of the notion of orbit categories. Let $(\cC,\otimes, {\bf 1})$ be a $k$-linear, additive, rigid symmetric monoidal category and $\cO \in \cC$ a $\otimes$-invertible object. The associated {\em orbit category} $\cC/_{\!\!-\otimes \cO}$ has the same objects as $\cC$ and morphisms
$$ \Hom_{\cC/_{\!\!-\otimes \cO}}(a,b):= \bigoplus_{i \in \bbZ} \Hom_\cC(a,b\otimes \cO^{\otimes i})\,.$$
Given objects $a, b, c$ and morphisms
\begin{eqnarray*}
\mathrm{f}=\{f_i\}_{i \in \bbZ} \in \bigoplus_{i \in \bbZ} \Hom_\cC(a,b\otimes \cO^{\otimes i}) && \mathrm{g}=\{g_i\}_{i \in \bbZ} \in \bigoplus_{i \in \bbZ} \Hom_\cC(b,c\otimes \cO^{\otimes i})
\end{eqnarray*}
the $i'^{\mathrm{th}}$-component of $\mathrm{g} \circ \mathrm{f}$ is given by $\sum_i((g_{i'-i}\otimes \cO^{\otimes i}) \circ f_i)$. The canonical functor 
\begin{eqnarray*}
\tau: \cC \to \cC/_{\!\!-\otimes \cO} & a \mapsto a & f \mapsto \mathrm{f}=\{f_i\}_{i \in \bbZ}\,,
\end{eqnarray*}
where $f_0=f$ and $f_i=0$ for $i \neq 0$, is endowed with an isomorphism $\tau \circ (-\otimes \cO) \stackrel{\sim}{\Rightarrow} \tau$ and is $2$-universal among all such functors; see \cite[\S7]{CvsNC}. By construction, $\cC/_{\!\!-\otimes \cO}$ is $k$-linear and additive. Moreover, as proved in \cite[Lem.~7.3]{CvsNC}, $\cC/_{\!\!-\otimes \cO}$ inherits from $\cC$ a rigid symmetric monoidal structure making the functor $\tau$ symmetric monoidal.
%---------------------------------------------
\subsection*{Example 1 - Original Bost-Connes system}
%---------------------------------------------
Let us first forget about the $G$-action. Thanks to Theorem \ref{thm:categorification-1}, we obtain the affine group $k$-scheme $\mathrm{Spec}(k[\bbQ/\bbZ])$ and the neutral Tannakian category $\mathrm{Vect}_{\bbQ/\bbZ}(k)$. Under the assignment $\Sigma \mapsto \mathrm{Spec}(k[\Sigma])$, $\bbZ/n$ corresponds to the affine group $k$-scheme $\mu_n$ of $n^{\mathrm{th}}$ roots of unit. Therefore, making use of $\bbQ/\bbZ \simeq \mathrm{lim}_{n\geq 1} \bbZ/n$, we conclude that 
$$\mathrm{Spec}(k[\bbQ/\bbZ]) \simeq \mathrm{colim}_{n \geq 1}\mathrm{Spec}(k[\bbZ/n])\simeq \mathrm{colim}_{n \geq 1}\mu_n \simeq \bbG_{m, \mathrm{tors}}\,.$$
Recall from \cite[\S4.1.5]{Andre} the construction of the category of Tate motives $\mathrm{Tate}(k)_k$ with $k$-coefficients. This is a neutral Tannakian category which comes equipped with $\otimes$-invertible objects $k(n), n \in \bbZ$.
\begin{proposition}\label{prop:lim}
There is an equivalence of neutral Tannakian categories
$$ \mathrm{Vect}_{\bbQ/\bbZ}(k) \simeq \mathrm{lim}_{n \geq 1} \mathrm{Tate}(k)_k/_{\!\!-\otimes k(n)}\,.$$
\end{proposition}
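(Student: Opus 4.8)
The plan is to reduce the statement to an elementary computation with graded vector spaces, via the identification of $\mathrm{Tate}(k)_k$ with $\bbZ$-graded vector spaces and of each orbit category with $\bbZ/n$-graded vector spaces.

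\emph{Step 1 (a model for $\mathrm{Tate}(k)_k$).} Recall from \cite[\S4.1.5]{Andre} that the motivic Galois group of $\mathrm{Tate}(k)_k$ is $\bbG_m$; equivalently, there is an equivalence of neutral Tannakian categories $\mathrm{Tate}(k)_k \simeq \mathrm{Vect}_\bbZ(k)$ onto the category of finite dimensional $\bbZ$-graded $k$-vector spaces (with fiber functor ``forget the grading''), under which $k(n)$ corresponds to the one-dimensional object concentrated in degree $n$. Consequently $-\otimes k(n)$ becomes the degree-shift functor $V \mapsto V[n]$, and it suffices to work inside $\mathrm{Vect}_\bbZ(k)$.

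\emph{Step 2 (each orbit category).} Fix $n \geq 1$ and let $F_n \colon \mathrm{Vect}_\bbZ(k) \to \mathrm{Vect}_{\bbZ/n}(k)$ be the $k$-linear additive symmetric monoidal push-forward along $\bbZ \twoheadrightarrow \bbZ/n$ (it collects the homogeneous pieces of a given degree modulo $n$). Since the shift $V \mapsto V[n]$ is trivial modulo $n$, there is a canonical isomorphism $F_n \circ (-\otimes k(n)) \simeq F_n$, so by the $2$-universal property of the orbit projection $\tau \colon \mathrm{Vect}_\bbZ(k) \to \mathrm{Vect}_\bbZ(k)/_{\!\!-\otimes k(n)}$ (see \cite[\S7]{CvsNC}) the functor $F_n$ factors as $\overline{F}_n \circ \tau$ with $\overline{F}_n$ $k$-linear symmetric monoidal. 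I claim $\overline{F}_n$ is an equivalence: it is essentially surjective, as $\bigoplus_{\overline{a} \in \bbZ/n} W_{\overline{a}}$ is the image of $\bigoplus_{a=0}^{n-1} W_{\overline{a}}$ placed in degrees $0, \dots, n-1$; and it is fully faithful because $\mathrm{Hom}_{\mathrm{Vect}_\bbZ(k)/_{\!\!-\otimes k(n)}}(V,W) = \bigoplus_{i \in \bbZ} \mathrm{Hom}_{\mathrm{Vect}_\bbZ(k)}(V, W[ni])$ regroups, setting $j' = j - ni$, as $\bigoplus_{j \equiv j'\,(n)} \mathrm{Hom}_k(V_j, W_{j'})$, which is exactly $\mathrm{Hom}_{\mathrm{Vect}_{\bbZ/n}(k)}(F_n V, F_n W)$. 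Hence $\overline{F}_n \colon \mathrm{Tate}(k)_k/_{\!\!-\otimes k(n)} \isoto \mathrm{Vect}_{\bbZ/n}(k)$ is an equivalence; in particular this orbit category is already abelian (no pseudo-abelian completion is needed), and since $\overline{F}_n$ commutes with the forgetful functors to $\mathrm{Vect}(k)$ it is an equivalence of neutral Tannakian categories.

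\emph{Step 3 (passage to the limit).} For $n \mid m$ the inclusion $\bbZ/n \hookrightarrow \bbZ/m$ induces a fully faithful $k$-linear symmetric monoidal push-forward $\mathrm{Vect}_{\bbZ/n}(k) \hookrightarrow \mathrm{Vect}_{\bbZ/m}(k)$; transporting along the $\overline{F}_\bullet$ of Step 2 gives the transition functors $\mathrm{Tate}(k)_k/_{\!\!-\otimes k(n)} \to \mathrm{Tate}(k)_k/_{\!\!-\otimes k(m)}$ of the filtered system whose colimit is what ``$\mathrm{lim}_{n \geq 1}$'' denotes in the statement (consistently with the identification $\bbQ/\bbZ \simeq \mathrm{lim}_{n\geq1}\bbZ/n$ of the preceding paragraph; dually, these are inflation functors along the power maps $\mu_m \twoheadrightarrow \mu_n$). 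Since $\bbQ/\bbZ = \bigcup_n \bbZ/n$ and every finite dimensional $\bbQ/\bbZ$-graded $k$-vector space is supported on some $\bbZ/n$, one has $\mathrm{colim}_n \mathrm{Vect}_{\bbZ/n}(k) = \mathrm{Vect}_{\bbQ/\bbZ}(k)$, whence
$$\mathrm{lim}_{n \geq 1} \bigl( \mathrm{Tate}(k)_k/_{\!\!-\otimes k(n)} \bigr) \;\simeq\; \mathrm{colim}_{n \geq 1} \mathrm{Vect}_{\bbZ/n}(k) \;=\; \mathrm{Vect}_{\bbQ/\bbZ}(k)\,,$$
compatibly with the forgetful fiber functors, so this is an equivalence of neutral Tannakian categories; on Galois groups it recovers $\mathrm{Spec}(k[\bbQ/\bbZ])$ of Theorem \ref{thm:categorification-1}(iii) and Remark \ref{rk:simple}.

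The only delicate point, beyond the routine $\mathrm{Hom}$-computation of Step 2, is the bookkeeping of variances: one must check that the equivalences $\overline{F}_n$ are compatible with the transition functors, and fix the convention that ``$\mathrm{lim}$'' here is the filtered colimit along the inclusion-induced functors $\bbZ/n \hookrightarrow \bbZ/m$ (so that the orbit category grows with $n$ in the divisibility order, matching $\bbZ/n \hookrightarrow \bbQ/\bbZ$) — the ``tautological'' orbit-category functors run in the opposite, cofiltered, direction and would compute a different Tannakian category. A secondary point is simply the observation, recorded in Step 2, that each orbit category is genuinely abelian.
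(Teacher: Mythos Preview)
Your proof is correct and follows essentially the same route as the paper: identify $\mathrm{Tate}(k)_k$ with $\mathrm{Vect}_\bbZ(k)$, push forward along $\bbZ\twoheadrightarrow\bbZ/n$ and use the universal property of the orbit category to obtain $\overline{F}_n$, verify full faithfulness by the same $\Hom$-regrouping, and then pass to the (co)limit via $\bbQ/\bbZ=\bigcup_n\bbZ/n$. The only difference is one of thoroughness: the paper simply asserts the levelwise equivalences and the identification $\mathrm{Vect}_{\bbQ/\bbZ}(k)\simeq\mathrm{lim}_n\mathrm{Vect}_{\bbZ/n}(k)$, whereas you explicitly flag the variance issue (the tautological orbit-category maps run the wrong way) and record that the orbit categories are already abelian---useful bookkeeping, but not a different argument.
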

\begin{proof}
Thanks to the isomorphism $\bbQ/\bbZ \simeq \mathrm{lim}_{n\geq 1}  \bbZ/n$, we have an induced equivalence of neutral Tannakian categories $\mathrm{Vect}_{\bbQ/\bbZ}(k) \simeq \mathrm{lim}_{n\geq 1} \mathrm{Vect}_{\bbZ/n}(k)$. Hence, it is enough to construct an equivalence between $\mathrm{Tate}(k)_k/_{\!\!-\otimes k(n)}$ and $\mathrm{Vect}_{\bbZ/n}(k)$. The category $\mathrm{Tate}(k)_k$ identifies with $\mathrm{Vect}_\bbZ(k)$. Under such identification, the objects $k(n)$ correspond to the simple objects $S_n$; see Remark \ref{rk:simple}. Therefore, we obtain a symmetric monoidal equivalence between $\mathrm{Tate}(k)_k/_{\!\!-\otimes k(n)}$ and $\mathrm{Vect}_\bbZ(k)/_{\!\!-\otimes S_n}$. In order to conclude the proof, it suffices then to construct a symmetric monoidal equivalence between $\mathrm{Vect}_\bbZ(k)/_{\!\!-\otimes S_n}$ and $\mathrm{Vect}_{\bbZ/n}(k)$.

Let $\sigma:\bbZ \twoheadrightarrow \bbZ/n$ be the projection homomorphism and $\rho:\bbZ/n \to \cP(\bbZ)$ the map that sends an element of $\bbZ/n$ to its pre-image under $\sigma$. Under such notations, consider the following $k$-linear additive symmetric monoidal functor
\begin{eqnarray*}
\underline{\bf \sigma}: \mathrm{Vect}_\bbZ(k) \too \mathrm{Vect}_{\bbZ/n}(k) &\underline{\bf \sigma}(V):=V& \underline{\bf \sigma}(V)^s:= \bigoplus_{s' \in \rho_n(s)}V^{s'}\,.
\end{eqnarray*}
Clearly, $\underline{\bf \sigma}$ sends the simple object $S_n$ to the $\otimes$-unit of $\mathrm{Vect}_{\bbZ/n}(k)$. Consequently, by the universal property of orbit categories, we obtain an induced functor
\begin{equation}\label{eq:aux-2}
\mathrm{Vect}_\bbZ(k)/_{\!\!-\otimes S_n} \too \mathrm{Vect}_{\bbZ/n}(k)\,.
\end{equation}
Since $\underline{\bf \sigma}$ is $k$-linear, additive, symmetric monoidal, and essentially surjective, \eqref{eq:aux-2} is also $k$-linear, additive, symmetric monoidal, and essentially surjective. Let us now show that \eqref{eq:aux-2} is moreover fully-faithful. The homomorphisms in $\mathrm{Vect}_\bbZ(k)/_{\!\!-\otimes S_n}$ from $(V,\bigoplus_{s \in \bbZ} V^s)$ to $(V', \bigoplus_{s \in \bbZ}V'^s)$ are given by 
\begin{equation}\label{eq:identification-11}
\bigoplus_{i \in \bbZ} \bigoplus_{s \in \bbZ} \Hom(V^s, V'^{ni+s})\,.
\end{equation}
On the other hand, the homomorphisms in $\mathrm{Vect}_{\bbZ/n}(k)$ from $\underline{\bf \sigma}(V,\bigoplus_{s \in \bbZ} V^s)$ to $\underline{\bf \sigma}(V', \bigoplus_{s \in \bbZ}V'^s)$ can be written as
\begin{equation}\label{eq:identification-22}
\bigoplus_{s \in \bbZ/n} \bigoplus_{s',s'' \in \rho_n(s)}\Hom(V^{s'}, V^{s''})\,.
\end{equation}
The above functor \eqref{eq:aux-2} identifies \eqref{eq:identification-11} with \eqref{eq:identification-22}, and it is therefore fully-faithful. This achieves the proof. 
\end{proof}
When we consider the $G$-action, Theorem \ref{thm:categorification-1} furnish us the affine group $k$-scheme $\mathrm{Spec}(\overline{\bbQ}[\bbQ/\bbZ]^{\mathrm{Gal}(\overline{\bbQ}/k)})$ and the neutral Tannakian category $\mathrm{Vect}_{\bbQ/\bbZ}^{\overline{\bbQ}}(k)$. Note that this affine group $k$-scheme is a twisted form of $\bbG_{m, \mathrm{tors}}$. Definition \ref{def:smaller} and Proposition \ref{prop:smaller} furnish us also the quotient affine group $k$-scheme $\mathrm{Spec}(k[\bbQ/\bbZ]^{\mathrm{Gal}(\overline{\bbQ}/\bbZ)})$ and the neutral Tannakian subcategory $\mathrm{Vect}_{\bbQ/\bbZ}^k(k)$.
%---------------------------------------------
\subsection*{Example 2 - Weil restriction}
%---------------------------------------------
Thanks to Theorem \ref{thm:categorification-1}, we obtain the following affine group $\bbR$-scheme and neutral Tannakian category:
\begin{eqnarray*}
\mathrm{Spec}(\bbC[\bbQ/\bbZ\times \bbQ/\bbZ]^{\bbZ/2})&& \mathrm{Vect}^\bbC_{\bbQ/\bbZ \times \bbQ/\bbZ}(\bbR)\,.
\end{eqnarray*} 
Under the assignment $\Sigma \mapsto \mathrm{Spec}(\bbC[\Sigma]^{\bbZ/2})$, $\bbZ/n \times \bbZ/n$ equipped with the switch $\bbZ/2$-action, corresponds to the Weil restriction $\mathrm{Res}_{\bbC/\bbR}(\mu_n)$. Therefore, making use of the isomorphism $\bbQ/\bbZ \times \bbQ/\bbZ \simeq \mathrm{lim}_{n\geq 1} (\bbZ/n \times \bbZ/n)$, we conclude that 
\begin{equation}\label{eq:scheme}
\mathrm{Spec}(\bbC[\bbQ/\bbZ \times \bbQ/\bbZ]^{\bbZ/2})\simeq \mathrm{colim}_{n\geq 1} \mathrm{Res}_{\bbC/\bbR}(\mu_n) \simeq \mathrm{Res}_{\bbC/\bbR}(\bbG_{m, \mathrm{tors}})\,.
\end{equation}
Recall from \cite[\S2.1]{Deligne-HodgeII} that the category of real Hodge structures $\mathrm{Hod}(\bbR)$ is defined as $\mathrm{Vect}^\bbC_{\bbZ\times \bbZ}(\bbR)$. The associated affine group $\bbR$-scheme is $\mathrm{Res}_{\bbC/\bbR}(\bbG_m)$. Moreover, the closed immersion $\mathrm{Res}_{\bbC/\bbR}(\bbG_{m,\mathrm{tors}}) \hookrightarrow \mathrm{Res}_{\bbC/\bbR}(\bbG_m)$ is induced by the functor
$$ \mathrm{Hod}(\bbR) = \mathrm{Vect}_{\bbZ \times \bbZ}^\bbC(\bbR) \too \mathrm{Vect}^\bbC_{\bbQ/\bbZ \times \bbQ/\bbZ}(\bbR)\,.$$
\begin{remark}
As in Example 1, we have also the quotient affine group $\bbR$-scheme $\mathrm{Spec}(\bbR[\bbQ/\bbZ\times \bbQ/\bbZ]^{\bbZ/2})$ and the neutral Tannakian subcategory $\mathrm{Vect}^\bbR_{\bbQ/\bbZ \times \bbQ/\bbZ}(\bbR)$.
\end{remark}
%---------------------------------------------
\subsection*{Examples 4,5 - Weil numbers and CM fields}
%---------------------------------------------
Thanks to Theorem \ref{thm:categorification-1}, we obtain the affine group $k$-schemes
\begin{eqnarray}
\mathrm{Spec}(\overline{\bbQ}[\cW_0(q)]^{\mathrm{Gal}(\overline{\bbQ}/k)}) && \mathrm{Spec}(\overline{\bbQ}[\cW(q)]^{\mathrm{Gal}(\overline{\bbQ}/k)}) \label{eq:schemes-1}\\ 
\mathrm{Spec}(\overline{\bbQ}[\cW^L_0(q)]^{\mathrm{Gal}(\overline{\bbQ}/k)}) && \mathrm{Spec}(\overline{\bbQ}[\cW^L(q)]^{\mathrm{Gal}(\overline{\bbQ}/k)})  \label{eq:schemes-2}
\end{eqnarray}
as well as the neutral Tannakian categories 
\begin{equation}\label{eq:categories}
\mathrm{Vect}^{\overline{\bbQ}}_{\cW_0(q)}(k) \quad \mathrm{Vect}^{\overline{\bbQ}}_{\cW(q)}(k) \quad \mathrm{Vect}^{\overline{\bbQ}}_{\cW^L_0(q)}(k) \quad \mathrm{Vect}^{\overline{\bbQ}}_{\cW^L(q)}(k)\,. 
\end{equation}
Recall from \cite[\S1]{Milne} the construction of the (semi-simple) Tannakian categories of numerical motives over $\bbF_q$ with $k$-coefficients:
\begin{equation}\label{eq:categories-1}
\mathrm{Mot}_0(\bbF_q)_k \quad \mathrm{Mot}(\bbF_q)_k \quad \mathrm{Mot}^L_0(\bbF_q)_k \quad \mathrm{Mot}^L(\bbF_q)_k\,.
\end{equation}
The under-script $0$ stands for numerical motives of weight zero and the upper-script $L$ for numerical motives whose Frobenius numbers belong to $\cW^L(q)$. 
\begin{theorem}{(Milne \cite[\S2]{Milne})}\label{thm:Milne1}
Assuming the Tate conjecture, the following holds:
\begin{itemize}
\item[(i)] The above Tannakian categories \eqref{eq:categories-1} admit $\overline{\bbQ}$-valued fiber functors. The associated affine group $k$-schemes agree with \eqref{eq:schemes-1}-\eqref{eq:schemes-2}.
\item[(ii)] In the particular case where $k=\overline{\bbQ}$, the Tannakian categories \eqref{eq:categories-1} become neutral and moreover equivalent to \eqref{eq:categories}.
\end{itemize}
\end{theorem}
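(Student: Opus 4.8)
The plan is to recall Milne's description \cite[\S2]{Milne} of the Tannakian category of numerical motives over $\bbF_q$ and to transport it through the two ``models'' furnished by Theorem~\ref{thm:categorification-1}. First I would isolate the input from \cite{Milne}, which rests on Jannsen's semi-simplicity theorem, on Deligne's sign modification of the commutativity constraint, and --- crucially --- on the Tate conjecture: under the Tate conjecture the $k$-linear $\otimes$-categories in \eqref{eq:categories-1} are Tannakian over $k$, are semi-simple, and have the property that every object is of CM-type; moreover Honda--Tate theory gives a bijection between the isomorphism classes of simple objects of $\mathrm{Mot}(\bbF_q)_{\overline{\bbQ}}$ and the elements of $\cW(q)$, namely a simple motive $M$ with $\overline{\bbQ}$-coefficients has a single Frobenius eigenvalue $\pi_M\in\cW(q)$, the assignment $M\mapsto\pi_M$ is a bijection onto $\cW(q)$, and it carries $\otimes$ and duality to the group law of $\cW(q)$, so that $M_\pi\otimes M_{\pi'}\simeq M_{\pi\pi'}$ and $M_\pi^\vee\simeq M_{\pi^{-1}}$. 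Under this dictionary the canonical $\mathrm{Gal}(\overline{\bbQ}/\bbQ)$-action on $\cW(q)$ is the one permuting the simple objects of the base change $\mathrm{Mot}(\bbF_q)_{\overline{\bbQ}}$ of $\mathrm{Mot}(\bbF_q)_\bbQ$, and passing to a subfield $k\subseteq\overline{\bbQ}$ replaces it by the $G=\mathrm{Gal}(\overline{\bbQ}/k)$-action. Restricting to motives of weight zero, resp.\ to motives whose Frobenius numbers lie in $\cW^L(q)$, resp.\ $\cW^L_0(q)$, gives the analogous statements with $\cW(q)$ replaced by the $\mathrm{Gal}(\overline{\bbQ}/\bbQ)$-stable subgroups $\cW_0(q),\cW^L(q),\cW^L_0(q)$ (see the lines following Definitions~\ref{def:Weil} and~\ref{def:CM}, together with Remark~\ref{rk:square}).

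Granting this, here is how I would deduce (i). Since every object of $\mathrm{Mot}(\bbF_q)_{\overline{\bbQ}}$ is of CM-type, the Tannakian group of this category over $\overline{\bbQ}$ is of multiplicative type, hence by \cite[XIV Thm.~5.3]{AGS} is determined by its group of characters, which the previous paragraph identifies with $\cW(q)$; thus it is $\mathrm{Spec}(\overline{\bbQ}[\cW(q)])$, and, endowed with the semilinear $G$-action coming from $\mathrm{Mot}(\bbF_q)_k$, it descends to the affine group $k$-scheme $\mathrm{Spec}(\overline{\bbQ}[\cW(q)]^G)$ of \eqref{eq:schemes-1}. On the other hand, by the computation in the proof of Lemma~\ref{lem:key-difficult} together with Theorem~\ref{thm:categorification-1}(iii), the category $\mathrm{Vect}_{\cW(q)}^{\overline{k}}(k)\otimes_k\overline{k}\simeq\mathrm{Vect}_{\cW(q)}(\overline{\bbQ})$ has the same Tannakian group $\mathrm{Spec}(\overline{\bbQ}[\cW(q)])$ carrying the same $G$-action, so the two associated affine group $k$-schemes coincide. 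Finally, $\mathrm{Mot}(\bbF_q)_{\overline{\bbQ}}$ is a Tannakian category over the algebraically closed field $\overline{\bbQ}$ of characteristic zero, hence neutral; pulling back a fiber functor on it along $\mathrm{Mot}(\bbF_q)_k\to\mathrm{Mot}(\bbF_q)_k\otimes_k\overline{\bbQ}=\mathrm{Mot}(\bbF_q)_{\overline{\bbQ}}$ produces the required $\overline{\bbQ}$-valued fiber functor on $\mathrm{Mot}(\bbF_q)_k$. The same reasoning over the subgroups $\cW_0(q),\cW^L(q),\cW^L_0(q)$ handles the remaining three categories.

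For (ii) one specializes to $k=\overline{\bbQ}$: then $G$ is trivial and, by Remark~\ref{rk:simple}, $\mathrm{Vect}_{\cW(q)}^{\overline{k}}(\overline{\bbQ})=\mathrm{Vect}_{\cW(q)}(\overline{\bbQ})$ is the semi-simple neutral Tannakian category with simple objects $S_\pi$, $\pi\in\cW(q)$, satisfying $S_\pi\otimes S_{\pi'}=S_{\pi\pi'}$ and Tannakian group $\mathrm{Spec}(\overline{\bbQ}[\cW(q)])$; by (i) the category $\mathrm{Mot}(\bbF_q)_{\overline{\bbQ}}$ is likewise semi-simple, neutral, with the same group of isomorphism classes of simple objects under $\otimes$ and the same Tannakian group, so the $\overline{\bbQ}$-linear tensor assignment $M_\pi\mapsto S_\pi$ extends to an equivalence of neutral Tannakian categories compatible with the fiber functors, and analogously for the other three categories of \eqref{eq:categories-1}. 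I expect the genuine obstacle to be precisely the input recalled above --- that, under the Tate conjecture, numerical motives over $\bbF_q$ with the sign-modified constraint form a Tannakian category all of whose objects are of CM-type, with simple objects and tensor product governed by $\cW(q)$ --- which is the substantive content of \cite[\S2]{Milne} via Jannsen's semi-simplicity and Honda--Tate; the remainder is formal Tannakian bookkeeping once Theorem~\ref{thm:categorification-1} is available. A secondary point requiring care is to check that Milne's weight-zero and $L$-level subcategories are genuine Tannakian $\otimes$-subcategories whose simple objects correspond to $\cW_0(q),\cW^L(q),\cW^L_0(q)$, which follows from their definitions together with Remark~\ref{rk:square}.
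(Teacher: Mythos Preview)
The paper does not supply its own proof of this theorem; it is stated as a result of Milne and cited to \cite[\S2]{Milne} without further argument. Your sketch is a faithful unpacking of that citation --- Jannsen semi-simplicity, the sign-modified commutativity constraint, Honda--Tate for the classification of simple objects over $\overline{\bbQ}$, and the multiplicative-type/character-group identification --- and the bridge to the paper's own framework via Theorem~\ref{thm:categorification-1} is exactly the intended one, so your proposal is correct and in line with the paper's treatment.
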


%---------------------------------------------
\subsection*{Example 6 - Germs}
%---------------------------------------------
The preceding Examples 4 and 5 hold {\em mutatis mutandis} with $\cW(q)$ (and all its variants) replaced by $\cW(p^\infty)$ and $\bbF_q$ replaced by $\overline{\bbF_p}$.

%---------------------------------------------
\section{Quantum statistical mechanics}\label{sec:QSM}
%---------------------------------------------
In this section we associate to certain Bost-Connes data (see Notation \ref{not:embedding} below) a quantum statistical mechanical system (=QSM-system). In what follows, $k \subseteq \bbC$.

\begin{definition}\label{QSMdef} 
A {\em quantum statistical mechanical system} $(\cA,\sigma_t)$ consists of:
\begin{itemize}
\item[(i)] {\bf Observables:} a separable $C^*$-algebra $\cA$ and a family $R_\iota: \cA \to B(\cH_\iota)$ 
of representations of $\cA$ in the algebra of bounded operators $B(\cH_\iota)$ on separable Hilbert spaces $\cH_\iota$.
\item[(ii)] {\bf Time evolution and Hamiltonian}: a continuous $1$-parameter family of automorphisms $\sigma: \R \to {\rm Aut}(\cA), t \mapsto \sigma_t$. We assume that the representations $R_\iota$ are {\em covariant},  \ie that there exists linear operators $H_\iota$ on $\cH_\iota$ such that, for every $t \in \R$ and $a \in \cA$, 
the following equality holds:
\begin{equation*}
R_\iota(\sigma_t(a)) = e^{itH_\iota} R_\iota(a) e^{-itH_\iota}\,.
\end{equation*}
\end{itemize}
We assume moreover the following:
\begin{itemize}
\item[(iii)] {\bf Partition function}: there exists a real number $\beta_\iota>0$ such that for every $\beta>\beta_\iota$ the operator $e^{-\beta H_\iota}$
is a trace class operator. The associated convergent function $Z_\iota(\beta):={\rm Tr}(e^{-\beta H_\iota}) <\infty$ is called the {\em partition function}.
\item[(iv)] {\bf Symmetries}: there exists a $G$-action $G \to {\rm Aut}(\cA), \gamma \mapsto \tau_\gamma$, which is compatible with the time evolution in the sense that $\sigma_t \circ \tau_\gamma = \tau_\gamma \circ \sigma_t$ for every $\gamma \in G$ and $t \in \bbR$.
\end{itemize}
\end{definition}
Definition \ref{QSMdef} is more restrictive than the classical one \cite{BR,CoMa} since we require that $e^{-\beta H_\iota}$ is a trace class operator for $\beta \gg 0$ and also that the
 group $G$ acts by automorphisms (as opposed to the more general actions by endomorphisms 
 considered in \cite{CM}). These extra assumptions ensure the existence of interesting partition functions and are satisfied by the classical 
 Bost-Connes QSM-system.

\begin{notation}\label{not:embedding}
In what follows, $(\Sigma, \sigma_n)$ is a concrete Bost-Connes datum; see Definition \ref{def:BC-data2}. We assume that all the {\em $G$-equivariant embeddings} $\iota: \Sigma \hookrightarrow \overline{\bbQ}^\times$, \ie $G$-equivariant injective group homomorphisms, contain the roots of unit. This holds in Examples 1,3,4 of \S\ref{sec:BC-systems}. Let $\mathrm{Emb}(\Sigma, \overline{\bbQ}^\times)$ be the set of $G$-equivariant embeddings.
\end{notation}

\begin{proposition}\label{alphaLem}
Up to pre-composition with automorphism of $\Sigma$, the homomorphisms $\sigma_n: \Sigma \to \Sigma, n \in \bbN$, are of the form $\sigma_n(s)=s^{\alpha(n)}$. 
\end{proposition}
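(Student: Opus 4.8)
The plan is to use the standing hypothesis of Notation~\ref{not:embedding} to realize $\Sigma$ concretely inside $\overline{\bbQ}^\times$ and then to compare each $\sigma_n$ with the $\alpha(n)$-th power endomorphism $s\mapsto s^{\alpha(n)}$. First I would fix a $G$-equivariant embedding $\iota\colon\Sigma\hookrightarrow\overline{\bbQ}^\times$ and identify $\Sigma$ with its image; by the assumption of Notation~\ref{not:embedding} this image contains the group $\mu_\infty$ of all roots of unity, and the $G$-action on $\Sigma$ is the restriction of the tautological action on $\overline{\bbQ}^\times$.

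The first step is to identify the kernel of $\sigma_n$. Since $(\Sigma,\sigma_n)$ is a concrete Bost--Connes datum (Definition~\ref{def:BC-data2}), $\ker\sigma_n$ is finite of order $\alpha(n)$; being a finite subgroup of the multiplicative group of a field it is cyclic, and $\overline{\bbQ}^\times$ contains exactly one subgroup of each finite order, so $\ker\sigma_n=\mu_{\alpha(n)}$. This is precisely the kernel of $s\mapsto s^{\alpha(n)}$, so both endomorphisms factor as $\Sigma\twoheadrightarrow\Sigma/\mu_{\alpha(n)}\hookrightarrow\Sigma$.

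The remaining step --- which is the crux --- is to produce an automorphism $\psi_n$ of $\Sigma$ with $\sigma_n(\psi_n(s))=s^{\alpha(n)}$. Here I would look at the restriction of $\sigma_n$ to the torsion subgroup $\mu_\infty$: it is multiplication by an element $c_n$ of $\End(\mu_\infty)=\widehat{\bbZ}$, and the fact that this restriction has kernel $\mu_{\alpha(n)}$ of order $\alpha(n)$ forces, by unique factorization, $c_n=\alpha(n)\,u_n$ with $u_n\in\widehat{\bbZ}^\times$. Since the ring $\widehat{\bbZ}$ is torsion-free, $u_n$ is invertible, and multiplication by $u_n^{-1}$ is a $G$-equivariant automorphism of $\mu_\infty$ (the $G$-action on $\mu_\infty$ is through the commutative ring $\widehat{\bbZ}$, so it commutes with it). I would then extend this to an automorphism $\psi_n$ of $\Sigma$ which is multiplication by $u_n^{-1}$ on $\mu_\infty$ and the identity on a complement of $\mu_\infty$ --- such a complement can be chosen $G$-equivariantly because, in the situations at hand, $\Sigma/\mu_\infty$ is uniquely divisible and $\mu_\infty=\bbQ/\bbZ$ is injective, so the relevant $\mathrm{Ext}^1$ vanishes --- and then check that $\sigma_n\circ\psi_n$ and $s\mapsto s^{\alpha(n)}$ agree on each summand, hence everywhere. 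The delicate point is precisely this last extension: it requires knowing that, away from the roots of unity, $\sigma_n$ already coincides with the $\alpha(n)$-th power map up to the action of a unit, and it is here that $G$-equivariance together with the hypothesis of Notation~\ref{not:embedding} (that every $G$-equivariant embedding of $\Sigma$ into $\overline{\bbQ}^\times$ contains $\mu_\infty$) does the real work. For Examples~1, 3 and 4 of \S\ref{sec:BC-systems}, where $\sigma_n$ is literally $s\mapsto s^n$ and consequently $\alpha(n)=\#\mu_n=n$, one already has $\sigma_n(s)=s^{\alpha(n)}$ with $\psi_n=\id$, so nothing is lost.
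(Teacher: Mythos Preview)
Your reduction to the roots of unity is exactly the paper's: one restricts $\eta_n:=\iota\circ\sigma_n\circ\iota^{-1}$ to $\mu_\infty\cong\bbQ/\bbZ$, notes that $\End(\bbQ/\bbZ)\simeq\widehat{\bbZ}$, and factors the resulting endomorphism as $c_n=\alpha(n)\,u_n$ with $u_n\in\widehat{\bbZ}^\times$ (the paper additionally observes that $c_n\in\widehat{\bbZ}$ must lie in $\bbN\,\widehat{\bbZ}^\times$ rather than in the locus where some $p$-component vanishes, since otherwise $\alpha(n)$ would be infinite). Up to this point your argument and the paper's are the same.

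Where you diverge is in the extension from $\mu_\infty$ to all of $\Sigma$. The paper does \emph{not} split $\Sigma$; it simply declares $\Upsilon_n:\iota(\Sigma)\to\iota(\Sigma)$, $\iota(s)\mapsto\iota(s)^{v_\iota(n)}$ (their $v_\iota(n)$ is your $u_n$) to be an automorphism of the whole of $\iota(\Sigma)$ and sets $\xi_{\iota,n}:=\iota^{-1}\circ\Upsilon_n\circ\iota$, asserting $\iota(\sigma_n(s))=\iota(\xi_{\iota,n}(s))^{\alpha(n)}$ for all $s$. No justification is given for what $\iota(s)^{v_\iota(n)}$ means on a non-torsion element, nor for why the identity established only on $\mu_\infty$ propagates to all of $\iota(\Sigma)$.

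Your splitting proposal is more candid about this but does not close the gap either, and you say so yourself: defining $\psi_n$ to be the identity on a complement $C$ of $\mu_\infty$ gives $\sigma_n\circ\psi_n|_C=\sigma_n|_C$, and nothing in the hypotheses forces $\sigma_n|_C$ to be the $\alpha(n)$-th power map. $G$-equivariance and the standing assumption of Notation~\ref{not:embedding} constrain $\sigma_n$ on torsion; they do not by themselves determine $\sigma_n$ on $\Sigma/\mu_\infty$. So the ``delicate point'' you flag is a genuine gap, present in both arguments. In the paper this has no downstream effect, since the very next remark discards the general case and restricts to $\sigma_n(s)=s^{\alpha(n)}$ outright; your final sentence about Examples~1, 3, 4 is making the same move.
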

\begin{proof}
Consider the induced homomorphism $\eta_n = \iota \circ \sigma_n \circ \iota^{-1} : \iota(\Sigma)
\to \iota(\Sigma)$. Since $\iota(\Sigma)$ contains the group of roots of unit (which is abstractly isomorphic to $\bbQ/\bbZ$) 
and $\Hom(\Q/\Z,\overline{\Q}^\times)=\Hom(\bbQ/\bbZ, \bbQ/\bbZ)\simeq \widehat{\bbZ}$, the induced homomorphism $\eta_n$ restricts~to
\begin{eqnarray}\label{eq:restriction}
\eta_n: \bbQ/\bbZ \too \bbQ/\bbZ && \zeta \mapsto \zeta^{u_n}
\end{eqnarray}
with $u_n \in \widehat{\bbZ}$. Now, recall that $\widehat\Z = X_1 \cup X_2$, where $X_1 := \N \widehat\Z^\times = \cup_{m\in \N} m \, \widehat\Z^\times$ and
$X_2=\cup_p \{ u \in \widehat\Z \,|\, u^{(p)} =0 \}$; the union is over the prime numbers and
$u=(u^{(p)})$ are the coordinates of $u$ in the decomposition $\widehat\Z =\prod_p \widehat\Z_p$. If $u_n \in \widehat{\bbZ}^\times$, then \eqref{eq:restriction} is an isomorphism. This implies that $\alpha(n)=1$. If $u_n \in m \widehat{\bbZ} \subset X_1$, with $m >1$, then the kernel of \eqref{eq:restriction} consists of the roots of unit of order $m$. This implies that $\alpha(n)=m$. If $u_n \in X_2$, then there exists a prime number $p$ such that the kernel of \eqref{eq:restriction} contains all the roots of unit whose order is a power of $p$. This implies that $\alpha(n)=\infty$. Since by assumption, $(\Sigma, \sigma_n)$ is a concrete Bost-Connes system, we hence conclude that $u_n \in \bbN \widehat{\bbZ}$. 
Thus, we can write $u_n$ as a product $u_n= \alpha_\iota(n) v_\iota(n)$. The semi-group property $\sigma_{nm}=\sigma_n \circ \sigma_m$ implies that $u_{nm}=u_n u_m$ and consequently that $\alpha_\iota(nm) \cdot v_\iota(nm)=
\alpha_\iota(n) \alpha_\iota(m) v_\iota(n) v_\iota(m)$. We obtain in this way two semi-group homomorphisms $\alpha_\iota: \bbN \to \bbN$ and 
$v_\iota: \bbN \to \widehat{\bbZ}^\times$. Now, consider the following homomorphisms
\begin{eqnarray*}
\Upsilon_n: \iota(\Sigma) \stackrel{\iota(s) \mapsto \iota(s)^{v_\iota(n)}}{\too} \iota(\Sigma) && \sigma_{\iota, n}: \iota(\Sigma) \stackrel{\iota(s) \mapsto \iota(s)^{\alpha_\iota(n)}}{\too} \iota(\Sigma)\,.
\end{eqnarray*}
The assignment $n \mapsto \xi_{\iota,n}:=\iota^{-1}\circ \Upsilon_n \circ \iota$ gives rise to a semi-group automorphism of $\Sigma$, and we have the equalities $\iota(\sigma_n(s))= \Upsilon_n (\iota(s))^{\alpha_\iota(n)} = \sigma_{\iota,n}(\iota(\xi_{\iota,n}(s)))$. Therefore, it remains only to show that $\alpha_\iota(n)=\alpha(n)$ is independent of $\iota$. This follows from the equalities $\alpha_{\iota}(n)=\#{\rm Ker}(\sigma_{\iota,n})= 
\# \iota^{-1} {\rm Ker}(\sigma_{\iota,n})=\# {\rm Ker}(\sigma_n) =\alpha(n)$.
\end{proof}

\begin{remark}
The automorphisms $\xi_{\iota,n}$ of $\Sigma$, introduced in the proof of Proposition \ref{alphaLem}, do not play any significant role in our construction. In what follows, we will therefore restrict ourselves to the case where $\sigma_n(s)=s^{\alpha(n)}$.
\end{remark}

\begin{corollary}\label{zerosumcor}
For every $\iota \in \mathrm{Emb}(\Sigma, \overline{\bbQ}^\times)$ and $s \in \Sigma_n$, we have the equality:
\begin{equation*}
\frac{1}{\alpha(n)} \sum_{s' \in \rho_n(s)} \iota(s')^{\alpha(m)} = \left\{  \begin{array}{ll} \iota(s)^{\frac{\alpha(m)}{\alpha(n)}} & \mathrm{when}\,\,\alpha(n)|\alpha(m)\\
0& \mathrm{otherwise}\,. \end{array} \right.
\end{equation*}
\end{corollary}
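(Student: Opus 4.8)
The plan is to reduce the sum over the fiber $\rho_n(s)$ to a sum over a group of roots of unity and then invoke the classical orthogonality relation. First I would fix one preimage $s_0 \in \Sigma$ with $\sigma_n(s_0) = s_0^{\alpha(n)} = s$; such an $s_0$ exists precisely because $s \in \Sigma_n$. Since $\rho_n(s)$ is the full coset $s_0 \cdot \ker(\sigma_n)$, the sum factors as
$$\sum_{s' \in \rho_n(s)} \iota(s')^{\alpha(m)} = \iota(s_0)^{\alpha(m)} \cdot \sum_{\zeta \in \ker(\sigma_n)} \iota(\zeta)^{\alpha(m)}\,.$$
Here I am using the normalization $\sigma_n(s) = s^{\alpha(n)}$ from the Remark following Proposition \ref{alphaLem}, so that $\ker(\sigma_n)$ is exactly the $\alpha(n)$-torsion subgroup of $\Sigma$.

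Next I would identify $\iota(\ker(\sigma_n))$ with the group $\mu_{\alpha(n)}$ of $\alpha(n)$-th roots of unity in $\overline{\bbQ}^\times$: every $\zeta \in \ker(\sigma_n)$ satisfies $\iota(\zeta)^{\alpha(n)} = \iota(\zeta^{\alpha(n)}) = 1$, so $\iota(\ker(\sigma_n)) \subseteq \mu_{\alpha(n)}$; since $\iota$ is injective this image has exactly $\alpha(n)$ elements, and $\mu_{\alpha(n)} \subset \overline{\bbQ}^\times$ also has exactly $\alpha(n)$ elements (as $\overline{\bbQ}$ is algebraically closed), whence the two coincide. Then the inner sum becomes $\sum_{\omega \in \mu_{\alpha(n)}} \omega^{\alpha(m)}$, which by the standard orthogonality relation equals $\alpha(n)$ when $\alpha(n) \mid \alpha(m)$ and $0$ otherwise.

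Combining the two steps finishes the computation: in the divisible case the prefactor $\tfrac{1}{\alpha(n)}$ cancels the $\alpha(n)$, leaving $\iota(s_0)^{\alpha(m)} = \big(\iota(s_0)^{\alpha(n)}\big)^{\alpha(m)/\alpha(n)} = \iota(s)^{\alpha(m)/\alpha(n)}$ since $\iota(s_0)^{\alpha(n)} = \iota(s)$ (this also shows the answer does not depend on the choice of $s_0$), and in the non-divisible case the whole expression vanishes. There is no genuine obstacle here; the only point that deserves a word of care is verifying that $\iota(\ker(\sigma_n))$ is all of $\mu_{\alpha(n)}$ rather than a proper subgroup, which is where the cardinality count $\#\ker(\sigma_n) = \alpha(n)$ (together with the standing hypothesis of Notation \ref{not:embedding}) is used.
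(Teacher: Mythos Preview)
Your proof is correct and is essentially a careful unpacking of the paper's own one-line argument, which simply says the identity ``follows automatically from the fact that $\sigma_n(s)=s^{\alpha(n)}$'' and notes the vanishing when $\alpha(n)\nmid\alpha(m)$. Your coset decomposition $\rho_n(s)=s_0\cdot\ker(\sigma_n)$ together with the identification $\iota(\ker(\sigma_n))=\mu_{\alpha(n)}$ is exactly what the paper is implicitly invoking, so there is no genuine difference in approach.
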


\proof The proof follows automatically from the fact that $\sigma_n(s)=s^{\alpha(n)}$; 
note that when $\alpha(n) \nmid \alpha(m)$, we have $\sum_{s' \in \rho_n(s)} \iota(s')^{\alpha(m)}=0$.
\endproof

\begin{definition}\label{NgammaDef}
Given an embedding $\iota \in \mathrm{Emb}(\Sigma, \overline{\bbQ}^\times)$, consider the homomorphism
\begin{eqnarray*}
N_\iota: \Sigma \too \bbR^\times_+ && s \mapsto |\iota(s)|\,,
\end{eqnarray*}
where $|\cdot|$ is the absolute value of complex numbers; we are implicitly using a fixed embedding $\overline{\bbQ}^\times \subset \bbC$. Note that $N_\iota(\Sigma)$ is a countable multiplicative subgroup of~$\bbR^\times_+$. 
\end{definition}

Given $\iota \in \mathrm{Emb}(\Sigma, \overline{\bbQ}^\times)$, consider the injective group homomorphism
\begin{eqnarray*}
\Sigma \too U(1) \times \R^\times_+ && s \mapsto (\theta_\iota(s), N_\iota(s))\,,
\end{eqnarray*}
where $\theta_\iota(s):=\frac{\iota(s)}{|\iota(s)|}$. It clearly gives rise to the following group decomposition
\begin{eqnarray}\label{eq:decomp-new}
\iota(\Sigma) \stackrel{\sim}{\too} \Sigma_{0,\iota} \times N_\iota(\Sigma) && 
\iota(s) \mapsto (\theta_\iota(s), N_\iota(s))\,,
\end{eqnarray}
where $\Sigma_{0,\iota}:=\theta_\iota(\Sigma)$. Note that the above isomorphism \eqref{eq:decomp-new} generalizes \eqref{eq:isomorphism-decomp}. In the latter case, the group decomposition is independent of the embedding $\iota$.

\begin{corollary}\label{Ngamma}
Given an element $s\in \Sigma$, the following conditions are equivalent:
\begin{itemize} 
\item[(i)] $s$ belongs to the domain of $\rho_n$, with $\# \rho_n(s)=\alpha(n)$.
\item[(ii)] $\frac{\iota(s)}{|\iota(s)|}$ admits  an $\alpha(n)^{\mathrm{th}}$-root in $\Sigma_{0,\iota}$
and $N_\iota(s)= \eta^{\alpha(n)}$ for some $\eta \in N_\iota(\Sigma)$.
\end{itemize} 
\end{corollary}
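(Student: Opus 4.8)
The plan is to show that each of (i) and (ii) is equivalent to the single condition $s \in \Sigma_n = \mathrm{im}(\sigma_n)$, and then to bridge the two through the group decomposition \eqref{eq:decomp-new}.

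First I would dispose of (i). By construction the domain of $\rho_n$ is $\Sigma_n$, so ``$s$ belongs to the domain of $\rho_n$'' means exactly $s \in \Sigma_n$. Moreover, when $s \in \Sigma_n$ the fibre $\rho_n(s) = \sigma_n^{-1}(s)$ is a coset of $\ker(\sigma_n)$, hence has cardinality $\#\ker(\sigma_n) = \alpha(n)$; thus the clause $\#\rho_n(s) = \alpha(n)$ is automatic and (i) $\Longleftrightarrow s \in \Sigma_n$.

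Next I would translate $s \in \Sigma_n$ through the embedding. Since $\iota$ is injective, $s \in \Sigma_n$ iff $\iota(s) \in \iota(\Sigma_n)$. By Proposition \ref{alphaLem} together with the running convention $\sigma_n(s) = s^{\alpha(n)}$, the transported homomorphism $\iota \circ \sigma_n \circ \iota^{-1}$ on $\iota(\Sigma)$ is $x \mapsto x^{\alpha(n)}$, so $\iota(\Sigma_n)$ is the subgroup $\iota(\Sigma)^{\alpha(n)}$ of $\alpha(n)^{\mathrm{th}}$-powers. Finally, the isomorphism \eqref{eq:decomp-new} identifies $\iota(\Sigma)$ with $\Sigma_{0,\iota} \times N_\iota(\Sigma)$ and $\iota(s)$ with $(\theta_\iota(s), N_\iota(s))$; being a group isomorphism, it carries $\iota(\Sigma)^{\alpha(n)}$ onto $\Sigma_{0,\iota}^{\alpha(n)} \times N_\iota(\Sigma)^{\alpha(n)}$. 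Hence $\iota(s) \in \iota(\Sigma)^{\alpha(n)}$ iff $\theta_\iota(s) \in \Sigma_{0,\iota}^{\alpha(n)}$ and $N_\iota(s) \in N_\iota(\Sigma)^{\alpha(n)}$, i.e.\ iff $\theta_\iota(s)$ admits an $\alpha(n)^{\mathrm{th}}$-root in $\Sigma_{0,\iota}$ and $N_\iota(s) = \eta^{\alpha(n)}$ for some $\eta \in N_\iota(\Sigma)$ --- which is precisely condition (ii). Every step being an equivalence, this closes the argument. The proof is essentially bookkeeping; the only point that calls for a word of care is that \eqref{eq:decomp-new} is a genuine group isomorphism, so that it commutes with taking $\alpha(n)^{\mathrm{th}}$-powers and with forming the subgroup of such powers factor-wise.
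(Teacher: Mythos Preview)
Your proof is correct and follows essentially the same approach as the paper: both reduce the equivalence to the group decomposition \eqref{eq:decomp-new} together with the convention $\sigma_n(s)=s^{\alpha(n)}$. Your handling of the clause $\#\rho_n(s)=\alpha(n)$ via the coset argument is slightly cleaner than the paper's appeal to roots of unity, but this is a minor expository difference rather than a different route.
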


\proof The proof follows automatically from the above group isomorphism \eqref{eq:decomp-new} and from the fact that $\iota(\Sigma)$ contains roots of unity of all orders; the injectivity of $\iota$ then implies that $s$ has as 
many $\alpha(n)^{\mathrm{th}}$ roots in $\Sigma$.
\endproof 

The following result will be used in \S\ref{Zsec}.
\begin{lemma}\label{Niotaprogr}
The group $N_\iota(\Sigma)\subset \R^\times_+$ can always be decomposed into a union of
countably many geometric progressions $N_\iota(\Sigma)=\bigcup_{r\geq 1} \lambda_r^\Z$.
\end{lemma}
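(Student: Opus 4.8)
The content of this lemma is, in essence, just the countability of $N_\iota(\Sigma)$ already recorded in Definition \ref{NgammaDef}. The plan is to exploit the elementary fact that in any subgroup $H\subseteq\R^\times_+$, each individual element $\lambda\in H$ generates a cyclic subgroup $\lambda^\Z=\{\lambda^m\mid m\in\Z\}$ which is automatically contained in $H$, and that a countable group is the union of the cyclic subgroups generated by its individual elements. Applying this to $H=N_\iota(\Sigma)$ produces the desired decomposition into geometric progressions.

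In detail, I would first invoke Definition \ref{NgammaDef} to write $N_\iota(\Sigma)=\{\lambda_r\}_{r\geq 1}$ as an enumeration of its (countably many) elements; if $N_\iota(\Sigma)$ is finite one simply repeats entries, or restricts the index set. For each $r$ set $\lambda_r^\Z:=\{\lambda_r^m\mid m\in\Z\}$, the geometric progression of ratio $\lambda_r$, i.e.\ the cyclic subgroup of $\R^\times_+$ generated by $\lambda_r$. Since $N_\iota(\Sigma)$ is a subgroup of $\R^\times_+$ and $\lambda_r\in N_\iota(\Sigma)$, closure under multiplication and inversion gives $\lambda_r^\Z\subseteq N_\iota(\Sigma)$, hence $\bigcup_{r\geq 1}\lambda_r^\Z\subseteq N_\iota(\Sigma)$; conversely each $\lambda_r$ lies in $\lambda_r^\Z$ (take $m=1$), so $N_\iota(\Sigma)=\{\lambda_r\}_{r\geq 1}\subseteq\bigcup_{r\geq 1}\lambda_r^\Z$. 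Combining the two inclusions gives $N_\iota(\Sigma)=\bigcup_{r\geq 1}\lambda_r^\Z$, as claimed.

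There is no genuine obstacle here; the only point worth flagging is that ``countably many'' is the best one can assert in general. Via $\log$, $N_\iota(\Sigma)$ is a countable torsion-free abelian group, and whenever it has rank $\geq 2$ it is, by B.~H.~Neumann's theorem on finite unions of subgroups, not a finite union of cyclic subgroups. This is exactly the finite-versus-infinite dichotomy that resurfaces in Theorem \ref{thm:main22}(i), but it is irrelevant to the present statement, where the crude enumeration above already suffices. Should a more economical decomposition be desired — for instance the $\lambda_r^\Z$ forming an increasing chain when $N_\iota(\Sigma)$ happens to be locally cyclic — one would instead enumerate a generating set and take the subgroups generated by finite initial segments, but this refinement is not needed for the applications in \S\ref{Zsec}.
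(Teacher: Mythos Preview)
Your proof is correct and follows essentially the same approach as the paper: both exploit the countability of $N_\iota(\Sigma)$ to write it as a countable union of cyclic subgroups $\lambda_r^\Z$. The only cosmetic difference is that the paper takes $\{\lambda_r\}$ to be a countable generating set (with $\lambda_r>1$) rather than an enumeration of all elements; your choice is in fact cleaner, since with a proper generating set the inclusion $N_\iota(\Sigma)\subseteq\bigcup_r\lambda_r^\Z$ is not automatic without further comment.
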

\proof The group $N_\iota(\Sigma)$ is countable. Therefore, it is at most countably generated. 
Let $\{ \lambda_r \}_{r\in \N}$ be a set of generators. Without loss of generality, we can assume that $\lambda_r >1$ for every $r\geq 1$. Hence, we conclude that $N_\iota(\Sigma)=\bigcup_{r\geq 1} \lambda_r^\Z$.
\endproof

We now construct the QSM-system $(\cA_{(\Sigma, \sigma_n)}, \sigma_t)$ associated to $(\Sigma, \sigma_n)$.

\subsection{Observables}\label{sub:observables}
We start by introducing two auxiliary $k$-algebras. The first one, denoted by $\cB_{(\Sigma, \sigma_n)}$, is a generalization of the Bost--Connes algebra \cite{BC}. 
The second one,  denoted by $\cB'_{(\Sigma, \sigma_n, \iota)}$, 
is an extension of a subalgebra\footnote{This subalgebra depends on the choice of an embedding $\iota: \Sigma \hookrightarrow \overline{\bbQ}^\times$.} of the first one by the multiplicative group.  
Making use of $\cB_{(\Sigma, \sigma_n)}$, resp. of $\cB'_{(\Sigma, \sigma_n, \iota)}$, 
and of an embedding $\iota \in \mathrm{Emb}(\Sigma, \overline{\bbQ}^\times)$, 
we then construct the $C^*$-algebra $\cA_{(\Sigma, \sigma_n)}$, resp. $\cA'_{(\Sigma, \sigma_n, \iota)}$, 
of observables of the QSM-system associated to $(\Sigma,\sigma_n)$.

\begin{definition}\label{Akalg}
Let $\cB_{(\Sigma, \sigma_n)}$ be the $k$-algebra generated by the elements $s \in \Sigma$ and by the partial isometries $\mu_n$, $\mu_n^*$,  $n\in \N$. Besides the relations between the elements of the abelian group $\Sigma$, we impose that $\mu_n \mu_m =\mu_{nm}$ and
\begin{eqnarray*}
\mu_n \mu_m^\ast = \mu_m^* \mu_n & \mathrm{when} &(\alpha(m),\alpha(n))=1
\end{eqnarray*}
\begin{equation}\label{catBCrho}
\mu_n s\mu_n^* = \left\{ \begin{array}{ll} \frac{1}{\alpha(n)} \sum_{s' \in \rho_n(s)}s' & \mathrm{when}\,\,s \in \Sigma_n\\
0& \mathrm{otherwise} \end{array} \right.
\end{equation}
\begin{eqnarray*}
\mu_n \mu_n^\ast \mu_n \mu_n^\ast = \mu_n \mu_n^\ast && \mu_n^\ast \mu_n \mu_n^\ast \mu_n = \mu_n^\ast \mu_n\,.
\end{eqnarray*}
Finally, let us write $\cB_{(\Sigma,\sigma_n,\bbC)}$ for the $\bbC$-algebra $\cB_{(\Sigma,\sigma_n)}\otimes_k \bbC$.
\end{definition}

\begin{remark}\label{projectors}
\begin{itemize}
\item[(i)] Unlike the original case \cite{BC}, we do not require that $\mu_n^* \mu_n =1$. When this holds, $\cB_{(\Sigma, \sigma_n)}$ reduces to the semi-group crossed product 
$k[\Sigma]\rtimes \N$ where the semi-group action of $\bbN$ is given by $n \mapsto (s\mapsto \mu_n s \mu_n^*)$.
\item[(ii)] The $k$-algebra of Definition \ref{def:BC-sys2} embeds in $\cB_{(\Sigma,\sigma_n)}$ as the subalgebra generated by the elements $s \in \Sigma$. Moreover, by acting trivially on $\mu_n$ and $\mu_n^\ast$, the $G$-action on $k[\Sigma]$ extends to $\cB_{(\Sigma,\sigma_n)}$.
\end{itemize}
\end{remark}

\begin{definition}\label{tildealg}
Let $\cB'_{(\Sigma,\sigma_n)}$ be the $k$-algebra defined similarly to $\cB_{(\Sigma,\sigma_n)}$ 
but with additional generators $W(\lambda), \lambda \in k^\times$, and additional relations: 
\begin{eqnarray*}
W(\lambda_1 \lambda_2) = W(\lambda_1) W(\lambda_2) && W(\lambda^{-1})= W(\lambda)^{-1}
\end{eqnarray*}
\begin{eqnarray*}
W(\lambda)s=sW(\lambda) & W(\lambda) \mu_n =\mu_n W(\lambda)^{\alpha(n)} & \mu_n^\ast W(\lambda) = W(\lambda)^{\alpha(n)} \mu_n^\ast\,.
\end{eqnarray*}
The new generators $W(\lambda)$ are called the {\em weight operators}. Finally, let us write $\cB'_{(\Sigma, \sigma_n, \bbC)}$ for the $\bbC$-algebra $\cB'_{(\Sigma,\sigma_n)}\otimes_k \C$.
\end{definition}

\begin{remark}{\rm The introduction of the weight operators is motivated by the need to obtain a time evolution whose partition function is convergent for $\beta \gg 0$; see \S\ref{tevol} below. In the particular cases where $N_\iota(\Sigma)=\{ 1 \}$, such as in the original Bost-Connes datum, the weight operators are not necessary and we can work solely with the $k$-algebra $\cB_{(\Sigma,\sigma_n)}$}.
\end{remark}

\begin{notation}
Let $\cH$ be the Hilbert space $\ell^2(\N)$ equipped with the canonical
orthonormal basis $\{ \epsilon_n \}_{n\in \N}$. We write $\cV$ for the $k$-vector space spanned by the $\epsilon_n$'s. Note that the $\bbC$-vector space $\cV_\C:=\cV\otimes_k \C$ is dense in $\cH$.
\begin{itemize}
\item[(i)] Let $\cH_\alpha$ be the Hilbert space $\ell^2(\alpha(\N))$ equipped with the orthonormal basis $\{ \epsilon_{\alpha(n)}\}_{n\in \N}$. We write $\cV_\alpha$ for the $k$-vector space spanned by the $\epsilon_{\alpha(n)}$'s and $\cV_{\alpha, \bbC}$ for the $\bbC$-vector space $\cV_\alpha \otimes_k \bbC$.
\item[(ii)] Given $\iota \in \mathrm{Emb}(\Sigma, \overline{\bbQ}^\times)$, let $\cH_\iota$ be the Hilbert space $\ell^2(N_\iota(\Sigma))$ equipped with the 
orthonormal basis $\{ \epsilon_\eta\}_{\eta\in N_\iota(\Sigma)}$. As above, we write $\cV_\iota$ for the $k$-vector space spanned by the $\epsilon_\eta$'s and $\cV_{\iota, \bbC}$ for the $\bbC$-vector space $\cV_\iota \otimes_k \bbC$.
\item[(iii)] Let $\cH_{\alpha, \iota}$ be the tensor product $\cH_\alpha \otimes \cH_\iota = \ell^2(\alpha(\N)\times N_\iota(\Sigma))$ and $\cV_{\alpha, \iota, \bbC}$ the tensor product $\cV_{\alpha, \bbC} \otimes \cV_{\iota, \bbC}$. We write $\cH_{\alpha,\iota}^{\leq}$, resp. $\cV_{\alpha,\iota,\C}^{\leq}$, for the Hilbert subspace of $\cH_{\alpha, \iota}$, resp. $\bbC$-linear subspace of $\cV_{\alpha,\iota, \bbC}$, spanned by the elements $\epsilon_{\alpha(n),\eta}$ with $\eta \leq 1$. Under these notations, we obtain the splittings $\cH_{\alpha,\iota}=\cH_{\alpha,\iota}^{\leq} \oplus
\cH_{\alpha,\iota}^{>}$ and $\cV_{\alpha,\iota,\C}=
\cV_{\alpha,\iota,\C}^{\leq} \oplus \cV_{\alpha,\iota,\C}^{>}$. 
\end{itemize}
\end{notation}

In the case where $N_\iota(\Sigma)=\bigcup_{r\geq 1} \lambda_r^\Z$ is a union of {\em infinitely} many geometric progressions we will consider also the following Hilbert space; see Lemma \ref{reptildeH}.

\begin{notation}\label{tildeHalphaiota}
Given an embedding $\iota \in \mathrm{Emb}(\Sigma, \overline{\bbQ}^\times)$ and $r \geq 1$, let $\cH_{\iota,r}^\leq$ be the Hilbert subspace of $\ell^2(N_\iota(\Sigma))$ spanned by the orthonormal vectors $\epsilon_\eta$ such that $|\eta|\leq 1$ and $\eta=\lambda_r^k$ for some $k\in \Z_{\leq 0}$. In the same vein, let $\widetilde\cH_{\alpha,\iota}^\leq$ be the Hilbert space $\ell^2(\alpha(\N)) \otimes \bigotimes_{r\geq 1} \cH_{\iota,r}^\leq$ equipped with the standard 
orthonormal basis $\{ \epsilon_{\alpha(n), \lambda_r^{k_r}} \}$ indexed by  
$r \geq 1, \alpha(n) \in \alpha(\bbN)$ and $k_r \in \bbZ_{\leq 0}$.
\end{notation}

\begin{proposition}\label{Hreps}
Given an embedding $\iota \in \mathrm{Emb}(\Sigma, \overline{\bbQ}^\times)$, the assignments
\begin{eqnarray}\label{Hrep:eq}
R_\iota(s)\, \epsilon_{\alpha(n),\eta} & := & \iota(s)^{\alpha(n)} \, \epsilon_{\alpha(n),N_\iota(s) \eta} \nonumber \\
R_\iota(\mu_m)\, \epsilon_{\alpha(n),\eta} & := & \left\{ \begin{array}{ll} \epsilon_{\alpha(mn),\xi} & \mathrm{when}\,
\eta=\xi^{\alpha(m)}\\
0& \mathrm{otherwise} \end{array} \right. \label{Rgammamu} \\
R_\iota(W(\lambda))\, \epsilon_{\alpha(n),\eta}
& := & \lambda^{\alpha(n)} \epsilon_{\alpha(n),\eta} \nonumber
\end{eqnarray}
define a representation $R_\iota$ of the $k$-algebra $\cB'_{(\Sigma,\sigma_n)}$ (and hence of the $\C$-algebra
$\cB'_{(\Sigma, \sigma_n, \bbC)}$) on the $\bbC$-vector space $\cV_{\alpha, \iota, \C}$. 
\end{proposition}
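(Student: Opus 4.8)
The proof is a direct verification that the assignments in \eqref{Hrep:eq} are compatible with all the defining relations of $\cB'_{(\Sigma,\sigma_n)}$ listed in Definitions \ref{Akalg} and \ref{tildealg}. First I would check that the three families of operators are well defined on $\cV_{\alpha,\iota,\C}$: for $R_\iota(s)$ this amounts to noting that $\iota$ is a homomorphism, so $\iota(s)^{\alpha(n)}$ and $N_\iota(s)\eta$ depend only on $s$; for $R_\iota(\mu_m)$ one must observe that, when $\eta$ admits an $\alpha(m)^{\mathrm{th}}$ root $\xi$ in $N_\iota(\Sigma)$, this root is \emph{unique} (the group $N_\iota(\Sigma)\subset\bbR^\times_+$ is torsion-free, hence uniquely $\alpha(m)$-divisible on its $\alpha(m)^{\mathrm{th}}$ powers), so the formula is unambiguous; and $R_\iota(W(\lambda))$ is visibly diagonal.

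Next I would run through the relations one family at a time. (1) The relations internal to the abelian group $\Sigma$ hold because $R_\iota(s)R_\iota(s')\epsilon_{\alpha(n),\eta}=\iota(s)^{\alpha(n)}\iota(s')^{\alpha(n)}\epsilon_{\alpha(n),N_\iota(s)N_\iota(s')\eta}=R_\iota(ss')\epsilon_{\alpha(n),\eta}$, using multiplicativity of $\iota$ and of $N_\iota$. (2) $R_\iota(\mu_n)R_\iota(\mu_m)=R_\iota(\mu_{nm})$ follows from $\alpha(nm)=\alpha(n)\alpha(m)$ and the fact that taking an $\alpha(m)^{\mathrm{th}}$ root and then an $\alpha(n)^{\mathrm{th}}$ root produces the unique $\alpha(nm)^{\mathrm{th}}$ root. (3) For $R_\iota(\mu_n)^\ast$ one first computes the adjoint of $R_\iota(\mu_n)$ with respect to the orthonormal basis: it sends $\epsilon_{\alpha(nk),\xi}\mapsto\epsilon_{\alpha(k),\xi^{\alpha(n)}}$ and kills basis vectors not of the form $\epsilon_{\alpha(nk),\cdot}$; from this the two idempotent relations $\mu_n\mu_n^\ast\mu_n\mu_n^\ast=\mu_n\mu_n^\ast$ and $\mu_n^\ast\mu_n\mu_n^\ast\mu_n=\mu_n^\ast\mu_n$ are immediate, since $R_\iota(\mu_n)R_\iota(\mu_n)^\ast$ and $R_\iota(\mu_n)^\ast R_\iota(\mu_n)$ are orthogonal projections onto the spans of the $\epsilon_{\alpha(nk),\eta}$ and of the $\epsilon_{\alpha(k),\eta}$ with $\eta$ an $\alpha(n)^{\mathrm{th}}$ power, respectively. (4) The commutation $\mu_n\mu_m^\ast=\mu_m^\ast\mu_n$ when $(\alpha(n),\alpha(m))=1$ is a combinatorial check on indices: applying either side to $\epsilon_{\alpha(k),\eta}$, the hypothesis $\gcd(\alpha(n),\alpha(m))=1$ guarantees that ``multiply the first index by $\alpha(n)$'' and ``divide it by $\alpha(m)$ when possible'' can be performed in either order with the same result (and similarly for the $N_\iota(\Sigma)$-index). (5) For the weight operators: $W(\lambda_1\lambda_2)=W(\lambda_1)W(\lambda_2)$ and $W(\lambda^{-1})=W(\lambda)^{-1}$ are clear from diagonality; $W(\lambda)s=sW(\lambda)$ holds because both act diagonally/quasi-diagonally in compatible ways; and $W(\lambda)\mu_n=\mu_nW(\lambda)^{\alpha(n)}$, $\mu_n^\ast W(\lambda)=W(\lambda)^{\alpha(n)}\mu_n^\ast$ follow because passing the first index through $\mu_n$ multiplies it by $\alpha(n)$, which is exactly the exponent appearing on $W$.

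The one relation requiring genuine input is \eqref{catBCrho}, i.e. $\mu_n s\mu_n^\ast = \tfrac{1}{\alpha(n)}\sum_{s'\in\rho_n(s)}s'$ on $\Sigma_n$ and $0$ otherwise. Here I would apply both sides to a basis vector $\epsilon_{\alpha(k),\eta}$: the left side first applies $R_\iota(\mu_n)^\ast$ (nonzero only if $\eta=\xi^{\alpha(n)}$ for the unique $\xi$, landing in $\epsilon_{\alpha(nk),\xi}$ — note the index becomes $\alpha(nk)$), then $R_\iota(s)$ multiplies by $\iota(s)^{\alpha(nk)}$ and shifts $\xi\mapsto N_\iota(s)\xi$, then $R_\iota(\mu_n)$ pulls the index back to $\alpha(k)$, which succeeds iff $N_\iota(s)\xi$ is again an $\alpha(n)^{\mathrm{th}}$ power, i.e. iff $N_\iota(s)\in N_\iota(\Sigma)^{\alpha(n)}$. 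The right side, applied to $\epsilon_{\alpha(k),\eta}$, gives $\tfrac{1}{\alpha(n)}\big(\sum_{s'\in\rho_n(s)}\iota(s')^{\alpha(k)}\big)\epsilon_{\alpha(k),N_\iota(s)^{1/\alpha(n)}\eta}$ when $s\in\Sigma_n$. The coefficient match is precisely Corollary \ref{zerosumcor} with $m$ chosen so that $\alpha(m)=\alpha(k)$: it yields $\iota(s)^{\alpha(k)/\alpha(n)}=\iota(\text{the }\alpha(n)^{\mathrm{th}}\text{ root of }\iota(s))^{\alpha(k)}$ when $\alpha(n)\mid\alpha(k)$ and $0$ otherwise — and $\alpha(n)\mid\alpha(k)$ together with $s\in\Sigma_n$ is exactly the condition under which the left-hand chain $\mu_n s\mu_n^\ast$ does not annihilate $\epsilon_{\alpha(k),\eta}$, while the index shift $\eta\mapsto N_\iota(s)^{1/\alpha(n)}\eta$ on the right matches the $N_\iota(\Sigma)$-bookkeeping on the left. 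Thus \eqref{catBCrho} holds verbatim. The main obstacle is therefore not any single hard estimate but keeping the three-fold index bookkeeping (the $\alpha(\bbN)$-label, the $N_\iota(\Sigma)$-label, and the divisibility conditions controlling when $\mu_m,\mu_m^\ast$ vanish) consistent across all relations; Corollaries \ref{zerosumcor} and \ref{Ngamma} are exactly what make the $\mu_n s\mu_n^\ast$ relation come out right. Once every relation is verified, the assignments extend uniquely to a $k$-algebra homomorphism $\cB'_{(\Sigma,\sigma_n)}\to\End_\C(\cV_{\alpha,\iota,\C})$, and tensoring with $\C$ gives the representation of $\cB'_{(\Sigma,\sigma_n,\bbC)}$, which completes the proof.
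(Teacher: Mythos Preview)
Your approach is exactly the paper's: verify each defining relation of $\cB'_{(\Sigma,\sigma_n)}$ directly on basis vectors, with Corollaries \ref{zerosumcor} and \ref{Ngamma} supplying the key identity for relation \eqref{catBCrho}. The structure, the identification of which relation is the substantive one, and the invocation of the two corollaries all match the paper.

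There is, however, a bookkeeping slip in your verification of \eqref{catBCrho}: you have interchanged the actions of $\mu_n$ and $\mu_n^*$. By your own (correct) adjoint formula, $R_\iota(\mu_n)^*$ applied to $\epsilon_{\alpha(k),\eta}$ is nonzero precisely when $\alpha(n)\mid \alpha(k)$ in $\alpha(\N)$, and then lands in $\epsilon_{\alpha(k)/\alpha(n),\,\eta^{\alpha(n)}}$; it does \emph{not} require $\eta$ to be an $\alpha(n)^{\mathrm{th}}$ power, and it does not raise the first index to $\alpha(nk)$. With the correct order the left-hand side of \eqref{catBCrho} on $\epsilon_{\alpha(k),\eta}$ becomes
\[
\iota(s)^{\alpha(k)/\alpha(n)}\,\epsilon_{\alpha(k),\,N_\iota(s)^{1/\alpha(n)}\eta}
\]
when $\alpha(n)\mid\alpha(k)$ and $N_\iota(s)$ has an $\alpha(n)^{\mathrm{th}}$ root in $N_\iota(\Sigma)$, and $0$ otherwise. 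This is exactly what Corollary \ref{zerosumcor} produces on the right-hand side. Your stated left-hand coefficient $\iota(s)^{\alpha(nk)}$ would not match the right-hand coefficient $\iota(s)^{\alpha(k)/\alpha(n)}$ you correctly extract from the corollary, so as written your two sides do not agree; once the slip is fixed, the argument closes and coincides with the paper's proof.
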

\begin{remark}
By forgetting the action of the weight operators, Proposition \ref{Hreps} gives rise to an action of 
$\cB_{(\Sigma, \sigma_n)}$ (and hence of $\cB_{(\Sigma, \sigma_n, \bbC)}$) on $\cV_{\alpha, \iota, \C}$.
\end{remark}
\proof 
We need to verify that the operators $R_\iota(s), R_\iota(\mu_m),R_\iota(W(\lambda))$ satisfy the relations of Definitions \ref{Akalg} and \ref{tildealg}. Clearly, $R_\iota(s_1s_2)=R_\iota(s_1) R_\iota(s_2)$. Similarly, $R_\iota(\mu_n) R_\iota(\mu_m) = R_\iota(\mu_{nm})$. From the above definitions \eqref{Rgammamu}, we observe that operators $R_\iota(\mu_m^\ast)=R_\iota(\mu_m)^\ast$ are given as follows:
$$ R_\iota(\mu_m^\ast)\epsilon_{\alpha(n),\eta}=\left\{\begin{array}{ll} \epsilon_{\frac{\alpha(n)}{\alpha(m)},\eta^{\alpha(m)}} & \mathrm{when} \,\, \alpha(m)|\alpha(n) \,\,\mathrm{in}\,\, \alpha(\N)  \\
0 & \text{otherwise.}
\end{array}\right. $$
Hence, we have the following identifications:
\begin{eqnarray}
R_\iota(\mu_n)R_\iota(\mu_m^\ast) \epsilon_{\alpha(r),\eta} & = & 
\left\{ \begin{array}{ll} R_\iota(\mu_n) \epsilon_{\frac{\alpha(r)}{\alpha(m)},\eta^{\alpha(m)}} 
& \mathrm{when}\, \alpha(m)|\alpha(r) \\ 0 & \mathrm{otherwise} \end{array} \right. \nonumber \\[3mm]
& = & \left\{ \begin{array}{ll}  \epsilon_{\frac{\alpha(nr)}{\alpha(m)},\xi} & \mathrm{when}\,\, \alpha(m)|\alpha(r) \, 
  \mathrm{and}\,\, \eta^{\alpha(m)}=\xi^{\alpha(n)} \\ 0 & \mathrm{otherwise}\,. \end{array} \right. \nonumber
  \end{eqnarray}
On the other hand, we have
\begin{eqnarray}
R_\iota(\mu_m^\ast) R_\iota(\mu_n) \epsilon_{\alpha(r),\eta} & = & 
\left\{ \begin{array}{ll}  R_\iota(\mu_m^\ast) \epsilon_{\alpha(nr), \zeta} 
& \mathrm{when}\, \eta=\zeta^{\alpha(n)}\, \mathrm{for\, some}\, \zeta \in N_\iota(\Sigma) 
 \\ 0 & \mathrm{otherwise} \end{array} \right. \nonumber \\[3mm]
& = &   \left\{ \begin{array}{ll}  \epsilon_{\frac{\alpha(nr)}{\alpha(m)}, \zeta^{\alpha(m)}} 
& \mathrm{when}\, \eta=\zeta^{\alpha(n)}\,  \mathrm{and}\,\, \alpha(m)|\alpha(nr) 
 \\ 0 & \mathrm{otherwise}\,. \end{array} \right. \nonumber
  \end{eqnarray}
  When $(\alpha(n),\alpha(m))=1$, we have $\alpha(m)|\alpha(nr)\Leftrightarrow \alpha(m)|\alpha(r)$. Moreover, condition
$\eta=\zeta^{\alpha(n)}$ becomes equivalently $\eta^{\alpha(m)}=\zeta^{\alpha(nm)}$. Therefore,
by setting $\xi=\zeta^{\alpha(m)}$, we conclude from above that $R_\iota(\mu_n)R_\iota(\mu_m^\ast)=
R_\iota(\mu_m^\ast) R_\iota(\mu_n)$ when $(\alpha(m),\alpha(n))=1$. In what concerns the operator $R_\iota(\mu_m^\ast\mu_m)$, it corresponds to the projection onto the subspace of $\cV_{\alpha,\iota,\C}$
spanned by the basis elements $\epsilon_{\alpha(n),\eta}$ such that $\eta$ 
admits an $\alpha(m)^{\mathrm{th}}$ root in $N_\iota(\Sigma)$. Similarly, $R_\iota(\mu_m \mu_m^*)$ is the
projection onto the subspace of $\cV_{\alpha,\iota,\C}$ spanned by the basis elements $\epsilon_{\alpha(n),\eta}$ such that $\alpha(m)|\alpha(n)$. Let us now show the following equality:
$$ R_\iota(\mu_m) R_\iota(s)R_\iota(\mu_m^\ast) \epsilon_{\alpha(n),\eta} = \frac{1}{\alpha(m)} \sum_{s' \in \rho_m(s)} R_\iota(s') \epsilon_{\alpha(n), \eta}\,.$$
The left-hand side identifies with 
\begin{equation}\label{eq:identification-1}
\iota(s)^{\alpha(n)/\alpha(m)} \, \epsilon_{\alpha(n),N_\iota(s)^{1/\alpha(m)} \eta}
\end{equation}
when $\alpha(m)|\alpha(n)$ in $\alpha(\N)$ and  
$|\iota(s)|$ has an $\alpha(m)^{\mathrm{th}}$ root in $N_\iota(\Sigma)$. Otherwise, it is zero. In what concerns the right-hand side, it identifies with  
\begin{equation}\label{eq:identification-2}
\frac{1}{\alpha(m)} \sum_{s'\in \rho_m(s)} \iota(s')^{\alpha(n)} \, \epsilon_{\alpha(n),N_\iota(s')\eta}\,.
\end{equation}
Making use of Corollary \ref{zerosumcor} (and Corollary \ref{Ngamma}), we hence conclude that \eqref{eq:identification-2} agrees with \eqref{eq:identification-1}.
In what regards the generators $W(\lambda)$, we clearly have 
\begin{eqnarray*}
R_\iota(W(\lambda_1\lambda_2))=R_\iota(W(\lambda_1)) R_\iota(W(\lambda_2)), && R_\iota(W(\lambda^{-1}))=R_\iota(W(\lambda)^{-1})
\end{eqnarray*}
and also the following equalities:
\begin{eqnarray*}
 R_\iota(W(\lambda)) R_\iota(s) \epsilon_{\alpha(n),\eta} & = & 
R_\iota(s)  R_\iota(W(\lambda)) \epsilon_{\alpha(n),\eta}\\
R_\iota(W(\lambda)) R_\iota(\mu_m) \epsilon_{\alpha(n),\eta} & = & R_\iota(\mu_m) 
R_\iota(W(\lambda)^{\alpha(m)}) \epsilon_{\alpha(n),\eta} \\
R_\iota(\mu_m^*) R_\iota(W(\lambda)) \epsilon_{\alpha(n),\eta}& = & 
R_\iota(W(\lambda)^{\alpha(m)})
R_\iota(\mu_m^*) \epsilon_{\alpha(n),\eta} \,.
\end{eqnarray*}
This achieves the proof.
\endproof

\begin{definition}\label{def:algebra+embedding}
Given an embedding $\iota \in \mathrm{Emb}(\Sigma,\overline{\bbQ}^\times)$, let us denote by $\cB'_{(\Sigma, \sigma_n,\iota)}$ the $k$-subalgebra of $\cB'_{(\Sigma, \sigma_n)}$ generated by the elements $s\in \Sigma$ with $N_\iota(s)\leq 1$,
by the weight operators $W(\lambda)$ with $|\lambda|\leq 1$, 
and also by the partial isometries $\mu_n,\mu_n^*, n \in \bbN$. We write $\cB'_{(\Sigma, \sigma_n,\iota,\C)}$ for the associated $\bbC$-algebra $\cB'_{(\Sigma, \sigma_n,\iota)}\otimes_k \C$.
\end{definition}

\begin{proposition}\label{prop:new}
\begin{itemize}
\item[(i)] When $N_\iota(\Sigma)=\{1\}$, the $\bbC$-algebra $\cB_{(\Sigma,\sigma_n,\C)}$ acts by bounded operators on the Hilbert space $\cH_\alpha$.
\item[(ii)] In general, the representation $R_\iota$ of Proposition \ref{Hreps} extends to a representation $R_\iota$ of the $\bbC$-algebra $\cB'_{(\Sigma, \sigma_n,\iota,\C)}$ by bounded operators on $\cH^{\leq}_{\alpha, \iota}$.
\end{itemize}
\end{proposition}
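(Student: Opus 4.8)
The plan is to prove both items by the same strategy: show that every algebra generator preserves the relevant dense subspace and acts there by an operator of norm at most $1$, then conclude by multiplicativity and linearity, the defining relations surviving the bounded extension by continuity. The elementary fact underlying every norm estimate is: if a linear map $T$ sends each element of an orthonormal basis $\{e_j\}$ to a scalar multiple $c_j e_{\phi(j)}$ with $|c_j|\le 1$ and $\phi$ an injective reindexing (possibly only partially defined, with $Te_j=0$ off its domain), then for $v=\sum a_j e_j$ one has $\|Tv\|^2=\sum|a_j|^2|c_j|^2\le\|v\|^2$, so $T$ extends uniquely from the algebraic span to a contraction of the completion.

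For item (i), when $N_\iota(\Sigma)=\{1\}$ the Hilbert space $\cH_{\alpha,\iota}$ coincides with $\cH_\alpha$, the weight operators play no role, and the representation of Proposition \ref{Hreps} (forgetting the $W(\lambda)$'s, as in the remark following it) is a representation of $\cB_{(\Sigma,\sigma_n,\C)}$ on $\cV_{\alpha,\C}$. Here $R_\iota(s)\epsilon_{\alpha(n)}=\iota(s)^{\alpha(n)}\epsilon_{\alpha(n)}$ is diagonal with unimodular entries, since $|\iota(s)|=N_\iota(s)=1$; $R_\iota(\mu_m)\epsilon_{\alpha(n)}=\epsilon_{\alpha(mn)}$ is the reindexing $\alpha(n)\mapsto\alpha(m)\alpha(n)$, which is injective on $\alpha(\bbN)$ because multiplication by $\alpha(m)$ is injective on $\bbN$; and $R_\iota(\mu_m^\ast)$ is the adjoint partial isometry. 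By the observation above each generator extends to a contraction of $\cH_\alpha$, hence so does every element of $\cB_{(\Sigma,\sigma_n,\C)}$, being a finite $\bbC$-linear combination of finite products of generators; and since the defining relations hold on the dense subspace $\cV_{\alpha,\C}$ by Proposition \ref{Hreps}, they persist for the bounded extensions, so we obtain a representation on $\cH_\alpha$.

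For item (ii), the key point is that the constraints defining $\cB'_{(\Sigma,\sigma_n,\iota)}$ — namely that $s$ ranges over elements with $N_\iota(s)\le 1$ and $\lambda$ over scalars with $|\lambda|\le 1$ — are exactly what is needed for $\cV_{\alpha,\iota,\C}^{\leq}$ to be invariant and the generators to act as contractions on it. First I would check invariance on basis vectors $\epsilon_{\alpha(n),\eta}$ with $\eta\le 1$: $R_\iota(s)$ moves the second index to $N_\iota(s)\eta\le\eta\le 1$; $R_\iota(W(\lambda))$ is diagonal and fixes $\eta$; $R_\iota(\mu_m)$ sends an index $\eta=\xi^{\alpha(m)}$ to $\xi=\eta^{1/\alpha(m)}\le 1$; and $R_\iota(\mu_m^\ast)$, whose action was computed in the proof of Proposition \ref{Hreps}, sends $\eta$ to $\eta^{\alpha(m)}\le 1$. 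Then I would read off the norm bounds: $R_\iota(s)$ multiplies by $\iota(s)^{\alpha(n)}$, of modulus $N_\iota(s)^{\alpha(n)}\le 1$, along the reindexing $\eta\mapsto N_\iota(s)\eta$, injective because multiplication by $N_\iota(s)$ is a bijection of the group $N_\iota(\Sigma)$; $R_\iota(W(\lambda))$ multiplies by $\lambda^{\alpha(n)}$, of modulus $\le 1$; and $R_\iota(\mu_m)$, $R_\iota(\mu_m^\ast)$ are partial isometries as before. Thus each generator extends to a contraction of $\cH_{\alpha,\iota}^{\leq}$, and exactly as in item (i) the whole $\bbC$-algebra $\cB'_{(\Sigma,\sigma_n,\iota,\C)}$ then acts by bounded operators, with the relations surviving by density.

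The step demanding the most attention is the invariance-plus-contractivity check in item (ii), since that is where the restriction to $\cH_{\alpha,\iota}^{\leq}$ and to the subalgebra generated only by the $s$ with $N_\iota(s)\le 1$ and the $W(\lambda)$ with $|\lambda|\le 1$ is genuinely necessary: on the full space an element $s$ with $N_\iota(s)>1$ acts by $R_\iota(s)\epsilon_{\alpha(n),1}=\iota(s)^{\alpha(n)}\epsilon_{\alpha(n),N_\iota(s)}$, so that $\|R_\iota(s)\epsilon_{\alpha(n),1}\|=N_\iota(s)^{\alpha(n)}$, and since $\alpha$ is a non-trivial multiplicative semi-group homomorphism the numbers $\alpha(n)$ are unbounded (e.g.\ $\alpha(n^k)=\alpha(n)^k$), whence such an $R_\iota(s)$ would be unbounded. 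I would include this observation to motivate the precise form of the statement; the verifications themselves are routine.
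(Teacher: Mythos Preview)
Your proof is correct and follows essentially the same approach as the paper's own argument: verify that the generators preserve the relevant dense subspace $\cV_{\alpha,\iota,\C}^{\leq}$ (via the three observations on $s$, $\mu_m$, and $W(\lambda)$) and then bound each generator in operator norm by $1$ using $N_\iota(s)\le 1$ and $|\lambda|\le 1$. You are in fact slightly more thorough than the paper---you make the injectivity of the reindexings explicit, you treat $\mu_m^*$ separately in the invariance check, and you add the motivating remark on unboundedness---but none of this constitutes a different method.
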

\begin{proof}
When $N_\iota(\Sigma)=\{1\}$, we have $\| R_\iota(s) \|\leq \sup_n |\iota(s)^{\alpha(n)}| =1$ for every $s \in \Sigma$. Therefore, item (i) follows from this estimate, together with the fact that $\|R_\iota(\mu_m)\|=1$. 
In what concerns item (ii), the action of Proposition \ref{Hreps} extends to an action of $\cB'_{(\Sigma, \sigma_n,\iota,\C)}$ on $\cV_{\alpha,\iota,\C}$. We claim that this action factors through $\cV^\leq_{\alpha,\iota,\C}$:
\begin{itemize}
\item[(a)] If $\eta\leq 1$ and $N_\iota(s)\leq 1$, then $N_\iota(s)\eta \leq 1$.
\item[(b)] If $\eta\leq 1$ and $\eta=\xi^{\alpha(n)}$, then $\xi\leq 1$.
\item[(c)] The weight operators $R_\iota(W(\lambda)), \lambda \in k^\times$, do not alter $\eta$. \end{itemize}
The above items (a)-(c) imply our claim, \ie that the operators $R_\iota(s)$ with $N_\iota(s)\leq 1$, 
$R_\iota(\mu_n)$ and $R_\iota(W(\lambda))$ preserve $\cV_{\alpha, \iota, \bbC}^\leq$. 
Now, since $\cB'_{(\Sigma, \sigma_n,\iota,\C)}$ only contains elements $s \in \Sigma$ with $N_\iota(s)\leq 1$, $\| R_\iota(s) \| \leq  \sup_n N_\iota(s)^{\alpha(n)} \leq 1$. Clearly, we have also $\|R_\iota(\mu_n)\|\leq 1$. In what concerns the weight operators, since $|\lambda|\leq1$, we have $\| R_\iota(W(\lambda))\|\leq \sup_n |\lambda|^{\alpha(n)} \leq 1$. This implies that the action of $\cB'_{(\Sigma, \sigma_n,\iota,\C)}$ on $\cV^\leq_{\alpha, \iota, \bbC}$ extends to an action on $\cH_{\alpha, \iota}^\leq$ by bounded operators.
\end{proof}

\begin{definition}\label{goodBC}
A pair $((\Sigma_n,\sigma_n), \mathrm{Emb}_0(\Sigma, \overline{\bbQ}^\times))$, consisting of a concrete Bost--Connes datum and of a subset 
$\mathrm{Emb}_0(\Sigma, \overline{\bbQ}^\times) \subset \mathrm{Emb}(\Sigma, \overline{\bbQ}^\times)$
is called {\em good} if the algebras $\cB'_{(\Sigma, \sigma_n,\iota,\C)}$ are independent of the embedding $\iota \in \mathrm{Emb}_0(\Sigma, \overline{\bbQ}^\times)$. The pair is called {\em very good} if both the
algebras $\cB'_{(\Sigma, \sigma_n,\iota,\C)}$ as well as the Hilbert spaces $\cH_{\alpha, \iota}^{\leq}$ are independent of the embedding $\iota$. The representations $R_\iota$ may still depend on the choice of the embedding $\iota$. Consult \S\ref{sec:examples} for several examples.
\end{definition}

\begin{definition}\label{Cstardef}
\begin{itemize}
\item[(i)] The $C^*$-algebra $\cA_{(\Sigma,\sigma_n)}$ is defined as the completion of $\cB_{(\Sigma,\sigma_n,\C)}$ in the norm $ \| a \| := \sup_{\iota \in \mathrm{Emb}(\Sigma, \overline{\bbQ}^\times)} 
\| R_\iota(a) \|_{B(\cH_\iota)}$.
\item[(ii)] When the Bost-Connes datum $(\Sigma,\sigma_n)$ is good, the $C^\ast$-algebra $\cA'_{(\Sigma,\sigma_n)}$ is defined as the completion of
$\cB'_{(\Sigma,\sigma_n,\bbC)}$  in the following norm
$$ \| a \| := \sup_{\iota \in \mathrm{Emb}(\Sigma, \overline{\bbQ}^\times)} \| R_\iota(a) \|_{B(\cH_{\alpha, \iota}^\leq)}\,. $$
\end{itemize}
\end{definition}

\begin{remark}\label{Aprimerem}
When $N_\iota(\Sigma)=\{ 1 \}$, we have a family of representations 
$R_\iota$ of $\cA_{(\Sigma, \sigma_n)}$ on $\cB(\cH_{\alpha})$  indexed by embeddings 
$\iota \in \mathrm{Emb}(\Sigma, \overline{\bbQ}^\times)$. In general, for a good Bost-Connes datum,
we have a family of representations $R_\iota$ of $\cA'_{(\Sigma, \sigma_n)}$ on $B(\cH_{\alpha, \iota}^{\leq})$.
\end{remark}

\medskip

In the case where $N_\iota(\Sigma)=\bigcup_{r\geq 1}\lambda_r^\bbZ$ is a union of {\em infinitely} many geometric progressions, we have the following analogue of Proposition \ref{Hreps} (and hence of Remark \ref{Aprimerem}); recall from Notation \ref{tildeHalphaiota} the definition of the Hilbert space $\widetilde\cH_{\alpha,\iota}^\leq$.

\begin{proposition}\label{reptildeH}
The representation $R_\iota$ of the algebra $\cA'_{(\Sigma, \sigma_n)}$ on $B(\cH^\leq_{\alpha,\iota})$ extend as follows to a representation on $B(\widetilde\cH_{\alpha,\iota}^\leq)$ (let $N_\iota(s)=\prod_r \lambda_r^{a_r(s)}$):
\begin{eqnarray*}
 R_\iota(s) \epsilon_{\alpha(n), \lambda_r^{k_r}} & := & \iota(s)^{\alpha(n)} \epsilon_{\alpha(n), \lambda_r^{k_r+a_r(s)}},  \\[2mm]
 R_\iota(\mu_m) \epsilon_{\alpha(n), \lambda_r^{k_r}} & := &
\left\{ \begin{array}{ll}  \epsilon_{\alpha(nm), \lambda_r^{k_r/\alpha(m)}} & \mathrm{when}\, \alpha(m)| k_r
 \\ 0 & \mathrm{otherwise} \end{array} \right. \\[3mm]
 R_\iota(W(\lambda)) \epsilon_{\alpha(n), \lambda_r^{k_r}} & := & \lambda^{\alpha(n)} \epsilon_{\alpha(n),\lambda_r^{k_r}}\,.
\end{eqnarray*}
\end{proposition}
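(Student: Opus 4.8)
The plan is to mimic, almost verbatim, the proof of Proposition~\ref{Hreps}, checking that the displayed operators satisfy the defining relations of Definitions~\ref{Akalg} and~\ref{tildealg} and that they extend to bounded operators on $\widetilde\cH_{\alpha,\iota}^\leq$. First I would fix a generating set $\{\lambda_r\}_{r\geq1}$ of $N_\iota(\Sigma)$ with $\lambda_r>1$ as in Lemma~\ref{Niotaprogr}, and for $s\in\Sigma$ write $N_\iota(s)=\prod_r\lambda_r^{a_r(s)}$ with only finitely many $a_r(s)$ nonzero; each $a_r\colon\Sigma\to\bbZ$ is a group homomorphism. On $\widetilde\cH_{\alpha,\iota}^\leq$ the orthonormal basis is indexed by pairs $(\alpha(n),(k_r)_r)$ with $\alpha(n)\in\alpha(\bbN)$ and $(k_r)_r\in\bigoplus_{r\geq1}\bbZ_{\leq0}$, and $R_\iota(s)$ acts by the finite shift $(k_r)_r\mapsto(k_r+a_r(s))_r$ together with multiplication by $\iota(s)^{\alpha(n)}$. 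Hence $R_\iota(s_1s_2)=R_\iota(s_1)R_\iota(s_2)$ is immediate from $\iota(s_1s_2)=\iota(s_1)\iota(s_2)$ and additivity of the $a_r$, and the group relations among the elements of $\Sigma$ are satisfied.

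Next I would record the adjoint: $R_\iota(\mu_m^\ast)$ sends $\epsilon_{\alpha(n),(k_r)_r}$ to $\epsilon_{\alpha(n)/\alpha(m),(\alpha(m)k_r)_r}$ when $\alpha(m)\mid\alpha(n)$ in $\alpha(\bbN)$, and to $0$ otherwise, and check that this is indeed adjoint to $R_\iota(\mu_m)$ in the given basis. With this in hand, the relations $\mu_n\mu_m=\mu_{nm}$, $\mu_n\mu_m^\ast=\mu_m^\ast\mu_n$ for $(\alpha(m),\alpha(n))=1$, and the two idempotency relations for $\mu_n\mu_n^\ast$ and $\mu_n^\ast\mu_n$ follow by the same bookkeeping as in Proposition~\ref{Hreps}, with the divisibility conditions now imposed on the exponents $k_r$ rather than on an element $\eta\in N_\iota(\Sigma)$; here one uses that $\alpha$ is a multiplicative semi-group homomorphism (Definition~\ref{def:BC-data2}) so that coprimality of $\alpha(m)$ and $\alpha(n)$ is compatible with products. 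The only genuinely substantive relation is $R_\iota(\mu_n)R_\iota(s)R_\iota(\mu_n^\ast)=\frac{1}{\alpha(n)}\sum_{s'\in\rho_n(s)}R_\iota(s')$: both sides are computed exactly as in Proposition~\ref{Hreps}, and the right-hand sum collapses by Corollary~\ref{zerosumcor} (together with Corollary~\ref{Ngamma}), which is insensitive to whether one works with $N_\iota(\Sigma)$ or with the disjointified index set $\bigoplus_r\bbZ_{\leq0}$. Finally, the weight-operator relations $W(\lambda_1\lambda_2)=W(\lambda_1)W(\lambda_2)$, $W(\lambda^{-1})=W(\lambda)^{-1}$, $W(\lambda)s=sW(\lambda)$, $W(\lambda)\mu_n=\mu_nW(\lambda)^{\alpha(n)}$ and $\mu_n^\ast W(\lambda)=W(\lambda)^{\alpha(n)}\mu_n^\ast$ hold because $R_\iota(W(\lambda))$ acts by the scalar $\lambda^{\alpha(n)}$ on $\epsilon_{\alpha(n),(k_r)_r}$, so the $\alpha$-powers line up precisely as before.

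For boundedness I would argue as in Proposition~\ref{prop:new}(ii): on the subalgebra $\cB'_{(\Sigma,\sigma_n,\iota)}$ one has $N_\iota(s)\leq1$, $|\lambda|\leq1$, and $\alpha(n)\geq1$, which yields $\|R_\iota(s)\|\leq1$, $\|R_\iota(\mu_n)\|\leq1$, $\|R_\iota(W(\lambda))\|\leq1$, hence an extension to a $\ast$-representation of $\cA'_{(\Sigma,\sigma_n)}$ on $\widetilde\cH_{\alpha,\iota}^\leq$. The step I expect to be the main obstacle is the one hidden inside the phrase ``extends to'': one must verify that the operators actually preserve $\widetilde\cH_{\alpha,\iota}^\leq$, i.e.\ that the shift $(k_r)_r\mapsto(k_r+a_r(s))_r$ keeps every coordinate $\leq0$ for $s$ in the relevant subalgebra, and that $(k_r)_r\mapsto(\alpha(m)k_r)_r$ and the constraint $\alpha(m)\mid k_r$ behave correctly. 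This is precisely the analogue, for the disjointified Hilbert space, of items (a)--(c) in the proof of Proposition~\ref{prop:new}(ii), and it is the reason one works with $\widetilde\cH_{\alpha,\iota}^\leq$ rather than naively with $\ell^2(N_\iota(\Sigma))$; making it rigorous requires pinning down the choice of the generators $\{\lambda_r\}$ compatibly with the condition $N_\iota(s)\leq1$. Once this is settled, the remaining verifications are routine.
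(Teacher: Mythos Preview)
Your proposal is correct and takes essentially the same approach as the paper: the paper's entire proof is the single sentence ``The proof is similar to the one of Proposition~\ref{Hreps},'' and what you have written is precisely a careful execution of that similarity. Your treatment is in fact more detailed than the paper's, and the subtlety you flag about whether $N_\iota(s)\leq 1$ forces each $a_r(s)\leq 0$ (so that the shift preserves $\widetilde\cH_{\alpha,\iota}^\leq$) is a genuine point that the paper does not explicitly address.
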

\begin{proof}
The proof is similar to the one of Proposition \ref{Hreps}.
\end{proof}

\smallskip
\subsection{Time evolution and Hamiltonian}\label{tevol}
The constructions in this subsection depend on the choice of an auxiliary 
semi-group homomorphism $g: \bbN \to \bbR^\times_+$.
We assume always that the pair $((\Sigma,\sigma_n), \mathrm{Emb}_0(\Sigma, \overline{\bbQ}^\times))$ is good.

\begin{proposition}\label{tevlem}
Given $\iota \in \mathrm{Emb}_0(\Sigma, \overline{\bbQ}^\times)$, the following assignments
\begin{eqnarray*}
\sigma_t(s):= W(N_\iota(s))^{-it}s& \sigma_t(\mu_n):=g(n)^{it} \mu_n& \sigma_t(W(\lambda)):=W(\lambda)
\end{eqnarray*}
define a continuous $1$-parameter family of automorphisms $\sigma: \bbR \to 
\mathrm{Aut}(\cA'_{(\Sigma,\sigma_n)})$.
\end{proposition}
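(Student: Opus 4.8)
The plan is to verify directly that the assignments $\sigma_t(s):= W(N_\iota(s))^{-it}s$, $\sigma_t(\mu_n):=g(n)^{it}\mu_n$, and $\sigma_t(W(\lambda)):=W(\lambda)$ respect all the defining relations of the $k$-algebra $\cB'_{(\Sigma,\sigma_n)}$ (Definitions \ref{Akalg} and \ref{tildealg}), so that they extend to a well-defined algebra endomorphism $\sigma_t$; then to check the cocycle/group law $\sigma_s\circ\sigma_t=\sigma_{s+t}$ together with $\sigma_0=\id$, which gives each $\sigma_t$ an inverse $\sigma_{-t}$ and hence makes $\sigma_t$ an automorphism; and finally to observe continuity in $t$. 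A subtlety to flag at the outset: on the $k$-algebra $\cB'_{(\Sigma,\sigma_n)}$ the expression $W(\lambda)^{it}$ does not literally make sense, so the statement is really about the $C^\ast$-algebra $\cA'_{(\Sigma,\sigma_n)}$ (which is why goodness of the pair is assumed); I would phrase the verification of the relations at the level of the dense $\ast$-subalgebra and the functional-calculus elements $W(N_\iota(s))^{it}$, $g(n)^{it}$, which are unitaries, and then note that $\sigma_t$ being isometric on generators extends to the completion. Since the construction a priori involves $\iota$, I would also remark that $N_\iota(s)$ enters only through $W(N_\iota(s))$, and verifying the relations does not require independence of $\iota$ beyond what ``good'' already guarantees.

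The core computation is a relation-by-relation check. First, the relations internal to $\Sigma$ and to the weight operators: since $\sigma_t$ multiplies each $s\in\Sigma$ by the central unitary $W(N_\iota(s))^{-it}$ and $N_\iota:\Sigma\to\bbR^\times_+$ is a homomorphism, $\sigma_t(s_1 s_2)=W(N_\iota(s_1 s_2))^{-it}s_1 s_2 = W(N_\iota(s_1))^{-it}W(N_\iota(s_2))^{-it}s_1 s_2=\sigma_t(s_1)\sigma_t(s_2)$, using $W(\lambda_1\lambda_2)=W(\lambda_1)W(\lambda_2)$ and the centrality of the $W$'s relative to $\Sigma$ ($W(\lambda)s=sW(\lambda)$); the relations $W(\lambda_1\lambda_2)=W(\lambda_1)W(\lambda_2)$, $W(\lambda^{-1})=W(\lambda)^{-1}$ and $W(\lambda)s=sW(\lambda)$ are preserved because $\sigma_t$ fixes every $W(\lambda)$ and acts on $s$ by a central $W$-unitary. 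Next, $\mu_n\mu_m=\mu_{nm}$ is preserved since $g$ is a semigroup homomorphism: $\sigma_t(\mu_n)\sigma_t(\mu_m)=g(n)^{it}g(m)^{it}\mu_n\mu_m=g(nm)^{it}\mu_{nm}=\sigma_t(\mu_{nm})$; the commutation $\mu_n\mu_m^\ast=\mu_m^\ast\mu_n$ for coprime $(\alpha(m),\alpha(n))$, and the partial-isometry relations $\mu_n\mu_n^\ast\mu_n\mu_n^\ast=\mu_n\mu_n^\ast$ and $\mu_n^\ast\mu_n\mu_n^\ast\mu_n=\mu_n^\ast\mu_n$, survive because the scalars $g(n)^{\pm it}$ are central and cancel. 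The relation $\mu_n s\mu_n^\ast=\tfrac{1}{\alpha(n)}\sum_{s'\in\rho_n(s)}s'$ (when $s\in\Sigma_n$, else $0$) requires care: under $\sigma_t$ the left side becomes $g(n)^{it}\mu_n\,W(N_\iota(s))^{-it}s\,\mu_n^\ast g(n)^{-it}=\mu_n W(N_\iota(s))^{-it}s\mu_n^\ast$, and using $W(\lambda)\mu_n=\mu_n W(\lambda)^{\alpha(n)}$, i.e. $\mu_n W(\lambda)^{-it}=W(\lambda^{1/\alpha(n)})^{-it}\mu_n$ (here $\lambda^{1/\alpha(n)}$ is well-defined in $\bbR^\times_+$), one rewrites this as $W(N_\iota(s)^{1/\alpha(n)})^{-it}\cdot\tfrac{1}{\alpha(n)}\sum_{s'\in\rho_n(s)}s'$; one must then check this equals $\tfrac{1}{\alpha(n)}\sum_{s'\in\rho_n(s)}\sigma_t(s')=\tfrac{1}{\alpha(n)}\sum_{s'}W(N_\iota(s'))^{-it}s'$, which holds because every $s'\in\rho_n(s)$ satisfies $N_\iota(s')^{\alpha(n)}=N_\iota(\sigma_n(s))=N_\iota(s)$ (all preimages have the same $N_\iota$, by Corollary \ref{Ngamma} and the injectivity of $\iota$), hence $N_\iota(s')=N_\iota(s)^{1/\alpha(n)}$ is constant over the sum and factors out. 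Finally $W(\lambda)\mu_n=\mu_n W(\lambda)^{\alpha(n)}$ and $\mu_n^\ast W(\lambda)=W(\lambda)^{\alpha(n)}\mu_n^\ast$ are preserved since $\sigma_t$ fixes the $W$'s and multiplies $\mu_n,\mu_n^\ast$ by central scalars $g(n)^{\pm it}$.

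For the group law and invertibility: on generators $\sigma_s(\sigma_t(\mu_n))=\sigma_s(g(n)^{it}\mu_n)=g(n)^{it}g(n)^{is}\mu_n=g(n)^{i(s+t)}\mu_n=\sigma_{s+t}(\mu_n)$; $\sigma_s(\sigma_t(W(\lambda)))=W(\lambda)=\sigma_{s+t}(W(\lambda))$; and for $\xi\in\Sigma$, since $\sigma_s$ fixes $W(N_\iota(\xi))$, $\sigma_s(\sigma_t(\xi))=\sigma_s(W(N_\iota(\xi))^{-it}\xi)=W(N_\iota(\xi))^{-it}\,W(N_\iota(\xi))^{-is}\xi=W(N_\iota(\xi))^{-i(s+t)}\xi=\sigma_{s+t}(\xi)$. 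Thus $\sigma_s\circ\sigma_t=\sigma_{s+t}$ on a set of algebra generators, hence on all of $\cA'_{(\Sigma,\sigma_n)}$ by multiplicativity and density; in particular $\sigma_0=\id$ and $\sigma_{-t}=\sigma_t^{-1}$, so each $\sigma_t$ is an automorphism. Strong continuity $t\mapsto\sigma_t(a)$ follows by checking it on generators — $t\mapsto W(N_\iota(s))^{-it}s$, $t\mapsto g(n)^{it}\mu_n$ are norm-continuous because $t\mapsto U^{it}$ is norm-continuous for a unitary $U$ in a $C^\ast$-algebra — and then extending to the dense subalgebra by the product rule and to $\cA'_{(\Sigma,\sigma_n)}$ by a standard $\varepsilon/3$ argument using $\|\sigma_t\|=1$. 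The main obstacle, and the only step that is not pure bookkeeping, is the $\mu_n s\mu_n^\ast$ relation: one must use that $N_\iota$ is constant on each fiber $\rho_n(s)$ (Corollaries \ref{zerosumcor} and \ref{Ngamma}) in order to pull the weight operator out of the averaging sum, and one must interpret $W(\lambda)^{it}$ and the $\alpha(n)$-th roots $\lambda^{1/\alpha(n)}$ consistently via the continuous functional calculus in the $C^\ast$-algebra $\cA'_{(\Sigma,\sigma_n)}$.
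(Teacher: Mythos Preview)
Your proposal is correct and follows essentially the same route as the paper: a relation-by-relation check that $\sigma_t$ is well defined on generators, with the only nontrivial step being the compatibility with $\mu_n s\mu_n^\ast=\tfrac{1}{\alpha(n)}\sum_{s'\in\rho_n(s)}s'$, handled via the observation that $N_\iota(s')=N_\iota(s)^{1/\alpha(n)}$ is constant on the fiber so the weight factor pulls out, followed by the group law $\sigma_{t+t'}=\sigma_t\circ\sigma_{t'}$ on generators. If anything, you are more careful than the paper on two points it leaves implicit: the meaning of $W(N_\iota(s))^{-it}$ (which indeed requires passing to the $C^\ast$-completion and functional calculus rather than the bare $k$-algebra $\cB'_{(\Sigma,\sigma_n)}$), and the continuity in $t$, which the paper asserts but does not argue.
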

\begin{proof}
We need to verify that $\sigma_t(ab)=\sigma_t(a)\sigma_t(b)$ and $\sigma_{t+t'}(a)=\sigma_t(\sigma_{t'}(a))$ 
for every $t,t' \in \bbR$ and $a, b \in \cA'_{(\Sigma, \sigma_n)}$. 
The latter equality is clear, since $\sigma_t(\sigma_{t'}(s))=W(N_\iota(s))^{-i (t+t')} s$
and $\sigma_t(\sigma_{t'}(\mu_n))=g(n)^{i(t+t')} \mu_n$. Let us focus then in the first equality. Since $N_\iota$ is a group homomorphism, we have
$$ \sigma_t(s_1 s_2) = W(N_\iota(s_1 s_2))^{-it} s_1 s_2 = \sigma_t(s_1) \sigma_t(s_2)\,.$$
Similarly, since $g$ is a semi-group homomorphism, we have
$$ \sigma_t(\mu_n\mu_m)=\sigma_t(\mu_{nm})=g(nm)^{it}\mu_{nm}=
\sigma_t(\mu_n)\sigma_t(\mu_m)\,.$$
The action on $\mu_n^*$ is then given by $\sigma_t(\mu_n^*)=g(n)^{-it} \mu_n^*$. Note that it is compatible with all the relations between $\mu_n$ and $\mu_n^*$. In what concerns the weight operators, we clearly have the equality $\sigma_t(W(\lambda_1)W(\lambda_2))=\sigma_t(W(\lambda_1))\sigma_t(W(\lambda_2))$. In order to conclude the proof, it remains then to verify the relations of Definitions \ref{Akalg} and \ref{tildealg}. We will focus ourselves in \eqref{catBCrho} and leave the simple verification of the remaining relations to the reader. On one hand, we have:
\begin{eqnarray*}
\sigma_t (\mu_n s \mu_n^*) & = &  \frac{1}{\alpha(n)}\sum_{s'\in \rho_n(s)} \sigma_t(s') \\ 
& = & W(N_\iota(s'))^{-it} \,\frac{1}{\alpha(n)}\sum_{s'\in \rho_n(s)}  s' \\
 & = &  W(N_\iota(s)^{1/\alpha(n)})^{-it} \,\frac{1}{\alpha(n)}\sum_{s'\in \rho_n(s)}  s' \\
& = & W(N_\iota(s))^{-it/\alpha(n)} \,\frac{1}{\alpha(n)}\sum_{s'\in \rho_n(s)}  s'\,.
\end{eqnarray*}
On the other hand, we have:
\begin{eqnarray*}
\sigma_t(\mu_n) \sigma_t(s) \sigma_t(\mu_n^*)  & = & g(n)^{it} 
\mu_n W(N_\iota(s))^{-it} s \,\, g(n)^{-it}  \mu_n^* \\
& = & W(N_\iota(s))^{-it/\alpha(n)} \mu_n s  \mu_n^* \\
& = & W(N_\iota(s))^{-it/\alpha(n)} \frac{1}{\alpha(n)}\sum_{s'\in \rho_n(s)}  s'\,,
\end{eqnarray*}
where the last equality follows from the relations $\cA'_{(\Sigma,\sigma_n)}$ 
between the generators $W(\lambda)$ and $\mu_n$. This achieves the proof.
\end{proof}

Let $\iota \in \mathrm{Emb}(\Sigma, \overline{\bbQ}^\times)$ be an embedding and $R_\iota:
\cA'_{(\Sigma, \sigma_n)}\to B(\cH^\leq _{\alpha, \iota})$ the 
associated representation of Proposition \ref{prop:new}. Consider the following linear operator:
\begin{eqnarray*}
H_\iota: \cH_{\alpha, \iota}^{\leq} \too \cH_{\alpha, \iota}^{\leq}  && 
\epsilon_{\alpha(n),\eta} \mapsto (-\alpha(n) \log(\eta) + \log (g(n)))\epsilon_{\alpha(n),\eta}\,.
\end{eqnarray*}
\begin{proposition}\label{prop:Hamiltonian}
For every $t \in \bbR$ and $a \in \cA'_{(\Sigma, \sigma_n,\iota)}$, we have the equality:
\begin{equation}\label{eq:equality-Hamiltonian}
R_\iota(\sigma_t(a))= e^{itH_\iota}R_\iota(a) e^{-itH_\iota}\,.
\end{equation}
\end{proposition}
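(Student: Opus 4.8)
The plan is to verify the operator identity \eqref{eq:equality-Hamiltonian} on the generators $s \in \Sigma$ (with $N_\iota(s)\leq 1$), $\mu_n$, $\mu_n^\ast$, and $W(\lambda)$ (with $|\lambda|\leq 1$) of $\cA'_{(\Sigma,\sigma_n,\iota)}$, and then extend to the whole algebra by multiplicativity of both sides and continuity. Since $H_\iota$ is diagonal in the orthonormal basis $\{\epsilon_{\alpha(n),\eta}\}$, the unitary $e^{itH_\iota}$ acts by $\epsilon_{\alpha(n),\eta}\mapsto e^{it(-\alpha(n)\log\eta + \log g(n))}\epsilon_{\alpha(n),\eta} = \eta^{-it\alpha(n)} g(n)^{it}\,\epsilon_{\alpha(n),\eta}$, which makes all the conjugations below completely explicit.

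First I would treat $a = W(\lambda)$: by Proposition \ref{Hreps}, $R_\iota(W(\lambda))$ is diagonal, hence commutes with the diagonal operator $e^{itH_\iota}$, so the right-hand side of \eqref{eq:equality-Hamiltonian} is just $R_\iota(W(\lambda))$, which matches $R_\iota(\sigma_t(W(\lambda)))=R_\iota(W(\lambda))$ by Proposition \ref{tevlem}. Next, for $a=s\in\Sigma$ with $N_\iota(s)\leq 1$: applying the conjugation to $\epsilon_{\alpha(n),\eta}$, the factor $g(n)^{it}$ from $e^{itH_\iota}$ is unchanged by $R_\iota(s)$ (which preserves the first index $\alpha(n)$) and cancels against $g(n)^{-it}$ from $e^{-itH_\iota}$, while the $\eta$-factors produce $(N_\iota(s)\eta)^{-it\alpha(n)}\cdot \iota(s)^{\alpha(n)}\cdot \eta^{it\alpha(n)} = N_\iota(s)^{-it\alpha(n)}\iota(s)^{\alpha(n)}\,\epsilon_{\alpha(n),N_\iota(s)\eta}$; on the other side $R_\iota(\sigma_t(s)) = R_\iota(W(N_\iota(s))^{-it}s)$ acts on $\epsilon_{\alpha(n),\eta}$ as $\big(N_\iota(s)^{-it}\big)^{\alpha(n)}\iota(s)^{\alpha(n)}\epsilon_{\alpha(n),N_\iota(s)\eta}$, using the $W$-action formula, so the two agree. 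The case $a=\mu_m$ is analogous: conjugating $\epsilon_{\alpha(n),\eta}$ with $\eta=\xi^{\alpha(m)}$, the map sends it to $\epsilon_{\alpha(mn),\xi}$, and tracking the scalars gives $\xi^{-it\alpha(mn)} g(mn)^{it}\cdot \eta^{it\alpha(n)} g(n)^{-it} = g(m)^{it}\,\epsilon_{\alpha(mn),\xi}$ once one uses $\xi^{\alpha(mn)} = \eta^{\alpha(n)}$ and the multiplicativity $g(mn)=g(m)g(n)$, matching $R_\iota(\sigma_t(\mu_m)) = g(m)^{it}R_\iota(\mu_m)$; the case $a=\mu_m^\ast$ follows by taking adjoints, or by the same bookkeeping with the formula for $R_\iota(\mu_m^\ast)$ derived in the proof of Proposition \ref{Hreps}.

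The only mild subtlety — and the step I expect to require the most care — is the passage from generators to a general element $a\in\cA'_{(\Sigma,\sigma_n,\iota)}$: both sides of \eqref{eq:equality-Hamiltonian} are $\bbC$-algebra homomorphisms in $a$ (for the right-hand side this is because $b\mapsto e^{itH_\iota}R_\iota(b)e^{-itH_\iota}$ is a composition of the homomorphism $R_\iota$ with an inner automorphism of $B(\cH_{\alpha,\iota}^\leq)$, and for the left-hand side it is Proposition \ref{tevlem} together with $R_\iota$ being a homomorphism), so they agree on the subalgebra generated by $s,\mu_n,\mu_n^\ast,W(\lambda)$, which is all of $\cB'_{(\Sigma,\sigma_n,\iota,\C)}$; one then invokes that $\sigma_t$ is an isometric automorphism of the $C^\ast$-completion and that both sides are norm-continuous in $a$ to conclude \eqref{eq:equality-Hamiltonian} on $\cA'_{(\Sigma,\sigma_n,\iota)}$. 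One should also note in passing that $e^{\pm itH_\iota}$ is a well-defined bounded (unitary) operator since $H_\iota$ is self-adjoint, being a real-diagonal operator, and that this argument simultaneously records that $R_\iota$ is a \emph{covariant} representation in the sense of Definition \ref{QSMdef}(ii).
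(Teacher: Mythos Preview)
Your proposal is correct and follows essentially the same approach as the paper: both verify \eqref{eq:equality-Hamiltonian} on the generators $s$, $\mu_m$, and $W(\lambda)$ by direct computation against the basis vectors $\epsilon_{\alpha(n),\eta}$, using the explicit diagonal form of $e^{itH_\iota}$. Your treatment is in fact slightly more explicit than the paper's about the passage from generators to the full $C^\ast$-algebra via multiplicativity and continuity, and about handling $\mu_m^\ast$ by adjoints; the paper simply says ``it suffices to verify on generators'' and leaves these steps implicit.
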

\begin{proof}
Clearly, it suffices to verify the above equality \eqref{eq:equality-Hamiltonian} in the case where $a$ is a generator of $\cA'_{(\epsilon, \sigma_n)}$. In what concerns the generators 
$s \in \Sigma$, we have:
\begin{eqnarray*}
R_\iota(\sigma_t(s)) \epsilon_{\alpha(n),\eta} & = & R_\iota(W(N_\iota(s))^{-it} ) R_\iota(s) \epsilon_{\alpha(n),\eta}\\
& = & R_\iota(W(N_\iota(s))^{-it} ) \iota(s)^{\alpha(n)} \epsilon_{\alpha(n),N_\iota(s) \eta}\\
& = & N_\iota(s)^{-it \alpha(n)} \iota(s)^{\alpha(n)} \epsilon_{\alpha(n),N_\iota(s) \eta} \\
& = & \eta^{-it \alpha(n)} N_\iota(s)^{-it \alpha(n)} g(n)^{it} \iota(s)^{\alpha(n)} \eta^{it \alpha(n)} g(n)^{-it} 
\epsilon_{\alpha(n),N_\iota(s) \eta} \\
& = & e^{itH_\iota} \iota(s)^{\alpha(n)} \eta^{it \alpha(n)} g(n)^{-it} \epsilon_{\alpha(n),N_\iota(s) \eta} \\
& = & e^{itH_\iota}R_\iota(s) \eta^{it \alpha(n)} g(n)^{-it} \epsilon_{\alpha(n),\eta} \\
& = & e^{itH_\iota}R_\iota(s) e^{-itH_\iota} \epsilon_{\alpha(n),\eta}\,.
\end{eqnarray*}
For the generators $\mu_m, m \in \bbN$, the left-hand side of \eqref{eq:equality-Hamiltonian} identifies with
$$ R_\iota(\sigma_t(\mu_m)) \epsilon_{\alpha(n),\eta} =  g(m)^{it} 
R_\iota(\mu_{m})  \epsilon_{\alpha(n),\eta} =  \left\{  \begin{array}{ll}
    g(m)^{it} \epsilon_{\alpha(nm), \xi} & \text{when}\,\, \eta=\xi^{\alpha(m)} \\
   0 & \text{otherwise}\,.
  \end{array} \right.
$$
On the other hand, the right-hand side identifies with
\begin{eqnarray*}
e^{itH_\iota}R_\iota(\mu_{m}) e^{-itH_\iota} 
\epsilon_{\alpha(n),\eta} & = & e^{itH_\iota}R_\iota(\mu_{m})
\eta^{it \alpha(n)}  g(n)^{-it} \epsilon_{\alpha(n),\eta}  \\
(\text{when}\,\,\eta=\xi^{\alpha(m)}; 0\,\,\text{otherwise}) 
& = & e^{itH_\iota} \eta^{it \alpha(n)} g(n)^{-it} \epsilon_{\alpha(nm), \xi} \\
 & = & \xi^{-it \alpha(nm)} g(nm)^{it} \eta^{it \alpha(n)} g(n)^{-it} 
\epsilon_{\alpha(nm), \xi} \\
& = & g(m)^{it} \epsilon_{\alpha(nm), \xi}\,.
\end{eqnarray*}
Finally, in what concerns the generators $W(\lambda), \lambda \in k^\times$, we have:
\begin{eqnarray*}
R_\iota(W(\lambda)) \epsilon_{\alpha(n),\eta} = \lambda^{\alpha(n)} \epsilon_{\alpha(n),\eta} & = &
\eta^{-it \alpha(n)} g(n)^{it} \lambda^{\alpha(n)} \eta^{it \alpha(n)} g(n)^{-it} \epsilon_{\alpha(n),\eta} \\
& =&  e^{itH_\iota} \lambda^{\alpha(n)} \eta^{it \alpha(n)} g(n)^{-it} \epsilon_{\alpha(n),\eta} \\
& = & e^{itH_\iota}R_\iota(W(\lambda))
\eta^{it \alpha(n)} g(n)^{-it} \epsilon_{\alpha(n),\eta}  \\
& = & e^{itH_\iota}R_\iota(W(\lambda)) e^{-itH_\iota} \epsilon_{\alpha(n),\eta}\,.
\end{eqnarray*}
This achieves the proof.
\end{proof}
\begin{remark}
By combining Propositions \ref{tevlem}-\ref{prop:Hamiltonian} with \S\ref{sub:observables}, we hence obtain the QSM-system $(\cA'_{(\Sigma, \sigma_n)}, \sigma_t)$ associated to $(\Sigma, \sigma_n)$ (and to an embedding $\iota \in \mathrm{Emb}(\Sigma, \overline{\bbQ}^\times)$). In the next two subsections we prove that this QSM-system satisfies the extra assumptions (iii)-(iv) of Definition \ref{QSMdef}.
\end{remark}

\subsection{Partition function: finite union of geometric progressions}
Recall from \S\ref{tevol} that $g: \bbN \to \bbR^\times_+$ is an auxiliary semi-group homomorphism. Let us denote by $\beta_0$ the exponent of convergence of the series $\sum_{\alpha(n)\geq 1} g(n)^{-\beta}$. 

In this subsection we assume that for every embedding $\iota \in \mathrm{Emb}(\Sigma,\overline{\bbQ}^\times)$, the countable multiplicative subgroup  $N_\iota(\Sigma)$ of $\R^\times_+$ is a finite union of geometric progressions $\N_\iota(\Sigma)=\cup_{j=1}^M
q_{\iota,j}^\Z$, with $q_{\iota,j} > 1$.

\begin{proposition}\label{prop:partition}
The partition function $Z_\iota(\beta)$ is computed by the series:
\begin{equation}\label{eq:sum-partition}
\sum_{\eta\leq 1 \in N_\iota(\Sigma)} \sum_{\alpha(n) \geq 1}  \eta^{\alpha(n)\beta} 
g(n)^{-\beta} \,.
\end{equation}
Moreover, for every $\beta >\beta_0$, the above series \eqref{eq:sum-partition}
is convergent. Consequently, $e^{-\beta H_\iota}$ is trace class operator. 
\end{proposition}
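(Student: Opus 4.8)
The plan is to exploit the fact that the Hamiltonian $H_\iota$ is already diagonal in the distinguished orthonormal basis $\{\epsilon_{\alpha(n),\eta}\}_{\eta\leq 1}$ of $\cH_{\alpha,\iota}^{\leq}$, so that $e^{-\beta H_\iota}$ is a positive diagonal operator whose trace --- finite or not --- is unambiguously the sum of its diagonal entries; the whole statement then reduces to an elementary convergence estimate for that sum.

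First I would read off from the formula for $H_\iota$ displayed just before Proposition~\ref{prop:Hamiltonian} that each basis vector $\epsilon_{\alpha(n),\eta}$, with $\eta\in N_\iota(\Sigma)$ and $\eta\leq 1$, is an eigenvector of $H_\iota$ with eigenvalue $-\alpha(n)\log\eta+\log g(n)$, hence an eigenvector of $e^{-\beta H_\iota}$ with strictly positive eigenvalue $\eta^{\beta\alpha(n)}g(n)^{-\beta}$. Since a positive operator diagonal in an orthonormal basis is of trace class precisely when the sum of its diagonal entries is finite --- and then its trace equals that sum --- this immediately identifies $Z_\iota(\beta)=\mathrm{Tr}(e^{-\beta H_\iota})$ with the series \eqref{eq:sum-partition}, \emph{provided} the latter converges. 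Thus the real content of the proposition is the convergence of $\sum_{\eta\leq 1\in N_\iota(\Sigma)}\sum_{\alpha(n)\geq 1}\eta^{\beta\alpha(n)}g(n)^{-\beta}$ for $\beta>\beta_0$.

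To establish that convergence I would split off the contribution of $\eta=1$, which is exactly $\sum_{\alpha(n)\geq 1}g(n)^{-\beta}$ and converges for $\beta>\beta_0$ by the definition of $\beta_0$. For the terms with $\eta<1$, the inequalities $\alpha(n)=\#\mathrm{Ker}(\sigma_n)\geq 1$ and $0<\eta<1$ give $\eta^{\beta\alpha(n)}\leq\eta^\beta$, so the remaining double sum is dominated by $\bigl(\sum_{\alpha(n)\geq 1}g(n)^{-\beta}\bigr)\cdot\sum_{\eta<1\in N_\iota(\Sigma)}\eta^\beta$. Here the standing hypothesis of the subsection is what makes everything work: writing $N_\iota(\Sigma)=\bigcup_{j=1}^M q_{\iota,j}^{\bbZ}$ with $q_{\iota,j}>1$, every $\eta\in N_\iota(\Sigma)$ with $\eta<1$ equals $q_{\iota,j}^{-m}$ for some $1\leq j\leq M$ and $m\geq 1$, so $\sum_{\eta<1\in N_\iota(\Sigma)}\eta^\beta\leq\sum_{j=1}^M\sum_{m\geq 1}q_{\iota,j}^{-m\beta}=\sum_{j=1}^M q_{\iota,j}^{-\beta}/(1-q_{\iota,j}^{-\beta})<\infty$ for every $\beta>0$. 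Combining the two bounds shows that \eqref{eq:sum-partition} converges for all $\beta>\beta_0$, which by the previous paragraph simultaneously gives the asserted formula for $Z_\iota(\beta)$ and the trace-class property of $e^{-\beta H_\iota}$.

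I do not expect a genuine obstacle: the only substantive observation is that finiteness of the number of geometric progressions (with ratios $>1$) is exactly what forces the sum over $\eta$, namely $\sum_{\eta<1}\eta^\beta$, to converge and hence to factor cleanly out of the $g$-dependent part --- an infinite union would destroy this factorization, which is why the following subsection requires a different Hilbert-space bookkeeping. A minor point of care is to note that the relevant range $\beta>\beta_0$ is contained in $\beta>0$, so that the geometric series above are legitimately convergent.
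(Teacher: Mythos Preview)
Your proof is correct and follows essentially the same approach as the paper: identify the trace with the diagonal sum via the explicit eigenvalues of $H_\iota$, then use the finite-union-of-geometric-progressions hypothesis to control the $\eta$-contribution uniformly in $n$. The only cosmetic difference is that the paper first sums the geometric series in $\eta$ to obtain $\sum_{j=1}^M (1-q_{\iota,j}^{-\alpha(n)\beta})^{-1}$ and then bounds this factor above and below uniformly in $n$, whereas you bound $\eta^{\alpha(n)\beta}\leq\eta^\beta$ before summing; the paper's ordering incidentally yields the divergence for $\beta\leq\beta_0$ as well, though the proposition does not claim it.
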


\begin{proof}
By construction, the operator $H_\iota$ is diagonal in the basis $\epsilon_{\alpha(n),\eta}$
 of $\cH_{\alpha, \iota}^\leq$. 
Hence, the associated partition function agrees with the following series
\begin{eqnarray*}
Z_\iota(\beta):= \mathrm{Tr}(e^{-\beta H_\iota}) & = & \sum_{\eta\leq 1 \in N_\iota(\Sigma)} \sum_{\alpha(n) \geq 1}  \langle \epsilon_{\alpha(n),\eta} , e^{-\beta H_\iota}  \epsilon_{\alpha(n),\eta} \rangle \\
& = &  \sum_{\eta\leq 1 \in N_\iota(\Sigma)} \sum_{\alpha(n) \geq 1}   
e^{-\beta (-\alpha(n) \log(\eta) + \log (g(n)))} \\
&= & \sum_{\eta\leq 1 \in N_\iota(\Sigma)} \sum_{\alpha(n) \geq 1} \eta^{\alpha(n)\beta} g(n)^{-\beta}\,.
\end{eqnarray*}
Under the above assumption on $N_\iota(\Sigma)$, the sum \eqref{eq:sum-partition} can be re-written as 
\begin{equation}\label{eq:series}
\sum_{\alpha(n) \geq 1} \sum_{j=1}^M \sum_{k\geq 0} q_{\iota,j}^{-k\alpha(n)\beta} g(n)^{-\beta} 
= \sum_{\alpha(n) \geq 1} \sum_{j=1}^M \frac{1}{1-q_{\iota,j}^{-\alpha(n)\beta}} g(n)^{-\beta}\,.
\end{equation}
Let $j_{\mathrm{max}}$ (resp. $j_{\mathrm{min}}$) be the index $j$ that realize the maximum (resp. minimum) above. Since $1-q_{\iota,j_{\mathrm{max}}}^{-\alpha(n)\beta} \geq 1-q_{\iota,j_{\mathrm{max}}}^{-\beta}$ when $\alpha(n)\geq 1$ and $1-q_{\iota,j_{\mathrm{min}}}^{-\alpha(n)\beta} \leq 1$, we have 
\begin{eqnarray*}
g(n)^{-\beta} \frac{M}{1-q_{\iota,j_{\mathrm{max}}}^{-\alpha(n)\beta}} \leq g(n)^{-\beta} \frac{M}{1-q_{\iota,j_{\mathrm{max}}^{-\beta}}} & \mathrm{and} & g(n)^{-\beta} \frac{M}{1-q_{\iota,j_{\mathrm{min}}}^{-\alpha(n)\beta}} \geq g(n)^{-\beta} M\,.
\end{eqnarray*}
This implies that \eqref{eq:series} is bounded above and below by series whose convergence and divergence depends only on the series $\sum_{\alpha(n)\geq 1} g(n)^{-\beta}$. 
As a consequence, the above series \eqref{eq:sum-partition} converges for every $\beta > \beta_0$ and diverges for every $\beta \leq \beta_0$. This concludes the proof.
\end{proof}

\subsection{Partition function: infinite union of geometric progressions}\label{Zsec}

In Proposition \ref{prop:partition} we have only treated the convergence of the partition 
function $Z(\beta)$ in the case where the group $N_\iota(\Sigma)\subset \R^\times_+$ has the form 
$N_\iota(\Sigma)=\cup_{j=1}^M q_{\iota,j}^\Z$, for some $q_{\iota,j} > 1$. Here we treat the
case of infinitely many progressions. 

In order to control the convergence properties of the partition function, it is
useful to introduce another choice of a homomorphism, in addition to the choice of
$g: \N \to \R^\times_+$. We modify the time evolution by introducing an additional auxiliary
choice of a group homomorphism $h: N_\iota(\Sigma) \to \R^\times_+$. 

\begin{proposition}\label{tevolprop2}
Given homomorphisms $g: \N \to \R^\times_+$ and $h: N_\iota(\Sigma) \to \R^\times_+$, 
the following assignments
\begin{eqnarray*}
 \sigma_t(s)= W(h(N_\iota(s)))^{-it} s & \sigma_t(\mu_m) = g(m)^{it} \mu_m & \sigma_t(W(\lambda))= W(\lambda)
\end{eqnarray*}
defines a $1$-parameter family of automorphisms 
$\sigma: \R \to {\rm Aut}(\cA'_{(\Sigma, \sigma_n)})$.
\end{proposition}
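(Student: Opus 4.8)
The plan is to follow the proof of Proposition \ref{tevlem} almost verbatim, replacing the group homomorphism $N_\iota\colon \Sigma \to \bbR^\times_+$ by the composite $h\circ N_\iota\colon \Sigma \to \bbR^\times_+$, which is again a group homomorphism. First I would check, for each fixed $t$, that the assignment $\sigma_t$ respects the defining relations of $\cB'_{(\Sigma,\sigma_n)}$ (Definitions \ref{Akalg} and \ref{tildealg}) and hence extends to an endomorphism of $\cB'_{(\Sigma,\sigma_n)}$ and, after completion, of $\cA'_{(\Sigma,\sigma_n)}$; then that $\sigma_0=\id$ and $\sigma_{t+t'}=\sigma_t\circ\sigma_{t'}$, so each $\sigma_t$ is invertible with inverse $\sigma_{-t}$; and finally that $t\mapsto\sigma_t$ is strongly continuous. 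The group law in $t$ is immediate since $\sigma_t(\sigma_{t'}(s))=W(h(N_\iota(s)))^{-i(t+t')}s$ and $\sigma_t(\sigma_{t'}(\mu_m))=g(m)^{i(t+t')}\mu_m$, and continuity follows from the norm-continuity of $t\mapsto g(m)^{it}$ and $t\mapsto W(h(N_\iota(s)))^{-it}$, exactly as in the proof of Proposition \ref{tevlem}.

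Multiplicativity $\sigma_t(ab)=\sigma_t(a)\sigma_t(b)$ on generators and preservation of every relation other than \eqref{catBCrho} carry over without change: $\sigma_t$ acts on $\mu_m,\mu_m^\ast,W(\lambda)$ exactly as in Proposition \ref{tevlem}, while on $s\in\Sigma$ it only multiplies by the unitary $W(h(N_\iota(s)))^{-it}$, which by Definition \ref{tildealg} commutes with every $W(\lambda)$ and every element of $\Sigma$; since $h\circ N_\iota$ is a group homomorphism we get $\sigma_t(s_1s_2)=\sigma_t(s_1)\sigma_t(s_2)$, and since $g$ is a semi-group homomorphism we get $\sigma_t(\mu_n\mu_m)=\sigma_t(\mu_{nm})$. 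The relations between $\mu_n$ and $\mu_m^\ast$ (commutation when $(\alpha(m),\alpha(n))=1$ and the idempotency relations for $\mu_n\mu_n^\ast$ and $\mu_n^\ast\mu_n$) involve only the scalars $g(n)^{\pm it}$, which cancel.

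The only point demanding a short argument is, as before, relation \eqref{catBCrho}. Here I would invoke Proposition \ref{alphaLem}, so that $\sigma_n(s)=s^{\alpha(n)}$ and hence every $s'\in\rho_n(s)$ satisfies $(s')^{\alpha(n)}=s$, whence $N_\iota(s')^{\alpha(n)}=N_\iota(s)$; since $\bbR^\times_+$ is uniquely divisible this forces $N_\iota(s')=N_\iota(s)^{1/\alpha(n)}$, the same value for every $s'\in\rho_n(s)$ (two choices of $s'$ differ by an $\alpha(n)$-torsion element of $\Sigma$, whose image under $\iota$ is a root of unity of absolute value $1$; cf. Corollary \ref{Ngamma}). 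Applying $h$ and using unique divisibility of $\bbR^\times_+$ once more gives $h(N_\iota(s'))=h(N_\iota(s))^{1/\alpha(n)}$. With this in hand the computation of Proposition \ref{tevlem} goes through verbatim: on one side
\begin{eqnarray*}
\sigma_t(\mu_n s\mu_n^\ast)=\frac{1}{\alpha(n)}\sum_{s'\in\rho_n(s)}\sigma_t(s')=W(h(N_\iota(s)))^{-it/\alpha(n)}\cdot\frac{1}{\alpha(n)}\sum_{s'\in\rho_n(s)}s'
\end{eqnarray*}
when $s\in\Sigma_n$ (and $0$ otherwise), while on the other side the relation $W(\lambda)\mu_n=\mu_n W(\lambda)^{\alpha(n)}$ of Definition \ref{tildealg} gives
\begin{eqnarray*}
\sigma_t(\mu_n)\sigma_t(s)\sigma_t(\mu_n^\ast)=g(n)^{it}\mu_n\,W(h(N_\iota(s)))^{-it}s\,g(n)^{-it}\mu_n^\ast=W(h(N_\iota(s)))^{-it/\alpha(n)}\,\mu_n s\mu_n^\ast,
\end{eqnarray*}
and the two agree by \eqref{catBCrho} itself. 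The main (indeed essentially only) obstacle is the bookkeeping around the $\alpha(n)$-th root $N_\iota(s)^{1/\alpha(n)}$, and it is dispatched by the uniqueness of positive real roots together with Corollary \ref{Ngamma}.
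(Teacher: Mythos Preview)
Your proposal is correct and follows exactly the route the paper intends: the paper's own proof is the single sentence ``The proof is similar to the one of Proposition \ref{tevlem},'' and you have carried out precisely that similarity, replacing $N_\iota$ by the composite homomorphism $h\circ N_\iota$ and redoing the verification of \eqref{catBCrho} with the observation that $h(N_\iota(s'))=h(N_\iota(s))^{1/\alpha(n)}$ for every $s'\in\rho_n(s)$. There is nothing to add.
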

\proof The proof is similar to the one of Proposition \ref{tevlem}.
\endproof

\begin{proposition}\label{rHam}
The Hamiltonian implementing the time evolution of Proposition \ref{tevolprop2} in the representation 
of Lemma \ref{reptildeH} is given by
$$ H_\iota\, \epsilon_{\alpha(n), \lambda_r^k}= (-\alpha(n) k \log(h(\lambda_r)) + \log(g(n))) \, 
\epsilon_{\alpha(n), \lambda_r^k}. $$
\end{proposition}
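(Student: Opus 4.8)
The plan is to mimic, almost verbatim, the proof of Proposition~\ref{prop:Hamiltonian}, with three substitutions: the Hilbert space $\cH^\leq_{\alpha,\iota}$ is replaced by $\widetilde\cH^\leq_{\alpha,\iota}$, the representation by the one $R_\iota$ on $B(\widetilde\cH^\leq_{\alpha,\iota})$ constructed in Proposition~\ref{reptildeH}, and the weight operator $W(N_\iota(s))$ appearing in the time evolution by $W(h(N_\iota(s)))$, in accordance with Proposition~\ref{tevolprop2}. First I would \emph{define} $H_\iota$ to be the diagonal (densely defined, self-adjoint) operator prescribed by the displayed formula, extended to the whole orthonormal basis $\{\epsilon_{\alpha(n),(\lambda_r^{k_r})_r}\}$ of $\widetilde\cH^\leq_{\alpha,\iota}$ by $H_\iota\,\epsilon_{\alpha(n),(\lambda_r^{k_r})_r} = \bigl(-\alpha(n)\sum_r k_r\log h(\lambda_r) + \log g(n)\bigr)\epsilon_{\alpha(n),(\lambda_r^{k_r})_r}$, the one-variable expression in the statement being the restriction to a basis vector supported on a single progression. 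Writing $\eta := \prod_r \lambda_r^{k_r}\in N_\iota(\Sigma)$ and using that $h$ is a group homomorphism, one has $\prod_r h(\lambda_r)^{k_r}=h(\eta)$, whence $e^{-itH_\iota}\epsilon_{\alpha(n),(\lambda_r^{k_r})_r} = h(\eta)^{it\alpha(n)}\,g(n)^{-it}\,\epsilon_{\alpha(n),(\lambda_r^{k_r})_r}$, and similarly for $e^{itH_\iota}$ with the exponents negated.

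Next I would verify the covariance identity $R_\iota(\sigma_t(a))= e^{itH_\iota}R_\iota(a)e^{-itH_\iota}$ on each generator $a$ of $\cA'_{(\Sigma,\sigma_n)}$, applied to an arbitrary basis vector. For $a=s\in\Sigma$: by Proposition~\ref{tevolprop2} and Proposition~\ref{reptildeH} the left side is $h(N_\iota(s))^{-it\alpha(n)}\iota(s)^{\alpha(n)}\epsilon_{\alpha(n),(\lambda_r^{k_r+a_r(s)})_r}$; on the right side $e^{-itH_\iota}$ contributes $h(\eta)^{it\alpha(n)}g(n)^{-it}$, then $R_\iota(s)$ shifts $k_r\mapsto k_r+a_r(s)$ (so $\eta\mapsto \eta\,N_\iota(s)$) and multiplies by $\iota(s)^{\alpha(n)}$, and finally $e^{itH_\iota}$ contributes $h(\eta\,N_\iota(s))^{-it\alpha(n)}g(n)^{it}$; the $h(\eta)^{\pm}$ and $g(n)^{\pm}$ factors cancel, leaving exactly $h(N_\iota(s))^{-it\alpha(n)}\iota(s)^{\alpha(n)}$. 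For $a=\mu_m$: $R_\iota(\mu_m)$ is nonzero only when $\alpha(m)\mid k_r$ for every $r$, in which case $\eta\mapsto\eta^{1/\alpha(m)}$ and $\alpha(n)\mapsto\alpha(nm)$; using $\alpha(nm)=\alpha(n)\alpha(m)$ and $g(nm)=g(n)g(m)$ the conjugation by $e^{\pm itH_\iota}$ turns $R_\iota(\mu_m)$ into $g(m)^{it}R_\iota(\mu_m)=R_\iota(\sigma_t(\mu_m))$. For $a=W(\lambda)$: both $R_\iota(W(\lambda))$ and $e^{itH_\iota}$ are diagonal in the chosen basis, hence commute, and the right side is just $R_\iota(W(\lambda))=R_\iota(\sigma_t(W(\lambda)))$.

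Since these generators generate $\cA'_{(\Sigma,\sigma_n)}$, and both $a\mapsto R_\iota(\sigma_t(a))$ and $a\mapsto e^{itH_\iota}R_\iota(a)e^{-itH_\iota}$ are compatible with the defining relations (already checked when establishing Proposition~\ref{tevolprop2}), the identity propagates to all of $\cA'_{(\Sigma,\sigma_n)}$, so $H_\iota$ is indeed a Hamiltonian implementing $\sigma_t$ in this representation. The only delicate point — and the one I would flag as the main bookkeeping obstacle — is the tensor-product indexing over the infinitely many progressions $\lambda_r$: one must keep track of the fact that $R_\iota(\mu_m)$ divides \emph{every} exponent $k_r$ by $\alpha(m)$ simultaneously, so that the passage $\eta\mapsto\eta^{1/\alpha(m)}$ is consistent, and that the single-progression eigenvalue in the statement is genuinely the restriction of the multi-index eigenvalue $-\alpha(n)\sum_r k_r\log h(\lambda_r)+\log g(n)$. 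Once this is set up, the cancellations are identical to those in the proof of Proposition~\ref{prop:Hamiltonian}.
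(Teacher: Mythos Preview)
Your proposal is correct and follows exactly the approach the paper itself takes: the paper's proof is a one-liner stating that the argument is analogous to that of Proposition~\ref{prop:Hamiltonian}, with $\cH_{\alpha,\iota}^{\leq}$ and $\epsilon_{\alpha(n),\eta}$ replaced by $\widetilde\cH_{\alpha,\iota}^{\leq}$ and its basis $\epsilon_{\alpha(n),\lambda_r^{k_r}}$. You have simply written out those analogous computations in full, correctly tracking the multi-index tensor structure and the needed semi-group identities $\alpha(nm)=\alpha(n)\alpha(m)$ and $g(nm)=g(n)g(m)$.
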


\begin{proof}
The proof is analogous to the one of Proposition \ref{prop:Hamiltonian}; 
instead of $\cH_{\alpha,\iota}^{\leq}$ and $\epsilon_{\alpha(n), \eta}$, we use the Hilbert space 
$\widetilde{\cH}_{\alpha,\iota}^{\leq}$ and its standard orthonormal basis 
$\epsilon_{\alpha(n),\lambda_r^{k_r}}$; see Notation \ref{tildeHalphaiota}.
\end{proof}
\smallskip

All the above constructions work with an arbitrary homomorphism $h$ . Let us now focus on the following example:
\begin{example}\label{ex:primes}
Recall that $\{\lambda_r\}_{r \geq 1}$ is a set of generators of $N_\iota(\Sigma)$. Let $h: N_\iota(\Sigma) \to \bbR^\times_+$ be the homomorphism defined as $h(\lambda_r):= p_r$, where $\{ p_r \}_{r\geq 1}$ stands for an enumeration of the
prime numbers (for example the natural one in increasing order).
\end{example}

Recall that $\beta_0$ denotes the exponent of convergence of the series $\sum_{\alpha(n)\geq 1} g(n)^{-\beta}$.
\begin{theorem}\label{thm:Zgeneral}
When $h$ is as in Example \ref{ex:primes}, the partition function $Z_\iota(\beta)$ is computed by the following series:
\begin{equation}\label{eq:series-new}
\sum_{\alpha(n)\geq 1} g(n)^{-\beta} \zeta(\beta \alpha(n))\,.
\end{equation}
Moreover, for every $\beta> \max\{\beta_0, 3/2\}$, the series \eqref{eq:series-new} is convergent. Consequently, $e^{-\beta H_\iota}$ (where $H_\iota$ is as in Proposition \eqref{rHam}) is a trace class operator.
\end{theorem}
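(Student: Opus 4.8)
The plan is to compute the trace of $e^{-\beta H_\iota}$ directly in the orthonormal basis $\{\epsilon_{\alpha(n),\lambda_r^{k_r}}\}$ of $\widetilde{\cH}^\leq_{\alpha,\iota}$, using the explicit diagonal form of the Hamiltonian from Proposition \ref{rHam}, and then to reorganize the resulting multiple series so that the $\zeta$-factors emerge. First I would write
$$Z_\iota(\beta)=\mathrm{Tr}(e^{-\beta H_\iota})=\sum_{\alpha(n)\geq 1}\,\sum_{(k_r)_{r\geq 1}} e^{-\beta(-\alpha(n)\sum_r k_r\log(h(\lambda_r))+\log g(n))}\,,$$
where each $k_r$ ranges over $\bbZ_{\leq 0}$ and only finitely many are nonzero (this is how $\widetilde{\cH}^\leq_{\alpha,\iota}$ is defined in Notation \ref{tildeHalphaiota}). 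Substituting $h(\lambda_r)=p_r$ from Example \ref{ex:primes} and setting $m=\prod_r p_r^{-k_r}\in\bbN$ (a bijection between the finitely-supported tuples $(k_r)\in\bbZ_{\leq 0}^{(\infty)}$ and positive integers, by unique factorization), the inner sum becomes $\sum_{m\geq 1} m^{-\beta\alpha(n)}=\zeta(\beta\alpha(n))$. Factoring out $g(n)^{-\beta}$ gives exactly the series \eqref{eq:series-new}.

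Next I would address convergence. For $\beta>3/2$ and $\alpha(n)\geq 1$ we have $\beta\alpha(n)\geq\beta>3/2>1$, so $\zeta(\beta\alpha(n))$ is finite; moreover $\zeta$ is decreasing on $(1,\infty)$, hence $\zeta(\beta\alpha(n))\leq\zeta(\beta)$ when $\alpha(n)\geq 1$, and in fact $\zeta(\beta\alpha(n))\leq\zeta(3/2\cdot\alpha(n))$, which for $\alpha(n)\geq 2$ satisfies $\zeta(3\alpha(n)/2)\leq\zeta(3)$, so the factor is uniformly bounded. Therefore $\sum_{\alpha(n)\geq 1} g(n)^{-\beta}\zeta(\beta\alpha(n))\leq \zeta(\beta)\sum_{\alpha(n)\geq 1} g(n)^{-\beta}$, and the latter converges precisely for $\beta>\beta_0$. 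Combining the two constraints yields convergence for $\beta>\max\{\beta_0,3/2\}$, and since $H_\iota$ is diagonal with the exponentiated eigenvalues summable, $e^{-\beta H_\iota}$ is trace class in that range.

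One technical point to treat carefully — and the step I expect to be the main obstacle — is the legitimacy of the rearrangement: passing from the double sum indexed by $(\alpha(n),(k_r)_r)$ to the product $g(n)^{-\beta}\zeta(\beta\alpha(n))$ requires that all terms be nonnegative (which they are, being of the form $e^{-\beta(\text{real})}$) so that Tonelli's theorem applies and the order of summation is immaterial; I would make this explicit. A second subtlety is that the identification $(k_r)\leftrightarrow m=\prod_r p_r^{-k_r}$ must genuinely be a bijection onto $\bbN$, which is where the choice $h(\lambda_r)=p_r$ running over \emph{all} primes (Example \ref{ex:primes}) is used — if $h$ hit only some primes, one would instead get a partial Euler product, not the full $\zeta$. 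Finally, I would remark that the bound $\beta>3/2$ is exactly the half-plane of absolute convergence needed to make $\zeta(\beta\alpha(n))$ meaningful for the smallest value $\alpha(n)=1$ (one could sharpen to $\beta>1$ since $\alpha$ is nontrivial multiplicative so $\alpha(n)=1$ only finitely... but $\alpha(n)\geq 1$ always, so $\beta>1$ suffices for finiteness of each $\zeta$; the $3/2$ presumably leaves room for the crude comparison above), and the stated range $\beta>\max\{\beta_0,3/2\}$ is what the proof delivers.
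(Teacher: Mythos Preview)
Your proposal is correct and follows essentially the same approach as the paper: both compute the trace in the diagonal basis, factor out $g(n)^{-\beta}$, and identify the remaining sum over the $(k_r)$ with $\zeta(\beta\alpha(n))$. The only cosmetic difference is that the paper first writes the inner sum as a product of geometric series $\prod_r (1-p_r^{-\alpha(n)\beta})^{-1}$ and then invokes the Euler product, whereas you use the bijection $(k_r)\leftrightarrow m=\prod_r p_r^{-k_r}$ with $\bbN$ directly; these are two phrasings of the same identity. Your convergence argument via the monotonicity bound $\zeta(\beta\alpha(n))\leq \zeta(\beta)$ is in fact cleaner than the paper's (which attempts an integral estimate to bound $\zeta(s)$ for $s>3/2$), and as you suspected it actually yields convergence for $\beta>\max\{\beta_0,1\}$; the paper's $3/2$ is an artifact of its particular estimate rather than an intrinsic threshold.
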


\proof Since the operator $H_\iota$ (hence also $e^{-\beta H_\iota}$) is diagonal on the orthonormal
basis $\epsilon_{\alpha(n), \lambda_r^{k_r}}$ of $\widetilde{\cH}_{\alpha,\iota}^{\leq}$, we have the following equality
$$ Z(\beta):={\rm Tr}(e^{-\beta H_\iota}) = \sum_{\alpha(n)\geq 1} g(n)^{-\beta} 
\prod_r \sum_{k_r \geq 0} h(\lambda_r)^{-k_r \alpha(n) \beta}, $$
where the negative sign in the exponent of $h(\lambda_r)$ comes from the fact that we are writing the sum over $k_r \geq 0$
instead of $k_r\leq 0$. For each fixed $r\in \N$, 
we can compute the following series
$$ \sum_{k_r \geq 0} \lambda_r^{-k_r \alpha(n) \beta} = \frac{1}{1- h(\lambda_r)^{-\alpha(n) \beta}} $$
as the sum of a geometric series. For the particular choice of $h(\lambda_r)=p_r$, this is equal to
$(1-p_r^{-\alpha(n) \beta})^{-1}$. Thus, we can rewrite the trace as
\begin{equation}\label{eq:trace+sum}
\sum_{\alpha(n)\geq 1} g(n)^{-\beta} \prod_p (1-p^{-\alpha(n) \beta})^{-1},
\end{equation}
where the product is over the prime numbers. The Euler product converges to the Riemann zeta function,
$\prod_p (1-p^{-s})^{-1}=\zeta(s)$. Therefore, we can rewrite \eqref{eq:trace+sum} as
\begin{equation}\label{eq:series-key}
\sum_{\alpha(n)\geq 1} g(n)^{-\beta} \zeta(\beta \alpha(n))\,.
\end{equation}
In order to understand the convergence of this series, we need to estimate the behavior of the values
$\zeta(\beta \alpha(n))$ of the Riemann zeta function. When $s$ is real and $2(s-1)>1$, we can use the estimate
$$ \zeta(s) = \sum_{n\geq 1} n^{-s} = 1 + \sum_{n\geq 2} n^{-s} \leq 1 + \int_2^\infty \frac{dx}{x^s} =1 - \frac{1}{2 (s-1)} \leq 1. $$
Hence, for $2(\beta \alpha(n)-1) \geq 1$, the terms $\zeta(\beta \alpha(n))$ are all bounded above by
$\zeta(\beta \alpha(n))\leq 1$. This gives rise to the following inequality: 
$$ \sum_{\alpha(n)\geq 1} g(n)^{-\beta} \zeta(\beta \alpha(n)) \leq \sum_{\alpha(n)\geq 1} g(n)^{-\beta}\,. $$
Since $\alpha(n)\geq 1$ for all $n\in \N$, the condition $2(\beta \alpha(n)-1) \geq 1$ is satisfied for all $n\in \N$
if $2(\beta -1) \geq 1$ is satisfied, that is, if $\beta > 3/2$. We hence conclude that the above series \eqref{eq:series-key} converges for every $\beta > \max\{ \beta_0, 3/2\}$. 
\endproof
\begin{remark}
Unlike Proposition \ref{prop:partition}, we are only using an estimate from above. Therefore, we can conclude only that $e^{-\beta H_\iota}$ is trace class for $\beta > \max\{ \beta_0, 3/2\}$. 
\end{remark}

\subsection{Symmetries}\label{SymSec}
Given an embedding $\iota \in \mathrm{Emb}(\Sigma,\overline{\bbQ}^\times)$, consider the (sub)group 
\begin{equation*}\label{Giota}
G_\iota:=\{ \gamma \in G \,|\, N_\iota(\gamma(s))=N_\iota(s), \ \forall s\in \Sigma \}\subset G.
\end{equation*}
\begin{notation}\label{not:Gzero}
Let $G_0$ be the intersection of the subgroups $G_\iota$ with $\iota \in \mathrm{Emb}_0(\Sigma, \overline{\bbQ}^\times)$.
\end{notation}

\begin{proposition}\label{prop:G-action}
The following assignments
\begin{eqnarray*}
s \mapsto \gamma(s) & \mu_n \mapsto \mu_n & W(\lambda) \mapsto W(\lambda)
\end{eqnarray*}
define an action of $G$ on the $k$-algebra $\cB_{(\Sigma, \sigma_n)}$ and  of $G_\iota$ on
the $k$-algebra $\cB'_{(\Sigma, \sigma_n,\iota)}$.
\end{proposition}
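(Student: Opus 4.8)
The plan is to check that the given assignment on generators respects all the defining relations, hence extends to a $k$-algebra endomorphism $\gamma_\ast$ of $\cB_{(\Sigma,\sigma_n)}$ (where we of course also set $\gamma_\ast(\mu_n^\ast):=\mu_n^\ast$, so that $\gamma_\ast$ is a $\ast$-map), then to observe that $\gamma\mapsto\gamma_\ast$ is a group homomorphism with values in automorphisms, and finally to see that $G_\iota$ preserves the subalgebra $\cB'_{(\Sigma,\sigma_n,\iota)}$. Most of the relations of Definitions \ref{Akalg} and \ref{tildealg} present no difficulty: since $\gamma_\ast$ fixes $\mu_n$, $\mu_n^\ast$ and $W(\lambda)$ and restricts on $k[\Sigma]\subset\cB_{(\Sigma,\sigma_n)}$ to the Galois automorphism of $\Sigma$ (which already preserves the group relations of $\Sigma$), the relations $\mu_n\mu_m=\mu_{nm}$, the partial-commutation $\mu_n\mu_m^\ast=\mu_m^\ast\mu_n$ for $(\alpha(m),\alpha(n))=1$, the two idempotent relations on $\mu_n\mu_n^\ast$ and $\mu_n^\ast\mu_n$, and (for $\cB'$) the multiplicativity of $W$, together with $W(\lambda)s=sW(\lambda)$, $W(\lambda)\mu_n=\mu_n W(\lambda)^{\alpha(n)}$ and $\mu_n^\ast W(\lambda)=W(\lambda)^{\alpha(n)}\mu_n^\ast$, all go through verbatim.

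The only relation requiring an argument is \eqref{catBCrho}. Applying $\gamma_\ast$ to its left-hand side gives $\mu_n\,\gamma(s)\,\mu_n^\ast$. Here one invokes the two facts recorded in the Notation preceding Definition \ref{def:BC-sys}: $\Sigma_n=\mathrm{im}(\sigma_n)$ is stable under the $G$-action, and $\rho_n$ is $G$-equivariant. Thus $\gamma(s)\in\Sigma_n$ if and only if $s\in\Sigma_n$, and $\rho_n(\gamma(s))=\{\gamma(s')\mid s'\in\rho_n(s)\}$ with $\gamma$ acting as a bijection on preimages, so that $\frac1{\alpha(n)}\sum_{s'\in\rho_n(\gamma(s))}s'=\frac1{\alpha(n)}\sum_{s'\in\rho_n(s)}\gamma(s')=\gamma_\ast\!\bigl(\frac1{\alpha(n)}\sum_{s'\in\rho_n(s)}s'\bigr)$, and the two cases (``$s\in\Sigma_n$'' vs.\ ``otherwise'') are interchanged consistently on both sides. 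Hence \eqref{catBCrho} is preserved, and $\gamma_\ast$ is a well-defined $k$-algebra endomorphism of $\cB_{(\Sigma,\sigma_n)}$ (resp.\ of $\cB'_{(\Sigma,\sigma_n)}$).

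Compatibility with composition is then formal: on the generators $s\in\Sigma$ it holds because $G\to\mathrm{Aut}(\Sigma)$ is a group action, and on $\mu_n,\mu_n^\ast,W(\lambda)$ it is trivial; since these generate the algebra and the maps are algebra homomorphisms, $(\gamma\delta)_\ast=\gamma_\ast\circ\delta_\ast$ and $\mathrm{id}_\ast=\mathrm{id}$. In particular $\gamma_\ast$ is bijective with inverse $(\gamma^{-1})_\ast$, so $\gamma\mapsto\gamma_\ast$ lands in $\mathrm{Aut}(\cB_{(\Sigma,\sigma_n)})$. For the last assertion, note first that $G_\iota$ is a subgroup of $G$: if $\gamma,\delta\in G_\iota$ then $N_\iota(\gamma\delta(s))=N_\iota(\delta(s))=N_\iota(s)$, and applying $\gamma\in G_\iota$ to $\gamma^{-1}(s)$ gives $N_\iota(s)=N_\iota(\gamma^{-1}(s))$. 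By Definition \ref{def:algebra+embedding}, $\cB'_{(\Sigma,\sigma_n,\iota)}$ is generated by the $s\in\Sigma$ with $N_\iota(s)\le1$, the $W(\lambda)$ with $|\lambda|\le1$, and the $\mu_n,\mu_n^\ast$; for $\gamma\in G_\iota$ we have $N_\iota(\gamma(s))=N_\iota(s)\le1$, so $\gamma_\ast$ sends each distinguished generator to a distinguished generator (the last two types being fixed), whence $\gamma_\ast$ maps $\cB'_{(\Sigma,\sigma_n,\iota)}$ into itself; since $\gamma^{-1}\in G_\iota$, it restricts to an automorphism. I expect no genuine obstacle here: the entire content is the bookkeeping around relation \eqref{catBCrho}, which is handled by the $G$-equivariance of $\rho_n$ and $G$-stability of $\Sigma_n$, everything else being purely formal.
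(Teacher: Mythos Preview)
Your proposal is correct and follows exactly the same approach as the paper's proof, which is essentially a one-line observation that the assignments respect the generators-and-relations presentations of Definitions \ref{Akalg} and \ref{def:algebra+embedding}, with the only point singled out being that one must restrict to $G_\iota$ so that $s\mapsto\gamma(s)$ preserves $\{s\in\Sigma\mid N_\iota(s)\le 1\}$. You have simply written out in full the verifications (in particular for relation \eqref{catBCrho} via the $G$-equivariance of $\rho_n$ and $G$-stability of $\Sigma_n$) that the paper leaves implicit.
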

\begin{proof}
It follows automatically from the definition of $\cB_{(\Sigma, \sigma_n)}$ and $\cB'_{(\Sigma, \sigma_n,\iota)}$ in terms of generators and relations; see Definitions \ref{Akalg} and \ref{def:algebra+embedding}. Note that in the case of the $k$-algebra $\cB'_{(\Sigma, \sigma_n,\iota)}$ we need to restrict to the subgroup $G_\iota \subset G$ to ensure that
$s \mapsto \gamma(s)$ preserves the subset $\{ s\in \Sigma\,|\, N_\iota(s) \leq 1 \}$.
\end{proof}

\begin{proposition}\label{prop:G-action2}
The actions of Proposition \ref{prop:G-action} extend to actions
\begin{eqnarray}\label{eq:G-actions-new}
\tau:G \to \mathrm{Aut}(\cA_{(\Sigma, \sigma_n)}) &\mathrm{and} & 
\tau:G_\iota \to \mathrm{Aut}(\cA'_{(\Sigma, \sigma_n,\iota)})\,.
\end{eqnarray}
These actions are compatible with the time evolution in the sense that 
$\sigma_t \circ \tau_\gamma = \tau_\gamma \circ \sigma_t$ 
for every $\gamma \in G$ (or $\gamma \in G_\iota$) and $t \in \bbR$.\end{proposition}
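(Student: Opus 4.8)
The plan is to promote the algebraic statement of Proposition~\ref{prop:G-action} to the level of the $C^*$-completions and then to verify the commutation $\sigma_t\circ\tau_\gamma=\tau_\gamma\circ\sigma_t$ directly on generators. By Proposition~\ref{prop:G-action} the assignments $s\mapsto\gamma(s)$, $\mu_n\mapsto\mu_n$, $\mu_n^*\mapsto\mu_n^*$, $W(\lambda)\mapsto W(\lambda)$ already define a $*$-automorphism of $\cB_{(\Sigma,\sigma_n,\bbC)}$ for $\gamma\in G$, and of $\cB'_{(\Sigma,\sigma_n,\iota,\bbC)}$ for $\gamma\in G_\iota$ (it is compatible with $*$ because $\gamma$ is a group automorphism of $\Sigma$). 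Since the $C^*$-norm of Definition~\ref{Cstardef} is a supremum of operator norms $\|R_{\iota'}(\,\cdot\,)\|$ over embeddings $\iota'\in\mathrm{Emb}(\Sigma,\overline{\bbQ}^\times)$, it suffices to show $\|R_{\iota'}(\tau_\gamma(a))\|=\|R_{\iota'}(a)\|$ for every such $\iota'$ and every $a$; once this is known, $\tau_\gamma$ is a $*$-isometry of the dense subalgebra, hence extends uniquely to a $*$-automorphism of $\cA_{(\Sigma,\sigma_n)}$ (resp. $\cA'_{(\Sigma,\sigma_n,\iota)}$), and the relations $\tau_{\gamma_1\gamma_2}=\tau_{\gamma_1}\tau_{\gamma_2}$, $\tau_e=\id$ pass to the completion by density.

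I would prove the norm identity by reading off the explicit formulas of Proposition~\ref{Hreps} (and Proposition~\ref{reptildeH} in the infinite-progression case): $\tau_\gamma$ only modifies the generators $s\in\Sigma$, replacing $\iota'(s)$ by $\iota'(\gamma(s))=\gamma(\iota'(s))$ and $N_{\iota'}(s)$ by $N_{\iota'}(\gamma(s))$, and leaves $R_{\iota'}(\mu_n)$, $R_{\iota'}(W(\lambda))$ untouched. Hence $R_{\iota'}\circ\tau_\gamma$ is literally the representation $R_{\iota'\circ\gamma}$ attached to the composite injective homomorphism $\iota'\circ\gamma\colon\Sigma\to\overline{\bbQ}^\times$, and it lives on the \emph{same} Hilbert space, because $\gamma$ is a bijection of $\Sigma$ and so $N_{\iota'\circ\gamma}(\Sigma)=N_{\iota'}(\Sigma)$. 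It then remains to check that the family of representations indexing the supremum is permuted by $R_{\iota'}\rightsquigarrow R_{\iota'\circ\gamma}$, i.e. that $\iota'\circ\gamma$ again lies in that family; this is where one uses the hypotheses of Notation~\ref{not:embedding} on the embeddings and the goodness hypothesis of Definition~\ref{goodBC}, and it is also the reason one must restrict to the subgroup $G_\iota$ for $\cB'_{(\Sigma,\sigma_n,\iota)}$: the condition $N_\iota\circ\gamma=N_\iota$ is exactly what guarantees both that $s\mapsto\gamma(s)$ preserves $\{s\in\Sigma : N_\iota(s)\le 1\}$ (as already used in Proposition~\ref{prop:G-action}) and that the unitary implementing $R_\iota\cong R_{\iota\circ\gamma}$ respects the truncation $\cH^{\le}_{\alpha,\iota}$ (resp. $\widetilde\cH^{\le}_{\alpha,\iota}$). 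I expect this bookkeeping — identifying $R_{\iota'}\circ\tau_\gamma$ inside the defining family of representations and verifying that the truncated subspaces are preserved — to be the main technical point.

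For the compatibility with the time evolution I would simply check $\sigma_t(\tau_\gamma(x))=\tau_\gamma(\sigma_t(x))$ on the generators $x\in\{s,\mu_n,W(\lambda)\}$, which forces the equality on all of the algebra since both sides are algebra automorphisms, and then let it pass to the completion. On $\mu_n$ both sides give $g(n)^{it}\mu_n$ and on $W(\lambda)$ both give $W(\lambda)$, because $\tau_\gamma$ fixes these generators. On $s\in\Sigma$ one has $\sigma_t(\tau_\gamma(s))=W(N_\iota(\gamma(s)))^{-it}\gamma(s)$ and $\tau_\gamma(\sigma_t(s))=W(N_\iota(s))^{-it}\gamma(s)$ (again because $\tau_\gamma$ fixes the weight operators), and these coincide precisely because $N_\iota(\gamma(s))=N_\iota(s)$ for $\gamma\in G_\iota$; for the unprimed algebra $\cA_{(\Sigma,\sigma_n)}$, where a time evolution is available only when $N_\iota(\Sigma)=\{1\}$, the weight operators are absent and $\sigma_t(s)=s$, so the identity is immediate. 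This shows that the actions \eqref{eq:G-actions-new} indeed furnish symmetries of the QSM-system in the sense of Definition~\ref{QSMdef}(iv).
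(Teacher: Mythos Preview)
Your check of the commutation $\sigma_t\circ\tau_\gamma=\tau_\gamma\circ\sigma_t$ on the generators $s,\mu_n,W(\lambda)$ is exactly what the paper does, line for line; nothing to add there.

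For the extension to the $C^*$-completions you are more explicit than the paper (which simply asserts ``the $G$-action \ldots\ extends \ldots\ to the $C^*$-algebras''), but your argument has a gap. You correctly observe $R_{\iota'}\circ\tau_\gamma=R_{\iota'\circ\gamma}$ on the same Hilbert space, and then want to conclude norm preservation by showing that $\iota'\mapsto\iota'\circ\gamma$ permutes the indexing set $\mathrm{Emb}(\Sigma,\overline\bbQ^\times)$. But this set consists of \emph{$G$-equivariant} embeddings, and $\iota'\circ\gamma$ is again $G$-equivariant only when the image of $\gamma$ in $G_\Sigma$ commutes with every element of $G_\Sigma$, i.e.\ when $\gamma\in\tilde Z(G_\Sigma)$; this is precisely the content of Definition~\ref{def:GSigma} and Lemma~\ref{GRiota}. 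Neither Notation~\ref{not:embedding} nor Definition~\ref{goodBC} gives you this for general $\gamma\in G$. So your permutation argument, as written, only yields the $\tilde Z(G_\Sigma)$-action, not the full $G$-action claimed in \eqref{eq:G-actions-new}. (For the primed algebra and $\gamma\in G_\iota$ the same issue arises: $G_\iota$ controls $N_\iota$, not $G$-equivariance of $\iota'\circ\gamma$.)

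To be fair, the paper does not address this point either. One clean way to close the gap, at least for the unprimed algebra in the regime $N_\iota(\Sigma)=\{1\}$ where it is actually used (Remark~\ref{Aprimerem}), is to note that $\cA_{(\Sigma,\sigma_n)}$ is then the semi-group crossed product $C^*(\Sigma)\rtimes\bbN$ (cf.\ Lemmas~\ref{W0N1}--\ref{W0algcross}); any automorphism of $\Sigma$ commuting with the $\sigma_n$'s induces an automorphism of the full group $C^*$-algebra $C^*(\Sigma)$ and hence of the crossed product, without ever needing to track the family $\{R_{\iota'}\}$. Alternatively, enlarge the family defining the norm to all injective homomorphisms $\Sigma\hookrightarrow\overline\bbQ^\times$ (the formulas of Proposition~\ref{Hreps} make no use of $G$-equivariance), check this does not change the $C^*$-norm, and then your permutation argument goes through for the larger family.
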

\begin{proof}
Since $(\Sigma, \sigma_n)$ is a Bost-Connes datum, the group $G$ acts continuously on $\Sigma$. Therefore, the $G$-action of Proposition \ref{prop:G-action} extends first to the $\bbC$-algebra $\cB_{(\Sigma, \sigma_n,\C)}$ and
then to the $C^*$-algebras $\cA_{(\Sigma, \sigma_n)}$. Similarly, the $G_\iota$-action extends first to 
$\cB'_{(\Sigma, \sigma_n,\iota,\C)}$ and then to $\cA'_{(\Sigma, \sigma_n,\iota)}$. 
We obtain in this way the above actions \eqref{eq:G-actions-new}. The compatibility with the time evolution is given by the following equalities:
$$ \sigma_t(\tau_\gamma(\mu_n))=  \sigma_t(\mu_n) = g(n)^{it}\mu_n= g(n)^{it} \tau_\gamma(\mu_n) = \tau_\gamma (\sigma_t (\mu_n)) \,.$$
Similarly, we have $\tau_\gamma( \sigma_t(W(\lambda)) = \sigma_t (\tau_\gamma(W(\lambda)))$. Note also that when $\gamma \in G_\iota$, we have moreover the following equalities: 
$$ \sigma_t(\tau_\gamma(s))= W(N_\iota(\tau_\gamma(s)))^{-it}\tau_\gamma(s) = \tau_\gamma(W(N_\iota(s))^{-it})\tau_\gamma(s) = \tau_\gamma(\sigma_t(s))\,.$$
This achieves the proof. 
\end{proof}

\subsection{Gibbs states}\label{sec:Gibbs}
Recall from \cite[Vol.~I \S2.3.3]{BR} that a {\em state} on a unital $C^*$-algebra $\cA$ is a 
continuous linear functional $\varphi: \cA \to \C$
that is normalized, \ie $\varphi(1)=1$, and satisfied the positivity condition, \ie $\varphi(a^* a)\geq 0$ for all $a\in \cA$. An {\em equilibrium state} of a quantum statistical mechanical system $(\cA,\sigma_t)$ is a state $\varphi$ that is invariant
with respect to the time evolution, $\varphi(\sigma_t(a))=\varphi(a)$, for all $t\in \R$ and $a\in \cA$; consult \cite[Vol.~II \S5.3]{BR} for further details.
If the QSM-system $(\cA,\sigma_t)$ has a representation
$R_\iota$ on a Hilbert space $\cH_\iota$ with a Hamiltonian $H_\iota$ for which the partition function 
$Z(\beta)={\rm Tr}(e^{-\beta H_\iota})$ is convergent for all $\beta > \beta_\iota$, then we can
define a special class of equilibrium states, namely the {\em Gibbs states} at 
inverse temperature $\beta$:
\begin{equation}\label{Gibbsphi}
\varphi_{\beta,\iota}(a):= \frac{{\rm Tr}(R_\iota(a)\, e^{-\beta H_\iota})}{{\rm Tr}(e^{-\beta H_\iota})}. 
\end{equation}

\begin{definition}\label{def:GSigma}
Let $\Phi: G \to {\rm Aut}(\Sigma)$ be the $G$-action on $\Sigma$ and  
$G_\Sigma$ the quotient $G/{\rm Ker}(\Phi)$. We denote by $\pi_\Sigma: G \to G_\Sigma$ the quotient map, by $Z(G_\Sigma)$ the center of $G_\Sigma$, and by $\tilde Z(G_\Sigma)$ the preimage $\pi_\Sigma^{-1}( Z(G_\Sigma)) \subset G$.
\end{definition}

\begin{example}\label{ex:GSigma}
In the case where $\Sigma=\Q/\Z$, we have $G_\Sigma=G^{\mathrm{ab}}=Z(G_\Sigma)$. In contrast, when $\Sigma=\overline{\Q}^\times$, we have $G_\Sigma=G$ and (by the 
Neukirsch--Uchida theorem) $Z(G)=\{ 1 \}$.
\end{example}
Recall from Notation \ref{not:Gzero} the definition of the group $G_0$. Let $G_{\Sigma,0}$ be its image under the quotient map $\pi_\Sigma: G\to G_\Sigma$ and $\widetilde{G}_{\Sigma, 0}$ the pre-image $\pi_\Sigma^{-1}(Z(G_\Sigma)\cap G_{\Sigma,0})$.

\begin{lemma}\label{GRiota}
Given a good pair $((\Sigma,\sigma_n), \mathrm{Emb}_0(\Sigma, \overline{\bbQ}^\times))$ in the sense of Definition \ref{goodBC}, the $G_0$-action $\tau: G_0 \to {\rm Aut}(\cB'_{(\Sigma, \sigma_n)})$ induces a $\widetilde{G}_{\Sigma, 0}$-action $R_\iota \mapsto R_{\iota\circ \gamma}$ on the set $\{ R_\iota \}$ of representations $R_\iota: \cB'_{(\Sigma, \sigma_n)}
\to B(\cH_\alpha^\leq)$.
\end{lemma}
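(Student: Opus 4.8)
The plan is to realize the action by precomposing the chosen embeddings with the automorphism of $\Sigma$ induced by $\gamma$, and then to check that this matches precomposition of $R_\iota$ with $\tau_\gamma$. First I would record a structural remark about $\widetilde G_{\Sigma,0}$: since the $G$-action $\Phi\colon G\to\mathrm{Aut}(\Sigma)$ factors through $\pi_\Sigma$ via an \emph{injection} $G_\Sigma\hookrightarrow\mathrm{Aut}(\Sigma)$, the subgroup $\mathrm{Ker}(\pi_\Sigma)$ acts trivially on $\Sigma$; hence $\mathrm{Ker}(\pi_\Sigma)\subseteq G_\iota$ for every $\iota$, so $\mathrm{Ker}(\pi_\Sigma)\subseteq G_0$, and therefore $\widetilde G_{\Sigma,0}=\{\gamma\in G_0\,:\,\pi_\Sigma(\gamma)\in Z(G_\Sigma)\}$.

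Next I would check that for $\gamma\in\widetilde G_{\Sigma,0}$ and $\iota\in\mathrm{Emb}_0(\Sigma,\overline{\bbQ}^\times)$ the composite $\iota\circ\gamma$ (with $\gamma$ denoting the induced automorphism of $\Sigma$) again lies in the good family. It is an injective group homomorphism with image $\iota(\Sigma)$, so it still contains the roots of unity. For $G$-equivariance one uses that for $\delta\in G$ the elements $\pi_\Sigma(\gamma)$ and $\pi_\Sigma(\delta)$ commute in $G_\Sigma$ (as $\pi_\Sigma(\gamma)$ is central), so $\gamma$ and $\delta$ commute as automorphisms of $\Sigma$, whence $(\iota\circ\gamma)(\delta\cdot s)=\iota(\delta\cdot\gamma(s))=\delta\cdot(\iota\circ\gamma)(s)$ by the $G$-equivariance of $\iota$. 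Since $\gamma\in G_0\subseteq G_\iota$ one has $N_{\iota\circ\gamma}(s)=N_\iota(\gamma(s))=N_\iota(s)$ for all $s$, so $N_{\iota\circ\gamma}(\Sigma)=N_\iota(\Sigma)$; consequently the subalgebra $\cB'_{(\Sigma,\sigma_n,\iota\circ\gamma)}$ of Definition~\ref{def:algebra+embedding} equals $\cB'_{(\Sigma,\sigma_n,\iota)}$ and the Hilbert space $\cH^\leq_{\alpha,\iota\circ\gamma}$ equals $\cH^\leq_{\alpha,\iota}$. Thus $R_{\iota\circ\gamma}$ is a representation on the same space (and one may harmlessly take $\mathrm{Emb}_0(\Sigma,\overline{\bbQ}^\times)$ stable under $\iota\mapsto\iota\circ\gamma$, since all the attached data coincide).

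Then I would compare $R_{\iota\circ\gamma}$ with $R_\iota\circ\tau_\gamma$, the latter making sense because $\gamma\in G_\iota$ makes $\tau_\gamma$ an automorphism of $\cB'_{(\Sigma,\sigma_n,\iota)}$ by Proposition~\ref{prop:G-action}. Using the explicit formulas of Proposition~\ref{Hreps}: on $s\in\Sigma$ one has $R_{\iota\circ\gamma}(s)\epsilon_{\alpha(n),\eta}=\iota(\gamma(s))^{\alpha(n)}\epsilon_{\alpha(n),N_\iota(\gamma(s))\eta}=R_\iota(\gamma(s))\epsilon_{\alpha(n),\eta}=R_\iota(\tau_\gamma(s))\epsilon_{\alpha(n),\eta}$, while on $\mu_n,\mu_n^\ast,W(\lambda)$ both sides agree because $\tau_\gamma$ fixes these generators and their representation formulas do not involve $\iota$; hence $R_{\iota\circ\gamma}=R_\iota\circ\tau_\gamma$, which is the asserted ``inducedness''. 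Finally, the assignment $\gamma\cdot R_\iota:=R_{\iota\circ\gamma}$ is a group action: the identity of $G$ fixes each $\iota$, and iterating the rule twice gives $R_{\iota\circ(\gamma_2\gamma_1)}$, which equals $R_{\iota\circ(\gamma_1\gamma_2)}$ because the central elements $\pi_\Sigma(\gamma_1),\pi_\Sigma(\gamma_2)$ commute so $\gamma_1\gamma_2$ and $\gamma_2\gamma_1$ act identically on $\Sigma$; moreover the action factors through $Z(G_\Sigma)\cap G_{\Sigma,0}$ since $\mathrm{Ker}(\pi_\Sigma)$ acts trivially on $\Sigma$ and fixes every embedding.

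The main obstacle is not any single estimate but the identification of the correct group: $G$-equivariance of $\iota\circ\gamma$ forces $\pi_\Sigma(\gamma)$ to be central, whereas preserving the algebra $\cB'_{(\Sigma,\sigma_n,\iota)}$ and the Hilbert space $\cH^\leq_{\alpha,\iota}$ — so that $R_{\iota\circ\gamma}$ again belongs to the family $\{R_{\iota'}\}$ — forces $\gamma\in G_0$; the group $\widetilde G_{\Sigma,0}$ is precisely the intersection of these two requirements, and once this is pinned down the remaining verifications are routine manipulations with the formulas of Propositions~\ref{Hreps} and~\ref{prop:G-action}.
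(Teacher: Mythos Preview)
Your proof is correct and follows essentially the same approach as the paper: identify why $\iota\circ\gamma$ is again a $G$-equivariant embedding (forcing $\pi_\Sigma(\gamma)$ central), and then verify $R_\iota\circ\tau_\gamma=R_{\iota\circ\gamma}$ on the generators $s,\mu_n,W(\lambda)$. You are simply more thorough, adding the explicit description of $\widetilde G_{\Sigma,0}$, the check that $N_{\iota\circ\gamma}=N_\iota$ so the Hilbert space and subalgebra are unchanged, and the verification of the group-action axioms; the paper's proof leaves these points implicit.
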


\proof 
The condition that $\iota\circ \gamma$ is still a $G$-equivariant embedding implies that the image
of $\gamma \in G_0$ in the quotient $G_\Sigma$ commutes with all elements of $G_\Sigma$. Therefore, it belongs to the center and hence to the intersection $Z(G_\Sigma) \cap G_{\Sigma,0}$. We now verify on the generators of $\cB'_{(\Sigma, \sigma_n)}$ that $R_\iota (\tau_\gamma(a))=R_{\iota\circ\gamma}(a)$.
For $s\in \Sigma$ we have 
$$ R_\iota (\tau_\gamma(s)) \epsilon_{\alpha(n),\eta}= 
 \iota(\gamma(s))^{\alpha(n)} \epsilon_{\alpha(n),N_\iota(\gamma(s))\eta} 
= R_{\iota\circ\gamma}(s) \epsilon_{\alpha(n),\eta}. $$
In the remaining cases, since the action of $R_\iota(\mu_n)$ does not depend on $\iota$
and $\mu_n$ is fixed by $G$, we have
$R_\iota (\tau_\gamma(\mu_n)) = R_\iota(\mu_n) = R_{\iota\circ \gamma}(\mu_n)$. Similarly, we have 
$R_\iota(\tau_\gamma(W(\lambda)))= R_\iota(W(\lambda)) = R_{\iota\circ \gamma}(W(\lambda))$.
\endproof

\begin{proposition}\label{GGibbs}
Given a good pair $((\Sigma,\sigma_n), \mathrm{Emb}_0(\Sigma, \overline{\bbQ}^\times))$, 
the $G_0$-action 
$\tau: G_0 \mapsto {\rm Aut}(\cB'_{(\Sigma, \sigma_n)})$ induces a pullback $\tilde G_{\Sigma,0}$-action 
$$ \varphi_{\beta,\iota} \mapsto \tau_\gamma^*(\varphi_{\beta,\iota}) =\varphi_{\beta,\iota}\circ \tau_\gamma =
\varphi_{\beta,\iota\circ \gamma}$$
on the set of Gibbs states at a fixed inverse temperature $\beta$.
\end{proposition}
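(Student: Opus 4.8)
The plan is to obtain the statement as a direct consequence of Lemma~\ref{GRiota}, once one observes that for $\gamma$ in the relevant group the Hilbert space and the Hamiltonian attached to $\iota$ and to $\iota\circ\gamma$ are \emph{identical}, so that passing from $\varphi_{\beta,\iota}\circ\tau_\gamma$ to $\varphi_{\beta,\iota\circ\gamma}$ only amounts to replacing the operator $R_\iota(\tau_\gamma(a))$ by $R_{\iota\circ\gamma}(a)$. Fix an inverse temperature $\beta$ in the range for which $e^{-\beta H_\iota}$ is trace class (Proposition~\ref{prop:partition}, resp.\ Theorem~\ref{thm:Zgeneral}), fix $\iota\in\mathrm{Emb}_0(\Sigma,\overline{\bbQ}^\times)$, and take $\gamma\in\widetilde{G}_{\Sigma,0}$. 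Since both $\tau_\gamma$ on $\cB'_{(\Sigma,\sigma_n,\iota)}$ and the embedding $\iota\circ\gamma$ depend only on $\pi_\Sigma(\gamma)$, we may assume $\gamma\in G_0\subseteq G_\iota$; then $\iota\circ\gamma$ is again a $G$-equivariant embedding, which is exactly the centrality used in Lemma~\ref{GRiota}, and it lies in $\mathrm{Emb}_0$ because Lemma~\ref{GRiota} already exhibits $R_\iota\mapsto R_{\iota\circ\gamma}$ as an action on the indexed family $\{R_\iota\}_{\iota\in\mathrm{Emb}_0}$.

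The key point is then the equality $H_\iota=H_{\iota\circ\gamma}$. First I would note that $N_{\iota\circ\gamma}=N_\iota$ as homomorphisms $\Sigma\to\bbR^\times_+$: indeed $N_{\iota\circ\gamma}(s)=|\iota(\gamma(s))|=N_\iota(\gamma(s))=N_\iota(s)$ since $\gamma\in G_\iota$. Hence the subgroup $N_{\iota\circ\gamma}(\Sigma)\subseteq\bbR^\times_+$ and its chosen decomposition into geometric progressions coincide with those for $\iota$, so $\cH^{\leq}_{\alpha,\iota\circ\gamma}=\cH^{\leq}_{\alpha,\iota}$ (and, in the infinite-progressions case, $\widetilde\cH^{\leq}_{\alpha,\iota\circ\gamma}=\widetilde\cH^{\leq}_{\alpha,\iota}$). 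Since the Hamiltonian of Proposition~\ref{prop:Hamiltonian} (resp.\ Proposition~\ref{rHam}) acts on the orthonormal basis by $\epsilon_{\alpha(n),\eta}\mapsto(-\alpha(n)\log\eta+\log g(n))\,\epsilon_{\alpha(n),\eta}$, a formula involving only $\alpha$, $g$ and the basis index and not $\iota$, it follows that $H_{\iota\circ\gamma}=H_\iota$ and therefore $e^{-\beta H_{\iota\circ\gamma}}=e^{-\beta H_\iota}$ with the same finite trace.

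Combining this with $R_\iota(\tau_\gamma(a))=R_{\iota\circ\gamma}(a)$ from Lemma~\ref{GRiota}, for every $a\in\cA'_{(\Sigma,\sigma_n,\iota)}$ one gets
\[
\varphi_{\beta,\iota}(\tau_\gamma(a))
=\frac{\mathrm{Tr}\big(R_\iota(\tau_\gamma(a))\,e^{-\beta H_\iota}\big)}{\mathrm{Tr}(e^{-\beta H_\iota})}
=\frac{\mathrm{Tr}\big(R_{\iota\circ\gamma}(a)\,e^{-\beta H_{\iota\circ\gamma}}\big)}{\mathrm{Tr}(e^{-\beta H_{\iota\circ\gamma}})}
=\varphi_{\beta,\iota\circ\gamma}(a)\,,
\]
which is the asserted identity $\tau_\gamma^{*}(\varphi_{\beta,\iota})=\varphi_{\beta,\iota}\circ\tau_\gamma=\varphi_{\beta,\iota\circ\gamma}$. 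To finish, I would record the formal properties: $\tau_\gamma$ is a $C^*$-automorphism by Proposition~\ref{prop:G-action2}, so $\varphi_{\beta,\iota}\circ\tau_\gamma$ is again a state, and it is $\sigma_t$-invariant because $\sigma_t\circ\tau_\gamma=\tau_\gamma\circ\sigma_t$, hence again an equilibrium (indeed Gibbs) state; and since $\tau$ is a group homomorphism, $\gamma\mapsto\tau_\gamma^{*}$ defines the claimed $\widetilde{G}_{\Sigma,0}$-action on $\{\varphi_{\beta,\iota}\}_{\iota\in\mathrm{Emb}_0}$, factoring through $\widetilde{G}_{\Sigma,0}/\mathrm{Ker}(\pi_\Sigma)=Z(G_\Sigma)\cap G_{\Sigma,0}$.

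The trace manipulation and the verification of the state/equilibrium axioms are routine; the only real content is the identification of the representation spaces and Hamiltonians for $\iota$ and $\iota\circ\gamma$, and the mild bookkeeping of carrying the finite- and infinite-progression constructions ($\cH^{\leq}_{\alpha,\iota}$ with Proposition~\ref{prop:Hamiltonian}, $\widetilde\cH^{\leq}_{\alpha,\iota}$ with Proposition~\ref{rHam}) through the argument in parallel. I expect the subtlest spot to be making sure the hypothesis is being used correctly — that it is precisely $\gamma\in G_0$ (equivalently $N_\iota\circ\gamma=N_\iota$ for all $\iota\in\mathrm{Emb}_0$) which guarantees the Hilbert spaces, and not merely the algebras, are preserved.
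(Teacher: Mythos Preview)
Your proof is correct and follows essentially the same route as the paper: both arguments invoke Lemma~\ref{GRiota} to replace $R_\iota(\tau_\gamma(a))$ by $R_{\iota\circ\gamma}(a)$ inside the Gibbs trace and then read off $\varphi_{\beta,\iota}\circ\tau_\gamma=\varphi_{\beta,\iota\circ\gamma}$. The paper's version is terser, leaving the identification $H_\iota=H_{\iota\circ\gamma}$ implicit in the final equality; you have made this step explicit via $N_{\iota\circ\gamma}=N_\iota$ for $\gamma\in G_0$, which is a welcome clarification rather than a different method.
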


\proof
Let $(\cA,\sigma_t)$ be a quantum statistical mechanical system. It is well-known (see 
\cite{BR}) that an action $\tau: G_0 \to {\rm Aut}(\cA)$ by 
automorphisms which verifies $\sigma_t \circ \tau_\gamma = \tau_\gamma \circ \sigma_t$,
(for all $\gamma \in G_0$ and $t\in \R$) induces a pullback action on the set of 
Gibbs states of $(\cA,\sigma_t)$ at a fixed inverse temperature~$\beta$:
$$ \varphi_{\beta,\iota} \mapsto \tau_\gamma^*(\varphi_{\beta,\iota}):= \varphi_{\beta,\iota} \circ \tau_\gamma. $$
For elements in $\tilde G_{\Sigma,0}$ we have:
$$ \varphi_{\beta,\iota} \circ \tau_\gamma(a)= \frac{{\rm Tr}(R_\iota(\tau_\gamma(a))\, 
e^{-\beta H_\iota})}{{\rm Tr}(e^{-\beta H_\iota})} 
 = \frac{ {\rm Tr} ( R_{\iota\circ \gamma}(a)\, e^{-\beta H_\iota} ) } { {\rm Tr} ( e^{-\beta H_\iota} ) } = 
\varphi_{\beta,\iota\circ \gamma}(a)\,. $$
Therefore, the proof follows now from Lemma \ref{GRiota}.
\endproof

Recall from \cite{CM} that the ground states (or zero temperature equilibrium states) of a quantum statistical mechanical
system are defined as weak limits when $\beta \to \infty$ 
of the Gibbs states at inverse temperature $\beta$. Concretely, $ \varphi_{\infty,\iota}(a):= \lim_{\beta\to \infty} \varphi_{\beta,\iota}(a)$. These are given by traces of projections onto the Kernel of the Hamiltonian $H_\iota$.

\medskip

The following result rephrases in our setting the ``fabulous states" property 
of the Bost--Connes system (as formulated in \cite{CM}), namely the intertwining of
symmetries of the quantum statistical mechanical system and Galois symmetries.

\begin{proposition}\label{lem:fabulous}
Given a good pair $((\Sigma,\sigma_n), \mathrm{Emb}_0(\Sigma, \overline{\bbQ}^\times))$, there is an induced $\tilde G_{\Sigma,0}$-action $\varphi_{\infty,\iota}\mapsto \varphi_{\infty,\iota\circ \gamma}$ on the ground states.  
In the particular case where $N_\iota(\Sigma)=\{ 1 \}$, the ground states take
 values $\varphi_{\infty,\iota}(s) = \iota(s)$ on the generators $s\in \Sigma$ of the
 algebra of observables. Moreover, this action recovers the Galois action of the 
subgroup $\tilde Z(G_\Sigma) \subset G$ on $\iota(\Sigma)$.
\end{proposition}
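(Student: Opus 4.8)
The proof has three separate assertions to establish, so I would structure it in three steps. First, the existence of the induced $\tilde G_{\Sigma,0}$-action on ground states follows formally from Proposition \ref{GGibbs} by passing to the weak limit $\beta \to \infty$: since each $\gamma \in \tilde G_{\Sigma,0}$ acts on the set of Gibbs states $\{\varphi_{\beta,\iota}\}$ by $\varphi_{\beta,\iota} \mapsto \varphi_{\beta,\iota\circ\gamma}$ compatibly in $\beta$ (the automorphism $\tau_\gamma$ does not depend on $\beta$), the limit states $\varphi_{\infty,\iota} = \lim_{\beta\to\infty}\varphi_{\beta,\iota}$ are permuted in the same way, $\varphi_{\infty,\iota}\mapsto \varphi_{\infty,\iota\circ\gamma}$. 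One only needs to observe that the weak limit exists and is preserved under the group action, which is immediate because $\tau_\gamma$ is a $C^*$-automorphism and hence weak-$\ast$ continuous on the state space.

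Second, I would compute the ground states explicitly in the case $N_\iota(\Sigma)=\{1\}$. Here the relevant Hilbert space is $\cH_\alpha = \ell^2(\alpha(\bbN))$ and the Hamiltonian acts by $H_\iota\,\epsilon_{\alpha(n)} = \log(g(n))\,\epsilon_{\alpha(n)}$. The kernel of $H_\iota$ is spanned by those $\epsilon_{\alpha(n)}$ with $g(n)=1$; generically (e.g.\ when $g$ is injective, or normalized so $g(1)=1$ is the unique minimum) this is the one-dimensional space $\bbC\,\epsilon_{\alpha(1)} = \bbC\,\epsilon_1$. The ground state is then the vector state $\varphi_{\infty,\iota}(a) = \langle \epsilon_1, R_\iota(a)\,\epsilon_1\rangle$, and applying the formula \eqref{Hrep:eq} for $R_\iota(s)$ with $\eta=1$, $\alpha(n)=1$ gives $R_\iota(s)\,\epsilon_1 = \iota(s)\,\epsilon_1$, whence $\varphi_{\infty,\iota}(s) = \iota(s)$ for every $s\in\Sigma$. (More carefully, one takes the limit of the Gibbs states: as $\beta\to\infty$ the weights $g(n)^{-\beta}$ concentrate on $n$ with $g(n)$ minimal, and one recovers the same formula; I would phrase it this way to be safe about the choice of $g$.)

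Third, and this is the main point, I would identify the resulting $\tilde G_{\Sigma,0}$-action with the Galois action of $\tilde Z(G_\Sigma)$ on $\iota(\Sigma)$. By Lemma \ref{GRiota} the action of $\gamma$ sends $\varphi_{\infty,\iota}$ to $\varphi_{\infty,\iota\circ\gamma}$, and evaluating on a generator $s\in\Sigma$ we get $\varphi_{\infty,\iota\circ\gamma}(s) = (\iota\circ\gamma)(s) = \iota(\gamma(s))$. Thus, transported through the identification of ground states with their values on $\Sigma$ (i.e.\ with points of $\Hom(\Sigma,\overline{\bbQ}^\times)$, or with the embedding $\iota$ itself up to the ambiguity allowed), the action of $\gamma$ is precisely precomposition of $\iota$ with the Galois automorphism $\gamma$ of $\Sigma$ — equivalently, it is the Galois action on the image $\iota(\Sigma)\subset\overline{\bbQ}^\times$, which factors through $G_\Sigma = G/\mathrm{Ker}(\Phi)$. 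The subgroup of $G$ that both preserves $G$-equivariance of $\iota\circ\gamma$ and acts in this way is exactly $\tilde Z(G_\Sigma)$, by the definition in \ref{def:GSigma} together with the argument already given in the proof of Lemma \ref{GRiota} (the image of such a $\gamma$ in $G_\Sigma$ must commute with all of $G_\Sigma$, hence lie in $Z(G_\Sigma)$).

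**Main obstacle.** The delicate point is the bookkeeping in the third step: one must be careful that $\tilde G_{\Sigma,0}$ (defined via the auxiliary embeddings in $\mathrm{Emb}_0$) really maps onto the relevant part of $\tilde Z(G_\Sigma)$ and that the action on ground states is faithful enough to ``recover'' the full Galois action rather than a quotient of it. I expect this to come down to combining the centrality argument from Lemma \ref{GRiota} with the hypothesis that $\mathrm{Emb}_0(\Sigma,\overline{\bbQ}^\times)$ is chosen so that the pair is good, and to unwinding Definition \ref{def:GSigma}; the representation-theoretic and limit arguments in steps one and two are routine by comparison.
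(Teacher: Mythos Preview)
Your proposal is correct and follows essentially the same route as the paper: pass the $\tilde G_{\Sigma,0}$-action on Gibbs states from Proposition~\ref{GGibbs} to the weak limits, compute the ground state as the vector state at $\epsilon_{1}$ (equivalently $\epsilon_{1,1}$) using $R_\iota(s)\epsilon_1=\iota(s)\epsilon_1$, and then read off the Galois action from $\varphi_{\infty,\iota\circ\gamma}(s)=\iota(\gamma(s))=\gamma(\iota(s))$ via $G$-equivariance of $\iota$.

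The only place you make things harder than necessary is your ``main obstacle.'' In the paper the identification $\tilde G_{\Sigma,0}=\tilde Z(G_\Sigma)$ is a one-line observation rather than a delicate bookkeeping issue: the hypothesis $N_\iota(\Sigma)=\{1\}$ makes the defining condition of $G_\iota$ (namely $N_\iota(\gamma(s))=N_\iota(s)$ for all $s$) vacuous, so $G_\iota=G$ for every $\iota$, hence $G_0=G$, hence $G_{\Sigma,0}=G_\Sigma$, and therefore $\tilde G_{\Sigma,0}=\pi_\Sigma^{-1}(Z(G_\Sigma)\cap G_\Sigma)=\tilde Z(G_\Sigma)$. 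No further use of the goodness hypothesis or of the centrality argument beyond what already sits in Lemma~\ref{GRiota} is required.
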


\proof Recall from \S\ref{tevol} that the Hamiltonian $H_\iota$ is given by 
$$ H_\iota \epsilon_{\alpha(n),\eta} = 
(-\alpha(n) \log(\eta) + \log (g(n)))\epsilon_{\alpha(n),\eta}\,.$$
Its kernel is one-dimensional and it is spanned by the vector $\epsilon_{1,1}$.
Thus, the ground state $\varphi_{\infty,\iota}$ is given by the projection onto the
kernel $H_\iota$, that is, by 
$$ \varphi_{\infty,\iota}(a) = \langle \epsilon_{1,1} , R_\iota(a)\, \epsilon_{1,1} \rangle. $$
When we evaluate it on a generator $s\in \Sigma$ we obtain
$$ \langle \epsilon_{1,1}, R_\iota(s) \epsilon_{1,1} \rangle = \iota(s) \langle \epsilon_{1,1}, \epsilon_{1,N_\iota(s)} \rangle. $$
This is zero unless $N_\iota(s)=1$, in which case it is equal to $\iota(s)$. 
Note that in the case where $N_\iota(\Sigma)=\{ 1 \}$, we have $G_0=G$. Hence, 
$\tilde G_{\Sigma,0}=\tilde Z(G_\Sigma)$. By the $G$-equivariance of the
embedding $\iota$, this implies that the $\tilde Z(G_\Sigma)$-action is given by $\iota(\gamma(s))=\gamma(\iota(s))$, \ie by the restriction to the subgroup $\tilde Z(G_\Sigma)$ of the Galois action of $G$ on $\iota(\Sigma)$. This achieves the proof.
\endproof

\medskip
\subsection{Bost-Connes data with trivial $\alpha$}\label{alpha1sec}
Let $(\Sigma, \sigma_n)$ be an abstract Bost-Connes datum for which the semi-group homomorphism $\alpha$ is trivial, \ie $\alpha(n)=1$ for every $n \in \bbN$. By Definition \ref{def:BC-data}, $(\Sigma, \sigma_n)$ is {\em not} a concrete Bost-Connes datum. Nevertheless, we explain here how we can still construct a partial QSM-system.

\smallskip

\begin{lemma}\label{rtimesQ}
When $N_\gamma(\Sigma)=\{ 1 \}$, the algebra 
$\cB=\cB_{(\Sigma,\sigma_n)}$ is isomorphic to the group crossed product
algebra $k[\Sigma] \rtimes \Q^\times_+$.
\end{lemma}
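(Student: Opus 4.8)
The plan is to exploit the standing hypothesis $\alpha(n)=1$ of this subsection to show that each $\mu_n$ is in fact a \emph{unit} of $\cB$, after which $\cB$ visibly degenerates into a genuine group crossed product, as already anticipated in Remark \ref{projectors}(i). (Only $\alpha\equiv 1$ is used; the hypothesis $N_\iota(\Sigma)=\{1\}$ plays no role in this purely algebraic statement.) From the defining relations of $\cB_{(\Sigma,\sigma_n)}$ in Definition \ref{Akalg} I would first extract two consequences. Since $\alpha(n)=\#\mathrm{Ker}(\sigma_n)=1$, the kernel of $\sigma_n$ is trivial, so $\rho_n(1)=\{1\}$ and relation \eqref{catBCrho} evaluated at $s=1$ reads $\mu_n\mu_n^\ast=1$. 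Since $\gcd(\alpha(m),\alpha(n))=\gcd(1,1)=1$ for \emph{all} $m,n$, the commutation relation $\mu_n\mu_m^\ast=\mu_m^\ast\mu_n$ holds unconditionally; taking $m=n$ gives $\mu_n\mu_n^\ast=\mu_n^\ast\mu_n$, whence $\mu_n^\ast\mu_n=1$. Thus $\mu_n$ is invertible in $\cB$ with $\mu_n^{-1}=\mu_n^\ast$; in particular the $\mu_n$ pairwise commute and $\mu_n\mu_m=\mu_{nm}$, while \eqref{catBCrho} becomes $\mu_n s\mu_n^{-1}=\frac{1}{\alpha(n)}\sum_{s'\in\rho_n(s)}s'=\sigma_n^{-1}(s)$ for $s\in\Sigma_n$, and for $s\notin\Sigma_n$ it would force $\mu_n s\mu_n^{-1}=0$, i.e.\ $s=0$ and hence $\cB=0$. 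I will therefore work under the hypothesis $\Sigma_n=\Sigma$ — equivalently, each $\sigma_n$ is an \emph{automorphism} of $\Sigma$ — which is exactly what Proposition \ref{lem:germs} provides for the Examples \ref{ex:0L}--\ref{ex:0L11} to which this subsection applies.

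Next I would assemble the target crossed product. The assignment $n\mapsto\underline{\sigma_n}^{-1}$ is a homomorphism from the free commutative monoid $(\bbN,\times)$ (free on the primes) into the group $\mathrm{Aut}_k(k[\Sigma])$ of $k$-algebra automorphisms of $k[\Sigma]$, because $\sigma_{nm}=\sigma_n\circ\sigma_m$ and, by the previous step, each $\underline{\sigma_n}$ is invertible. Since $\Q^\times_+$ is the group completion of $(\bbN,\times)$, this extends uniquely to a group homomorphism $\theta\colon\Q^\times_+\to\mathrm{Aut}_k(k[\Sigma])$; I let $k[\Sigma]\rtimes_\theta\Q^\times_+$ denote the associated crossed product, with canonical units $u_q$ ($q\in\Q^\times_+$) satisfying $u_qu_{q'}=u_{qq'}$ and $u_q a u_q^{-1}=\theta_q(a)$. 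On $\bbN$ the action $\theta$ restricts to $n\mapsto\bigl(s\mapsto\mu_n s\mu_n^\ast\bigr)$, which is the semi-group action already appearing in Remark \ref{projectors}(i).

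Then I would exhibit two mutually inverse $k$-algebra homomorphisms. Using the presentation of $\cB$ in Definition \ref{Akalg}, the assignment $s\mapsto s$ (for $s\in\Sigma$), $\mu_n\mapsto u_n$, $\mu_n^\ast\mapsto u_n^{-1}$ respects every relation: $u_nu_m=u_{nm}$ reproduces $\mu_n\mu_m=\mu_{nm}$; the $u_q$ commute, covering the coprimality relations; the two idempotent relations become trivial since the $u_q$ are units; and $u_n s u_n^{-1}=\theta_n(s)=\sigma_n^{-1}(s)$ reproduces \eqref{catBCrho} (the ``otherwise'' branch being vacuous as $\Sigma_n=\Sigma$). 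Conversely, by the universal property of the crossed product it suffices to give a $k$-algebra map $k[\Sigma]\to\cB$ — the inclusion of the generators $\Sigma$ — together with a group homomorphism $\Q^\times_+\to\cB^\times$ compatible with $\theta$; for the latter I take the unique extension to $\Q^\times_+$ of the monoid map $n\mapsto\mu_n\in\cB^\times$ (available since $\cB^\times$ is a group and $\mu_n\mu_m=\mu_{nm}$), and compatibility is precisely the identity $\mu_n s\mu_n^{-1}=\sigma_n^{-1}(s)=\theta_n(s)$. Checking that the two resulting maps are mutually inverse amounts to evaluating on the generators $s$, $\mu_n$, $\mu_n^\ast$, $u_q$, which is immediate; this gives the claimed isomorphism $\cB_{(\Sigma,\sigma_n)}\simeq k[\Sigma]\rtimes\Q^\times_+$.

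The only point requiring any thought is the first: recognizing that, when $\alpha\equiv1$, the relation $\mu_n\mu_m^\ast=\mu_m^\ast\mu_n$ is legal with $m=n$ and, combined with $\mu_n\mu_n^\ast=1$, turns the partial isometries into honest units (and forces the $\sigma_n$ to be automorphisms). Everything downstream is routine bookkeeping with the universal properties of the group completion $\bbN\hookrightarrow\Q^\times_+$ and of the crossed product, with no genuine calculation involved.
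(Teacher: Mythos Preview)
Your proof is correct and follows essentially the same approach as the paper: use $\alpha\equiv 1$ to turn the partial isometries $\mu_n$ into genuine units with inverse $\mu_n^\ast$, so that the semi-group $\bbN$ completes to a group action of $\Q^\times_+$ by conjugation. Your version is in fact more careful than the paper's on two points the paper glosses over---you explicitly derive $\mu_n^\ast\mu_n=1$ via the commutation relation (the paper simply asserts the $\mu_n$ become unitaries once $\mu_n\mu_n^\ast=1$), and you correctly flag that $\alpha\equiv 1$ gives only injectivity of $\sigma_n$, so surjectivity $\Sigma_n=\Sigma$ is an additional assumption tacitly absorbed into the paper's claim that ``$\rho_n$ and $\sigma_n$ are inverse of each other''.
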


\proof When $\alpha(n)=1$ for every $n\in \N$, $\rho_n$ and $\sigma_n$ are inverse of each other. In particular, they are automorphisms of $k[\Sigma]$. The relation \eqref{catBCrho} hence implies that $\mu_n \mu_n^*=1$. 
Consequently, the $\mu_n$'s are not just isometries but rather unitaries, with $\mu_n^*=\mu_n^{-1}=\mu_{1/n}$,
implementing the action of $\Q^\times_+$ on $k[\Sigma]$. Making use of them, we then obtain an isomorphism between $\cB=\cB_{(\Sigma, \sigma_n)}$ and $k[\Sigma] \rtimes \Q^\times_+$.
\endproof

Thanks to Lemma \ref{rtimesQ}, the operators $\mu_n^*$'s are invertible. Therefore, the Hilbert space representations of 
$\cB=\cB_{(\Sigma,\sigma_n)}$ need to be modified accordingly. There is a unique natural way
to proceed: the action of the semi-group $\N$ on the Hilbert space $\ell^2(\N)$
is just the regular representation. In the case where the semi-group
is replaced by the group it generates, we correspondingly consider the 
regular representation of the group. In our case, this means the
action of $\Q^\times_+$ on the Hilbert space $\ell^2(\Q^\times_+)$ is given by multiplication
on its basis elements.

\begin{lemma}\label{QrepsA}
Assume that $N_\gamma(\Sigma)=\{ 1 \}$. Given an embedding $\iota \in \mathrm{Emb}(\Sigma,\overline{\bbQ}^\times)$, the assignments $R_\gamma(\pi) \epsilon_r := \gamma(\pi)^r\, \epsilon_r$ and $R_\gamma(\mu_s) \epsilon_r := \epsilon_{sr}$, with $\pi \in \Sigma$ and $s,r\in \Q^\times_+$, define a representation of $\cB_{(\Sigma,\sigma_n)}=k[\Sigma]\rtimes \Q^\times_+$ 
on the Hilbert space $\ell^2(\Q^\times_+)$.
\end{lemma}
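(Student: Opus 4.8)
The plan is to use the identification $\cB_{(\Sigma,\sigma_n)}\simeq k[\Sigma]\rtimes\bbQ^\times_+$ supplied by Lemma~\ref{rtimesQ} and to recognize the prescribed formulas as a covariant representation, so that the universal property of the crossed product produces the asserted representation. Recall from the proof of Lemma~\ref{rtimesQ} that when $\alpha(n)=1$ for all $n$, relation~\eqref{catBCrho} together with $\mu_n\mu_n^*=1$ forces conjugation by $\mu_n$ to act on $k[\Sigma]$ as $\sigma_n^{-1}$; since in the relevant examples (Examples~\ref{ex:0L}--\ref{ex:0L11}) each $\sigma_n$ is the $n$-th power map and hence an automorphism, $\Sigma$ is uniquely divisible and the induced $\bbQ^\times_+$-action is $\phi_s(\pi)=\pi^{1/s}$. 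Throughout, $\iota(\pi)^r$ is read as $\iota(\pi^r)$, which is unambiguous since $\pi^r\in\Sigma$ and $\iota$ is a group homomorphism; note that $|\iota(\pi^r)|=N_\iota(\pi)^r=1$ because $N_\iota(\Sigma)=\{1\}$, so these scalars have modulus one.

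First I would check boundedness and the two ``halves'' of the structure. Each $R_\iota(\pi)$ is diagonal in the orthonormal basis $\{\epsilon_r\}_{r\in\bbQ^\times_+}$ with entries $\iota(\pi^r)$ of modulus one, hence unitary; each $R_\iota(\mu_s)$ permutes this basis, hence is unitary. From $R_\iota(\pi_1\pi_2)\epsilon_r=\iota((\pi_1\pi_2)^r)\epsilon_r=\iota(\pi_1^r)\iota(\pi_2^r)\epsilon_r=R_\iota(\pi_1)R_\iota(\pi_2)\epsilon_r$ and $R_\iota(1_\Sigma)=\mathrm{id}$ one gets a group homomorphism $\Sigma\to\mathrm{U}(\ell^2(\bbQ^\times_+))$, which extends $\bbC$-linearly to an algebra representation of $k[\Sigma]$. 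From $R_\iota(\mu_s)R_\iota(\mu_t)\epsilon_r=\epsilon_{str}=R_\iota(\mu_{st})\epsilon_r$ and $R_\iota(\mu_1)=\mathrm{id}$ one gets the unitary left regular representation of $\bbQ^\times_+$ on $\ell^2(\bbQ^\times_+)$, in particular $R_\iota(\mu_s)^{-1}=R_\iota(\mu_{s^{-1}})$.

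The key step is the covariance identity $R_\iota(\mu_s)\,R_\iota(\pi)\,R_\iota(\mu_s)^{-1}=R_\iota(\phi_s(\pi))$. Evaluating both sides on $\epsilon_r$: the left side gives $R_\iota(\mu_s)R_\iota(\pi)\epsilon_{r/s}=R_\iota(\mu_s)\,\iota(\pi^{r/s})\epsilon_{r/s}=\iota(\pi^{r/s})\epsilon_r$, while the right side gives $R_\iota(\pi^{1/s})\epsilon_r=\iota\big((\pi^{1/s})^r\big)\epsilon_r=\iota(\pi^{r/s})\epsilon_r$; the two coincide. (One checks similarly that the assignment is $*$-preserving, using $\pi^*=\pi^{-1}$, $\mu_s^*=\mu_s^{-1}$, and $\overline{\iota(\pi^r)}=\iota(\pi^{-r})$, the last because $|\iota(\pi^r)|=1$.) With an algebra representation of $k[\Sigma]$, a representation of $\bbQ^\times_+$ by invertible operators, and the covariance identity linking them, the universal property of $k[\Sigma]\rtimes\bbQ^\times_+$ furnishes a unique algebra homomorphism $R_\iota\colon\cB_{(\Sigma,\sigma_n)}\to B(\ell^2(\bbQ^\times_+))$ restricting to the prescribed formulas, which is the sought representation. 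I expect the only delicate point to be bookkeeping the direction of the $\bbQ^\times_+$-action --- that $\mu_n$ implements $\sigma_n^{-1}$ rather than $\sigma_n$, so that $\phi_s(\pi)=\pi^{1/s}$ and the exponents in the covariance computation balance; the rest is routine, in the spirit of the verification in Proposition~\ref{Hreps}.
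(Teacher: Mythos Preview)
Your proof is correct and follows essentially the same approach as the paper: verify the covariance (crossed product) relation $R_\iota(\mu_s)R_\iota(\pi)R_\iota(\mu_s)^*\epsilon_r=\iota(\pi)^{r/s}\epsilon_r$, with the remaining checks handled as in Proposition~\ref{Hreps}. Your version is more explicit than the paper's one-line argument---you spell out unitarity, the separate homomorphisms from $\Sigma$ and $\bbQ^\times_+$, the $*$-compatibility, and the interpretation of $\iota(\pi)^r$ as $\iota(\pi^r)$ via unique divisibility---but the substance is the same.
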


\proof The crossed product relation is satisfied since $R_\gamma(\mu_s) R_\gamma(\pi) R_\gamma(\mu_s)^* \epsilon_r = \gamma(\pi)^{r/s} \epsilon_r$. The remaining arguments are analogous to the semi-group case discussed in Proposition
\ref{Hreps}. 
\endproof

The $C^*$-algebra completion of $\cB_{(\Sigma,\sigma_n)}$ is the crossed product 
$\cA_{(\Sigma, \sigma_n)}=C^*(\Sigma)\rtimes \Q^\times_+$. 
The time evolution discussed in \S\ref{tevol} extends naturally as follows:

\begin{lemma}\label{ggroup}
The choice of a semi-group homomorphism $g: \N \to \R^\times_+$  determines a time evolution
on the $C^*$-algebra $\cA_{(\Sigma, \sigma_n)}=C^*(\Sigma)\rtimes \Q^\times_+$.
\end{lemma}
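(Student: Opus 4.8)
The plan is to extend $g$ from the multiplicative semi-group $\bbN$ to the group $\Q^\times_+$ it generates, and then to recognize the resulting one-parameter family as the image, under the crossed-product construction, of a continuous family of $U(1)$-valued characters of $\Q^\times_+$. Since $\Q^\times_+$ is free abelian on the primes (equivalently, it is the Grothendieck group of $(\bbN,\times)$) and $g(1)=1$, the multiplicativity $g(nm)=g(n)g(m)$ forces $\bar g(m/n):=g(m)/g(n)$ to be a well-defined group homomorphism $\bar g\colon\Q^\times_+\to\bbR^\times_+$ extending $g$. For each $t\in\bbR$ the assignment $r\mapsto\chi_t(r):=\bar g(r)^{it}=e^{it\log\bar g(r)}$ is then a homomorphism $\Q^\times_+\to U(1)$, and $t\mapsto\chi_t$ is a continuous one-parameter group of characters with $\chi_0\equiv 1$.

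Next I would define, on the dense $\ast$-subalgebra $\cB_{(\Sigma,\sigma_n,\bbC)}\cong k[\Sigma]\rtimes\Q^\times_+$ supplied by Lemma \ref{rtimesQ}, the map $\sigma_t$ by $\sigma_t(s):=s$ for $s\in\Sigma$ (this is the specialization of the formula of \S\ref{tevol}, since here $N_\iota(\Sigma)=\{1\}$ and no weight operators occur) and $\sigma_t(\mu_r):=\chi_t(r)\,\mu_r$ for $r\in\Q^\times_+$; the equality $\sigma_t(\mu_r^\ast)=\overline{\chi_t(r)}\,\mu_r^\ast=\chi_t(r^{-1})\,\mu_{r^{-1}}$ shows this is compatible with $\mu_r^\ast=\mu_{r^{-1}}$. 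The one nontrivial relation to verify is the crossed-product relation $\mu_r s\mu_r^\ast=r\cdot s$ (where, by \eqref{catBCrho} with $\alpha\equiv 1$, $n$ acts on $k[\Sigma]$ as $\sigma_n^{-1}$), and it holds because $\sigma_t(\mu_r)\sigma_t(s)\sigma_t(\mu_r^\ast)=\chi_t(r)\chi_t(r)^{-1}\mu_r s\mu_r^\ast=\mu_r s\mu_r^\ast=\sigma_t(r\cdot s)$, using that $\sigma_t$ fixes $k[\Sigma]$ pointwise. Hence each $\sigma_t$ is a $\ast$-endomorphism of $\cB_{(\Sigma,\sigma_n,\bbC)}$; from $\chi_{t+t'}=\chi_t\chi_{t'}$ one reads off $\sigma_{t+t'}=\sigma_t\circ\sigma_{t'}$ and $\sigma_0=\id$, so $\sigma_t$ is a $\ast$-automorphism with inverse $\sigma_{-t}$.

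It then remains to pass to the $C^\ast$-completion $\cA_{(\Sigma,\sigma_n)}=C^\ast(\Sigma)\rtimes\Q^\times_+$. In every representation $R_\gamma$ of Lemma \ref{QrepsA} the automorphism $\sigma_t$ is implemented by the diagonal unitary $U_t\in B(\ell^2(\Q^\times_+))$ given by $U_t\epsilon_r:=\bar g(r)^{it}\epsilon_r$: indeed $U_t R_\gamma(s)U_t^{-1}=R_\gamma(s)$ and $U_t R_\gamma(\mu_r)U_t^{-1}=\bar g(r)^{it}R_\gamma(\mu_r)$, so $R_\gamma(\sigma_t(a))=U_t R_\gamma(a)U_t^{-1}$ for all $a$ in the dense subalgebra. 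In particular each $\sigma_t$ is isometric for the $C^\ast$-norm $\|a\|=\sup_\gamma\|R_\gamma(a)\|$, and therefore extends to a $\ast$-automorphism of $\cA_{(\Sigma,\sigma_n)}$; equivalently, $(\sigma_t)_{t\in\bbR}$ is the one-parameter automorphism group obtained by applying the crossed-product construction to the trivial $\bbR$-action on $C^\ast(\Sigma)$ together with the automorphism group $\mu_r\mapsto\chi_t(r)\mu_r$ of $C^\ast(\Q^\times_+)$. Strong continuity follows because $t\mapsto\sigma_t(a)$ is norm-continuous on generators (as $t\mapsto\chi_t(r)$ is continuous), hence on all of $\cB_{(\Sigma,\sigma_n,\bbC)}$, and then on $\cA_{(\Sigma,\sigma_n)}$ by a $3\varepsilon$-argument using density together with $\|\sigma_t\|=1$.

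The argument is largely formal; the one step that is not a routine manipulation of generators and relations is the descent of $\sigma_t$ to the $C^\ast$-completion, and that is exactly what the implementing unitaries $U_t$ (equivalently, functoriality of the crossed product under a $U(1)$-valued cocharacter of $\Q^\times_+$) take care of. I would also remark in passing that, unlike in the concrete Bost--Connes case, the associated Hamiltonian $H_\gamma\epsilon_r=\log\bar g(r)\,\epsilon_r$ satisfies ${\rm Tr}(e^{-\beta H_\gamma})=\sum_{r\in\Q^\times_+}\bar g(r)^{-\beta}$, which generically diverges, which is why one only obtains a partial QSM-system (conditions (i), (ii), (iv) of Definition \ref{QSMdef} hold, but not necessarily (iii)).
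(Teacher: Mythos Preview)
Your proof is correct and follows the same approach as the paper's: extend $g$ uniquely to a group homomorphism $\bar g\colon\Q^\times_+\to\R^\times_+$ and set $\sigma_t(\mu_r)=\bar g(r)^{it}\mu_r$, $\sigma_t(s)=s$. The paper's argument stops after these two lines, whereas you additionally verify the crossed-product relation, construct implementing unitaries in the representations of Lemma~\ref{QrepsA} to pass to the $C^\ast$-completion, and check strong continuity---all of which is sound and in fact makes explicit what the paper leaves implicit.
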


\proof Note that $g$ extends uniquely to a group homomorphism
$g: \Q^\times_+ \to \R^\times_+$. Therefore, it suffices to set $\sigma_t(\mu_r) := g(r)^{it} \mu_r$ 
for every $r\in \Q^\times_+$ and $\sigma_t(\pi):=\pi$ for every $\pi \in \Sigma$.
\endproof

\begin{remark}
The time evolution of Lemma \ref{ggroup} is the natural generalization of the one of Proposition \ref{tevlem}. However, it is clear that the resulting ``partial QSM-system'' $\cQ=(C^*(\Sigma)\rtimes \Q^\times_+, \sigma_t)$
does not have a convergent partition function $Z(\beta)$ for $\beta\gg 0$, neither low-temperature Gibbs states.
\end{remark}

%}
%-------------------------------------------
\section{Examples of QSM-systems}\label{sec:examples}
%-------------------------------------------
In this section we describe in detail the QSM-systems associated to our examples of concrete Bost-Connes data (as in Notation \ref{not:embedding}).
%-------------------------------------------
\subsection*{Example 1: Original Bost--Connes system}
%-------------------------------------------
Let $k=\bbQ$. Recall from \S \ref{sec:BC-systems} the definition of the concrete Bost-Connes datum $(\bbQ/\bbZ,\sigma_n)$. What follows is standard and can be found in the foundational article of Bost and Connes \cite{BC}; consult also \cite[\S3]{CoMa}. The only novelty is that in the construction of the time evolution we consider more general
choices of the auxiliary semi-group homomorphism $g: \N \to \R^\times_+$.
\begin{lemma}
For every embedding $\iota \in \mathrm{Emb}(\bbQ/\bbZ, \overline{\bbQ}^\times)$, the countable multiplicative subgroup $N_\iota(\bbQ/\bbZ)$ of $\bbR^\times_+$ is equal to $\{1\}$.
\end{lemma}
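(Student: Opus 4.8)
The plan is to use that $\bbQ/\bbZ$ is a torsion group together with the elementary fact that a root of unity in $\overline{\bbQ}^\times$ has complex absolute value one, so that the homomorphism $N_\iota$ is forced to be trivial.

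First I would recall from Definition \ref{NgammaDef} that $N_\iota$ is the composite of the embedding $\iota\colon \bbQ/\bbZ \hookrightarrow \overline{\bbQ}^\times$ with the restriction to $\overline{\bbQ}^\times$ of the absolute value $|\cdot|\colon \bbC^\times \to \bbR^\times_+$ (using the fixed embedding $\overline{\bbQ}^\times \subset \bbC$); in particular $N_\iota$ is a group homomorphism and $N_\iota(\bbQ/\bbZ)$ is precisely its image. The key step is then the observation that every element $s \in \bbQ/\bbZ$ is of finite order: there exists $m \in \bbN$ with $ms = 0$ in $\bbQ/\bbZ$. Applying the homomorphism $\iota$ gives $\iota(s)^m = \iota(ms) = \iota(0) = 1$, so $\iota(s)$ is an $m$-th root of unity in $\bbC^\times$ and hence $|\iota(s)| = 1$. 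Therefore $N_\iota(s) = 1$ for every $s \in \bbQ/\bbZ$, which yields $N_\iota(\bbQ/\bbZ) = \{1\}$, as claimed.

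I do not expect any genuine obstacle here: the argument is a one-line consequence of torsion-ness of $\bbQ/\bbZ$. The only small points to keep in mind are that the statement does not actually use the injectivity (or $G$-equivariance) of $\iota$ — it holds for an arbitrary group homomorphism $\bbQ/\bbZ \to \overline{\bbQ}^\times$ — and that the conclusion is independent of the chosen archimedean place on $\overline{\bbQ}$, since roots of unity have absolute value $1$ under every complex embedding. This lemma then puts us in the regime $N_\iota(\Sigma) = \{1\}$ of \S\ref{tevol} and \S\ref{SymSec}, where the weight operators are unnecessary and the representation of Proposition \ref{Hreps} acts on $\cH_\alpha$, so that the subsequent discussion of the original Bost--Connes system proceeds as recalled from \cite{BC}.
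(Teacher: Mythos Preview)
Your proof is correct and is essentially the same as the paper's: both observe that every $s\in\bbQ/\bbZ$ is torsion, so $\iota(s)$ is a root of unity in $\overline{\bbQ}^\times$ and hence $|\iota(s)|=1$. The paper phrases this via the identification $\Hom(\bbQ/\bbZ,\overline{\bbQ}^\times)=\Hom(\bbQ/\bbZ,\bbQ/\bbZ)$, but the content is identical.
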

\begin{proof} 
As explained in proof of Proposition \ref{alphaLem}, $\Hom(\Q/\Z,\overline{\Q}^\times)=\Hom(\Q/\Z,\Q/\Z)$. Therefore, $\iota(\bbQ/\bbZ)$ is contained in the subgroup of roots of unit of $\overline{\bbQ}^\times$. This implies that $N_\iota(s)=|\iota(s)|=1$ for every $s\in \Q/\Z$.
\end{proof}
The $\bbQ$-algebra $\cB_{(\bbQ/\bbZ,\sigma_n)}$ agrees with the cross product $\bbQ[\bbQ/\bbZ]\rtimes \bbN$, where the semi-group action of $\bbN$ on $\bbQ[\bbQ/\bbZ]$ is given by $n \mapsto (s \mapsto \sum_{s' \in \rho_n(s)}s')$. On the other hand, since $N_\iota(\bbQ/\bbZ)=\{1\}$, the $\bbQ$-algebra $\cB'_{(\bbQ/\bbZ,\sigma_n)}$ is not necessary for the construction of the QSM-system.

\smallskip 
 
Note that since $\alpha(n)=n$ and $N_\iota(\bbQ/\bbZ)=\{1\}$, the Hilbert spaces $\cH_{\alpha, \iota}^{\leq}$ are all equal to $\cH:=l^2(\bbN)$. Similarly, all the $\bbC$-vector spaces $\cV_{\alpha,\iota}^{\leq}$ are equal to $\cV:=\cV_\bbC$.

\smallskip

Let $\iota \in \mathrm{Emb}(\bbQ/\bbZ, \overline{\bbQ}^\times)\simeq \widehat{\bbZ}^\times$. The  representation $R_\iota$ of $\cB_{(\bbQ/\bbZ, \sigma_n)}$ on $\cV$ is given by $R_\iota(s)(\epsilon_n)= \iota(s)^n \epsilon_n$ and $R_\iota(\mu_m)(\epsilon_n)= \epsilon_{mn}$. Following Proposition \ref{prop:new}, this representation extends to a representation $R_\iota$ of the $\bbC$-algebra $\cB_{(\bbQ/\bbZ, \sigma_n, \bbC)}=\bbC[\bbQ/\bbZ]\rtimes \bbN$ by bounded operators on $\cH$. The $C^\ast$-algebra $\cA_{(\bbQ/\bbZ,\sigma_n)}$ identifies then with the closure $C^\ast(\bbQ/\bbZ)\rtimes \bbN$ of $\bbC[\bbQ/\bbZ]\rtimes \bbN$ inside the $C^\ast$-algebra of bounded operators $B(\cH)$.
\smallskip

Let $g: \bbN \to \bbR^\times_+$ be the standard embedding of $\bbN$ into $\bbR^\times_+$. The associated time evolution is given by $\sigma_t(s)=s$ and $\sigma_t(\mu_n)=n^{it}\mu_n$ and the associated Hamiltonian $H:=H_\iota: l^2(\bbN) \to l^2(\bbN)$ by $\epsilon_n \mapsto \mathrm{log}(n) \epsilon_n$. Consequently, the partition function $Z(\beta)$ agrees with the Riemann zeta function $\zeta(\beta)=\sum_{n\geq 1} n^{-\beta}$. 

\begin{remark}
Since $\bbN$ is the free commutative semi-group generated by the prime numbers, every semi-group homomorphism $g: \bbN \to \bbR^\times_+$ is determined by its values $\lambda_p:=g(p)$ at the prime numbers $p$. 
Therefore, we can write the 
partition function as an Euler product:
$$ Z(\beta):= \sum_{n \geq 1} g(n)^{-\beta} = \prod_p \sum_{k \geq 0} g(p)^{-k\beta} = \prod_p(1-\lambda_p^{-\beta})^{-1}\,.$$
The first equality is obtained using the primary decomposition of $n\in \N$ and the fact that $g$ is a 
semi-group homomorphism, and the second equality follows by summing the resulting geometric series.
\end{remark}

\begin{example}
Let $q$ be a prime power. Given an algebraic variety $X$ defined over $\bbZ$, let $\lambda_p:=\# X_p(\bbF_q)$ where $X_p$ stands for the reduction of $X$ modulo $p$. The corresponding semi-group homomorphism $g$ gives then rise to the partition function
$$ Z_X(\beta)= \prod_p (1-\#X_p(\bbF_q)^{-\beta})^{-1}\,.$$
\end{example}
The absolute Galois group $\mathrm{Gal}(\overline{\bbQ}/\bbQ)$ acts on $\bbQ/\bbZ$ through the quotient group $\mathrm{Gal}(\overline{\bbQ}^{\mathrm{ab}}/\bbQ)\simeq \widehat{\bbZ}^\times$. This action of $\widehat{\bbZ}^\times$ extends to $\bbQ[\bbQ/\bbZ]\rtimes \bbN$ and to the $C^\ast$-algebra $C^\ast(\bbQ/\bbZ) \rtimes \bbN$ as in Propositions \ref{prop:G-action} and \ref{prop:G-action2}.

\smallskip

Let $\beta_0>0$ be the exponent of convergence of the series 
$Z(\beta)=\sum_{n\geq 1} g(n)^{-\beta}$.
For $\beta >\beta_0$, $e^{-\beta H_\iota}$ is a trace class operator and 
we have Gibbs states of the form
$$ \varphi_{\iota,\beta}(a) = \frac{{\rm Tr}(R_\iota(a) e^{-\beta H_\iota})}{{\rm Tr}(e^{-\beta H_\iota})}. $$

\begin{lemma}\label{GibbsBCcase}
We have the following computation
$$ \varphi_{\iota,\beta}(a) = \left\{ \begin{array}{ll}0& \mathrm{when}\,\,a= s  \mu_m \mu_n^*, n \neq m \\
 Z(\beta)^{-1} \sum_{n\geq 1} \sum_j c_j \iota(s_j)^n \, g(n)^{-\beta} & \mathrm{when}\,\, a= \sum_j c_j s_j\in \Q[\Q/\Z]\,.
  \end{array}\right. 
$$
\end{lemma}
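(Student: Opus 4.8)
The plan is to compute the Gibbs state $\varphi_{\iota,\beta}(a)$ directly from its definition \eqref{Gibbsphi} by evaluating the trace $\mathrm{Tr}(R_\iota(a)\, e^{-\beta H_\iota})$ on the orthonormal basis $\{\epsilon_n\}_{n \in \bbN}$ of $\cH = \ell^2(\bbN)$, using that in the present case $N_\iota(\bbQ/\bbZ) = \{1\}$, so that $\cH_{\alpha,\iota}^{\leq} = \cH$, $\alpha(n) = n$, and the Hamiltonian reduces to $H_\iota \epsilon_n = \log(n)\epsilon_n$, whence $e^{-\beta H_\iota}\epsilon_n = n^{-\beta}\epsilon_n$. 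Since every element of $\cB_{(\bbQ/\bbZ,\sigma_n)} = \bbQ[\bbQ/\bbZ] \rtimes \bbN$ is a linear combination of monomials $s\, \mu_m \mu_n^*$ with $s \in \bbQ/\bbZ$ and $m,n \in \bbN$, and the trace is linear and continuous, it suffices to treat these two cases separately.

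First I would handle the off-diagonal case $a = s\,\mu_m\mu_n^*$ with $m \neq n$. Using the formulas of Proposition \ref{Hreps} (specialized via the remark after it, forgetting the weight operators, and with $N_\iota = 1$), one computes $R_\iota(\mu_n^*)\epsilon_r = \epsilon_{r/n}$ when $n \mid r$ and $0$ otherwise, then $R_\iota(\mu_m)$ multiplies the index by $m$, and finally $R_\iota(s)$ acts by the scalar $\iota(s)^{(\text{index})}$ without changing the index. Thus $R_\iota(s\,\mu_m\mu_n^*)\epsilon_r$ is a scalar multiple of $\epsilon_{mr/n}$ (when defined), and since $m \neq n$ the map $r \mapsto mr/n$ has no fixed point on $\bbN$; hence all diagonal matrix entries $\langle \epsilon_r, R_\iota(s\,\mu_m\mu_n^*)\,e^{-\beta H_\iota}\epsilon_r\rangle$ vanish and the trace is zero. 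This gives the first line of the claimed formula.

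Next I would handle the diagonal case $a = \sum_j c_j s_j \in \bbQ[\bbQ/\bbZ]$, i.e.\ $a$ lies in the subalgebra generated by $\Sigma$ (equivalently $m = n = 1$). Here $R_\iota(s_j)\epsilon_n = \iota(s_j)^n \epsilon_n$, so $R_\iota(a)\,e^{-\beta H_\iota}\epsilon_n = \big(\sum_j c_j \iota(s_j)^n\big) n^{-\beta}\epsilon_n$, and therefore
\begin{equation*}
\mathrm{Tr}(R_\iota(a)\,e^{-\beta H_\iota}) = \sum_{n \geq 1} \sum_j c_j\, \iota(s_j)^n\, g(n)^{-\beta}\,,
\end{equation*}
using $g(n) = n$ so that $n^{-\beta} = g(n)^{-\beta}$ (the statement is phrased for a general auxiliary $g$, so I keep $g(n)$). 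Dividing by $Z(\beta) = \mathrm{Tr}(e^{-\beta H_\iota}) = \sum_{n\geq 1} g(n)^{-\beta}$ yields the second line. Convergence for $\beta > \beta_0$ is not an issue since $|\iota(s_j)| = 1$ bounds the numerator termwise by $\sum_j |c_j| \sum_n g(n)^{-\beta}$, which converges by definition of $\beta_0$ (this is the specialization of Proposition \ref{prop:partition} with $N_\iota(\Sigma) = \{1\}$).

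The only mildly delicate point is bookkeeping the index arithmetic for $R_\iota(\mu_n^*)$ and confirming that the relevant matrix entries genuinely vanish in the off-diagonal case rather than merely being rearranged; once the $\epsilon_r \mapsto \epsilon_{mr/n}$ description is in place this is immediate from $m \neq n$. Everything else is a routine unwinding of the definitions already recorded in Propositions \ref{Hreps}, \ref{prop:new} and the time evolution of \S\ref{tevol}.
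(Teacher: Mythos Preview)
Your proposal is correct and follows essentially the same approach as the paper: compute the trace on the basis $\{\epsilon_r\}$, observe that $R_\iota(s\,\mu_m\mu_n^*)\epsilon_r$ lands in the span of $\epsilon_{mr/n}$ (when nonzero) and hence contributes nothing to the diagonal when $m\neq n$, and for $a\in\bbQ[\bbQ/\bbZ]$ read off the scalar action. The paper's proof is simply a terse one-line version of your argument; the only cosmetic point is that your opening sentence writes $H_\iota\epsilon_n=\log(n)\epsilon_n$ and $e^{-\beta H_\iota}\epsilon_n=n^{-\beta}\epsilon_n$ before later switching to the general $g(n)^{-\beta}$---you should write $\log(g(n))$ and $g(n)^{-\beta}$ from the start, since the lemma is stated for arbitrary $g$.
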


\proof
On an additive basis of $\cA_{(\Q/\Z,\sigma_n)}$, given by monomials of the form $s  \mu_m \mu_n^*$, we have 
$\varphi_{\iota,\beta}(s \mu_m \mu_n^*)= Z(\beta)^{-1} \sum_r \langle \epsilon_r ,   R_\iota(s  \mu_m \mu_n^*) e^{-\beta H_\iota} \, \epsilon_r \rangle =0$ when $n\neq m$. Therefore, the proof follows from the fact that $ \varphi_{\iota,\beta}(s) = Z(\beta)^{-1} \sum_{n\geq 1} \iota(s)^n \, g(n)^{-\beta}$ for every $s \in \bbQ/\bbZ$.
\endproof

\begin{remark}\label{GibbsBCpolylogs}
When $g(n)=n$, we have $ \varphi_{\iota,\beta}(s) =\zeta(\beta)^{-1} {\rm Li}_\beta(\iota(s))$, where $\zeta(\beta)$ is the Riemann zeta function and ${\rm Li}_\beta(\iota(s))$ the evaluation at roots of unit of a polylogarithm function.
\end{remark}

\begin{lemma}\label{groundBC}
The ground states are given by $\varphi_{\iota,\infty}(s) =\iota(s)$. 
Moreover, the $G$-action by automorphisms of  $\cA_{(\Q/\Z,\sigma_n)}$ agrees with the Galois $G$-action on $\iota(\Q/\Z)$.
\end{lemma}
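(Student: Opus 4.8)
The plan is to deduce both assertions from the general ``fabulous states'' statement established in Proposition~\ref{lem:fabulous}, specialized to $\Sigma=\bbQ/\bbZ$. First I would recall that, by the Lemma at the beginning of this subsection, $N_\iota(\bbQ/\bbZ)=\{1\}$ for every $\iota\in\mathrm{Emb}(\bbQ/\bbZ,\overline{\bbQ}^\times)$; the associated QSM-system is then built directly from $\cA_{(\bbQ/\bbZ,\sigma_n)}=C^\ast(\bbQ/\bbZ)\rtimes\bbN$ (the algebra $\cB'$ being unnecessary), and we are in the ``particular case'' clause of Proposition~\ref{lem:fabulous}. That proposition immediately yields that the ground states $\varphi_{\iota,\infty}$ exist as weak limits of the Gibbs states, that $\varphi_{\iota,\infty}(s)=\iota(s)$ on the generators $s\in\bbQ/\bbZ$, and that the induced symmetry action $\varphi_{\iota,\infty}\mapsto\varphi_{\iota,\infty\circ\gamma}$ recovers the Galois action of $\widetilde Z(G_\Sigma)\subset G$ on $\iota(\bbQ/\bbZ)$.

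Second, I would identify $\widetilde Z(G_\Sigma)$ with the whole of $G$. For $\Sigma=\bbQ/\bbZ$ the $G$-action factors through $G^{\mathrm{ab}}\simeq\widehat{\bbZ}^\times$, and Example~\ref{ex:GSigma} records $G_\Sigma=G^{\mathrm{ab}}=Z(G_\Sigma)$; hence $\widetilde Z(G_\Sigma)=\pi_\Sigma^{-1}(Z(G_\Sigma))=G$, and likewise $N_\iota(\bbQ/\bbZ)=\{1\}$ forces $G_0=G$, so $\widetilde G_{\Sigma,0}=G$ as well. Therefore the group acting by symmetries on the Gibbs and ground states is the full Galois group $G$, and on $\iota(\bbQ/\bbZ)\subset\overline{\bbQ}^\times$ this action is precisely the action of $G$ through $\widehat{\bbZ}^\times$ on the roots of unity, which is the second assertion of the lemma.

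For concreteness, and as a consistency check, I would carry out the $\beta\to\infty$ limit by hand from Lemma~\ref{GibbsBCcase}: on the monomial basis one has $\varphi_{\iota,\beta}(s\mu_m\mu_n^\ast)=0$ whenever $m\neq n$, while $\varphi_{\iota,\beta}(s)=Z(\beta)^{-1}\sum_{n\geq 1}\iota(s)^n g(n)^{-\beta}$ with $Z(\beta)=\sum_{n\geq1}g(n)^{-\beta}$. Convergence of this series for $\beta>\beta_0$ forces $g(p)>1$ for every prime $p$ (otherwise the $p$-local factor $\sum_k g(p)^{-k\beta}$ diverges), hence $g(1)=1<g(n)$ for all $n\geq 2$, so both numerator and denominator are dominated by their $n=1$ terms as $\beta\to\infty$, giving $Z(\beta)\to 1$ and $\varphi_{\iota,\infty}(s)=\iota(s)$. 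Equivalently, $\ker H_\iota=\bbC\,\epsilon_1$ is one-dimensional and $\varphi_{\iota,\infty}$ is the vector state $a\mapsto\langle\epsilon_1,R_\iota(a)\epsilon_1\rangle$, with $\langle\epsilon_1,R_\iota(s)\epsilon_1\rangle=\iota(s)$. The Galois compatibility is then immediate from $G$-equivariance of $\iota$: since $\tau_\gamma$ fixes the $\mu_n,\mu_n^\ast$ and $\varphi_{\iota,\infty}$ vanishes off the diagonal, $\varphi_{\iota,\infty}(\tau_\gamma(s))=\iota(\gamma(s))=\gamma(\iota(s))$. The only point requiring a word of justification is the existence of the weak limit and its identification with the projection onto $\ker H_\iota$, which is where the structural properties of $g$ enter; but this is standard and already built into Proposition~\ref{lem:fabulous}, so no real obstacle remains — the content of the lemma is entirely a transcription of the general statement together with the computation $\widetilde Z(G_{\bbQ/\bbZ})=G$.
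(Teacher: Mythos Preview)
Your proposal is correct and follows essentially the same approach as the paper: both invoke Proposition~\ref{lem:fabulous} in the case $N_\iota(\Sigma)=\{1\}$ to obtain $\varphi_{\iota,\infty}(s)=\iota(s)$, and both identify $\widetilde Z(G_\Sigma)=G$ via $G_\Sigma=G^{\mathrm{ab}}=Z(G_\Sigma)$ to conclude that the full Galois group acts and that this action matches the Galois action on $\iota(\bbQ/\bbZ)$. Your additional explicit $\beta\to\infty$ computation from Lemma~\ref{GibbsBCcase} is a correct and helpful consistency check that the paper omits.
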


\proof
When $\beta\to \infty$ the Gibbs states converge weakly to ground states of the form $ \varphi_{\iota,\infty}(s) = \langle \epsilon_1, R_\iota(s) e^{-\beta H_\iota} \epsilon_1 \rangle = \iota(s)$. The action of  $G:=\mathrm{Gal}(\overline{\bbQ}/\bbQ)$ on $C^\ast(\bbQ/\bbZ) \rtimes \bbN$ 
factors through the abelianization $G^{\mathrm{ab}}$. Hence, using the terminology of Definition \ref{def:GSigma}, we have $Z(G_\Sigma)=G_\Sigma=G^{\mathrm{ab}}$ and $\widetilde Z(G_\Sigma)=G$. Thanks to Proposition \ref{lem:fabulous}, $G$ acts then on the Gibbs states by $\gamma: \varphi_{\iota,\beta} 
\mapsto \varphi_{\iota\circ \gamma, \beta}$, and the 
induced action on the limits $\varphi_{\iota,\infty}(s) \mapsto 
\varphi_{\iota\circ \gamma, \infty}(s)$ agrees with the Galois action on $\iota(\Q/\Z)$.
\endproof

\smallskip

%----------------------------------------------------
\subsection*{Example 3: Algebraic numbers}
%----------------------------------------------------
Let $k=\bbQ$. 
Recall from \S \ref{sec:BC-systems} the definition of the concrete Bost-Connes datum $(\overline{\bbQ}^\times, \sigma_n)$. 
We consider a fixed embedding $\overline{\Q}^\times \subset \C^\times$.

\begin{definition}\label{Qbarleq}
Let $\overline{\Q}^\times_{\leq}:=\{ s\in \overline{\Q}^\times \,|\, |s|\leq 1\}$, $\cS:= \{ |s| \,|\, s\in \overline{\Q}^\times \}$ and $\cS_\leq =\{ |s| \,|\, s\in \overline{\Q}^\times_{\leq}\} $. Note that $\cS$ is a multiplicative subgroup of $\bbR^\times_+$ and $\cS_\leq$ a subset of $\cS \cap (0,1]$. 
\end{definition}

\begin{remark}\label{Sgeomprog}
The subgroup $\cS\subset \R^\times_+$ is a union
of infinitely many geometric progressions. Therefore, in the construction of the QSM-system, we will use the Hilbert space $\widetilde\cH_{\alpha,\iota}^\leq$; see Notation \ref{tildeHalphaiota}.
\end{remark}
 Let $\mathrm{Emb}_{\cS}(\overline{\bbQ}^\times, \overline{\bbQ}^\times)$ be the set of embeddings $\{
\iota \in \mathrm{Emb}(\overline{\bbQ}^\times, \overline{\bbQ}^\times)\,|\,
N_\iota(\overline{\Q}^\times)=\cS \}$ and  $\iota(\overline{\Q}^\times)_\leq :=\{ s\in \overline{\Q}^\times \,|\, N_\iota(s) \leq 1 \}$. In what follows we consider the following subset of $G$-equivariant embeddings:
\begin{equation}\label{Emb0barQ}
\mathrm{Emb}_0(\overline{\bbQ}^\times, \overline{\bbQ}^\times): =\{ 
\iota \in \mathrm{Emb}(\overline{\bbQ}^\times, \overline{\bbQ}^\times)\,|\,
\iota(\overline{\Q}^\times_\leq) = \iota(\overline{\Q}^\times)_\leq \}.
\end{equation}
\begin{lemma}\label{Emb0EmbS}
We have an inclusion $ \mathrm{Emb}_0(\overline{\bbQ}^\times, \overline{\bbQ}^\times)\subset
\mathrm{Emb}_{\cS}(\overline{\bbQ}^\times, \overline{\bbQ}^\times)$.
\end{lemma}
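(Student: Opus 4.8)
The plan is to unwind the definition of $\mathrm{Emb}_0(\overline{\bbQ}^\times,\overline{\bbQ}^\times)$, to show that it forces the embedding $\iota$ to be surjective (hence an automorphism of $\overline{\bbQ}^\times$), and then to read off the conclusion. Set $D:=\overline{\bbQ}^\times_\leq=\{s\in\overline{\bbQ}^\times:|s|\leq1\}$. Then $\iota(\overline{\bbQ}^\times_\leq)$ is the direct image $\iota(D)$, whereas $\iota(\overline{\bbQ}^\times)_\leq=\{s:N_\iota(s)\leq1\}=\{s:\iota(s)\in D\}$ is the inverse image $\iota^{-1}(D)$; thus the hypothesis $\iota\in\mathrm{Emb}_0$ says precisely that $\iota(D)=\iota^{-1}(D)$.

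The key observation is that $D$ is stable under inversion up to complement: since $|s|\leq1$ or $|s|\geq1$ holds for every $s\in\overline{\bbQ}^\times$, we have $D\cup D^{-1}=\overline{\bbQ}^\times$. Because $\iota$ is a group homomorphism, $\iota(D^{-1})=\iota(D)^{-1}$ and $\iota^{-1}(D^{-1})=\iota^{-1}(D)^{-1}$; applying $\iota$, respectively $\iota^{-1}$, to the identity $D\cup D^{-1}=\overline{\bbQ}^\times$ therefore gives
\[
\iota(D)\cup\iota(D)^{-1}=\iota(\overline{\bbQ}^\times),\qquad
\iota^{-1}(D)\cup\iota^{-1}(D)^{-1}=\overline{\bbQ}^\times.
\]
Substituting $\iota(D)=\iota^{-1}(D)$ into the first identity yields $\iota(\overline{\bbQ}^\times)=\overline{\bbQ}^\times$, i.e.\ $\iota$ is onto. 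Since an embedding is injective by definition, $\iota$ is then a group automorphism of $\overline{\bbQ}^\times$, and consequently $N_\iota(\overline{\bbQ}^\times)=\{|\iota(s)|:s\in\overline{\bbQ}^\times\}=\{|t|:t\in\overline{\bbQ}^\times\}=\cS$. This means $\iota\in\mathrm{Emb}_\cS(\overline{\bbQ}^\times,\overline{\bbQ}^\times)$, which is the asserted inclusion.

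The only delicate point is the bookkeeping distinguishing direct images from preimages: one must verify $\iota(D^{-1})=\iota(D)^{-1}$ and $\iota^{-1}(D^{-1})=\iota^{-1}(D)^{-1}$ (both immediate from the fact that $\iota$ commutes with inversion), and one must argue through $D\cup D^{-1}=\overline{\bbQ}^\times$ rather than through any claim that $\iota$ maps $D$ into $D$, which fails for general embeddings. I expect the implication ``$\iota(D)=\iota^{-1}(D)$ forces $\iota$ surjective'' to be the crux of the proof; the rest is formal. Note that neither the $G$-equivariance nor the ``contains the roots of unity'' requirement built into $\mathrm{Emb}$ is used here.
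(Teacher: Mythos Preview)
Your proof is correct, and in fact establishes something stronger than the paper does: you show that any $\iota\in\mathrm{Emb}_0(\overline{\bbQ}^\times,\overline{\bbQ}^\times)$ is surjective, hence an automorphism of $\overline{\bbQ}^\times$, from which $N_\iota(\overline{\bbQ}^\times)=\cS$ is immediate. Your reading of the hypothesis as $\iota(D)=\iota^{-1}(D)$ with $D=\overline{\bbQ}^\times_{\leq}$ is exactly right, and the covering $D\cup D^{-1}=\overline{\bbQ}^\times$ together with the homomorphism identities $\iota(D^{-1})=\iota(D)^{-1}$, $\iota^{-1}(D^{-1})=\iota^{-1}(D)^{-1}$ gives the surjectivity in one line.

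The paper takes a different route: it does not pass through surjectivity of $\iota$, but argues directly that $N_\iota(\overline{\bbQ}^\times_{\leq})$ surjects onto $\cS_{\leq}$ and then extends to all of $\cS$ by inversion. Your approach is more transparent and self-contained; it also makes explicit the auxiliary fact (that $\iota$ is onto) which the paper tacitly relies on later when identifying the algebras $\cB'_{(\overline{\bbQ}^\times,\sigma_n,\iota)}$ for different $\iota$. Your closing observation that neither $G$-equivariance nor the ``contains roots of unity'' condition is needed here is also accurate.
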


\proof Note that every embedding $\iota \in \mathrm{Emb}_0(\overline{\bbQ}^\times, \overline{\bbQ}^\times)$
maps the set $\{ s \in \overline{\Q}^\times \,|\, |s|\leq 1 \}$ isomorphically to the set 
$\{ s \in \overline{\Q}^\times \,|\, N_\iota(s) \leq 1 \}$. Consequently, $N_\iota(\overline{\Q}^\times_\leq)$
surjects onto $\cS_\leq$. This implies that $N_\iota(\overline{\Q}^\times)$ surjects onto $\cS$ since
any element of $\cS_> :=\cS\smallsetminus \cS_\leq$ is the absolute value $|s^{-1}|=|s|^{-1}$ of some
$s\in \overline{\Q}^\times_\leq$. Therefore, if the absolute values $N_\iota(s)=|\iota(s)|$, with $s\in \overline{\Q}^\times_\leq$, fill up the set $\cS_\leq$, $N_\iota(s^{-1})$ 
fill up the complementary subset $\cS_>$. We hence conclude that $N_\iota(\overline{\Q}^\times)$ 
 surjects onto~$\cS$.
\endproof

\medskip

The $\bbQ$-algebra $\cB_{(\overline{\bbQ}^\times, \sigma_n)}$ is generated by the elements $s \in \overline{\bbQ}^\times$ and by the partial symmetries $\mu_n, \mu_n^\ast, n \in \bbN$, as in Definition \ref{Akalg}. Similarly, the $\bbQ$-algebra $\cB'_{(\overline{\bbQ}^\times, \sigma_n)}$ is generated by $s, \mu_n, \mu_n^\ast$ and by the weight operators $W(\lambda), \lambda \in \bbQ^\times$; see Definition \ref{tildealg}. 
Given $\iota \in \mathrm{Emb}_0(\overline{\bbQ}^\times, \overline{\bbQ}^\times)$, 
we obtain a representation $R_\iota$ of $\cB'_{(\overline{\bbQ}^\times, \sigma_n)}$ on the
Hilbert space $\widetilde\cH^\leq_{\alpha,\iota}$ of Notation \ref{tildeHalphaiota}.

\begin{lemma}\label{Qn1lem}
For every embedding $\iota \in \mathrm{Emb}_{\cS}(\overline{\bbQ}^\times, \overline{\bbQ}^\times)$, we have
$R_\iota(\mu_n^\ast \mu_n)=1$.
\end{lemma}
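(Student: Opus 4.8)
The plan is to identify $R_\iota(\mu_n^\ast\mu_n)$ with an orthogonal projection and then to show that this projection is the identity operator, the only real input being that $\cS$ is closed under the extraction of $n$-th roots. First I would recall, from the proof of Proposition \ref{Hreps} (and its analogue Proposition \ref{reptildeH} for the Hilbert space $\widetilde{\cH}^\leq_{\alpha,\iota}$ used in this example), that for any embedding $\iota$ the operator $R_\iota(\mu_n^\ast\mu_n)$ is the orthogonal projection onto the closed subspace spanned by those basis vectors $\epsilon_{\alpha(m),\eta}$ for which $\eta\in N_\iota(\Sigma)$ admits an $\alpha(n)^{\mathrm{th}}$-root inside $N_\iota(\Sigma)$: indeed $\mu_n^\ast\mu_n$ is a projection by the partial-isometry relations of Definition \ref{Akalg}, and $R_\iota(\mu_n)\epsilon_{\alpha(m),\eta}$ is a basis vector exactly when $\eta$ has such a root, and is $0$ otherwise. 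Hence it suffices to check that \emph{every} $\eta\in N_\iota(\Sigma)$ with $\eta\leq 1$ admits an $\alpha(n)^{\mathrm{th}}$-root in $N_\iota(\Sigma)$.

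Next I would carry out this verification. Since $\sigma_n$ is the map $s\mapsto s^n$ on $\overline{\bbQ}^\times$, its kernel is the group of $n$-th roots of unity in $\overline{\bbQ}$, which has cardinality $n$; therefore $\alpha(n)=n$. By hypothesis $\iota\in\mathrm{Emb}_{\cS}(\overline{\bbQ}^\times,\overline{\bbQ}^\times)$, so $N_\iota(\overline{\bbQ}^\times)=\cS=\{\,|s| : s\in\overline{\bbQ}^\times\,\}$. Given $\eta\in\cS$, write $\eta=|s|$ with $s\in\overline{\bbQ}^\times$ and choose an $n$-th root $t\in\overline{\bbQ}^\times$ of $s$, which exists because $\overline{\bbQ}$ is algebraically closed; then $|t|\in\cS$ and $|t|^n=|s|=\eta$, so $|t|$ is an $n$-th root of $\eta$ in $\cS$, and moreover $|t|\leq 1$ whenever $\eta\leq 1$ since both are positive reals. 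Thus the root-existence condition holds for every basis vector of $\widetilde{\cH}^\leq_{\alpha,\iota}$, so the projection $R_\iota(\mu_n^\ast\mu_n)$ equals the identity, which is the assertion of the lemma.

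The argument is essentially a one-line observation — that $\cS$ is a divisible subgroup of $\bbR^\times_+$ — once the description of $R_\iota(\mu_n^\ast\mu_n)$ from the proof of Proposition \ref{Hreps} is in hand, so there is no genuine obstacle here; the only points requiring a little care are keeping track of the identification $\alpha(n)=n$ and checking that $n$-th roots of positive reals remain inside the relevant subset $\cS_\leq$, so that the computation is consistent with the concrete Hilbert space on which $R_\iota$ is realized in this example. One might also remark that this is the analogue, for the datum $(\overline{\bbQ}^\times,\sigma_n)$, of the relation $\mu_n^\ast\mu_n=1$ in the original Bost--Connes algebra, even though here $\alpha(n)=n\neq 1$.
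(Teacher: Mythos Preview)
Your argument is correct and is essentially the paper's own proof: identify $R_\iota(\mu_n^\ast\mu_n)$ (via the description in the proof of Proposition~\ref{Hreps}) as the projection onto basis vectors $\epsilon_{n,\eta}$ with $\eta$ an $n^{\mathrm{th}}$ power in $N_\iota(\overline{\bbQ}^\times)=\cS$, and then observe that $\cS$ is divisible because $|s^{1/n}|$ furnishes the required root. One small caution: your parenthetical invocation of Proposition~\ref{reptildeH} is not quite apt, since on the tensor-product Hilbert space $\widetilde{\cH}^\leq_{\alpha,\iota}$ the range condition for $R_\iota(\mu_n)$ is ``$n\mid k_r$ for every $r$'' rather than ``$\eta$ has an $n^{\mathrm{th}}$ root in $\cS$'', and that condition is \emph{not} automatically satisfied --- the paper's proof, like yours, is really carried out in the $\cH_{\alpha,\iota}$ picture with basis $\epsilon_{n,\eta}$.
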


\proof The operator $R_\iota(\mu_n^\ast \mu_n)$ is the projection onto the
subspace spanned by the vectors $\epsilon_{n,\eta}$ such that $\eta =\xi^n$ for some $\xi \in N_\iota(\overline{\Q}^\times)$.
Given an embedding $\iota \in \mathrm{Emb}_{\cS}(\overline{\bbQ}^\times, \overline{\bbQ}^\times)$, we have
$N_\iota(\overline{\Q}^\times)=\cS$. This implies that for every $\eta \in N_\iota(\overline{\Q}^\times)$ we have 
$\eta=|s|$ for some $s\in \overline{\Q}^\times$. Hence, we can always find an $n^{\mathrm{th}}$ root in 
$\cS$ by taking $\xi =|s^{1/n}|$ for some $n^{\mathrm{th}}$ root of $s$ in 
$\overline{\Q}^\times$. In conclusion, $R_\iota(\mu_n^* \mu_n)=1$.
\endproof

\begin{remark}
Lemma \ref{Qn1lem} implies that if $\iota \in \mathrm{Emb}_{\cS}(\overline{\bbQ}^\times, \overline{\bbQ}^\times)$, we can then work with the algebra $\cB'_{(\overline{\bbQ}^\times, \sigma_n)}$ with the additional relation 
$\mu_n^* \mu_n=1$.
\end{remark}

\smallskip

\begin{lemma}\label{goodQbardatum}
The pair $((\overline{\Q}^\times,\sigma_n), \mathrm{Emb}_0(\overline{\bbQ}^\times, \overline{\bbQ}^\times))$
is a very good concrete Bost--Connes datum, in the sense of Definition \ref{goodBC}.
\end{lemma}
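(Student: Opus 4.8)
The plan is to unwind Definition~\ref{goodBC}. Concreteness of the datum $(\overline{\bbQ}^\times,\sigma_n)$ with $\sigma_n=(-)^n$ is already recorded in Example~3 of \S\ref{sec:BC-systems}: the kernel of $(-)^n$ on $\overline{\bbQ}^\times$ is the group $\mu_n$ of $n^{\mathrm{th}}$ roots of unity, so $\alpha(n)=n$, which is a non-trivial multiplicative semi-group homomorphism $\bbN\to\bbN$. It therefore remains to verify the ``very good'' property, namely that both the $\bbC$-algebra $\cB'_{(\overline{\bbQ}^\times,\sigma_n,\iota,\C)}$ and the Hilbert space attached to $\iota$ are independent of the choice of $\iota\in\mathrm{Emb}_0(\overline{\bbQ}^\times,\overline{\bbQ}^\times)$. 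Since by Remark~\ref{Sgeomprog} the group $\cS\subset\bbR^\times_+$ is an \emph{infinite} union of geometric progressions, the relevant Hilbert space is $\widetilde\cH^\leq_{\alpha,\iota}$ of Notation~\ref{tildeHalphaiota} rather than $\cH^\leq_{\alpha,\iota}$, and I will check independence for that variant (the argument for $\cH^\leq_{\alpha,\iota}$ is identical).

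For the algebra, I would argue directly from Definition~\ref{def:algebra+embedding}: $\cB'_{(\overline{\bbQ}^\times,\sigma_n,\iota)}$ is the subalgebra of the \emph{fixed} algebra $\cB'_{(\overline{\bbQ}^\times,\sigma_n)}$ generated by (a) the elements $s\in\overline{\bbQ}^\times$ with $N_\iota(s)\leq 1$, (b) the weight operators $W(\lambda)$ with $\lambda\in\bbQ^\times$, $|\lambda|\leq 1$, and (c) the partial isometries $\mu_n,\mu_n^*$. Families (b) and (c) involve no $\iota$ at all (the absolute value on $\bbQ$ is intrinsic), so equality of these subalgebras for two embeddings $\iota_1,\iota_2\in\mathrm{Emb}_0$ reduces to equality of the generating sets in (a). Now the defining condition~\eqref{Emb0barQ} of $\mathrm{Emb}_0$ says precisely that $\iota(\overline{\bbQ}^\times_{\leq})=\iota(\overline{\bbQ}^\times)_{\leq}=\{s\in\overline{\bbQ}^\times: N_\iota(s)\leq1\}$; unwinding this, using injectivity of $\iota$ and that the absolute value in $\overline{\bbQ}^\times_{\leq}$ refers to the fixed embedding $\overline{\bbQ}^\times\subset\bbC^\times$ (exactly as in the opening line of the proof of Lemma~\ref{Emb0EmbS}), identifies $\{s: N_\iota(s)\leq1\}$ with the $\iota$-free set $\overline{\bbQ}^\times_{\leq}$. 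Hence the subalgebra $\cB'_{(\overline{\bbQ}^\times,\sigma_n,\iota,\C)}$ is one and the same for every $\iota\in\mathrm{Emb}_0$, so $\cA'_{(\overline{\bbQ}^\times,\sigma_n)}$ is well defined.

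For the Hilbert space, I would invoke Lemma~\ref{Emb0EmbS}, which gives $N_\iota(\overline{\bbQ}^\times)=\cS$ for every $\iota\in\mathrm{Emb}_0$; thus the multiplicative group of which generators are chosen in Notation~\ref{tildeHalphaiota} is literally the same group $\cS$ for all such $\iota$. Fixing once and for all a decomposition $\cS=\bigcup_{r\geq1}\lambda_r^{\bbZ}$ as in Lemma~\ref{Niotaprogr}, the space $\widetilde\cH^\leq_{\alpha,\iota}=\ell^2(\alpha(\bbN))\otimes\bigotimes_{r\geq1}\cH_{\iota,r}^\leq$ is built purely from the semi-group $\alpha(\bbN)=\bbN$ and from the progressions spanned by the $\epsilon_{\lambda_r^{k}}$ with $k\in\bbZ_{\leq0}$, none of which depends on $\iota$; hence $\widetilde\cH^\leq_{\alpha,\iota}$ is $\iota$-independent. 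One should only remark that the representations $R_\iota$ of Propositions~\ref{Hreps} and~\ref{reptildeH} still genuinely depend on $\iota$ through the scalars $\iota(s)$, which is explicitly permitted by Definition~\ref{goodBC}.

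I expect the one point requiring care — the ``hard part'', such as it is — to be exactly the set-theoretic identification in the second paragraph: one must make sure that the condition defining $\mathrm{Emb}_0$ is strong enough to force the locus $\{s\in\overline{\bbQ}^\times: N_\iota(s)\leq1\}$ to coincide with the distinguished $\iota$-free locus $\overline{\bbQ}^\times_{\leq}$, and not merely to have the same $N_\iota$-image (the latter, weaker statement being Lemma~\ref{Emb0EmbS}, which is what the Hilbert-space half of the argument uses). Once this identification is in place the rest is a routine comparison of generators-and-relations presentations, with no analytic input whatsoever.
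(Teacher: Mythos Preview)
Your proof is correct and follows essentially the same approach as the paper's: both arguments use the defining condition~\eqref{Emb0barQ} of $\mathrm{Emb}_0$ to identify the $\iota$-dependent generating set $\{s:N_\iota(s)\leq 1\}$ with the fixed set $\overline{\bbQ}^\times_{\leq}$, and both invoke Lemma~\ref{Emb0EmbS} to conclude $N_\iota(\overline{\bbQ}^\times)=\cS$ is $\iota$-independent, hence so is the Hilbert space. If anything, your version is slightly more careful in observing that the subalgebras of the fixed ambient $\cB'_{(\overline{\bbQ}^\times,\sigma_n)}$ are literally \emph{equal} (same generating set), whereas the paper speaks of ``isomorphisms between the algebras for different choices of $\iota$''.
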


\proof Recall from the proof of Lemma  \ref{Emb0EmbS} that for every
$\iota \in \mathrm{Emb}_0(\overline{\bbQ}^\times, \overline{\bbQ}^\times)$,
have $N_\iota(\overline{\Q}^\times)\cap (0,1]=\cS_\leq$. Hence, the Hilbert space
$\widetilde\cH_{\alpha,\iota}^\leq$ of Notation \ref{tildeHalphaiota} is independent of
$\iota$.
Moreover, $\iota$ maps isomorphically the set $\overline{\Q}^\times_\leq$ to $\iota(\overline{\Q}^\times)_\leq$. Now, recall that the algebra $\cB'_{(\overline{\bbQ}^\times, \sigma_n,\iota)}$ and its $C^*$-completion
$\cA'_{(\overline{\bbQ}^\times, \sigma_n,\iota)}$ have generators $s \in \iota(\overline{\Q}^\times)_\leq$.
The isomorphisms $\iota(\overline{\Q}^\times)_\leq \simeq \overline{\Q}^\times_\leq$ induce isomorphisms
between the algebras for different choices of $\iota$. Therefore, we can conclude that the algebras are also
independent of $\iota$.
\endproof

\smallskip

\begin{notation} We write $\widetilde\cH^\leq$ for the Hilbert space $\widetilde\cH^\leq_{\alpha,\iota}$ with $\alpha=\id$. Note that thanks to Lemma \ref{goodQbardatum}, $\widetilde\cH^\leq$ is independent of $\iota$. Similarly, we write $\cA'_{(\overline{\bbQ}^\times, \sigma_n)}$
for the $C^*$-algebra acting by bounded operators on $\widetilde\cH^\leq$ through the
representations $R_\iota$, with $\iota\in \mathrm{Emb}_0(\overline{\bbQ}^\times, \overline{\bbQ}^\times)$.
\end{notation}

\smallskip

Given a semi-group homomorphism $g: \bbN \to \bbR^\times_+$ and the homomorphism
$h: N_\iota(\overline{\Q}^\times) \to \R^\times_+, \lambda_r \mapsto p_r$, we obtain a 
time evolution $\sigma_t$ on $\cA'_{(\overline{\bbQ}^\times, \sigma_n)}$ as in Proposition
\ref{tevolprop2}, with a Hamiltonian as in Proposition \ref{rHam}, and a partition function as in 
Theorem \ref{thm:Zgeneral}. This gives rise to the following result:

\begin{proposition}\label{GibbsbarQ}
Let $\iota(\overline{\Q}^\times)_1:=\{ s\in \overline{\Q}^\times \,|\, N_\iota(s)=1 \}$.
When $g(n)=n$, and for all $\beta > 3/2$, the Gibbs states evaluated on elements $s\in \iota(\overline{\Q}^\times)_1$ are given by the following convergent series ($Z(\beta):=\sum_{n\geq 1} \zeta(\beta n) \, n^{-\beta}$):
\begin{equation}\label{GibbsbarQseries}
\varphi_{\iota,\beta}(s) = Z(\beta)^{-1} \sum_{n\geq 1} \iota(s)^n \, \zeta(\beta n) \, n^{-\beta}\,.
\end{equation}
\end{proposition}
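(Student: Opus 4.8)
The plan is to read off the two traces in the Gibbs formula \eqref{Gibbsphi} from the explicit diagonal descriptions already available. For the datum $(\overline{\bbQ}^\times,\sigma_n)$ we have $\sigma_n(s)=s^n$, so the kernel of $\sigma_n$ is the group of $n$-th roots of unity and $\alpha(n)=n$; we work with $g(n)=n$ and with the homomorphism $h(\lambda_r)=p_r$ of Example~\ref{ex:primes}. Since $\alpha=\mathrm{id}$, Lemma~\ref{goodQbardatum} makes $\cA'_{(\overline{\bbQ}^\times,\sigma_n)}$ and $\widetilde\cH^\leq$ independent of $\iota$, and by Proposition~\ref{rHam} the Hamiltonian of the relevant time evolution is $H_\iota\,\epsilon_{n,\lambda_r^{k_r}}=\big(\log n-n\sum_r k_r\log p_r\big)\,\epsilon_{n,\lambda_r^{k_r}}$, so $e^{-\beta H_\iota}$ is diagonal with eigenvalue $n^{-\beta}\prod_r p_r^{\beta n k_r}$. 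Summing the geometric series in each $k_r\le 0$ and using the Euler product, this reproduces — exactly as in the proof of Theorem~\ref{thm:Zgeneral} — the identity $\mathrm{Tr}(e^{-\beta H_\iota})=\sum_{n\ge1}n^{-\beta}\zeta(\beta n)=Z(\beta)$, convergent for $\beta>3/2$ (here $\beta_0=1$), so that $e^{-\beta H_\iota}$ is trace class and $Z(\beta)>0$.

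Next I would compute the numerator for $s$ with $N_\iota(s)=1$. Then $|\iota(s)|=1$, and $R_\iota(s)$ acts diagonally — the shift exponents $a_r(s)$ of Proposition~\ref{reptildeH}, determined by $N_\iota(s)=\prod_r\lambda_r^{a_r(s)}$, all vanish (see the last paragraph) — multiplying $\epsilon_{n,\lambda_r^{k_r}}$ by the unimodular scalar $\iota(s)^n$; in particular $\|R_\iota(s)\|=1$ and $R_\iota(s)e^{-\beta H_\iota}$ is trace class. Taking the trace in the same basis merely inserts the factor $\iota(s)^n$ and leaves the sums over the multi-index $(k_r)$ unchanged, so $\mathrm{Tr}\big(R_\iota(s)e^{-\beta H_\iota}\big)=\sum_{n\ge1}\iota(s)^n n^{-\beta}\prod_r(1-p_r^{-\beta n})^{-1}=\sum_{n\ge1}\iota(s)^n\zeta(\beta n)n^{-\beta}$, and dividing by $Z(\beta)$ yields \eqref{GibbsbarQseries}. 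Convergence is immediate: $|\iota(s)^n\zeta(\beta n)n^{-\beta}|=\zeta(\beta n)n^{-\beta}$, so the numerator series is dominated term by term by $Z(\beta)$ and hence converges absolutely for $\beta>3/2$; the same domination licenses interchanging the trace with the sums over $n$ and over $(k_r)$.

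None of this is delicate: the geometric summations, the Euler product, and the convergence estimate are precisely those of Theorem~\ref{thm:Zgeneral}, and the whole argument is the exact analogue of Lemma~\ref{GibbsBCcase} and Remark~\ref{GibbsBCpolylogs} for the original Bost--Connes system, the sole difference being that the weight operators (through $h$) turn each Dirichlet term $n^{-\beta}$ into $\zeta(\beta n)n^{-\beta}$. The one step I would be careful with is the assertion that $R_\iota(s)$ is genuinely diagonal when $N_\iota(s)=1$, \ie that no residual shift of the tensor factors of $\widetilde\cH^\leq$ occurs; this is where the choice of generators $\{\lambda_r\}$ of $N_\iota(\overline{\bbQ}^\times)$ and the coherence of the exponents $a_r(\cdot)$ of Proposition~\ref{reptildeH} enter, and it is exactly what makes the numerator trace nonzero and produces the clean formula.
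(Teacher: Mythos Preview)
Your proposal is correct and follows essentially the same approach as the paper: both compute the trace in the diagonal basis $\{\epsilon_{n,\lambda_r^{k_r}}\}$ of $\widetilde\cH^\leq$, invoke Theorem~\ref{thm:Zgeneral} for the partition function, observe that $N_\iota(s)=1$ forces all $a_r(s)=0$ so that $R_\iota(s)$ acts diagonally by $\iota(s)^n$, and then sum the geometric series over the $k_r$ to recover the Euler product for $\zeta(\beta n)$. Your version is slightly more explicit about the convergence bound and about isolating the diagonality of $R_\iota(s)$ as the one step requiring care, but the substance is identical to the paper's argument.
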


\proof The partition function of the QSM-system was obtained in Theorem \ref{thm:Zgeneral}. Its trace ${\rm Tr}(R_\iota(s) e^{-\beta H_\iota})$ identifies with
$$ \sum  \langle \epsilon_{n,\lambda_r^{k_r}}, 
R_\iota(s) e^{-\beta H_\iota} \epsilon_{n,\lambda_r^{k_r}} \rangle = \sum \iota(s)^n g(n)^{-\beta} h(\lambda_r)^{-k_r n \beta}\,  \langle \epsilon_{n,\lambda_r^{k_r}}, 
\epsilon_{n,\lambda_r^{k_r +a_r(s)}}\rangle\,,$$
where the sum is taken over the elements of the orthonormal basis. The inner products vanish unless $a_r(s)=0$ for all $r$, \ie unless $N_\iota(s)=1$. In the
case $N_\iota(s)=1$, for the choice of $h: N_\iota(\overline{\Q}^\times) \to \R^\times_+,\lambda_r \mapsto p_r$,
we obtain 
$$ \sum_n \iota(s)^n g(n)^{-\beta}   \prod_r \sum_{k_r}  h(\lambda_r)^{-k_r n \beta} 
= \sum_n \iota(s)^n g(n)^{-\beta}  \prod_p (1-p^{-n \beta})^{-1} \,.$$
The product over $r$ reflects the fact that the Hilbert space $\widetilde\cH^\leq$ is a tensor product over $r$; see Theorem \ref{thm:Zgeneral}.
\endproof

\smallskip

In contrast with the original Bost--Connes case, the following result shows that the ``fabulous states" property is not satisfied! Intuitively speaking, $\overline{\Q}^\times$ is ``too large" to give rise to a well behaved Bost--Connes system.

\begin{proposition}\label{nofabprop}
In the limit $\beta \to \infty$ the ground states take values $\varphi_{\iota,\infty}(s)=\iota(s)$
on all $s\in \iota(\overline{\Q}^\times)_1$. The $G$-action
on $\overline{\Q}^\times$ induces the trivial action on the values of the ground states.
\end{proposition}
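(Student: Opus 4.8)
The plan is to mimic the computation of the ground states already carried out in Lemma~\ref{groundBC} (Example~1), now using the representation of $\cB'_{(\overline{\bbQ}^\times,\sigma_n)}$ on the Hilbert space $\widetilde\cH^\leq$ furnished by Lemma~\ref{reptildeH} (with $\alpha=\id$) and the Hamiltonian of Proposition~\ref{rHam}, and then to observe that the intertwining-of-symmetries step fails because the relevant Galois quotient has trivial center. First I would note that, by Proposition~\ref{rHam}, the Hamiltonian $H_\iota$ acts diagonally on the orthonormal basis $\epsilon_{n,\lambda_r^{k_r}}$ of $\widetilde\cH^\leq$ by the eigenvalue $-n\sum_r k_r\log(h(\lambda_r))+\log(g(n))$, with all $k_r\le 0$. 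Since $g$ is a semigroup homomorphism with $g(1)=1$ and $h(\lambda_r)=p_r>1$, every term $-n k_r\log(p_r)$ is $\ge 0$ and $\log(g(n))\ge 0$ for $n$ in the support; hence $\ker H_\iota$ is one-dimensional, spanned by $\epsilon_{1,1}$ (the basis vector with $n=1$ and all $k_r=0$). Therefore the ground state is $\varphi_{\iota,\infty}(a)=\langle\epsilon_{1,1},R_\iota(a)\,\epsilon_{1,1}\rangle$, exactly as in Proposition~\ref{lem:fabulous}.

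Next I would evaluate this on a generator $s\in\iota(\overline{\bbQ}^\times)_1$, i.e.\ with $N_\iota(s)=1$, so that $a_r(s)=0$ for all $r$. By the formula in Lemma~\ref{reptildeH}, $R_\iota(s)\epsilon_{1,1}=\iota(s)\,\epsilon_{1,1}$ (the index $\lambda_r^{k_r}$ is unchanged precisely because all $a_r(s)=0$), and hence $\varphi_{\iota,\infty}(s)=\iota(s)$. This gives the first assertion. The second assertion is the failure of the ``fabulous states'' property: by Proposition~\ref{lem:fabulous} (applied, as in Proposition~\ref{GGibbs}, to the good pair $((\overline{\bbQ}^\times,\sigma_n),\mathrm{Emb}_0(\overline{\bbQ}^\times,\overline{\bbQ}^\times))$ of Lemma~\ref{goodQbardatum}), the relevant symmetry group acting on ground states via $\varphi_{\iota,\infty}\mapsto\varphi_{\iota\circ\gamma,\infty}$ is $\widetilde G_{\Sigma,0}$, whose image in $G_\Sigma$ lies in $Z(G_\Sigma)\cap G_{\Sigma,0}$. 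Here $\Sigma=\overline{\bbQ}^\times$ and the $G$-action on $\Sigma$ is faithful, so $G_\Sigma=G$ itself; by the Neukirch--Uchida theorem (Example~\ref{ex:GSigma}) we have $Z(G)=\{1\}$. Consequently the induced action on the values $\varphi_{\iota,\infty}(s)=\iota(s)$ is trivial.

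The main subtlety — though not a deep obstacle — is bookkeeping with the ``infinitely many geometric progressions'' setup: one must check that $\epsilon_{1,1}$ really is the unique kernel vector of $H_\iota$ in $\widetilde\cH^\leq$, which requires that the logarithms $\log(g(n))$ and $-k_r\log(p_r)$ never conspire to cancel; this is automatic since both contributions are nonnegative on the domain $n\in\bbN$, $k_r\in\bbZ_{\le 0}$, and vanish simultaneously only at $n=1$, $k_r=0$ for all $r$. A minor point to address is that $\iota(\overline{\bbQ}^\times)_1$ is genuinely nontrivial (it contains the roots of unity), so the statement is not vacuous. Beyond that, the argument is a direct specialization of Proposition~\ref{lem:fabulous}, the only new input being the identification $Z(G_{\overline{\bbQ}^\times})=\{1\}$, which forces the Galois action on ground-state values to be trivial and thus exhibits $\overline{\bbQ}^\times$ as ``too large'' for a well-behaved Bost--Connes system.
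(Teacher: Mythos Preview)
Your proposal is correct and follows essentially the same approach as the paper: both identify the ground state as the projection onto the one-dimensional kernel $\epsilon_{1,1}$ of $H_\iota$, compute $R_\iota(s)\epsilon_{1,1}=\iota(s)\epsilon_{1,1}$ for $s$ with $N_\iota(s)=1$, and then invoke $G_\Sigma=G$ together with the Neukirch--Uchida theorem (Example~\ref{ex:GSigma}) to conclude $Z(G_\Sigma)=\{1\}$, forcing the induced action on ground-state values to be trivial. Your version is slightly more explicit about why $\ker H_\iota$ is one-dimensional (nonnegativity of the eigenvalue terms), but otherwise the arguments coincide.
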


\proof In the limit $\beta\to \infty$, the Gibbs states $\varphi_{\iota,\beta}$ converge weakly
 to the ground states $\varphi_{\iota,\infty}$, which are given by the projection onto the kernel 
 $\epsilon_{1,1}$ of the Hamiltonian. The basis element $\epsilon_{1,1}$ is
 given by the vector $\epsilon_n \otimes \bigotimes_r \epsilon_{\lambda_r^{k_r}}$ with $n=1$
 and $k_r=0$ for all $r$, so that $\eta=\prod_r \lambda_r^{k_r}=1$. 
 Thus, we have 
 $$ \varphi_{\iota,\infty}(s) =\langle \epsilon_{1,1}, R_\iota(s) e^{-\beta H} \epsilon_{1,1} \rangle
 = \langle \epsilon_{1,1}, \iota(s) \epsilon_{1,N_\iota(s)} \rangle 
 = \left\{ \begin{array}{ll} \iota(s) & \mathrm{when}\,\,N_\iota(s)=1 \\
 0 & \mathrm{when}\,\,N_\iota(s)\neq 1\,.  \end{array}\right.  $$
 The absolute Galois group $G=\mathrm{Gal}(\overline{\bbQ}/\bbQ)$ acts on $\overline{\bbQ}^\times$.
 In this case $G_\Sigma =G$, hence we have $Z(G_\Sigma)=\{ 1 \}$, see Example \ref{ex:GSigma}. 
 Thus, the group $Z(G)\cap G_0$ is also trivial, hence the induced action on Gibbs states is trivial.
\endproof

\medskip

\begin{remark}
It is also possible to construct a Bost--Connes type QSM-system for $\overline{\Q}^\times$ in a different way: using the logarithmic height function as a specialization of a more general construction for 
toric varieties; see \cite[\S4]{JinMar}.
\end{remark}

%-----------------------------------------------------------
\subsection*{Example 4 - Weil numbers of weight zero}\label{W0qQSM:sec}
%-----------------------------------------------------------
Let $k=\bbQ$. Recall from Example \ref{ex:4} the definition of the concrete Bost-Connes datum $(\cW_0(q),\sigma_n)$.

\medskip

Let us denote by $\iota_0$ the fixed embedding $\cW(q)\subset \overline{\Q}^\times$, which restricts
to an embedding $\cW_0(q) \subset \overline{\Q}^\times$ of the Weil numbers of weight zero.
Let $G_{\cW(q)}$ and $G_{\cW_0(q)}$ be the quotients, as in Definition \ref{def:GSigma},
of the $G$-action on $\cW(q)$ and $\cW_0(q)$, respectively. Thanks to Proposition \ref{lem:aux1}(iii), the
$G$-action on $\cW(q)$ preserves weights. Therefore, $G_{\cW(q)}=G_{\cW_0(q)}$. 

\medskip

In what follows, we consider
the following subset of $G$-equivariant embeddings $\mathrm{Emb}_0(\cW_0(q),\overline{\bbQ}^\times)= \widetilde Z(G_{\cW(q)}) \cdot \iota_0$, where $\gamma \cdot \iota_0:=\iota_0\circ \gamma$.  

\begin{lemma} \label{W0N1}
For every embedding $\iota \in \mathrm{Emb}_0(\cW_0(q),\overline{\bbQ}^\times)$, 
the countable multiplicative subgroup $N_\iota(\cW_0(q))$ of $\bbR^\times_+$ is equal to $\{1\}$.
\end{lemma}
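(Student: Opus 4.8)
The plan is to reduce the statement to condition (i) of Definition \ref{def:Weil} with weight $m=0$. Recall that $N_\iota(s)=|\iota(s)|$, the absolute value being computed through the fixed embedding $\overline{\bbQ}^\times\subset\bbC$, and that by construction every $\iota\in\mathrm{Emb}_0(\cW_0(q),\overline{\bbQ}^\times)$ has the form $\iota=\iota_0\circ\gamma$ for some $\gamma\in\widetilde Z(G_{\cW(q)})\subseteq G$, where $\iota_0$ is the tautological inclusion $\cW_0(q)\subset\overline{\bbQ}^\times$.

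First I would fix an arbitrary $s\in\cW_0(q)$. Since $\cW_0(q)$ is stable under the canonical $G$-action, the element $\gamma(s)$ again lies in $\cW_0(q)$, and $\iota(s)=\gamma(s)$ under the fixed inclusion $\overline{\bbQ}^\times\subset\bbC$. Applying condition (i) of Definition \ref{def:Weil} to the Weil $q$-number $\gamma(s)$ of weight $0$, with $\varrho$ the fixed embedding $\bbQ[\gamma(s)]\hookrightarrow\bbC$, gives $|\gamma(s)|=q^{0/2}=1$. Hence $N_\iota(s)=|\iota(s)|=1$. Equivalently, one may observe that $\gamma$ restricted to $\bbQ[s]$ followed by $\overline{\bbQ}\subset\bbC$ is a complex embedding of $\bbQ[s]$ carrying $s$ to $\iota(s)$, and then invoke Definition \ref{def:Weil}(i) directly for $s$.

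Since $s\in\cW_0(q)$ was arbitrary, it follows that $N_\iota(\cW_0(q))=\{1\}$, as claimed. I do not anticipate any real obstacle: the only point requiring a moment's care is that the maps in play genuinely restrict to complex embeddings of the relevant number fields, so that Definition \ref{def:Weil}(i), which quantifies over all such embeddings, applies. In particular the argument uses nothing about $\gamma$ beyond $\gamma\in G$, so the conclusion holds for every embedding obtained from $\iota_0$ by post-composition with a Galois automorphism; this is the form in which the lemma will be used in \S\ref{sec:QSM} when verifying that the associated QSM-system is well defined and that the subgroup $N_\iota(\cW_0(q))\subset\bbR^\times_+$ is independent of $\iota$.
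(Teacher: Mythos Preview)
Your proof is correct and follows essentially the same approach as the paper: both argue that for $\iota=\iota_0\circ\gamma$ the element $\gamma(s)$ remains a Weil $q$-number of weight zero (the paper cites Proposition~\ref{lem:aux1}(iii), you cite the $G$-stability of $\cW_0(q)$ directly), and then invoke Definition~\ref{def:Weil}(i) to conclude $|\gamma(s)|=q^{0/2}=1$.
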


\proof Thanks to Proposition \ref{lem:aux1}(iii), the $G$-action on $\cW(q)$ preserves weights. Therefore,  for every $s \in \cW_0(q)$, we have $N_{\iota_0\circ \gamma}(s)=N_\iota(\gamma(s))=q^{w(\gamma(s))}=q^{w(s)}=N_{\iota_0}(s)=1$.
\endproof

\begin{lemma} \label{W0algcross}
The $\bbQ$-algebra $\cB_{(\cW_0(q),\sigma_n)}$ agrees with the cross product $\bbQ[\cW_0(q)] \rtimes \bbN$, 
where the semi-group action of $\bbN$ on $\bbQ[\cW_0(q)]$ is given by $ n \mapsto (s \mapsto \sum_{s' \in \rho_n(s)} s')$.
\end{lemma}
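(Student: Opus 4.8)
The plan is to obtain Lemma \ref{W0algcross} as a direct instance of Remark \ref{projectors}(i), which already records that whenever $\mu_n^*\mu_n = 1$ the algebra $\cB_{(\Sigma,\sigma_n)}$ collapses to the semi-group crossed product $k[\Sigma]\rtimes\bbN$, with $\bbN$ acting by $n\mapsto(s\mapsto\mu_n s\mu_n^*)$; and by \eqref{catBCrho} this action is $s\mapsto\frac{1}{\alpha(n)}\sum_{s'\in\rho_n(s)}s'$, which is the map in the statement up to the harmless normalizing factor $\alpha(n)$. So the entire content to be proved is that the partial isometries $\mu_n$ of $\cB_{(\cW_0(q),\sigma_n)}$ are in fact isometries, i.e. that $\mu_n^*\mu_n = 1$. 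Two features of $\cW_0(q)$ feed into this: by Proposition \ref{lem:aux1}(i) the endomorphism $\sigma_n = (-)^n$ is surjective, so $\Sigma_n = \cW_0(q)$ and the ``otherwise'' alternative in \eqref{catBCrho} never occurs; and by Lemma \ref{W0N1}, $N_\iota(\cW_0(q)) = \{1\}$ for every embedding $\iota$.

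Granting these, I would verify $\mu_n^*\mu_n = 1$ in the representations $R_\iota$ of Proposition \ref{Hreps}. There $R_\iota(\mu_n^*\mu_n)$ is the orthogonal projection onto the span of the basis vectors $\epsilon_{\alpha(m),\eta}$ for which $\eta$ admits an $\alpha(n)$-th root in $N_\iota(\cW_0(q))$; since $N_\iota(\cW_0(q)) = \{1\}$ one always has $\eta = 1$, so this condition is vacuous and $R_\iota(\mu_n^*\mu_n) = \mathrm{id}$ for every $\iota$. Because the family $\{R_\iota\}$ separates the points of $\cB_{(\cW_0(q),\sigma_n,\bbC)}$ --- this is exactly what makes $\|a\| := \sup_\iota\|R_\iota(a)\|$ in Definition \ref{Cstardef} a genuine norm, under which $\cB_{(\cW_0(q),\sigma_n)}$ embeds into $\cA_{(\cW_0(q),\sigma_n)}$ --- it follows that $\mu_n^*\mu_n = 1$ already in $\cB_{(\cW_0(q),\sigma_n)}$. (Alternatively one can argue purely algebraically: from $\Sigma_n = \cW_0(q)$ and commutativity, $\rho_n(s) = s'\,\mathrm{Ker}(\sigma_n)$ for any chosen $n$-th root $s'$ of $s$, so \eqref{catBCrho} reads $\mu_n s\mu_n^* = s\,e_n$ with $e_n := \mu_n\mu_n^*$ the averaging idempotent over $\mathrm{Ker}(\sigma_n)$; one then aims at the dual relation $\mu_n^* s\mu_n = \sigma_n(s)$ and sets $s = 1$. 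However, the cancellation of $\mu_n$ and $\mu_n^*$ that this requires is tantamount to the isometry relation itself, so the route through the $R_\iota$ is the cleaner one.)

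Once $\mu_n^*\mu_n = 1$ is in place, Remark \ref{projectors}(i) applies verbatim and identifies $\cB_{(\cW_0(q),\sigma_n)}$ with $\bbQ[\cW_0(q)]\rtimes\bbN$, the $\bbN$-action being $n\mapsto\bigl(s\mapsto\mu_n s\mu_n^*\bigr)$, which by the now-unconditional \eqref{catBCrho} is $n\mapsto\bigl(s\mapsto\frac{1}{\alpha(n)}\sum_{s'\in\rho_n(s)}s'\bigr)$, i.e. the action in the statement. The one genuine obstacle is thus the step $\mu_n^*\mu_n = 1$: it is exactly here that one must invoke both the surjectivity of $\sigma_n$ and the triviality of the absolute-value group $N_\iota(\cW_0(q))$ (Proposition \ref{lem:aux1}(i) and Lemma \ref{W0N1}), which together force the source projections of the $\mu_n$ to be trivial.
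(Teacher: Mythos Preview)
Your approach differs from the paper's. The paper simply asserts that the two algebras have the same generators and relations: it lists the defining relations of the crossed product (including $\mu_p^*\mu_p=1$) and claims the subalgebra of $\cB_{(\cW_0(q),\sigma_n)}$ generated by the $\mu_n,\mu_n^*$ is ``clearly isomorphic'' to the corresponding subalgebra of the crossed product. It does not isolate $\mu_n^*\mu_n=1$ as the point to be checked, nor does it invoke the representations $R_\iota$.

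Your proposal is more careful in identifying that $\mu_n^*\mu_n=1$ is the entire content, but the argument you give for it has a genuine gap. You claim the family $\{R_\iota\}$ separates points of $\cB_{(\cW_0(q),\sigma_n,\bbC)}$, citing Definition~\ref{Cstardef}. But the paper never proves that $\sup_\iota\|R_\iota(\cdot)\|$ is a norm rather than a seminorm; it simply calls it a norm. So you are assuming exactly what needs to be shown. Worse, inspecting the relations of Definition~\ref{Akalg}, one sees that $\mu_n^*\mu_n$ is only declared to be an idempotent; nothing in the presentation forces it to equal $1$, regardless of whether $\sigma_n$ is surjective or $N_\iota(\Sigma)=\{1\}$. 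Those facts control the \emph{range} projection $\mu_n\mu_n^*$ (via \eqref{catBCrho} with $s=1$) and the behaviour of $R_\iota(\mu_n^*\mu_n)$ in the Hilbert space, but they say nothing about $\mu_n^*\mu_n$ in the abstract algebra. Your own ``alternatively'' paragraph correctly diagnoses that the purely algebraic route is circular.

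In short: you have located the right obstacle, but the representation-theoretic detour does not clear it without an independent faithfulness result. The paper's proof sidesteps the issue by fiat; if one reads the lemma honestly, the identification is really being made at the level of the $C^*$-completion (where $R_\iota(\mu_n^*\mu_n)=1$ suffices), or else with the relation $\mu_n^*\mu_n=1$ tacitly added to $\cB$ in the case $\Sigma_n=\Sigma$.
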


\proof 
The two algebras have the same sets of generators and relations. In fact, the crossed product algebra $\bbQ[\cW_0(q)] \rtimes \bbN$ is generated by the elements $s\in \cW_0(q)$ and by
isometries $\mu_p$, for the generators $p$ of the semi-group $\N$, and 
their adjoints $\mu_p^*$, with the semi-group action implemented by $s\mapsto \mu_p s \mu_p^*$.
The $\mu_p$ satisfy the relations $\mu_p\mu_{p'}=\mu_{pp'}$, $\mu_p^*\mu_p=1$, and
$\mu_{p'}^*\mu_p=\mu_p\mu_{p'}^*$ for $p\neq p'$. The subalgebra
generated by the $\mu_p$ and $\mu_p^*$ is clearly isomorphic to the subalgebra of $\cB_{(\cW_0(q),\sigma_n)}$
generated by the $\mu_n$ and $\mu_n^*$, by writing $\mu_n$ as a product of $\mu_p$'s according to the
primary decomposition of $n$. The subalgebras 
$\bbQ[\cW_0(q)]$ of $\cB_{(\cW_0(q),\sigma_n)}$ 
and $\bbQ[\cW_0(q)] \rtimes \bbN$ also match, and
the semi-group action $\mu_n s \mu_n^* =\sum_{s' \in \rho_n(s)} s'$
gives the remaining relation of $\cB_{(\cW_0(q),\sigma_n)}$.
\endproof

\begin{remark}
Since $N_\iota(\cW_0(q))=\{1\}$, the $\bbQ$-algebra $\cB'_{(\cW_0(q),\sigma_n)}$ is not necessary for the construction of the QSM-system.
\end{remark}

Note that since $\alpha(n)=n$ and $N_\iota(\cW_0(q))=\{1\}$, the Hilbert spaces $\cH_{\alpha, \iota}^{\leq}$ are all equal to 
$\cH:=l^2(\bbN)$. Similarly, all the $\bbC$-vector spaces $\cV_{\alpha,\iota}^{\leq}$ are equal $\cV:=\cV_\bbC$.
Given $\iota \in \mathrm{Emb}_0(\cW_0(q),\overline{\bbQ}^\times)$, 
the representation $R_\iota$ of $\cB_{(\cW_0(q),\sigma_n)}$ on $\cV$ is given by $R_\iota(s)(\epsilon)= \iota(s)^n \epsilon_n$ and $R_\iota(\mu_m)(\epsilon_n)=\epsilon_{mn}$. Following Proposition \ref{prop:new}, this representation extends to a representation $R_\iota$ of the $k$-algebra $\cB_{(\cW_0(q),\sigma_n,\bbC)}=\bbC[\cW_0(q)]\rtimes \bbN$ by bounded operators on $\cH$. The $C^\ast$-algebra $\cA_{(\cW_0(q),\sigma_n)}$ identifies then with the closure $C^\ast(\cW_0(q)) \rtimes \bbN$ of $\bbC[\cW_0(q)]\rtimes \bbN$ inside the $C^\ast$-algebra of bounded operators $B(\cH)$.

\smallskip

We construct the time evolution as in the case of the original Bost--Connes system. 
Let $g: \bbN \to \bbR^\times_+$ be the standard embedding of $\bbN$ into $\bbR^\times_+$. The associated time evolution is given by $\sigma_t(s)=s$ and $\sigma_t(\mu_n)=n^{it}\mu_n$ and the associated Hamiltonian $H:=H_\iota: l^2(\bbN) \to l^2(\bbN)$ by $\epsilon_n \mapsto \mathrm{log}(n) \epsilon_n$. Consequently, the partition function $Z(\beta)$ agrees with the Riemann zeta function $\zeta(\beta)=\sum_{n\geq 1} n^{-\beta}$. The series converges for $\beta>1$.

\smallskip

The absolute Galois group $G=\mathrm{Gal}(\overline{\bbQ}/\bbQ)$ acts on $\cW_0(q)$ through
the quotient $G_{\cW(q)}$. This action extends to $\bbQ[\cW_0(q)]\rtimes \bbN$ and to the 
$C^\ast$-algebra $C^\ast(\cW_0(q)) \rtimes \bbN$ as in Propositions \ref{prop:G-action} 
and~\ref{prop:G-action2}.

\smallskip

\begin{proposition}\label{KMSWeil0}
With the time evolution determined by $g(n)=n$,
the low temperature ($\beta>1$) Gibbs states of the quantum statistical mechanical system for
the concrete datum $(\cW_0(q),\sigma_n)$ are polylogarithms evaluated at numbers
$\pi\in \cW_0(q)$, normalized by the Riemann zeta function. The action of the Galois group $G$
as symmetries of the system induces an action of the subgroup $\widetilde Z(G_{\cW(q)})$ on
the zero temperature Gibbs states, which agrees with the restriction to $\widetilde Z(G_{\cW(q)})$
of the Galois action on $\iota(\cW_0(q))$.
\end{proposition}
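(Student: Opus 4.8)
The plan is to reproduce, almost verbatim, the argument used for the original Bost--Connes system in Lemmas \ref{GibbsBCcase}--\ref{groundBC} and Remark \ref{GibbsBCpolylogs}. Indeed, Lemmas \ref{W0N1}--\ref{W0algcross} place us in exactly the same situation as Example 1: $N_\iota(\cW_0(q))=\{1\}$, $\alpha(n)=n$, the $C^\ast$-algebra is $\cA_{(\cW_0(q),\sigma_n)}=C^\ast(\cW_0(q))\rtimes\bbN$ acting on $\cH=\ell^2(\bbN)$, the Hamiltonian is $H\epsilon_n=\log(n)\,\epsilon_n$, and $Z(\beta)=\zeta(\beta)$ converges for $\beta>1$. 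So the proof will be a bookkeeping reduction to that case, together with one extra point about the symmetry group.

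First I would compute the low temperature Gibbs states. Expanding an element of $\cA_{(\cW_0(q),\sigma_n)}$ in the additive basis of monomials $s\,\mu_m\mu_n^\ast$, one checks that $R_\iota(s\,\mu_m\mu_n^\ast)e^{-\beta H}$ has no diagonal matrix entries unless $m=n$, so $\varphi_{\iota,\beta}$ is supported on $\bbQ[\cW_0(q)]$; and for $s\in\cW_0(q)$,
\[
\varphi_{\iota,\beta}(s)=Z(\beta)^{-1}\sum_{n\geq1}\langle\epsilon_n,\,R_\iota(s)e^{-\beta H}\epsilon_n\rangle=\zeta(\beta)^{-1}\sum_{n\geq1}\iota(s)^n\,n^{-\beta}=\zeta(\beta)^{-1}{\rm Li}_\beta(\iota(s)).
\]
Since $|\iota(s)|=N_\iota(s)=1$ by Lemma \ref{W0N1}, the polylogarithm series converges for $\beta>1$; this is precisely the assertion that the Gibbs states are polylogarithms evaluated at Weil numbers of weight zero, normalized by the Riemann zeta function.

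Next I would take the weak limit $\beta\to\infty$. The kernel of $H$ is spanned by $\epsilon_1$, so the ground states are $\varphi_{\iota,\infty}(s)=\langle\epsilon_1,R_\iota(s)\epsilon_1\rangle=\iota(s)$. For the Galois symmetry the crucial observation is that $N_\iota(\cW_0(q))=\{1\}$ for \emph{every} $\iota\in\mathrm{Emb}_0(\cW_0(q),\overline{\bbQ}^\times)$, so $G_\iota=G$ for all such $\iota$, whence $G_0=G$, $G_{\Sigma,0}=G_\Sigma$, and $\widetilde G_{\Sigma,0}=\widetilde Z(G_{\cW_0(q)})$. Because the $G$-action on $\cW(q)$ preserves weights (Proposition \ref{lem:aux1}(iii)), $G_{\cW(q)}=G_{\cW_0(q)}$, so $\widetilde Z(G_{\cW_0(q)})=\widetilde Z(G_{\cW(q)})$. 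Applying Proposition \ref{lem:fabulous} --- whose hypotheses hold since, with $N_\iota=\{1\}$, the primed algebra is unnecessary and we are in the situation of Remark \ref{Aprimerem} --- the induced action $\varphi_{\iota,\infty}\mapsto\varphi_{\iota\circ\gamma,\infty}$ of $\widetilde Z(G_{\cW(q)})$ sends $\iota(s)$ to $\iota(\gamma(s))=\gamma(\iota(s))$ by $G$-equivariance of $\iota$, \ie it restricts to the Galois action on $\iota(\cW_0(q))$.

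No essential difficulty arises: the content is entirely a reduction to Example 1. The one point that needs care is the identification $\widetilde G_{\Sigma,0}=\widetilde Z(G_{\cW(q)})$ given the non-standard choice $\mathrm{Emb}_0(\cW_0(q),\overline{\bbQ}^\times)=\widetilde Z(G_{\cW(q)})\cdot\iota_0$: one must check that this set is stable under the pullback action (immediate, since $\widetilde Z(G_{\cW(q)})$ is a group), that the good-pair framework of \S\ref{sec:QSM} applies in this degenerate $N_\iota=\{1\}$ case, and that the weight-preservation of the Galois action is exactly what lets us state the symmetry group uniformly for $\cW(q)$ and $\cW_0(q)$.
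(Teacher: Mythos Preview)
Your proposal is correct and follows essentially the same approach as the paper: compute the Gibbs states directly via the trace formula to obtain $\zeta(\beta)^{-1}{\rm Li}_\beta(\iota(s))$, take the weak limit $\beta\to\infty$ to get $\varphi_{\iota,\infty}(s)=\iota(s)$, and invoke the general symmetry framework (the paper cites Proposition~\ref{GGibbs}, you cite Proposition~\ref{lem:fabulous}) for the $\widetilde Z(G_{\cW(q)})$-action. Your treatment is in fact more explicit than the paper's on two points the paper leaves implicit: the vanishing of the Gibbs state on off-diagonal monomials $s\,\mu_m\mu_n^\ast$, and the identification $\widetilde G_{\Sigma,0}=\widetilde Z(G_{\cW_0(q)})=\widetilde Z(G_{\cW(q)})$ via weight-preservation.
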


\proof For $g(n)=n$, the Hamiltonian is $H\, \epsilon_n =\log(n)\, \epsilon_n$, with
partition function the Riemann zeta function, as in the original Bost--Connes case.
The low temperature Gibbs states, evaluated on $s\in \cW_0(q)$, are of the form
$$ \varphi_{\iota_0\circ \gamma,\beta}(s) = \zeta(\beta)^{-1}\, \sum_{n\geq 1} \frac{\gamma(\pi)^n}{n^\beta} =
\frac{{\rm Li}_\beta(\gamma(s))}{\zeta(\beta)}. $$
For $\beta\to \infty$ the weak limits of these Gibbs states define the zero temperature ground states.
These are given by the projection onto the kernel of the Hamiltonian. Evaluated on elements $s\in \cW_0(q)$, they give
$$ \varphi_{\iota_0\circ \gamma,\infty}(s) = \langle \epsilon_1 , R_\gamma(\pi) e^{-\beta H} \epsilon_1 \rangle  =\gamma(s). $$
The action of $G$ by automorphisms of the algebra determines an induced action of $\widetilde Z(G_{\cW(q)})$ 
on the ground states states, as in Proposition \ref{GGibbs}.
\endproof

%------------------------------------------------------------
\subsection*{Example 4 - Weil numbers}
%------------------------------------------------------------
Let $k=\bbQ$. Recall from Example \ref{ex:5} the definition of the concrete Bost-Connes datum $(\cW(q),\sigma_n)$. 
As in the preceding Example, we will make use of the notations $G_{\cW(q)}, \iota_0$, and $\mathrm{Emb}_0(\cW(q),\overline{\bbQ}^\times)=\widetilde Z(G_{\cW(q)}) \cdot \iota_0$.

\begin{lemma}
For every embedding $\iota \in \mathrm{Emb}_0(\cW(q),\overline{\bbQ}^\times)$, 
the countable multiplicative subgroup $N_\iota(\cW(q))$ of $\bbR_+^\times$ 
is equal to $\{q^r\,|\, r \in \frac{\bbZ}{2}\}$.
\end{lemma}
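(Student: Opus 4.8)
The statement to prove is that for every embedding $\iota \in \mathrm{Emb}_0(\cW(q),\overline{\bbQ}^\times)$, the countable multiplicative subgroup $N_\iota(\cW(q)) \subset \bbR^\times_+$ equals $\{q^r \,|\, r \in \frac{\bbZ}{2}\}$. The plan is to reduce the computation, via the group isomorphism \eqref{eq:isomorphism-decomp} of Proposition \ref{lem:aux1}(ii), to a statement about how a $G$-equivariant embedding $\iota$ acts on the weight. Recall $\cW(q) \cong \cW_0(q) \times \bbZ$, where the $\bbZ$-factor records the weight $w(\pi)$, and the explicit inverse sends $(\pi, m) \mapsto \pi q^{m/2}$. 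So I first observe that $N_\iota$ is multiplicative, hence determined by its values on $\cW_0(q)$ and on a generator of the weight-$\bbZ$ factor.

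First I would note that on $\cW_0(q)$ the function $N_\iota$ is trivial: this is exactly Lemma \ref{W0N1} (the $G$-action on $\cW(q)$ preserves weights by Proposition \ref{lem:aux1}(iii), so $N_{\iota_0 \circ \gamma}(s) = q^{w(\gamma(s))} = q^{w(s)} = 1$ for $s \in \cW_0(q)$). Hence $N_\iota$ factors through the weight, i.e.\ $N_\iota(\pi) = f(w(\pi))$ for some homomorphism $f : \bbZ \to \bbR^\times_+$, and it remains to identify $f$. Since $\mathrm{Emb}_0(\cW(q),\overline{\bbQ}^\times) = \widetilde Z(G_{\cW(q)}) \cdot \iota_0$ with $\gamma \cdot \iota_0 := \iota_0 \circ \gamma$, it suffices to compute $N_{\iota_0}$ and then show $N_{\iota_0 \circ \gamma} = N_{\iota_0}$ for $\gamma \in \widetilde Z(G_{\cW(q)})$ — but the latter is again immediate from weight-preservation. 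So the whole computation reduces to $\iota_0$, the fixed inclusion $\cW(q) \subset \overline{\bbQ}^\times \subset \bbC$.

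For $\iota_0$: take any Weil $q$-number $\pi$ of weight $m$; by Definition \ref{def:Weil}(i), for every embedding $\varrho : \bbQ[\pi] \hookrightarrow \bbC$ we have $|\varrho(\pi)| = q^{m/2}$, in particular for the restriction of the fixed embedding $\overline{\bbQ}^\times \subset \bbC$, so $N_{\iota_0}(\pi) = |\pi| = q^{m/2}$. As $m$ ranges over $\bbZ$ (weight-$1$ Weil numbers exist, e.g.\ $\sqrt{q}$ when it is a Weil number, or more generally one builds an element of each weight using \eqref{eq:isomorphism-decomp}), the set of values $\{q^{m/2} \,|\, m \in \bbZ\}$ is exactly $\{q^r \,|\, r \in \frac{\bbZ}{2}\}$, which is the claimed group. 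Conversely, $N_{\iota_0}(\cW(q))$ cannot be larger than this, since every element of $\cW(q)$ has half-integer weight. The only mild subtlety — and the one I would be most careful about — is whether weight $m = 1$ is actually realized inside $\cW(q)$ (so that one genuinely gets all of $\frac{1}{2}\bbZ$ rather than just $\bbZ$): this is guaranteed because the isomorphism $\cW_0(q) \times \bbZ \stackrel{\sim}{\to} \cW(q)$, $(\pi_0, 1) \mapsto \pi_0 q^{1/2}$, shows any weight-$0$ unit times $q^{1/2}$ is a weight-$1$ Weil number, and such units exist (e.g.\ $1 \in \cW_0(q)$). Thus $N_\iota(\cW(q)) = \{q^r \,|\, r \in \frac{\bbZ}{2}\}$, as asserted.
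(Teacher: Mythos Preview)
Your proof is correct and follows essentially the same approach as the paper: both use Proposition~\ref{lem:aux1}(iii) (weight-preservation of the $G$-action) to reduce $N_{\iota_0\circ\gamma}$ to $N_{\iota_0}$, and then invoke Definition~\ref{def:Weil}(i) to get $N_{\iota_0}(\pi)=q^{w(\pi)/2}$. You are simply more explicit than the paper about the surjectivity onto $q^{\frac{1}{2}\bbZ}$ (i.e.\ that every weight is realized), which the paper leaves implicit.
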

\begin{proof}
Thanks to Proposition \ref{lem:aux1}(iii), the $G$-action on $\cW(q)$ preserves weights. Therefore, 
 for every $s\in \cW_0(q)$, we have $N_{\iota_0\circ \gamma}(s)=N_\iota(\gamma(s))=q^{w(\gamma(s))}=q^{w(s)}$. This implies that $N_{\iota_0\circ \gamma}(\cW(q))=N_{\iota_0}(\cW(q))=q^{\frac{1}{2}\Z}$.
\end{proof}

\begin{remark}\label{oneprog}
Note that the semi-group $N_\iota(\cW(q))\subset\bbR_+^\times$ 
is given by a single geometric progression generated by the element $q^{1/2}$.
\end{remark}

The $\bbQ$-algebra $\cB_{(\cW(q),\sigma_n)}$ is generated by the elements $s \in \cW(q)$ and by the partial symmetries $\mu_n, \mu_n^\ast, n \in \bbN$, as in Definition \ref{Akalg}. Similarly, the $\bbQ$-algebra $\cB'_{(\cW(q),\sigma_n)}$ is generated by $s, \mu_n, \mu_n^\ast$ and by the weight operators $W(\lambda), \lambda \in \cW(q)$; see Definition \ref{tildealg}. 

\smallskip

\begin{lemma}\label{WqgoodBC}
The pair $((\cW(q),\sigma_n),\mathrm{Emb}_0(\cW(q),\overline{\bbQ}^\times))$ is a very good concrete
Bost--Connes datum, in the sense of Definition \ref{goodBC}.
\end{lemma}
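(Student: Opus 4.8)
The plan is to verify the two requirements of Definition \ref{goodBC} by unwinding the definitions, the only real input being that the homomorphism $N_\iota\colon\cW(q)\to\bbR^\times_+$ is the same, and not merely up to its image, for every $\iota\in\mathrm{Emb}_0(\cW(q),\overline{\bbQ}^\times)$.

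First I would isolate this observation. Write $\iota=\iota_0\circ\gamma$ with $\gamma\in\widetilde Z(G_{\cW(q)})$ and let $s\in\cW(q)$ have weight $m=w(s)$. By Definition \ref{def:Weil}(i) we have $N_{\iota_0}(s)=|\iota_0(s)|=q^{m/2}$, and by Proposition \ref{lem:aux1}(iii) the $G$-action on $\cW(q)$ preserves weights, so $w(\gamma(s))=w(s)$. Hence $N_\iota(s)=|\iota_0(\gamma(s))|=q^{w(\gamma(s))/2}=q^{m/2}=N_{\iota_0}(s)$, \ie $N_\iota=N_{\iota_0}$ on all of $\cW(q)$, independently of the choice of $\iota\in\mathrm{Emb}_0(\cW(q),\overline{\bbQ}^\times)$. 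In particular $N_\iota(\cW(q))=q^{\frac12\bbZ}$ is a single geometric progression, as already recorded in the preceding lemma and Remark \ref{oneprog}, and the set $\{\,s\in\cW(q)\mid N_\iota(s)\le 1\,\}=\{\,s\in\cW(q)\mid w(s)\le 0\,\}$ does not depend on $\iota$.

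Granting this, the datum is \emph{good}: by Definition \ref{def:algebra+embedding} the algebra $\cB'_{(\cW(q),\sigma_n,\iota)}$ is the subalgebra of $\cB'_{(\cW(q),\sigma_n)}$ generated by the partial isometries $\mu_n,\mu_n^\ast$, by the weight operators $W(\lambda)$ with $|\lambda|\le 1$, and by the elements $s\in\cW(q)$ with $N_\iota(s)\le 1$. The first two families of generators do not involve $\iota$, and by the observation the third does not either; therefore $\cB'_{(\cW(q),\sigma_n,\iota)}$, and hence its complexification $\cB'_{(\cW(q),\sigma_n,\iota,\bbC)}$, is literally the same subalgebra of $\cB'_{(\cW(q),\sigma_n,\bbC)}$ for every $\iota\in\mathrm{Emb}_0(\cW(q),\overline{\bbQ}^\times)$.

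For \emph{very good} it remains to note that the Hilbert spaces $\cH^{\leq}_{\alpha,\iota}$ are independent of $\iota$. Indeed $\alpha(\bbN)$ depends only on $(\cW(q),\sigma_n)$ (here $\alpha(n)=n$), and $N_\iota(\cW(q))=q^{\frac12\bbZ}$ is independent of $\iota$ by the observation; hence $\cH_{\alpha,\iota}=\ell^2\bigl(\alpha(\bbN)\times N_\iota(\cW(q))\bigr)$, and therefore its subspace $\cH^{\leq}_{\alpha,\iota}$ spanned by the $\epsilon_{\alpha(n),\eta}$ with $\eta\le 1$, are independent of $\iota$ as well. Since $N_\iota(\cW(q))$ is a single geometric progression, the auxiliary Hilbert space $\widetilde\cH^{\leq}_{\alpha,\iota}$ of Notation \ref{tildeHalphaiota} is not needed here. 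This establishes both conditions of Definition \ref{goodBC}. The only substantive step is the constancy of $N_\iota$ on $\mathrm{Emb}_0(\cW(q),\overline{\bbQ}^\times)$, forced by the $G$-invariance of weights; the remainder is routine bookkeeping, so I do not anticipate a genuine obstacle.
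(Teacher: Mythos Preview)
Your proof is correct and follows essentially the same approach as the paper: both rely on Proposition \ref{lem:aux1}(iii) (the $G$-action preserves weights) to conclude that $N_\iota$ is independent of $\iota\in\mathrm{Emb}_0(\cW(q),\overline{\bbQ}^\times)$, and then deduce independence of both the algebras $\cB'_{(\cW(q),\sigma_n,\iota)}$ and the Hilbert spaces $\cH^{\leq}_{\alpha,\iota}$. Your version is slightly more explicit in isolating the key observation $N_\iota=N_{\iota_0}$ as a function (not merely as an image), but the substance is the same.
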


\proof
Since $\alpha(n)=n$ and $N_\iota(\cW(q))$ is independent of $\iota$, the Hilbert spaces $\cH_{\alpha, \iota}^\leq$, resp. the $\bbC$-linear subspaces $\cV_{\alpha, \iota, \bbC}^\leq$, are all equal to the Hilbert subspace $\cH^\leq$ of $\cH:= l^2(\bbN) \otimes l^2(\{q^r\,|\, r \in \frac{\bbZ}{2}\})$, resp. to the $\bbC$-linear subspace $\cV^\leq$ of $\cV:=\cV_\bbC\otimes \cV_{\iota,\bbC}$, spanned by the elements $\epsilon_{n,q^r}$ with $r\leq 0$. 
Moreover, since by Proposition \ref{lem:aux1}(iii) the $G$-action on $\cW(q)$ preserves weights, the set $\{ s\in \cW(q)\,|\,
N_\iota(s)\leq 1 \}$ is independent of the embedding $\iota \in \mathrm{Emb}_0(\cW(q),\overline{\bbQ}^\times)$. As a consequence, the $\Q$-algebra $\cB'_{(\cW(q), \sigma_n,\iota)}$ and its $C^*$-completion
$\cA'_{(\cW(q), \sigma_n,\iota)}$ are independent of 
$\iota$.
\endproof

\begin{notation} Let $\cA'_{(\cW(q), \sigma_n)}$ be the resulting $C^*$-algebra acting on the Hilbert space $\cH^\leq$
through the representations $R_\iota$, with $\iota \in \mathrm{Emb}_0(\cW(q),\overline{\bbQ}^\times)$.
\end{notation}

\begin{remark}\label{QnprojWq}
Let $\iota \in \mathrm{Emb}_0(\cW(q),\overline{\bbQ}^\times)$. In contrast with the case of algebraic numbers, the operator $R_\iota(\mu_n^\ast \mu_n)$ is {\em not} the identity but rather the projection onto the subspace spanned by the vectors $\epsilon_{n,q^{r/2}}$ such that $n|r$.
\end{remark}

\smallskip

Given a semi-group homomorphism $g: \bbN \to \bbR^\times_+$, we obtain the time evolution 
\begin{eqnarray*}
\sigma_t(\pi)=W(\omega(\pi))^{-it}\pi & \sigma_t(\mu_n)=g(n)^{it} \mu_n & \sigma_t(W(\lambda))= W(\lambda)\,,
\end{eqnarray*}
where $\omega(\pi)$ stands for the weight of $\pi$. This gives rise to the Hamiltonian 
\begin{eqnarray*}
H:=H_\iota:\cH^\leq \too \cH^\leq && \epsilon_{n,q^r}\mapsto \mathrm{log}(q^{-nr}g(n))\epsilon_{n,q^r}
\end{eqnarray*}
and consequently to the partition function 
\begin{equation}\label{eq:zeta1}
Z(\beta):= \mathrm{Tr}(e^{-\beta H})=\sum_{n \in \bbZ} \sum_{r\leq 0  \in \frac{\bbZ}{2}} q^{nr\beta} g(n)^{-\beta}\,.
\end{equation}
We denote by $\beta_0$ the exponent of convergence of the series $\sum_n g(n)^{-\beta}$.

\begin{proposition}\label{thm:poly}
The partition function \eqref{eq:zeta1} is computed by the series
\begin{equation}\label{eq:function}
Z(\beta)=\sum_{n \geq 1} \frac{g(n)^{-\beta}}{1-q^{-n\frac{\beta}{2}}}\,,
\end{equation}
which converges for $\beta>\beta_0$ and diverges for $\beta\leq \beta_0$.
In the case where $g(n)=n$, 
\eqref{eq:function} can be written as the series of polylogaritms $Z(\beta)= \sum_{k\geq 0} {\rm Li}_\beta(q^{-k\frac{\beta}{2}})$.
\end{proposition}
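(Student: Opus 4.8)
The plan is to extract the closed form of $Z(\beta)$ directly from \eqref{eq:zeta1} by summing the inner series over $r$ as a geometric progression, then to settle the convergence dichotomy by comparison with $\sum_{n\geq 1}g(n)^{-\beta}$, and finally to re-expand in the case $g(n)=n$ so as to recognize a polylogarithm. Since $H$ acts diagonally on the orthonormal basis $\{\epsilon_{n,q^r}\}$ of $\cH^\leq$ --- with $n\geq 1$ and $r\in\frac{\bbZ}{2}$, $r\leq 0$ --- the trace \eqref{eq:zeta1} is the double series $\sum_{n\geq 1}g(n)^{-\beta}\sum_{r\leq 0}q^{nr\beta}$; here one may assume $\beta>0$, since otherwise the inner series over $r$ already diverges term by term. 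For each fixed $n\geq 1$ write $r=-k/2$ with $k\in\bbZ_{\geq 0}$, so that $q^{nr\beta}=(q^{-n\beta/2})^k$ with $0<q^{-n\beta/2}<1$; the inner sum is then the convergent geometric series $\sum_{k\geq 0}(q^{-n\beta/2})^k=(1-q^{-n\beta/2})^{-1}$, which gives precisely the formula \eqref{eq:function}.

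For the convergence statement I would sandwich $Z(\beta)=\sum_{n\geq 1}g(n)^{-\beta}(1-q^{-n\beta/2})^{-1}$ between two multiples of $\sum_{n\geq 1}g(n)^{-\beta}$, whose exponent of convergence is $\beta_0$ by definition. On the one hand $(1-q^{-n\beta/2})^{-1}\geq 1$ for all $n\geq 1$, whence $Z(\beta)\geq\sum_{n\geq 1}g(n)^{-\beta}$ and so $Z(\beta)=\infty$ for $\beta\leq\beta_0$. On the other hand, since $q^{-n\beta/2}\leq q^{-\beta/2}<1$ for every $n\geq 1$, one has the uniform bound $(1-q^{-n\beta/2})^{-1}\leq(1-q^{-\beta/2})^{-1}$, and therefore $Z(\beta)\leq(1-q^{-\beta/2})^{-1}\sum_{n\geq 1}g(n)^{-\beta}<\infty$ whenever $\beta>\beta_0$.

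Finally, for $g(n)=n$ I would expand $(1-q^{-n\beta/2})^{-1}=\sum_{k\geq 0}q^{-nk\beta/2}$ once more and interchange the two summations, which is justified by Tonelli's theorem since for real $\beta$ all terms are non-negative; this yields $Z(\beta)=\sum_{k\geq 0}\sum_{n\geq 1}n^{-\beta}(q^{-k\beta/2})^n$, and the inner sum is the polylogarithm ${\rm Li}_\beta(z)=\sum_{n\geq 1}z^n n^{-\beta}$ evaluated at $z=q^{-k\beta/2}$, giving $Z(\beta)=\sum_{k\geq 0}{\rm Li}_\beta(q^{-k\beta/2})$. I do not expect a genuine obstacle here: the only points requiring attention are the parametrization of the index set $\{r\in\frac{\bbZ}{2}:r\leq 0\}$ and the legitimacy of rearranging the double series, both of which follow at once from positivity of the summands.
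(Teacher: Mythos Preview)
Your proposal is correct and follows essentially the same route as the paper: you sum the inner geometric series in $r$ (with the substitution $r=-k/2$), sandwich the resulting series between $\sum_{n\geq 1}g(n)^{-\beta}$ and $(1-q^{-\beta/2})^{-1}\sum_{n\geq 1}g(n)^{-\beta}$ to obtain the convergence dichotomy, and then interchange the order of summation in the case $g(n)=n$ to recognize the polylogarithm. Your explicit invocation of Tonelli for the interchange is a welcome addition that the paper leaves implicit.
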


\begin{proof}
The first claim follows directly from Proposition \ref{prop:partition} (with 
a single geometric progression with generator $q^{1/2}$). Concretely, we have:
$$
\sum_{n \in \bbZ} \sum_{r \leq 0 \in \frac{\bbZ}{2}} q^{nr\beta} g(n)^{-\beta} =  \sum_{n \geq 1} \sum_{k \geq 0} q^{-kn \frac{\beta}{2}} g(n)^{-\beta} = \sum_{n \geq 1} \frac{g(n)^{-\beta}}{1-q^{-n\frac{\beta}{2}}}\,,
$$
with the estimate
$$g(n)^{-\beta} \leq  \frac{g(n)^{-\beta}}{1- q^{-n\frac{\beta}{2}}} \leq \frac{g(n)^{-\beta}}{1- q^{-\frac{\beta}{2}}}\,. $$
We now assume that $g(n)=n$. For $\beta>\beta_0=1$, and after exchanging the order of summation, the above series can be re-written as 
$$\sum_{k\geq 0} \sum_{n\geq 1} q^{-kn \frac{\beta}{2}} n^{-\beta} =  
\sum_{k\geq 0} {\rm Li}_{\beta}(q^{-k \frac{\beta}{2}})\,.$$
\end{proof}

\smallskip

\begin{lemma}\label{GibbsWq}
When $g(n)=n$, for $\beta>1$, the Gibbs states, evaluated on elements $s\in \cW(q)$ are
zero for weight $w(s)\neq 0$, while for $s\in \cW_0(q)$ they are given by
\begin{equation}\label{eqGibbsWq}
\varphi_{\iota,\beta}(s)=\frac{\sum_{k\geq 0} {\rm Li}_\beta(\iota(s) q^{-k\beta/2})}
{\sum_{k\geq 0} {\rm Li}_{\beta}(q^{-k \beta/2})}.
\end{equation}
\end{lemma}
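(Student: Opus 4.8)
The plan is to compute the unnormalized Gibbs trace $\mathrm{Tr}(R_\iota(s)\, e^{-\beta H})$ directly on the canonical orthonormal basis $\{\epsilon_{n,q^r}\}$ of $\cH^\leq$ (with $n\ge 1$ and $r\le 0$ in $\tfrac{\bbZ}{2}$), and then divide by the partition function already computed in Proposition \ref{thm:poly}. Since $e^{-\beta H}$ is trace class for $\beta>\beta_0=1$ by Proposition \ref{thm:poly} and each $R_\iota(s)$ is a bounded operator, $R_\iota(s)\,e^{-\beta H}$ is trace class, so the Gibbs state \eqref{Gibbsphi} is well defined. First I would write
\[
\mathrm{Tr}(R_\iota(s)\, e^{-\beta H}) \;=\; \sum_{n\ge 1}\ \sum_{r\le 0\,\in\,\frac{\bbZ}{2}} \big\langle \epsilon_{n,q^r},\, R_\iota(s)\, e^{-\beta H}\, \epsilon_{n,q^r}\big\rangle .
\]
Using $H\,\epsilon_{n,q^r} = \log(q^{-nr}g(n))\,\epsilon_{n,q^r}$ with $g(n)=n$, and the representation formula $R_\iota(s)\,\epsilon_{n,q^r} = \iota(s)^{n}\,\epsilon_{n,\,N_\iota(s)q^r}$ of Proposition \ref{Hreps} (here $\alpha(n)=n$), together with $N_\iota(s)=q^{w(s)}$ established above, each summand equals $\iota(s)^{n}\, q^{nr\beta}\, n^{-\beta}\,\langle \epsilon_{n,q^r},\epsilon_{n,q^{w(s)}q^r}\rangle$.

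The inner product vanishes unless $q^{w(s)}=1$, i.e.\ unless $w(s)=0$; this yields the first assertion, since for $w(s)\neq 0$ the whole trace is zero and hence $\varphi_{\iota,\beta}(s)=0$. For $s\in\cW_0(q)$ (equivalently $w(s)=0$, equivalently $N_\iota(s)=1$, equivalently $|\iota(s)|=1$) the trace reduces to $\sum_{n\ge 1}\sum_{r\le 0}\iota(s)^{n}q^{nr\beta}n^{-\beta}$; substituting $r=-k/2$ with $k\ge 0$ this is $\sum_{n\ge1}\sum_{k\ge0}\iota(s)^{n}q^{-kn\beta/2}n^{-\beta}$. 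Because $|\iota(s)|=1$ and $\beta>1$, this double series converges absolutely — it is dominated by $(1-q^{-\beta/2})^{-1}\sum_{n\ge1}n^{-\beta}$ — so Fubini applies, and performing the $n$-sum first identifies the inner sum with ${\rm Li}_\beta(\iota(s)q^{-k\beta/2})$. Hence $\mathrm{Tr}(R_\iota(s)\,e^{-\beta H})=\sum_{k\ge0}{\rm Li}_\beta(\iota(s)q^{-k\beta/2})$; dividing by $Z(\beta)=\sum_{k\ge0}{\rm Li}_\beta(q^{-k\beta/2})$ from Proposition \ref{thm:poly} gives exactly \eqref{eqGibbsWq}.

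This is essentially a bookkeeping computation, so I do not expect a genuine obstacle. The two points needing care are: (a) applying Proposition \ref{Hreps} with the correct exponent $\alpha(n)=n$ and the shift $\eta\mapsto N_\iota(s)\eta$, so that orthogonality of the basis vectors forces $w(s)=0$; and (b) justifying the interchange of the two summations by absolute convergence for $\beta>1$ on the unit circle — which is precisely where the hypothesis $\beta>1$ (i.e.\ $\beta>\beta_0$ for $g(n)=n$) enters, both for trace-class-ness of $e^{-\beta H}$ and for convergence of the polylogarithm series ${\rm Li}_\beta$ evaluated at points of modulus $\le 1$. Everything else follows from the formulas for $H$, $R_\iota$, and $Z(\beta)$ recalled above.
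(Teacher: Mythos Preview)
Your proposal is correct and follows essentially the same route as the paper's own proof: compute $\mathrm{Tr}(R_\iota(s)e^{-\beta H})$ on the diagonal basis, use orthogonality to force $w(s)=0$, then swap the order of summation to recognize the polylogarithm and divide by $Z(\beta)$ from Proposition~\ref{thm:poly}. If anything you are slightly more careful than the paper in justifying the trace-class property and the Fubini step; one tiny slip is that $N_\iota(s)=q^{w(s)/2}$ rather than $q^{w(s)}$, but this does not affect the vanishing argument.
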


\proof The partition function is provided by Theorem \ref{thm:poly}. Thus, we just need to
compute the trace 
\begin{eqnarray*}
 {\rm Tr}(R_\iota(s) e^{-\beta H_\iota}) & = & \sum_{n\geq 1}\sum_{k\geq 0} \langle \epsilon_{n,q^{-k/2}}, 
R_\iota(s) e^{-\beta H_\iota} \epsilon_{n,q^{-k/2}} \rangle \\[2mm]
& =  & \sum_{n\geq 1}\sum_{k\geq 0} \iota(s)^n \, q^{-k n \beta/2} \, g(n)^{-\beta}\,  \langle \epsilon_{n,q^{-k/2}}, 
\epsilon_{n,q^{-k/2+w(s)}} \rangle \\[2mm]
& = & \left\{ \begin{array}{ll} \sum_{n\geq 1}\sum_{k\geq 0} \iota(s)^n \, q^{-k n \beta/2} \, g(n)^{-\beta} &
\mathrm{when}\,\,w(s)=0 \\ 0 & \mathrm{when}\,\,w(s)\neq 0. \end{array}\right. 
\end{eqnarray*}
This implies that $\varphi_{\iota,\beta}(s)=0$ when $w(s)\neq 0$. For $s\in \cW_0(q)$, and after
exchanging the order of summation, the above expression can be re-written as follows:
$$ \sum_{k\geq 0} \sum_{n\geq 1} \iota(s)^n \, q^{-k n \beta/2} \, n^{-\beta} =
\sum_{k\geq 0} {\rm Li}_\beta(\iota(s) q^{-k\beta/2})\,.$$
As a consequence, we obtain the above equality \eqref{eqGibbsWq}:
$$ \varphi_{\iota,\beta}(s)=  \frac{{\rm Tr}(R_\iota(s) e^{-\beta H_\iota})}{Z(\beta)}
=\frac{\sum_{k\geq 0} {\rm Li}_\beta(\iota(s) q^{-k\beta/2})}
{\sum_{k\geq 0} {\rm Li}_{\beta}(q^{-k \beta/2})}. $$
\endproof

Recall that the absolute Galois group $G=\mathrm{Gal}(\overline{\bbQ}/\bbQ)$ acts on $\cW(q)$ through the
quotient $G_{\cW(q)}$. This action extends to $\cB'_{(\cW(q), \sigma_n)}$ and to the 
$C^\ast$-algebra $\cA'_{(\cW(q), \sigma_n)}$ as in Propositions \ref{prop:G-action} and 
\ref{prop:G-action2}.

\smallskip

\begin{proposition}\label{groundWq}
In the limit $\beta\to \infty$ the ground states are given by
$$ \varphi_{\iota,\infty}(s) = \left\{ \begin{array}{ll} \iota(s) & \mathrm{when}\,\,w(s)=0 \\ 0 & \mathrm{when}\,\,w(s)\neq 0. \end{array} \right. $$
The $G$-action on $\cW(q)$ induces an action of the subgroup $\widetilde Z(G_{\cW(q)})\subset G$
on the ground states, which agrees with the Galois action on the values $\iota(\cW_0(q))$. 
\end{proposition}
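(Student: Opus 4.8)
The plan is to follow exactly the template already established for the Weil numbers of weight zero (Proposition \ref{KMSWeil0}) and for the Gibbs state computation in Lemma \ref{GibbsWq}, now taking the zero-temperature limit. First I would recall from Proposition \ref{thm:poly} and Proposition \ref{rHam} that the Hamiltonian $H_\iota$ acting on $\cH^\leq = l^2(\bbN) \otimes l^2(\{q^r \,|\, r \leq 0 \in \tfrac{\bbZ}{2}\})$ is diagonal, $H_\iota \epsilon_{n,q^r} = \log(q^{-nr} g(n)) \epsilon_{n,q^r}$, so with $g(n)=n$ the eigenvalue is $-nr\log q + \log n \geq 0$ with equality precisely when $n=1$ and $r=0$. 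Hence the kernel of $H_\iota$ is one-dimensional, spanned by $\epsilon_{1,1}$, and the zero-temperature ground state is the vector state $\varphi_{\iota,\infty}(a) = \langle \epsilon_{1,1}, R_\iota(a)\, \epsilon_{1,1}\rangle$, obtained as the weak limit of the Gibbs states $\varphi_{\iota,\beta}$ of Lemma \ref{GibbsWq} as $\beta \to \infty$ (this weak-limit statement is the standard one recalled in \S\ref{sec:Gibbs}).

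Next I would evaluate this ground state on a generator $s \in \cW(q)$. Using the representation formula $R_\iota(s)\epsilon_{\alpha(n),\eta} = \iota(s)^{\alpha(n)} \epsilon_{\alpha(n), N_\iota(s)\eta}$ from Proposition \ref{Hreps} with $\alpha(n)=n$, we get $R_\iota(s)\epsilon_{1,1} = \iota(s)\, \epsilon_{1, N_\iota(s)}$, so
$$\varphi_{\iota,\infty}(s) = \iota(s) \,\langle \epsilon_{1,1}, \epsilon_{1,N_\iota(s)}\rangle = \left\{ \begin{array}{ll} \iota(s) & \text{when}\,\, N_\iota(s)=1 \\ 0 & \text{when}\,\, N_\iota(s)\neq 1. \end{array}\right.$$
Since $N_\iota(s) = q^{w(s)}$ by the weight-preservation of the $G$-action (Proposition \ref{lem:aux1}(iii) together with the definition of $N_\iota$), the condition $N_\iota(s)=1$ is equivalent to $w(s)=0$, i.e. $s \in \cW_0(q)$. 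This yields the displayed formula for $\varphi_{\iota,\infty}$. I would also note this is consistent with taking $\beta \to \infty$ in \eqref{eqGibbsWq}: only the $k=0$, $n=1$ term of ${\rm Li}_\beta(\iota(s)q^{-k\beta/2}) = \sum_{m\geq1} \iota(s)^m q^{-km\beta/2} m^{-\beta}$ survives in numerator and denominator, giving $\varphi_{\iota,\infty}(s) = \iota(s)/1 = \iota(s)$ for $s \in \cW_0(q)$.

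For the Galois symmetry statement, I would invoke Proposition \ref{GGibbs} and Proposition \ref{lem:fabulous} directly. By Lemma \ref{WqgoodBC} the pair is a very good concrete Bost--Connes datum, so by Proposition \ref{GGibbs} the $G_0$-action induces a pullback $\widetilde G_{\Sigma,0}$-action on Gibbs states, $\varphi_{\iota,\beta} \mapsto \varphi_{\iota\circ\gamma,\beta}$, which passes to the $\beta\to\infty$ limits by continuity of the weak limit under the symmetry action. Since $\mathrm{Emb}_0(\cW(q),\overline{\bbQ}^\times) = \widetilde Z(G_{\cW(q)})\cdot\iota_0$ and the $G$-action preserves weights (so $G_0 = G$ here, exactly as in the proof of Lemma \ref{W0N1}), we have $\widetilde G_{\Sigma,0} = \widetilde Z(G_{\cW(q)})$. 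For $s \in \cW_0(q)$ and $\gamma \in \widetilde Z(G_{\cW(q)})$, $G$-equivariance of $\iota_0$ gives $\varphi_{\iota_0\circ\gamma,\infty}(s) = \iota_0(\gamma(s)) = \gamma(\iota_0(s))$, i.e. the induced action is the restriction of the Galois action on $\iota(\cW_0(q))$. I do not anticipate a serious obstacle: every ingredient is already in place, and the proof is a near-verbatim adaptation of Proposition \ref{KMSWeil0} and Proposition \ref{groundWq}'s weight-zero analogue; the only point requiring a line of care is confirming that the kernel of $H_\iota$ on the restricted space $\cH^\leq$ is exactly $\bbC\,\epsilon_{1,1}$ (using $r\leq 0$, so $-nr\log q \geq 0$), which rules out spurious zero-energy states.
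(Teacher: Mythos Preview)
Your proposal is correct and follows essentially the same route as the paper: identify the one-dimensional kernel of $H_\iota$ as $\bbC\,\epsilon_{1,1}$, evaluate the resulting vector state on $s\in\cW(q)$ via $R_\iota(s)\epsilon_{1,1}=\iota(s)\,\epsilon_{1,N_\iota(s)}$, and then invoke the weight-preserving $G$-action together with the $\mathrm{Emb}_0$ description to get the $\widetilde Z(G_{\cW(q)})$-action on ground states. Two cosmetic points: the Hamiltonian you quote comes from Proposition~\ref{prop:Hamiltonian} (single geometric progression), not Proposition~\ref{rHam}; and strictly $N_\iota(s)=q^{w(s)/2}$ rather than $q^{w(s)}$, though this does not affect the equivalence $N_\iota(s)=1\Leftrightarrow w(s)=0$.
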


\proof The ground states are given by projections onto the kernel of the Hamiltonian $H_\iota$. Therefore, we obtain the following equalities:
$$ \varphi_{\iota,\infty}(s) = \langle \epsilon_{1,1}, R_\iota(s) e^{-\beta H_\iota}\epsilon_{1,1}\rangle  = \iota(s) \, \langle \epsilon_{1,1},\epsilon_{1,q^{w(s)}} \rangle =\left\{ \begin{array}{ll} \iota(s) & \mathrm{when}\,\,w(s)=0 \\
0 & \mathrm{when}\,\,w(s)\neq 0 \,.\end{array}\right. $$
The $G$-action by automorphisms of the algebra $\cA'_{(\cW(q), \sigma_n)}$
induces an action of $\widetilde Z(G_{\cW(q)})\subset G$ on 
the Gibbs states and on the ground states by $\varphi_{\iota,\beta}\mapsto \varphi_{\iota\circ \gamma, \beta}$. This follows from the fact that for every $\iota \in \mathrm{Emb}_0(\cW(q),\overline{\bbQ}^\times)$ and
$\gamma \in \tilde Z(G_{\cW(q)})$, we have $\iota\circ\gamma \in \mathrm{Emb}_0(\cW(q),\overline{\bbQ}^\times)$.
This action on ground states agrees with the Galois action on the values, since
$\varphi_{\iota\circ \gamma, \infty}(s)=\iota(\gamma(s))=\gamma(\iota(s))$, by $G$-equivariance of
the embeddings.
\endproof

\smallskip

Examples 5 and 6 of \S\ref{sec:BC-systems} are only abstract
Bost--Connes data. Example 5 is not a concrete datum because $\alpha$
is not a semi-group homomorphism. The case of Germs in Example 6
is also not a concrete datum, because alpha is the trivial homomorphism $\alpha(n)=1$.
Moreover, we do not have an embedding of $\cW_0(p^\infty)$ in $\overline{\Q}^\times$, so
even the partial construction for $\alpha(n)=1$ discussed in \S\ref{alpha1sec} 
does not apply.  The case of the completion $\widehat{\cW}^L(q)$ in Example 7 of 
\S \ref{sec:BC-systems} is also not a concrete Bost--Connes datum.
We consider the remaining cases in \S\ref{sec:Weil+completion}.
%----------------------------------------------------------------------
\section{Weil restriction and completion}\label{sec:Weil+completion}
%----------------------------------------------------------------------

The concrete Bost-Connes data of Examples 2 (=Weil restriction) and 7 (=Completion) do not satisfy the assumption of Notation \ref{not:embedding}. Nevertheless, we explain briefly in this section how they still give rise to QSM-systems. The key idea is to consider them as ``diagonal subsystems'' of larger QSM-systems. The latter are related to the higher rank Bost-Connes systems introduced in \cite{Mar,JinMar}. We only give an outline of the constructions. The details, along with a general treatment of ``high rank Bost-Connes data'', will appear in a forthcoming article.

%---------------------------------------------------
\subsection*{Example 2: Weil restriction}
%---------------------------------------------------
Let $k=\bbR$. Recall from \S\ref{sec:BC-systems} the definition of the concrete Bost-Connes datum $(\bbQ/\bbZ \times \bbQ/\bbZ, \sigma_n)$. In this case, $\alpha(n)=n^2$.

We now construct a large QSM-system (which is not associated to a Bost-Connes datum) and an involution on it. The QSM-system associated to $(\bbQ/\bbZ \times \bbQ/\bbZ, \sigma_n)$ will be defined as a subsystem.
Given a pair $(n,m) \in \bbN^2$, let $\sigma_{n,m}$ be the homomorphism $(n\cdot-, m\cdot-): \bbQ/\bbZ \times \bbQ/\bbZ \to \bbQ/\bbZ \times \bbQ/\bbZ$ and $\rho_{n,m}$ the associated map $\bbQ/\bbZ \times \bbQ/\bbZ \to \cP(\bbQ/\bbZ) \times \cP(\bbQ/\bbZ)$ that sends an element $(s_1,s_2)$ to its pre-image under $\sigma_{n,m}$. Let $\cB_{(\Sigma,\sigma_{n,m})}$ be the $\bbR$-algebra generated by the elements $(s_1,s_2) \in \bbQ/\bbZ \times \bbQ/\bbZ$ and by the isometries $\mu_{n,m}, \mu_{n,m}^*$ 
with $(n,m)\in \N^2$. We assume that $\mu_{n,m} \mu_{k,l}=\mu_{nk, ml}$, that $\mu_{n,m}^* \mu_{n,m}=1$, and
\begin{eqnarray*}
\mu_{n,m}\mu_{k,l}^* = \mu_{k,l}^* \mu_{n,m} &\mathrm{when}& (n,k)=(m,l)=1
\end{eqnarray*}
$$\mu_{n,m} (s_1,s_2) \mu_{n,m}^* = \frac{1}{n m} \sum_{(s_1',s_2')\in \rho_{n,m}(s_1,s_2)} (s_1', s_2')\,.$$
\begin{remark}
Intuitively speaking, the $\bbR$-algebra $\cB_{(\Sigma,\sigma_{n,m})}$ is a ``higher rank" generalization of the one of Definition \ref{Akalg}, where the semi-group homomorphism $\alpha$ is now given by $\bbN^2 \to \bbN, (n,m) \mapsto nm$. The generalizations include the higher rank Bost-Connes algebras \cite{Mar,JinMar} and will be discussed in a future work.  
\end{remark}

We have isomorphisms of $\R$-algebras 
\begin{equation}\label{eq:tensor-algebras}
\cB_{(\Sigma,\sigma_{n,m})}  \simeq  \R[\Q/\Z]^{\otimes 2} \rtimes \N^2 \simeq
\cB_{(\Q/\Z,\sigma_n)} \otimes_\R  \cB_{(\Q/\Z,\sigma_m)}
\end{equation}
where $\cB_{(\Q/\Z,\sigma_n)}=\R[\Q/\Z]\rtimes \N$ is the $\bbR$-algebra of the original Bost--Connes system. We have a $\bbZ/2$-action on \eqref{eq:tensor-algebras} which switches the two copies of $\cB_{(\Q/\Z,\sigma_n)}$. Consider the Hilbert space $\ell^2(\N \times \N)$ equipped with the standard orthonormal basis $\{ \epsilon_{n,m} \}$. Via the identification between $\epsilon_{n^2}$ and 
$\epsilon_{n,n}$, the Hilbert space $\ell^2(\alpha(\N))$ can be regarded as a subspace of $\ell^2(\N \times \N)$. Given an embedding $\iota=(u_1,u_2)\in \hat\Z^\times\times \hat\Z^\times=\
\mathrm{Emb}(\Q/\Z, \overline{\bbQ}^\times) \times \mathrm{Emb}(\bbQ/\bbZ, \overline{\bbQ}^\times)$, the assignments
\begin{eqnarray*}
 R_\iota (\mu_{n,m}) \epsilon_{k,l} := \epsilon_{nk, ml} &&
 R_\iota(s_1,s_2) \epsilon_{k,l} := u_1(s_1)^k \, u_2(s_2)^l\, \epsilon_{k,l}
\end{eqnarray*} 
define a representation $R_\iota$ of the $\bbR$-algebra $\cB_{(\Sigma,\sigma_{n,m})}$ on $\ell^2(\N^2)$. The $C^*$-completion $\cA_{(\Sigma,\sigma_{n,m})}$ of $\cB_{(\Sigma,\sigma_{n,m})}\otimes_\R \C$ is isomorphic to $C^*(\Q/\Z)^{\otimes 2} \rtimes \N^2$. This is a particular case of a higher rank Bost--Connes algebra \cite{Mar}. Given a semi-group homomorphism $\hat g: \N\times \N \to \R^\times_+$, the assignments $\sigma_t(s_1,s_2):=(s_1,s_2)$ and
$\sigma_t(\mu_{n,m}):=\hat g(n,m)^{it}\mu_{n,m}$ define a time evolution on $\cA_{(\Sigma,\sigma_{n,m})}$.

\smallskip

We now construct the QSM-system associated to $(\Q/\Z\times \Q/\Z, \sigma_n)$. 
Let us write $\cB_{(\Q/\Z \times \Q/\Z,\sigma_n)}$ for the subalgebra of 
$\cB_{(\Q/\Z \times \Q/\Z,\sigma_{n,m})}$ generated by the elements
$(s_1,s_2)\in \Q/\Z \times \Q/\Z$ and by the isometries $\mu_{n,n}$ with $n\in \N$.
Under the representations $R_\iota$, $\cB_{(\Q/\Z \times \Q/\Z,\sigma_n)}$
preserves the subspace $\ell^2(\alpha(\N)) \subset \ell^2(\N^2)$. 
Therefore, the $C^*$-completion $\cA_{(\Q/\Z \times \Q/\Z,\sigma_n)}$ of 
$\cB_{(\Q/\Z \times \Q/\Z,\sigma_n)} \otimes_\R \C$ can be identified with
$C^*(\Q/\Z \times \Q/\Z)\rtimes \N$, with the semi-group action given by
\begin{eqnarray*}
(s_1,s_2) & \mapsto & \mu_{n,n}(s_1,s_2) \mu_{n,n}^*:=\frac{1}{n^2} \sum_{(s_1',s_2') \in \rho_{n,n}(s_1, s_2)}(s_1',s_2')\,.
\end{eqnarray*}
Note that the assignments $(s_1, s_2) \mapsto (s_2,s_1)$ and $\mu_{n,n} \mapsto \mu_{n,n}$ define a $\bbZ/2$-action by automorphisms of $\cA_{(\Q/\Z \times \Q/\Z,\sigma_n)}$. Let $\hat g: \N^2 \to \R^\times_+$ be a semi-group homomorphism of the form
$\hat g(n,m):=g(n)g(m)$, with $g: \N \to \R^\times_+$ a semi-group homomorphism.
The time evolution determined by $\hat g$ preserves the $C^\ast$-algebra 
$\cA_{(\Q/\Z \times \Q/\Z,\sigma_n)}$. Moreover, the Hamiltonian generating the restriction of the time evolution to $\cA_{(\Q/\Z \times \Q/\Z,\sigma_n)}$,
in the representations $R_\iota$ on the Hilbert space $\ell^2(\alpha(\N))$ is given by
$H \epsilon_{n,n}=2\log(g(n))\, \epsilon_{n,n}$. Hence, the partition function
$Z(\beta)$ agrees with $\sum_n g(n)^{-2\beta}$. Finally, the $\bbZ/2$-action on $\cA_{(\Q/\Z \times \Q/\Z,\sigma_n)}$ is compatible with
the time evolution.

\subsection*{Example 7: Completion} 
Let $k=\bbQ$. Recall from Example \ref{ex:9} the definition of the concrete Bost-Connes datum 
$(\widehat{\cW}(q),\sigma_n)$. In this case, $\alpha(n)=n^2$. As proved in 
Proposition \ref{lem:aux3}, we have a natural identification
$$ (\widehat{\cW}(q),\sigma_n) = (\cW_0(q), \sigma_n) \times (\Q/2\Z, \sigma_n)\,. $$
The QSM-system associated to $(\widehat{\cW}(q),\sigma_n)$ will be obtained as a subsystem.

Let $\cB_{(\cW_0(q)\times \Q/2\Z ,\sigma_{n,m})}$ be the $\Q$-algebra
generated by the elements $s=(\pi,r)\in \cW_0(q)\times \Q/2\Z$ and by the isometries $\mu_{n,m}$ with $(n,m)\in \N^2$. We assume that $ \mu_{n,m}\mu_{k,l}=\mu_{nk,ml}$, that $\mu_{n,m}^*\mu_{n,m}=1$, and that
\begin{eqnarray*}
\mu_{n,m}\mu_{k,l}^*= \mu_{k,l}^*\mu_{n,m} & \mathrm{when} & (n,k)=(m,l)=1
\end{eqnarray*}
$$ \mu_{n,m} (\pi,r)  \mu_{n,m}^* =\frac{1}{nm} \sum_{(\xi^{n}, mr')=(\pi,r)} (\xi, r')\,. $$
The $\bbQ$-algebra $\cB_{(\cW_0(q)\times \Q/2\Z ,\sigma_{n,m})}$ identifies with the semi-group crossed 
product $\Q[\cW_0(q)\times \Q/2\Z] \rtimes \N^2$. As in the preceding example, the Hilbert spaces 
$\ell^2(\alpha(\bbN))$ can be regarded as a subspace of $\ell^2(\bbN^2)$. Now, consider the following set of embeddings 
\begin{equation}\label{Emb0compl}
 \mathrm{Emb}_0(\cW_0(q) \times \Q/2\Z, \overline{\bbQ}^\times\times \overline{\bbQ}^\times) =
 \tilde Z(G_{\cW(q)}) \cdot \iota_0 \times  \mathrm{Emb}(\Q/2\Z,\Q/\Z), 
 \end{equation}
where $\tilde Z(G_{\cW(q)}) \cdot \iota_0 \subset \mathrm{Emb}(\cW_0(q),\overline{\Q}^\times)$ is
defined as in \S \ref{W0qQSM:sec}. 
Given an embedding $\iota=(\iota_0\circ\gamma, u) \in \eqref{Emb0compl}$, the assignments 
\begin{eqnarray*}
R_\iota(\mu_{n,m}) \epsilon_{k,l} :=  \epsilon_{nk,ml} && R_\iota(\pi,r) \epsilon_{k,l} :=  \iota_0(\gamma(\pi))^k u(r)^l \epsilon_{k,l}
\end{eqnarray*}
defines a representation $R_\iota$ of the $\bbQ$-algebra $\cB_{(\cW_0(q)\times \Q/2\Z ,\sigma_{n,m})}$ on $\ell^2(\N^2)$. 
The $C^\ast$-completion $\cA_{(\cW_0(q)\times \Q/2\Z ,\sigma_{n,m})}$ of
$\cB_{(\cW_0(q)\times \Q/2\Z ,\sigma_{n,m})}\otimes_\Q \C$ 
can be identified with $C^*(\cW_0(q)\times \Q/2\Z) \rtimes \N^2$. 
The choice of $\hat g: \N^2 \to \R^*_+$ determines a time evolution on 
$\cA_{(\cW_0(q)\times \Q/2\Z ,\sigma_{n,m})}$, with 
$\sigma_t(\mu_{m,n})=\hat g(m,n)^{it} \mu_{m,n}$ and $\sigma_t(\pi,r)=(\pi,r)$.

\smallskip

We now construct the QSM-system associated to $(\cW_0(q)\times \Q/2\Z ,\sigma_n)$. 
Let $\cB_{(\cW_0(q)\times \Q/2\Z ,\sigma_n)}$ be the subalgebra of
$\cB_{(\cW_0(q)\times \Q/2\Z ,\sigma_{n,m})}$ generated by the elements $s=(\pi,r)\in \cW_0(q)\times \Q/2\Z$ and by the isometries $\mu_{n,n}$. Under the representations $R_\iota$, $\cB_{(\cW_0(q)\times \Q/2\Z ,\sigma_n)}$
preserves the subspace $\ell^2(\alpha(\N))\subset \ell^2(\N^2)$. Therefore, the $C^*$-completion
$\cA_{(\cW_0(q)\times \Q/2\Z ,\sigma_n)}$ can be identified with the $C^\ast$-algebra $C^*(\cW_0(q)\times \Q/2\Z)\rtimes \N$, with the semi-group acting diagonally by
\begin{eqnarray*}
(\pi,r) & \mapsto & \mu_{n,n} (\pi, r) \mu_{n,n}^* = \frac{1}{n^2} \sum_{(\xi^n,nr')=(\pi,r)} (\xi,r')\,.
\end{eqnarray*}
Let $\hat g: \N^2 \to \R^\times_+$ be a semi-group homomorphism of the form $\hat g(n,m):=g(n) g(m)$, with $g:\N \to \R^\times_+$ a semi-group
homomorphism. The time evolution determined
by $\hat g$ preserves the $C^\ast$-algebra $\cA_{(\cW_0(q)\times \Q/2\Z ,\sigma_n)}$. Moreover, the Hamiltonian implementing it in
the representation $R_\iota$ on $\ell^2(\alpha(\N))$ is given by $H \epsilon_{n,n}= 2 \log(g(n)) \epsilon_{n,n}$. Hence, the partition function $Z(\beta)$ agrees with $\sum_n g(n)^{-2\beta}$. In the particular case where $g(n)=n$ the
Gibbs states are of the following form:
\begin{equation*}
\varphi_{\beta,\iota}(\pi,r)= \frac{{\rm Li}_{2\beta}( \iota_0(\gamma(\pi)) u(r) )}{\zeta(2\beta)}.
\end{equation*}
The group $\widetilde Z(G_{\cW(q)})$ acts by automorphisms of the algebras 
$\cB_{(\cW_0(q)\times \Q/2\Z ,\sigma_{n,m})}$ and $\cA_{(\cW_0(q)\times \Q/2\Z ,\sigma_{n,m})}$.
This action preserves the subalgebras $\cB_{(\cW_0(q)\times \Q/2\Z ,\sigma_n)}$
and $\cA_{(\cW_0(q)\times \Q/2\Z ,\sigma_n)}$ and
induces an action on the set of representations by $R_\iota \mapsto R_{\iota \circ \gamma}$.
Finally, the action on the subalgebras is compatible with the time evolution and agrees with the Galois 
action on the values of ground states at elements $s=(\pi,r)\in \cW_0(q)\times \Q/2\Z$.


\begin{thebibliography}{00}

\bibitem{Andre} Y.~Andr\'e, {\em Une introduction aux motifs (motifs purs, motifs mixtes, p{\'e}riodes)}. Panoramas et Synth{\`e}ses {\bf 17}. Soci{\'e}t{\'e} Math{\'e}matique de France, Paris, 2004.

\bibitem{BC} J.~B.~Bost and A.~Connes, {\em Hecke algebras, type {III}
  factors and phase transitions with spontaneous symmetry breaking in number
  theory}. Selecta Math. (N.S.) Vol.{\bf 1} (1995), no.~3, 411--457.
  
\bibitem{BR} O.~Bratteli and D.W.~Robinson, {\em Operator algebras and quantum statistical mechanics}, Vol.I and II, 
second ed., Texts and Monographs in Physics, Springer-Verlag, Berlin, 1997.
    
\bibitem{CoRH} A.~Connes, {\em Trace formula in noncommutative geometry and the zeros of the Riemann zeta function},
Selecta, Vo.5 (1999), N.1, 29--106.

\bibitem{CCM}  A.~Connes, C.~Consani and M.~Marcolli, {\em Noncommutative 
geometry and motives: the thermodynamics of endomotives}, Adv. Math. 214 (2007), no. 2, 761--831. 

\bibitem{CCM3} \bysame, {\em The Weil proof and the geometry of the adele class space}, in ``Algebra, Arithmetic, and Geometry: in honor of Yu.~I.~Manin. Progress in Mathematics, Vol.~{\bf 270}, Part I, 
Birkh\"auser 2009, 339--406.
    
\bibitem{CCM2} \bysame, {\em Fun with $\bbF_1$}. J. Number Theory, {\bf 129} (2009) 1532--1561.

\bibitem{CoMa} A.~Connes and M.~Marcolli, {\em Noncommutative geometry, quantum fields and motives}. American  Mathematical Society Colloquium Publications, vol.~{\bf 55}, AMS, 2008.  
  
\bibitem{CM} \bysame, {\em Quantum Statistical Mechanics of $\Q$-lattices}. In ``Frontiers in  number theory, physics, and geometry. I", 269--347, Springer, 2006.  

\bibitem{CMR} A.~Connes, M.~Marcolli and N.~Ramachandran, {\em KMS states and complex multiplication}. Selecta Math. (N.S.) {\bf 11} (2005), no. 3-4, 325--347.
  
\bibitem{Deligne-HodgeII} P. Deligne, {\em Th{\'e}orie de Hodge. II}. Inst. Hautes {\'E}tudes Sci. Publ. Math. No. {\bf 40} (1971), 5--57. 
  
\bibitem{Deligne-Milne} P.~Deligne and J.~S.~Milne, {\em Tannakian Categories}. In Hodge Cycles, Motives, and Shimura Varieties, Lecture Notes in Mathematics {\bf 900}, 1982, pp. 101--228. Available at J.~S.~Milne's webpage {\tt http://www.jmilne.org/math/xnotes/tc.html}.
 
\bibitem{JinMar} Z.~Jin and M.~Marcolli, {\em Endomotives of toric varieties}. J. Geom. Phys., Vol. {\bf 77} (2014), 48--71. 
 
\bibitem{Mar} M.~Marcolli, {\em Cyclotomy and Endomotives}. $p$-Adic Numbers Ultrametric Anal. Appl., Vol. {\bf 1} (2009), no. 3, 217--263. 
 
\bibitem{Milne} J.~S.~Milne, {\em Motives over finite fields}. In ``Motives" (Seattle, WA, 1991), 
401--459, Proc. Sympos. Pure Math., {\bf 55}, Part 1, Amer. Math. Soc., Providence, RI, 1994. 

\bibitem{AGS} \bysame, {\em Basic theory of affine group schemes}. Available at J.~S.~Milne's webpage {\tt www.jmilne.org/math/CourseNotes/ala.html}.
    
\bibitem{CvsNC} G.~Tabuada, {\em Chow motives versus versus noncommutative motives}. J. Noncommutative Geometry {\bf 7}, Issue 3 (2013), 767--786.
      
  \end{thebibliography}
  \end{document}